\documentclass[12pt]{article}

\usepackage[margin=1in]{geometry}
\usepackage{setspace}
\usepackage{enumitem}
\usepackage{mathtools}
\usepackage{amsthm,amsmath}
\usepackage{dsfont}
\usepackage{amssymb}
\usepackage{dsfont}
\usepackage{bm}
\usepackage{breqn}
\usepackage{color}
\usepackage{xcolor}
\usepackage{pgfplots,pgfplotstable}
\pgfplotsset{compat=1.11} 
\usepackage{tikz}
\usetikzlibrary{patterns,decorations.pathreplacing}
\usepackage[authoryear]{natbib}
\bibliographystyle{apa}
\usepackage[hidelinks]{hyperref}

\usepackage[flushleft]{threeparttable}

\onehalfspacing
\allowdisplaybreaks

\newtheorem{lemma}{Lemma}
\newtheorem{theorem}{Theorem}
\newtheorem{assumption}{Assumption}
\newtheorem{example}{Example}[section]
\newtheorem{corollary}{Corollary}

\newtheoremstyle{named}{}{}{\itshape}{}{\bfseries}{.}{.5em}{#1\thmnote{ #3}}
\theoremstyle{named}

\newcommand{\inv}{^{-1}}
\newcommand{\isum}{\sum_{i=1}^n}

\newcommand{\E}{\mathbb{E}}

\newcommand{\R}{\mathbb{R}}

\newcommand{\Z}{\mathcal{Z}}
\newcommand{\argmin}{\text{argmin}}

\begin{document}

\title{Simple Adaptive Size-Exact Testing for Full-Vector and Subvector Inference in Moment Inequality Models\footnote{We acknowledge helpful feedback from Donald Andrews, Xiaohong Chen, Whitney Newey, Adam Rosen, Matthew Shum, J\"{o}rg Stoye, the participants of the 2nd Econometrics Jamboree at the UC Berkeley, the 2nd CEMMAP UCL/Vanderbilt Joint Conference, and econometrics seminars at Columbia University, the National University of Singapore, the UCSD, the UCLA, and the University of Wisconsin at Madison.}}
\author{Gregory Cox\footnote{Department of Economics, National University of Singapore (\href{mailto:ecsgfc@nus.edu.sg}{ecsgfc@nus.edu.sg})} 
\and Xiaoxia Shi\footnote{Department of Economics, University of Wisconsin-Madison (\href{mailto:xshi@ssc.wisc.edu}{xshi@ssc.wisc.edu})}}
\date{\today}

\maketitle

\begin{abstract}
We propose a simple test for moment inequalities that has exact size in normal models with known variance and has uniformly asymptotically exact size more generally. 
The test compares the quasi-likelihood ratio statistic to a chi-squared critical value, where the degree of freedom is the rank of the inequalities that are active in finite samples. 
The test requires no simulation and thus is computationally fast and especially suitable for constructing confidence sets for parameters by test inversion. 
It uses no tuning parameter for moment selection and yet still adapts to the slackness of the moment inequalities. 
Furthermore, we show how the test can be easily adapted for inference on subvectors for the common empirical setting of conditional moment inequalities with nuisance parameters entering linearly.
\end{abstract}

\mbox{}\\
\mbox{}\\

{\bf Keywords:} Moment Inequalities, Uniform Inference, Likelihood Ratio, Subvector Inference, Convex Polyhedron, Linear Programming

\pagebreak
\section{Introduction}

In the past decade or so, inequality testing has become a mainstream  inference method used for models where standard maximum likelihood or method of moments are difficult to use, for reasons including multiple equilibria, incomplete data, or complicated dynamic patterns. 
In such models, inequalities can often be derived from equilibrium conditions and rational decision making. 
Inference can then be conducted by inverting tests for these inequalities at each given parameter value.\footnote{An incomplete list of applications that use inequalities as estimation restrictions includes \cite{Tamer2003}, \cite{Uhlig2005}, \cite{BajariBenkardLevin2007}, \cite{BGIM2007}, \cite{CilibertoTamer2009}, \cite{BMM2011}, \cite{Holmes2011}, \cite{BIWY2012}, \cite{Chetty2012}, \cite{NevoRosen2012}, \cite{KawaiWatanabe2013}, \cite{Eizenberg2014}, \cite{HuberMellace2015}, \cite{PPHI2015}, \cite{MagnolfiRoncoroni2016}, \cite{Sheng2016}, \cite{Sullivan2017}, \cite{He2017}, \cite{ISS2018},
\cite{Wollman2018}, \cite{FackGrenetHe2019}, \cite{MSZ2019}. For recent overview of the literature, see for example \cite{HoRosen2017}, \cite{CanayShaikh2017}, and \cite{Molinari2020}.} 

Although conceptually simple, conducting inference via test inversion poses considerable computational challenges to practitioners. 
This is because in order to get an accurate calculation of the confidence set, one needs to test the inequalities at a set of parameter values that is dense enough in the parameter space. 
Depending on the application, the number of values needed to be tested can be astronomical and increases exponentially with the dimension of the parameter space. 
Moreover, existing tests often require simulated critical values that are nontrivial to compute even for a single value of the parameter, let alone repeated for a large number of parameter values.\footnote{Existing tests for general moment inequalities with simulated critical values include \cite{CHT2007}, \cite{RomanoShaikh2008}, \cite{AndrewsGuggenberger2009}, \cite{AndrewsSoares2010},  \cite{Bugni2010}, \cite{Canay2010},  
\cite{RomanoShaikh2012}, and \cite{RomanoShaikhWolf2014}. See \cite{CanayShaikh2017} and \cite{Molinari2020} for more references.}

Besides computational challenges, most existing methods for moment inequality models involve tuning parameter sequences that are required to diverge at a certain rate as the sample size increases. 
The threshold in the generalized moment selection procedures (e.g. \cite{Rosen2008} and  \cite{AndrewsSoares2010}) and the subsample size in subsampling-based methods (e.g. \cite{CHT2007} and \cite{RomanoShaikh2012}) are notable examples.\footnote{Arguably, the size of a first stage confidence set or the number of simulation/bootstrap draws are also tuning parameters commonly used to test moment inequalities.} 
Appropriate choices often depend on data in complicated ways, and an inappropriate choice can threaten the validity of the test. 

Clearly, there are two ways to ease the computational burden: one is to make the inequality test easier for each parameter value, and the other is to reduce the number of parameters that need to be tested. 
We contribute to the literature in both. 
First, we propose a simple test for general moment inequalities that requires no simulation. 
It simply uses the $\text{(quasi-)}$ likelihood ratio statistic ($T_n$) and a chi-squared critical value, where the data-dependent degrees of freedom come as a by-product of computing $T_n$. 
By not requiring simulation, the test saves computation time hundreds-fold compared to tests involving simulated critical values, where a critical value statistic needs to be computed for each simulated sample. 

Second, we then specialize to a conditional moment inequality model with nuisance parameters ($\delta$) entering linearly, a common empirical setting, and propose a confidence set for a subvector parameter of interest ($\theta$). 
The subvector confidence set is based on our full-vector test applied to the nuisance-parameter-eliminated model. 
By focusing on the parameter of interest, one only needs to consider a grid on the space of $\theta$ which can be much lower dimensional than the space of $(\theta',\delta')'$. 
Thus, the number of parameter values that need to be tested is drastically reduced. 
Also, as an added benefit, the subvector test is less conservative and more powerful than the projection of the full vector test, as demonstrated in our second Monte Carlo experiment.

In both contexts, our test is simulation and tuning parameter free. 
Its critical value is simply the chi-squared critical value with degrees of freedom equal to the rank of the active moment inequalities, where we call a moment inequality {\em active} if it holds with equality at the restricted estimator of the moments.\footnote{Active inequalities are the sample counterpart of binding inequalities, which hold with equality at the population expectation of the moments.} 
The test is shown to have exact finite sample size in a normal model with known variance and to be uniformly asymptotically valid more generally. 
Moreover, it automatically adapts to the slackness of the moment inequalities despite the absence of a deliberate moment selection step. 
In particular, when all but one inequality get increasingly slack, the test asymptotes to one that ignores all the slack inequalities, which coincides with the uniformly most powerful test for the limiting model.

The idea of simple chi-squared critical values for testing inequalities appeared as early as in \cite{Bartholomew1961} and \cite{Rogers1986} for testing one-sided alternatives against a {\em simple} null, but was only recently proved to be valid for a composite null in \cite{MohamadGoemanZwet2020} in a normal model. 
We move beyond \cite{MohamadGoemanZwet2020} in three ways: (a) we allow an intercept in the inequalities defining the null and thus generalize the null space from a cone to a polyhedron. 
This is important for moment inequality models as the limiting experiment may not be a cone in general when we allow local-to-binding inequalities; (b) we design a simple but novel refinement to make the test size-exact; and (c) we prove the uniform asymptotic validity of the test for moment inequality models. 

The idea of eliminating nuisance parameters from linear moment inequalities is first suggested in \cite{GuggenbergerHahnKim2008}, where they introduce the Fourier-Motzkin elimination to the literature and propose a Wald-type test on the resulting inequalities. 
Yet two main difficulties hinder the application of this idea: (a) numerical calculation of the Fourier-Motzkin elimination in general is an NP-hard computational problem, and (b) the estimated slopes in front of the nuisance parameters enter the inequalities via a non-differentiable function, and thus undermine the validity of testing procedures applied directly to them. 
The first difficulty is not present for our test because its special structure only requires us to calculate the {\em rank} of the active inequalities, avoiding the full-blown elimination procedure. 
The second difficulty is circumvented by considering the conditional distribution of the moments given a vector of instrumental variables.

\cite{AndrewsRothPakes2019} has the closest setting with our paper. 
They propose a test based on the max statistic. 
In the most basic version, their test uses a conditional critical value from a truncated normal distribution. 
This basic version involves no simulation or tuning parameter and as a result is easy to compute. 
However, the basic version has poor power properties that prompt them to recommend a hybrid test. 
The hybrid test uses a simulated critical value as well as a tuning parameter that determines the size of a first-stage confidence set. 

There are a few papers in the literature that propose methods to mitigate the computational challenges described above. 
\cite{KMS2019} cast the problem of finding the bounds of the projection confidence interval of each parameter into a nonlinear nonconvex constrained optimization problem, and provide a novel algorithm to solve this optimization problem more efficiently. 
Our simple inequality test is complementary to \cite{KMS2019}'s algorithm in that we make testing for each value hundreds-fold easier while their algorithm reduces the number of values that need to be tested. 
\cite{BugniCanayShi2017} propose a profiling method that simplifies computation in the same way as the subvector confidence set proposed in this paper, by reducing the search from the space of the whole parameter vector to that of a low dimensional subvector. 
The difference is that our subvector test, by taking advantage of the linearity of the model, is much easier to compute than \cite{BugniCanayShi2017}'s test, which applies more generally. 
\cite{CCT2018} propose a quasi-Bayesian method to subvector inference which has similar computational cost as \cite{BugniCanayShi2017} when applied to moment inequality models.\footnote{More details about the comparison of the computational aspect of these papers can be found in Section 4.3 of \cite{HsiehShiShum2020}.} 
\cite{CCT2018} also propose a simple test for a scalar parameter of interest that uses a chi-squared critical value that is valid under an additional assumption. 

A couple of other papers aim to reduce the sensitivity of testing to tuning parameters. 
\cite{AndrewsBarwick2012} (AB, hereafter) refines the procedure of \cite{AndrewsSoares2010} by computing an optimal moment selection threshold that maximizes a weight average power and a size correction. 
Using the optimal threshold and the size correction provided in that paper, one no longer needs to choose a tuning parameter. 
Computationally, it is the same as \cite{AndrewsSoares2010} if one has 10 or fewer moment inequalities and can use the tables of optimal tuning and size correction values in the paper. 
It is much more computationally demanding otherwise. 
\cite{RomanoShaikhWolf2014} (RSW, hereafter) replace the moment selection step of the previous literature with a confidence set for the slackness parameter and employ a Bonferroni correction to take into account the error rate of this confidence set. 
There is still a tuning parameter, the confidence level of the first step, but this tuning parameter no longer affects the asymptotic size of the test. 
Computationally, using the same number of critical value simulations, it is slightly more costly than \cite{AndrewsSoares2010} due to the first-step confidence set construction. 
AB and RSW are our points of comparison in our first set of Monte Carlo experiments, where we show that our simple test saves computational cost hundreds-fold, while comparing favorably to AB and RSW in terms of size and power.

The remainder of this paper proceeds as follows. 
Section 2 covers full-vector inference in moment inequality models. 
Section 3 covers the extension to subvector inference in conditional moment inequality models with nuisance parameters entering linearly.
Section 4 reports the simulation results. 
Section 5 concludes. 
An appendix contains the proofs and additional results. 

\section{Full-Vector Inference}\label{sec:momineq}

We consider a moment inequality model of the form 
\begin{align}
A\E_{F}\overline{m}_n(\theta)\le b, \label{momineq} 
\end{align}
where $A$ is a $d_A\times d_m$ matrix, $b$ is a $d_A\times 1$ vector, and $\overline{m}_n(\theta)=n\inv\isum m(W_i, \theta)$ for a $d_m$-dimensional moment function $m(\cdot,\theta)$ known up to the parameter $\theta$ and the data $\{W_i\}_{i=1}^n$ with joint distribution $F$. 
Let $\Theta$ be the parameter space for $\theta$. 
The quantities $A$ and $b$ may depend on $\theta$ and the sample size $n$, a dependence that we keep implicit for simplicity unless otherwise needed. 
The moment inequality model identifies the true parameter value up to the identified set,\footnote{If $A$ and $b$ depend on $\theta$, the formula for $\Theta_0(F)$ becomes $\{\theta\in \Theta: A(\theta)\E_{F}\overline{m}_n(\theta)\leq b(\theta)\}$.} 
\begin{equation}
\Theta_0(F) = \{\theta\in \Theta: A\E_{F}\overline{m}_n(\theta)\leq b\}.\label{IDset}
\end{equation} 

Our specification of a moment inequality model slightly differs from that in Andrews and Soares (2010, AS hereafter) by including a coefficient matrix $A$ and an intercept $b$. 
The model reduces to the setup of AS when we  let $b=0$ and 
\begin{align}
A=\begin{pmatrix}-I_p & \mathbf{0}_{p\times v}\\\mathbf{0}_{v\times p}&-I_v\\\mathbf{0}_{v\times p}&I_v\end{pmatrix}, \label{Standard A}
\end{align}
where $d_A=p+2v$, the first $p$ moments are inequalities, and the last $v$ moments are equalities. 
Introducing $A$ and $b$ is useful because it allows us to maintain an invertible variance matrix assumption while succinctly covering both equalities and inequalities, as well as accommodating upper and lower bounds with a gap between bounds that is deterministic.\footnote{For example, $\E[\bar{W}_n]-1\leq \theta\leq \E[\bar{W}_n]$ can be written in our notation with $m(w,\theta) = \theta-w$, $A = \left(\begin{smallmatrix}1\\-1\end{smallmatrix}\right)$ and $b  = \left(\begin{smallmatrix}0\\1\end{smallmatrix}\right)$.}

Like most papers in the literature, including AS, AB, and RSW, we conduct inference for the true parameter $\theta_0$ by test inversion. 
That is, for a given significance level $\alpha\in (0,1)$, one constructs a test $\phi_n(\theta,\alpha)$ for $H_0:\theta=\theta_0$, and obtains the confidence set for $\theta_0$ by calculating 
\begin{align}
CS_n(1-\alpha) = \{\theta\in\Theta:\phi_n(\theta,\alpha)=0\}.
\end{align}

\subsection{Test Construction}\label{sub:Normal}
We introduce two simple tests, one being a refinement of the other. 
Both are easy to compute, requiring no tuning parameters or simulations. 
Both use the (quasi-) likelihood ratio statistic, 
\begin{align}
T_n(\theta) = \min_{\mu: A\mu\leq b}n(\overline{m}_n(\theta) - \mu)'{\widehat\Sigma_n}(\theta)\inv(\overline{m}_n(\theta) - \mu), \label{Tn}
\end{align}
where $\widehat\Sigma_n(\theta)$ denotes an estimator of $\textup{Var}_F(\sqrt{n}\overline{m}_n(\theta))$. 

Both tests use data-dependent critical values that are based on the rank of the rows of $A$ corresponding to the inequalities that are active in finite sample. 
To define them rigorously, let $\hat{\mu}$ be the solution to the minimization problem in (\ref{Tn}). 
This is the restricted estimator for the moments.
Let $a'_j$ denote the $j$th row of $A$ and let $b_j$ denote the $j$th element of $b$ for $j = 1,2,\dots,d_A$. 
Let 
\begin{align}
\widehat{J} = \{j\in\{1,2,\dots,d_A\}: a'_j \hat \mu=b_j\}, \label{Jhat}
\end{align}
which is the set of indices for the active inequalities. 
For a set $J\subseteq\{1,2,\dots,d_A\}$, let $A_J$ be the submatrix of $A$ formed by the rows of $A$ corresponding to the elements in $J$. 
Let $\textup{rk}(A_J)$ denote the rank of $A_J$. 
Let $\hat r = \textup{rk}(A_{\widehat J})$.\footnote{$\hat\mu$, $\hat J$, and $\hat r$ depend on $\theta$, a dependence that we keep implicit for simplicity.} 

The critical value of the first simple test is the $100(1-\alpha)\%$ quantile of $\chi^2_{\hat r}$, the chi-squared distribution with $\hat r$ degrees of freedom. 
Thus, the first simple test is
\begin{align}
\phi^{\textup{CC}}_n(\theta,\alpha) = 1\left\{T_n(\theta)>\chi^2_{\hat r,1-\alpha}\right\},\label{testCC}
\end{align}
where CC stands for ``conditional chi-squared'' indicating that the test uses the chi-squared critical value conditional on the active inequalities. 
We show the validity of the CC test below. 
The intuition is that $T_n(\theta)$ follows the $\chi^2_{\hat r}$ distribution conditional on $\hat r$ when all inequalities are binding (that is, $A\E_{F}\overline{m}_n(\theta) = b$), and is stochastically dominated by the $\chi^2_{\hat r}$ distribution when some of the inequalities are slack. 

The CC test can be somewhat conservative because when $\hat r=0$ (effectively no inequality is active in finite sample), $T_n(\theta)$ is a point mass at zero and the conditional rejection probability is zero instead of $\alpha$. 
For this reason,  the null rejection probability of the CC test can be as low as $(1-\Pr(\hat r=0))\alpha$, which lies in the interval $[\alpha/2,\alpha]$. 

We propose a second simple test that eliminates the conservativeness. 
We call this the RCC (refined CC) test. 
We define the RCC test by adjusting the quantile of the $\chi^2_1$ distribution when $\hat r=1$. 
Instead of the $100(1-\alpha)\%$ quantile, the RCC test uses a $100(1-\hat\beta)\%$ quantile, where $\hat\beta$ varies between $\alpha$ and $2\alpha$ depending on how far from active the additional (inactive) inequalities are. 
We construct $\hat\beta$ carefully so that the refinement exactly restores the size of the test. When $\hat r=1$, suppose without loss of generality that the first inequality is active and satisfies $a_1\neq 0$.\footnote{In this case, other inequalities may be active too because we have not ruled out the possibility that $A$ contains redundant rows or zero rows. But this is possible only if the other active inequalities are collinear with $a_1$. } 
Now define a measure of how far from being active the other inequalities are. 
For each $j=2,...,d_A$, let 
\begin{align}
\hat\tau_{j}=\left\{\begin{array}{ll}\frac{\sqrt{n}\|a_1\|_{\widehat\Sigma_n(\theta)}\left(b_j-a'_j\hat{\mu}\right)}{\|a_1\|_{\widehat\Sigma_n(\theta)} \|a_j\|_{\widehat\Sigma_n(\theta)} - a'_1\widehat\Sigma_n(\theta)a_j}&\text{ if }\|a_1\|_{\widehat\Sigma_n(\theta)} \|a_j\|_{\widehat\Sigma_n(\theta)} \neq a'_1\widehat\Sigma_n(\theta)a_j\\
\infty &{\text{ otherwise}}\end{array}\right., \label{zj}
\end{align}
where $\|a\|_\Sigma = (a'\Sigma a)^{1/2}$. 
This is equal to zero when the $j$th inequality is active, and positive when it is inactive. 
It is scaled using the angle between $\widehat\Sigma^{1/2}_n(\theta)a_1$ and $\widehat\Sigma^{1/2}_n(\theta)a_j$.\footnote{Note that  $a'_1\widehat\Sigma_n(\theta)a_j=\|a_1\|_{\widehat\Sigma_n(\theta)} \|a_j\|_{\widehat\Sigma_n(\theta)} \cos\varphi$, where $\varphi$ stands for the angle.} 
Then let 
\begin{align}
\hat\tau=\inf_{j\in\{2,...,d_A\}} \hat\tau_j. \label{zz}
\end{align}
This quantity is easy to compute and has a nice geometric interpretation that is illustrated in Example \ref{ex:illustration2} below. 

Now we can define 
\begin{align}
\hat\beta = \left\{\begin{array}{ll}2\alpha \Phi(\hat\tau)&\text{ if }\hat r=1 \\
\alpha &\text{ otherwise}\end{array}\right., \label{beta}
\end{align}
where $\Phi(\cdot)$ is the standard normal cumulative distribution function (cdf).\footnote{$\hat\beta$, $\hat\tau_j$, and $\hat\tau$ depend on $\theta$, a dependence that we keep implicit for simplicity.} 
When a second inequality is close to being active, $\hat \tau$ is close to $0$ and then $\hat\beta$ is close to $\alpha$. 
When all the other inequalities are far from active, then $\hat\tau$ is very large and $\hat\beta$ is close to $2\alpha$. 
We define the RCC test for $H_0:\theta=\theta_0$ to be 
\begin{align}
\phi_n^{\text{RCC}}(\theta,\alpha) = 1\{T_n(\theta)>\chi^2_{\hat r,1-\hat\beta}\}.\label{RCC}
\end{align}

Since $\hat\tau\in[0,\infty]$, $\hat\beta\in[\alpha,2\alpha]$. 
Thus we have the following comparison of the CC and the RCC tests: 
\begin{align}
\phi_n^{\textup{RCC}}(\theta,\alpha/2)\leq \phi_n^{\textup{CC}}(\theta,\alpha)\leq \phi_n^{\textup{RCC}}(\theta,\alpha).\label{CCRCC_bound}
\end{align}
Moreover, when an equality is being tested, at least two inequalities are always active, in which case we have $\hat\beta=\alpha$, and the RCC test reduces to the CC test. 

We also define a reduced test that only uses a subset of the inequalities. 
For $J\subseteq\{1,...,d_A\}$, let $\phi^{\textup{RCC}}_{n,J}(\theta,\alpha)$ denote the RCC test defined with $A_J$ and $b_J$ instead of $A$ and $b$, where $b_J$ denotes the subvector of $b$ formed by the elements of $b$ corresponding to the indices in $J$. 
This test is a useful point of comparison when the inequalities not in $J$ are very slack. 

\subsection{Finite Sample Properties}\label{sub:normal-RCC}
When the moments are normally distributed with known variance, the following theorem states the finite sample properties of the RCC test. 

\begin{theorem}\label{thm:normal-rcc} 
Suppose $\Sigma_n(\theta)$ is a positive definite matrix such that $\sqrt{n}(\overline{m}_n(\theta) -\E_{F}\overline{m}_n(\theta)) \sim N(0,\Sigma_n(\theta))$ and $\widehat\Sigma_n(\theta)=\Sigma_n(\theta)$ a.s. for all $\theta\in\Theta_0(F)$. 
Then the following hold. 
\begin{enumerate}
\item[\textup{(a)}] For any $\theta\in\Theta_0(F)$, $\E_{F}\phi_n^{\textup{RCC}}(\theta,\alpha) \leq \alpha$. 
\item[\textup{(b)}] If $A\E_{F}\overline{m}_n(\theta) = b$ and $A\neq \mathbf{0}$, then $\E_{F}\phi_n^{\textup{RCC}}(\theta,\alpha) = \alpha$. 
\item[\textup{(c)}] If $J\subseteq \{1,...,d_A\}$ and  $\{\theta_s\}_{s=1}^\infty$ is a sequence such that $\theta_s\in\Theta_0(F)$ for all $s$, and for all $j\notin J$, $a_j\neq 0$ and $(a'_j\E_{F}\overline{m}_n(\theta_s) - b_j)/\|a_j\|\rightarrow-\infty$ as $s\to\infty$, then 
$$\lim_{s\to\infty}\textup{Pr}_{F}\left(\phi_n^{\textup{RCC}}(\theta_s,\alpha)\neq\phi^{\textup{RCC}}_{n,J}(\theta_s,\alpha)\right)=0.$$
\end{enumerate}
\end{theorem}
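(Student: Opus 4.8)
Throughout I would work in the normal model, using that $T_n(\theta)$ equals $n$ times the squared distance, in the $\Sigma_n(\theta)\inv$ norm, from $\overline m_n(\theta)$ to the polyhedron $M=\{\mu:A\mu\le b\}$, that $\hat\mu$ is the (unique) projection, and that $\hat r$ is simultaneously the codimension of the face of $M$ whose relative interior contains $\hat\mu$ and the dimension of the normal cone of $M$ at $\hat\mu$ (here and below $\Sigma_n(\theta)$ denotes $\widehat\Sigma_n(\theta)$, which are equal by hypothesis, and I discard redundant or zero rows of $A$, harmless exactly as in the footnote defining $\hat r$). For (a) and (b) the plan is to partition the sample space by which face $\mathcal F$ of $M$ contains $\hat\mu$ in its relative interior and to compute each type of face's contribution to $\E_F\phi_n^{\textup{RCC}}(\theta,\alpha)$. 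Faces with $\hat r=0$ occur only when $\overline m_n(\theta)$ is interior to $M$; there $T_n(\theta)=0$, so they contribute $0$. The faces with $\hat r\ge 2$ and the single facet type $\hat r=1$ are handled in the next two paragraphs.

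For a face $\mathcal F$ of codimension $r\ge 2$, the key claim is that, conditional on $\{\hat\mu\in\textup{relint}\,\mathcal F\}$, $T_n(\theta)\sim\chi^2_r$ when $A\E_F\overline m_n(\theta)=b$ and $T_n(\theta)$ is stochastically dominated by $\chi^2_r$ in general. I would prove this in $\Sigma_n(\theta)^{-1/2}$ coordinates: the residual $\overline m_n(\theta)-\hat\mu$ lies in the $r$-dimensional normal space $N$ and is orthogonal to the affine hull of $\mathcal F$; the event $\{\hat\mu\in\textup{relint}\,\mathcal F\}$ factors into independent conditions on the $N^{\perp}$- and $N$-components of $\overline m_n(\theta)$ (the latter being that the $N$-component lies in the fixed normal cone along $\textup{relint}\,\mathcal F$); and the squared norm of an isotropic Gaussian on $N$ conditioned on lying in a cone is $\chi^2_r$, while a nonpositive drift in the non-binding case only lowers the tail. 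Since $\hat\beta=\alpha$ whenever $\hat r\ge 2$, each such face then contributes exactly $\alpha\,\textup{Pr}_F(\widehat J=J)$ in the setting of (b) and at most that much in (a).

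The heart of the proof is the facet case $\hat r=1$, where the refinement operates and must, in the binding case, restore the probability $\alpha\,\textup{Pr}_F(\hat r=0)$ lost at $\hat r=0$. For each $i$ with $a_i\ne 0$, decompose $\overline m_n(\theta)$ into its $\Sigma_n(\theta)\inv$-projection $\overline m_{n,i}^{\perp}$ onto $\{\mu:a_i'\mu=b_i\}$ and the orthogonal remainder, whose signed length scaled by $\sqrt n$ is $V_i=\sqrt n\,(a_i'\overline m_n(\theta)-b_i)/\|a_i\|_{\Sigma_n(\theta)}$; in the binding case $V_i\sim N(0,1)$ independently of $\overline m_{n,i}^{\perp}$. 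On the event that $\hat\mu$ lies in the relative interior of the facet $\mathcal F_i$ of $M$ on which constraint $i$ is active — equivalently, $V_i>0$ and $\overline m_{n,i}^{\perp}\in\textup{relint}\,\mathcal F_i$ — one has $\hat\mu=\overline m_{n,i}^{\perp}$, $T_n(\theta)=V_i^2$, and, crucially, $\hat\tau$ is a function of $\overline m_{n,i}^{\perp}$ alone, because the $V_i$-terms cancel in (\ref{zj}). Since $\chi^2_{1,1-\hat\beta}=\big(\Phi^{-1}(1-\hat\beta/2)\big)^2$ and $\hat\beta=2\alpha\Phi(\hat\tau)$, conditional on $\overline m_{n,i}^{\perp}$ the probability of the joint event ``$\mathcal F_i$ and reject'' equals $\alpha\Phi(\hat\tau)$ on $\{\overline m_{n,i}^{\perp}\in\textup{relint}\,\mathcal F_i\}$, while that of ``$\mathcal F_i$'' equals $\tfrac12$ there; summing over $i$, part (b) reduces to the identity
\[
\sum_{i:\,a_i\ne 0}\E_F\!\left[1\{\overline m_{n,i}^{\perp}\in\textup{relint}\,\mathcal F_i\}\big(\Phi(\hat\tau)-\tfrac12\big)\right]=\textup{Pr}_F(\hat r=0).
\]
I would prove this by writing $\Phi(\hat\tau)-\tfrac12=\textup{Pr}(0<Z_i<\hat\tau\mid\overline m_{n,i}^{\perp})$ for an auxiliary standard normal $Z_i$ independent of $\overline m_{n,i}^{\perp}$, and then showing that the events $\{\overline m_{n,i}^{\perp}\in\textup{relint}\,\mathcal F_i,\ 0<Z_i<\hat\tau\}$ form, up to null sets, a partition of $\{\hat r=0\}$, via a lift of $(\overline m_{n,i}^{\perp},Z_i)$ into the interior of $M$ along the direction built into definition (\ref{zj}). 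Checking that this lift lands precisely in the interior of $M$ — so that the scaling $\|a_1\|_{\Sigma_n(\theta)}\|a_j\|_{\Sigma_n(\theta)}-a_1'\Sigma_n(\theta)a_j$ in $\hat\tau_j$ is exactly the one making the partition exhaustive and disjoint — is the delicate point and the main obstacle. For (a), the same decomposition applies with $V_i\sim N(c_i,1)$, $c_i\le 0$; slackness lowers $\textup{Pr}(V_i>\Phi^{-1}(1-\alpha\Phi(\hat\tau)))$ and shifts the law of $\hat\tau$ relative to the facet events, and one shows the net effect keeps the total at most $\alpha$ — the rigorous form of ``the all-binding configuration is least favorable.''

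For part (c), $n$ is fixed and only $\theta_s$ varies. Taking $\mu_s=\E_F\overline m_n(\theta_s)\in M$ as a feasible point gives $T_n(\theta_s)\le n(\overline m_n(\theta_s)-\mu_s)'\Sigma_n(\theta_s)^{-1}(\overline m_n(\theta_s)-\mu_s)$, which is $\chi^2_{d_m}$-distributed and hence $O_p(1)$; therefore $\|\hat\mu-\mu_s\|_{\Sigma_n(\theta_s)^{-1}}=O_p(1)$, and likewise for $\hat\mu_J$ since $T_{n,J}(\theta_s)\le T_n(\theta_s)$. For $j\notin J$ this yields $b_j-a_j'\hat\mu\ge\|a_j\|\big[(b_j-a_j'\mu_s)/\|a_j\|-O_p(1)\big]\to\infty$ in probability, so with probability tending to one $a_j'\hat\mu<b_j$ strictly for all $j\notin J$, and the same at $\hat\mu_J$; the latter then lies in $M$ and minimizes over the larger polyhedron defined by $A_J$ and $b_J$, so uniqueness of the projection forces $\hat\mu_J=\hat\mu$, whence $\widehat J$, $\hat r$, and $T_n(\theta_s)$ agree with their $J$-reduced counterparts. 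Moreover $\hat\tau_j\to\infty$ for $j\notin J$, so the infimum in (\ref{zz}) is eventually attained at an index in $J$, giving $\hat\tau=\hat\tau_J$ and hence $\hat\beta=\hat\beta_J$; thus $\phi_n^{\textup{RCC}}(\theta_s,\alpha)=\phi_{n,J}^{\textup{RCC}}(\theta_s,\alpha)$ on an event of probability tending to one.
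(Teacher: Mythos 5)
Your overall strategy coincides with the paper's: partition the sample space by the active set at the projection, handle $\hat r\ge 2$ by the cone-conditioning bound, and show that the refinement at $\hat r=1$ exactly absorbs the mass lost at $\hat r=0$. The decomposition $\overline m_n(\theta)=\overline m_{n,i}^\perp + V_i$, the observation that $\hat\tau$ depends only on $\overline m_{n,i}^\perp$, and the reduction of part (b) to the identity $\sum_i\E_F[1\{\overline m_{n,i}^\perp\in\textup{relint}\,\mathcal F_i\}(\Phi(\hat\tau)-\tfrac12)]=\Pr_F(\hat r=0)$ all match the paper's route (compare the sets $K_J$, $C^\Delta_J$, and Lemma \ref{lem:partition2}). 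But two genuine gaps remain, and you flag only one.

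You correctly flag the partition identity as ``the delicate point.'' It is: the paper's Eq.~(\ref{J1os1}) proves that $C^\Delta_J$ (the slice of the $\hat r=0$ region nearest facet $J$) equals $\{x: M_Jx+\kappa_J\in C_J,\ Q_J'(P_Jx-\kappa_J)\in(-\tau(M_Jx+\kappa_J),0)\}$, and Lemma \ref{lem:partition2}(b,c) shows these slices partition $C_0$ up to a Lebesgue-null set $C^|$ (vanishing precisely when $A\neq\mathbf 0$, which is where that hypothesis enters part (b)). The proof involves both inclusions and is where the specific normalization $\|a_1\|_\Sigma\|a_j\|_\Sigma - a_1'\Sigma a_j$ in $\hat\tau_j$ is forced. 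You would also need to account separately for two-sided ($\mathcal J_1^{ts}$) active sets, arising from equalities, where $\hat\tau=0$ automatically and no single codimension-one facet $\mathcal F_i$ exists; your sum over $i$ with $a_i\neq 0$ silently assumes the one-sided case.

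The gap you do not flag is the real obstacle for part (a). Writing ``slackness lowers $\Pr(V_i>\Phi^{-1}(1-\alpha\Phi(\hat\tau)))$ \dots and one shows the net effect keeps the total at most $\alpha$'' restates the goal rather than proves it. Under slackness the numerator decreases, but so does the probability of the conditioning slice $\{V_i>-\hat\tau\}$ in your partition, so you must establish a ratio inequality: for every $c\le 0$ and $\tau\ge 0$,
\[
\Pr_{c}\!\left(V_i>\Phi^{-1}(1-\alpha\Phi(\tau))\ \middle|\ V_i>-\tau\right)\le\alpha .
\]
This is the paper's Lemma \ref{lem:prob-bnd2}, proved via a differential inequality in the shift parameter; a crude ``smaller mean gives smaller tail'' argument does not deliver it because the conditioning event shrinks together with the tail. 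Without this one-dimensional comparison, your part (a) argument has no teeth even granting the partition identity.

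Your part (c) reasoning is sound for the fixed-$n$ normal model and is cleaner than the paper's proof (which tolerates rank degeneration of $A(\theta_s)$ via subsequencing, for later re-use in Theorem \ref{thm:asysize-iid}); the observation that $T_n=O_p(1)$ keeps $\hat\mu$ near $\mu_s$ and away from the distant constraints, and that $\hat\tau_j\to\infty$ for $j\notin J$, is exactly the right skeleton.
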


\noindent\noindent\textbf{Remarks.} (1) Part (a) shows the finite sample validity of the RCC test when the moments are normally distributed with known variance. 
Part (b) shows that the RCC test is size-exact when all the inequalities bind. 
Using (\ref{CCRCC_bound}), part (b) also implies that the finite sample size of the CC test is between $\alpha/2$ and $\alpha$ if all the inequalities may bind.
Part (c) shows that, in the presence of very slack inequalities, the RCC test reduces to the test that only uses the not-very-slack inequalities. 
Put another way, the RCC test adapts to the slackness of the inequalities. 
We call this property ``irrelevance of distant inequalities'' or IDI. 
This is especially useful if all but one inequality is very slack, because the reduced test is the one-sided t-test for the sole binding inequality, which is uniformly most powerful. 

(2) Several papers, including \cite{Kudo1963} and \cite{Wolak1987}, propose a classical test for inequalities that can be applied here. The classical test is based on the $1-\alpha$ quantile of the least favorable distribution of $T_n(\theta)$, which is a mixture of $\chi^2_0, \chi^2_1,\dots,\chi^2_{d_A}$ distributions. 
This test also has exact size, but lacks the IDI property. 
When many inequalities tested are slack, the power of this test can be very low. 
Besides, the critical value typically requires simulation, which makes it computationally less attractive than the RCC test. 

(3) AS introduced a generalized moment selection procedure that achieves an asymptotic version of the IDI property via a sequence of tuning parameters. 
AB's asymptotic normality-based test has finite sample exact size. 
It nearly has the IDI property, but not exactly. 
The size correction the test uses causes it to respond to very slack inequalities, albeit to a lesser extent than the classical test. 

(4) The only other moment inequality test with exact size and the IDI property is the non-hybrid test in \cite{AndrewsRothPakes2019}. 
This test is based on the conditional distribution of the maximum standardized element of $\sqrt{n}(A\overline{m}_n(\theta)-b)$ given the second-largest maximum. 
The test also asymptotes to the one-sided t-test when all but one inequality get increasingly slack. 
However, the test has undesirable power when multiple inequalities are not well separated, which prompts them to recommend a hybrid test instead. 

(5) Theorem 1 and the other results in this paper are stated in terms of hypothesis tests. 
However, they can be extended to results on the coverage probability of confidence sets defined by test inversion in a standard way. 
Specifically, under the conditions of Theorem 1(a), we have for all $\theta_0\in\Theta_0(F)$, 
\begin{equation}
\text{Pr}_F\left(\theta_0\in CS^{\text{RCC}}_n(1-\alpha)\right)\ge 1-\alpha,
\end{equation}
where $CS_n^{\text{RCC}}(1-\alpha)=\{\theta\in\Theta: \phi^{\text{RCC}}_n(\theta,\alpha)=0\}$ is the confidence set formed by inverting the RCC test. 

(6) The proof of Theorem 1 is challenging. 
It relies on a partition of the space of realizations of the moments according to which inequalities are active. 
It then uses a bound on probabilities of translations of sets to bound the rejection probability conditional on each set in the partition. 
\cite{MohamadGoemanZwet2020} prove a special case of part (a) for the CC test when the inequalities define a cone. 
We extend the result to the RCC test and allow the inequalities to define an arbitrary polyhedron, which are important extensions for moment inequality models. \qed

\subsection{A Simple Example}
It helps to demonstrate the CC and RCC tests in a simple two-inequality example. 
\begin{example}\label{ex:illustration2} 
Consider an example where  $d_m=2$, $A=I$, $b=\mathbf{0}$, and $\Sigma_n(\theta)=I$. 
We omit $\theta$ from the notation for ease of exposition. 
Thus, we are testing $H_0:\E_{F}\overline{m}_n\leq 0$ using the statistic $\sqrt{n}\overline{m}_n$, which follows a bivariate standard normal distribution. 

On the space of $\sqrt{n}\overline{m}_n$, the rejection region for the CC test is illustrated by the shaded region in Figure \ref{RejectionRegionEx2}. 
In this example, the likelihood ratio statistic is the squared distance between $\sqrt{n}\overline{m}_n$ and the third quadrant of the plane. 
If $\sqrt{n}\overline{m}_n$ lies in the second or fourth quadrants of the plane, one inequality is active and the $\chi^2_1$ quantile is used. 
If $\sqrt{n}\overline{m}_n$ lies in the first quadrant of the plane, two inequalities are active and the $\chi^2_2$ quantile is used. 
The critical values for the RCC test are illustrated using a dashed line where they deviate from the CC test.\footnote{The discontinuity in the critical value illustrated in Figure 1 is similar to a discontinuity in the generalized moment selection procedure proposed by AS that occurs whenever a moment is at the threshold of being selected.}

\begin{figure}
\begin{center}
\scalebox{0.75}{
\begin{tikzpicture}
\draw[pattern=north west lines, pattern color=gray!70] (0,0) rectangle (4.9,4.9);
\fill[white] (0,0) circle (4);
\draw[thick] (4,0) arc (0:90:4);
\draw[pattern=north west lines, pattern color=gray!70] (-4.9,3.2029) rectangle (0,4.9);
\draw[pattern=north west lines, pattern color=gray!70] (3.2029,-4.9) rectangle (4.9,0);
\draw[ultra thick,color=white] (-4.9,3.2029) -- (-4.9,4.9) -- (4.9,4.9) -- (4.9,-4.9) -- (3.2029,-4.9);
\draw[thick] (-4.9,3.2029) -- (0,3.2029);
\draw[thick] (3.2029,-4.9) -- (3.2029,0);
\draw[dotted, thick] (-4.9,-4.9) -- (0,0);
\fill[black] (2,-3.5) circle (0.1);
\draw (1.9,-3.9) node {\large{$\sqrt{n}\overline{m}_n$}};
\fill[black] (0,-3.5) circle (0.1);
\draw (-0.65,-3.15) node {\large{$\sqrt{n}\hat{\mu}$}};
\fill[black] (-3.5,-3.5) circle (0.1);
\draw (-3.5,-3.9) node {\large{$y$}};
\draw [decorate, decoration={brace,amplitude=10pt},xshift=-0pt,yshift=-5pt,thick] (-0.2,-3.5) -- (-3.3,-3.5);
\draw (-1.85,-4.4) node {\large{$\hat\tau$}};
\draw[dashed, thick] (-4.9,2.6890) -- (-3.5,2.7008) --(-3,2.7146)-- (-2,2.7796) -- (-1.5,2.8415) -- (-1,2.9301) -- (-0.75,2.9857)--(-0.5,3.0496) -- (-0.25,3.1219) -- (-0.1,3.1695) to[out=0,in=-155] (0,3.2029); 
\draw[dashed, thick] (2.6890,-4.9) --  (2.7008,-3.5) --(2.7146,-3)-- (2.7796,-2) -- (2.8415,-1.5)-- (2.9301,-1) -- (2.9857,-0.75)--(3.0496,-0.5)--(3.1219,-0.25) -- (3.1695,-0.1) to[out=90,in=-115] (3.2029,0);
\draw[->,thick] (-3.5,-3.5)--(5,-3.5);
\draw[->,ultra thick] (-5,0)--(5,0) node[right]{\Large{$\sqrt{n}\overline{m}_{n1}$}};
\draw[->,ultra thick] (0,-5)--(0,5) node[above]{\Large{$\sqrt{n}\overline{m}_{n2}$}};
\end{tikzpicture}
}
\end{center}
\caption{Geometric representation of the CC test (shaded) and the RCC test (dashed) in Example \ref{ex:illustration2}. 
}
\label{RejectionRegionEx2}
\end{figure}
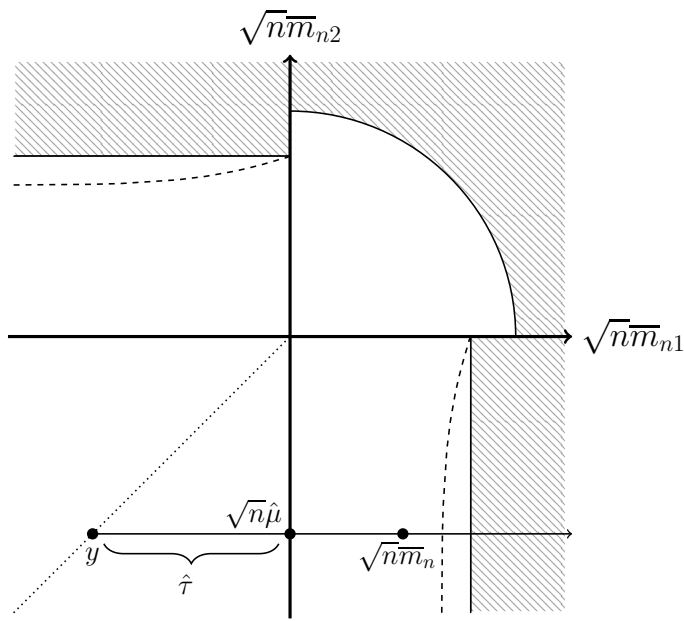
From the figure, we can see that the RCC test deviates from the CC test only when the number of active inequalities is one (in the second and fourth quadrants of the plane). 
In that case, a smaller critical value is used that depends on how far from active the other inequality is, measured using $\hat\tau$. 
The quantity $\hat\tau$ has the following geometric interpretation: the point $\sqrt{n}\hat{\mu}$ is the projection of $\sqrt{n}\overline{m}_n$ onto a face of the polyhedron defined by the inequalities. 
Continue that line into the interior of the polyhedron until you reach a point, $y$, that is equidistant between two inequalities. 
In the figure, the set of points that are equidistant between two inequalities is represented by the dotted line, which is the 45-degree line. 
Then $\hat\tau$ is the distance between $\sqrt{n}\hat{\mu}$ and $y$. 
This geometric interpretation extends to more complicated examples with more inequalities or non-orthogonal inequalities. 

The reason the refinement still controls size is that we condition on the event that $\sqrt{n}\overline{m}_n$ belongs to the ray that starts at $y$ and emanates through $\sqrt{n}\hat\mu$ and $\sqrt{n}\overline{m}_n$. 
It is sufficient to control the conditional rejection probability for every such ray. 
By conditioning on the ray, the denominator of the conditional rejection probability is $\Phi(\hat\tau)$, which allows us to adjust $\alpha$ up to $\hat\beta$. 
\end{example}

\subsection{Asymptotics}

Now we turn to a model without normality or a known variance matrix. 
For expositional purposes, we focus on the independent and identically distributed (i.i.d.) data case here, while results in Appendix \ref{sec:asyRCC} cover more general cases.

With i.i.d. data, we can estimate the variance matrix $\Sigma_n(\theta)$ with the usual sample variance matrix:
\begin{equation}
\widehat{\Sigma}_n(\theta) = n^{-1}\sum_{i=1}^n(m(W_i,\theta) - \overline{m}_n(\theta))(m(W_i,\theta) - \overline{m}_n(\theta))'.
\end{equation}
We show that the RCC test has correct asymptotic size uniformly over a large class of data generating processes. 

The following assumption defines the set of data generating processes allowed. 
Here $|\cdot|$ denotes the matrix determinant, and $\epsilon$ and $M$ are fixed positive constants that do not dependent on $F$ or $\theta$. 
This set of assumptions is weak and also seen in \cite{AndrewsGuggenberger2009} or AS. 
\begin{assumption}\label{assu:iidasy} 
For all $F\in\mathcal{F}$ and $\theta\in\Theta_0(F)$, the following hold. 
\begin{enumerate}
\item[\textup{(a)}] $\{W_i\}_{i=1}^n$ are i.i.d. under $F$. 
\item[\textup{(b)}] $\sigma_{F,j}^2(\theta):=\textup{Var}_{F}(m_j(W_i,\theta))>0$ for $j=1,\dots,d_m$.
\item[\textup{(c)}] $|\textup{Corr}_{F}(m(W_i,\theta))|>\epsilon$, where $\textup{Corr}_{F}(m(W_i,\theta))$ is the correlation matrix of the random vector $m(W_i,\theta)$ under $F$.
\item[\textup{(d)}] $\E_{F}|m_j(W_i,\theta)/\sigma_{F,j}(\theta)|^{2+\epsilon}\leq M\text{ for }j=1,\dots,d_m$.
\end{enumerate}
\end{assumption}
Let $D_F(\theta)$ denote the diagonal matrix formed by $\sigma_{F,j}^2(\theta):j=1,\dots,d_m$. 
For $J\subseteq\{1,\dots,d_A\}$, let $I_J$ denote the rows of the identity matrix corresponding to the indices in $J$.\footnote{Note that $I_JA$ is an alternate notation for $A_J$.} 
The following theorem states the asymptotic properties of the RCC test. 
\begin{theorem}\label{thm:asysize-iid}
Suppose Assumption \textup{\ref{assu:iidasy}} holds. 
\begin{enumerate}
\item[\textup{(a)}] Then 
\[
\underset{n\to\infty}{\textup{limsup}}\sup_{F\in {\cal F}}\sup_{\theta\in\Theta_0(F)}\E_{F}\phi^{\textup{RCC}}_n(\theta,\alpha)\leq \alpha.
\]
And for a sequence  $\{(F_n,\theta_n):F_n\in{\cal F},\theta_n\in \Theta_0(F_n)\}_{n=1}^\infty$  such that $A(\theta_n)D_{F_n}(\theta_n)\to A_\infty$ and for all $J\subseteq{1,\dots,d_A}, \textup{rk}(I_JA(\theta_n)D_{F_n}(\theta_n)) = \textup{rk}(I_JA_\infty)$,

\item[\textup{(b)}] if $A_\infty\neq \mathbf{0}$ and for all $j\in\{1,...,d_A\}$, $\sqrt{n}(a'_j \E_{F_n}\overline{m}_n(\theta_n)-b_j)\rightarrow 0$, then 
\[
\lim_{n\rightarrow\infty}\E_{F_n}\phi^{\textup{RCC}}_n(\theta_n,\alpha)=\alpha, \text{ and}
\]

\item[\textup{(c)}] if instead there is a ${J}\subseteq\{1,\dots,d_A\}$ such that  for all $j\notin {J}$, $\sqrt{n}(a'_j \E_{F_n}\overline{m}_n(\theta_n)-b_j)\rightarrow-\infty$ as $n\rightarrow\infty$, then
\[
\lim_{n\rightarrow\infty} \textup{Pr}_{F_n}\left(\phi^{\textup{RCC}}_n(\theta_n,\alpha)\neq\phi^{\textup{RCC}}_{n,{J}}(\theta_n,\alpha) \right)=0. 
\]
\end{enumerate}
\end{theorem}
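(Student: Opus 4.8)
The plan is to reduce Theorem~\ref{thm:asysize-iid} to the finite-sample normal result of Theorem~\ref{thm:normal-rcc} via a subsequencing-and-coupling argument. The first step is to record that $\phi^{\textup{RCC}}_n(\theta,\alpha)$ depends on the data only through a low-dimensional summary: substituting $\mu=\E_{F}\overline m_n(\theta)+\zeta/\sqrt n$ in (\ref{Tn}) gives $T_n(\theta)=\min_{A\zeta\le\lambda_n}(Z_n-\zeta)'\widehat\Sigma_n(\theta)\inv(Z_n-\zeta)$ with $Z_n:=\sqrt n(\overline m_n(\theta)-\E_{F}\overline m_n(\theta))$ and $\lambda_n:=\sqrt n(b-A\E_{F}\overline m_n(\theta))\ge\mathbf 0$ (nonnegative since $\theta\in\Theta_0(F)$), and $\widehat\mu,\widehat J,\widehat r,\widehat\tau_j,\widehat\tau,\widehat\beta$ are all explicit functions of $(Z_n,\widehat\Sigma_n(\theta),\lambda_n)$. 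Two structural facts make the coupling manageable: $\widehat\Sigma_n(\theta)$ depends on the data only through the centered moments $m(W_i,\theta)-\overline m_n(\theta)$, so it is insensitive to the (possibly unbounded) location $\E_{F}\overline m_n(\theta)$; and the RCC test is invariant to replacing each row pair $(a_j',b_j)$ by $(c_ja_j',c_jb_j)$ for $c_j>0$, so I may normalize the rows of $A$ so that $\|a_j\|_{\Sigma_n(\theta)}=1$ whenever $a_j\ne\mathbf 0$, keeping the relevant direction vectors $\Sigma_n(\theta)^{1/2}a_j$ bounded.

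Next I would fix a sequence $\{(F_n,\theta_n)\}$ (for (a), chosen so that $\E_{F_n}\phi^{\textup{RCC}}_n(\theta_n,\alpha)$ is within $1/n$ of the supremum), and argue along an arbitrary subsequence, extracting a further subsequence along which: $\textup{Corr}_{F_n}(m(W_i,\theta_n))\to\Omega_\infty$, positive definite by Assumption~\ref{assu:iidasy}(b)--(c); the normalized directions $\Sigma_n(\theta_n)^{1/2}a_j$ converge; each integer $\textup{rk}(I_JA(\theta_n))$ is eventually constant; and the deterministic vector $\lambda_n$ converges coordinatewise in $[0,\infty]$ to $\lambda_\infty$, with $J_\infty:=\{j:\lambda_{\infty,j}<\infty\}$. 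A Lyapunov CLT for triangular arrays under Assumption~\ref{assu:iidasy}(d) gives $D_{F_n}(\theta_n)^{-1/2}Z_n\Rightarrow\xi\sim N(\mathbf 0,\Omega_\infty)$; a triangular-array WLLN under the same bound gives $D_{F_n}(\theta_n)^{-1/2}\widehat\Sigma_n(\theta_n)D_{F_n}(\theta_n)^{-1/2}\to_p\Omega_\infty$; and since $\zeta=\mathbf 0$ is feasible in (\ref{Tn}) we have $T_n(\theta_n)=O_p(1)$, hence the restricted minimizer is $O_p(1)$ in standardized coordinates.

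The heart of the argument, and the step I expect to be the main obstacle, is a continuous-mapping lemma showing that, along this subsequence, $\phi^{\textup{RCC}}_n(\theta_n,\alpha)$ converges in distribution to the RCC test in the \emph{normal model with known variance} $\Omega_\infty$, data $\xi$, limiting direction vectors, and intercepts $\lambda_\infty$, \emph{with the distant rows $j\notin J_\infty$ deleted}. Deleting those rows is legitimate because the minimizer is $O_p(1)$ while $\lambda_{n,j}\to\infty$ for $j\notin J_\infty$, so those constraints are inactive at the optimum with probability tending to one and drop out of $T_n$, $\widehat J$, $\widehat\tau$ and $\widehat\beta$; this is also what drives the comparison with the reduced test in (c). For the convergence I need the limit point to be, with $N(\mathbf 0,\Omega_\infty)$-probability one, a continuity point of this map: the candidate discontinuities lie where the active set at the projection onto the limiting polyhedron is non-unique, where $\widehat\tau$ meets the angle-degeneracy branch of (\ref{zj}), or where $T$ equals its critical value, and each such set is Lebesgue-null, hence null under the nondegenerate limit law. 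The genuinely delicate piece is the degrees of freedom $\widehat r=\textup{rk}(A_{\widehat J})$: rank is only lower semicontinuous, so the finite-$n$ rank of an active configuration can exceed its limit. For (b) and (c) this is exactly ruled out by the hypothesis $\textup{rk}(I_JA(\theta_n)D_{F_n}(\theta_n))=\textup{rk}(I_JA_\infty)$ for all $J$, giving exact convergence of $\widehat r$; for (a), where that hypothesis is absent, I would instead use that $\chi^2_{k,1-\alpha}$ is increasing in $k$, so an upward jump of the finite-sample degrees of freedom relative to the limit only enlarges the critical value and makes the test more conservative, still asymptotically bounded by the size of the limit experiment.

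Finally, the three conclusions follow by invoking Theorem~\ref{thm:normal-rcc} in the limit experiment. For (a): the limit is an RCC test in a normal model evaluated at a point of its null set (the limiting inequalities hold at $\mathbf 0$, since $\lambda_\infty\ge\mathbf 0$), so its rejection probability is $\le\alpha$ by Theorem~\ref{thm:normal-rcc}(a); combining this with the previous paragraph and the fact that every subsequence admits a further subsequence of this kind yields $\limsup_n\sup_{F\in\mathcal F}\sup_{\theta\in\Theta_0(F)}\E_F\phi^{\textup{RCC}}_n(\theta,\alpha)\le\alpha$. For (b): the extra hypothesis forces $\lambda_\infty=\mathbf 0$, so $J_\infty=\{1,\dots,d_A\}$, the limit model has $A_\infty\E\overline m=b$ with $A_\infty\ne\mathbf 0$, and Theorem~\ref{thm:normal-rcc}(b) gives limiting rejection probability exactly $\alpha$; since this limit is $\alpha$ along every subsequence, $\E_{F_n}\phi^{\textup{RCC}}_n(\theta_n,\alpha)\to\alpha$. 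For (c): the hypothesis gives $J_\infty\subseteq J$, so $\phi^{\textup{RCC}}_n(\theta_n,\alpha)$ and $\phi^{\textup{RCC}}_{n,J}(\theta_n,\alpha)$ — whose limit uses the rows $J\cap J_\infty=J_\infty$ — have the same $\{0,1\}$-valued distributional limit, and since their difference is $\{-1,0,1\}$-valued, $\textup{Pr}_{F_n}(\phi^{\textup{RCC}}_n(\theta_n,\alpha)\ne\phi^{\textup{RCC}}_{n,J}(\theta_n,\alpha))\to0$. \qed
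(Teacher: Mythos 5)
Your overall strategy — reduce to the normal limit via subsequencing, use the a.s.\ representation, and invoke Theorem~\ref{thm:normal-rcc} in the limit experiment — is the paper's strategy (see Theorem~\ref{thm:momineq-sz} in the appendix), and your treatment of parts (b) and (c) is sound because the hypothesis $\textup{rk}(I_J A(\theta_n)D_{F_n}(\theta_n))=\textup{rk}(I_J A_\infty)$ for all $J$ is exactly what the paper uses (via Lemma~\ref{lem:set_consistency}) to get Kuratowski convergence of the polyhedra and hence convergence of the projection, the active set, $\hat r$, $\hat\tau$ and $\hat\beta$.

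For part (a), however, there is a genuine gap, and it is not repaired by monotonicity of $\chi^2_{k,1-\alpha}$ in $k$. When a subcollection of the rows of $A(\theta_n)D_{F_n}(\theta_n)$ loses rank in the limit, the finite-sample polyhedra $\textup{poly}(A_n,\lambda_n)$ can Kuratowski converge to a \emph{strict subset} of the naive limit $\textup{poly}(A_\infty,\lambda_\infty)$. Concretely, take $d_m=2$, $A_n=\begin{pmatrix}1 & 0\\ -\cos\theta_n & \sin\theta_n\end{pmatrix}$ with $\theta_n\to 0$ and $\lambda_n=0$: each $\textup{poly}(A_n,0)$ is a wedge whose two boundary rays both collapse onto the negative $x_2$-axis, so the Kuratowski limit is the ray $\{0\}\times(-\infty,0]$, whereas $\textup{poly}(A_\infty,0)=\{0\}\times\mathbb R$. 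For data $\xi$ in the first quadrant, $T_n\to\xi_1^2+\xi_2^2$, which exceeds the naive-limit statistic $\xi_1^2$, so the sign of the statistic discrepancy is adverse, not favorable, and the inequality $\chi^2_{\hat r,1-\alpha}\ge\chi^2_{r_\infty,1-\alpha}$ alone does not deliver $\mathds{1}\{T_n>\chi^2_{\hat r,1-\hat\beta}\}\le\mathds{1}\{T_\infty>\chi^2_{r_\infty,1-\beta_\infty}\}+o_p(1)$. In this example the limsup of the rejection probability still equals $\alpha$, but only because the sequence of tests converges to the RCC test for the \emph{Kuratowski-limit} polyhedron (a ray, which one must write as a three-row system), and that limit test — not the naive one — has size $\alpha$ by Theorem~\ref{thm:normal-rcc}.

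This is precisely the complication the paper flags in Remark (3) after Theorem~\ref{thm:asysize-iid} and resolves with Lemma~\ref{BExistence}: along a subsequence one augments $(A_n,\lambda_n)$ by rows $(B_q,h_q)$ that are redundant in finite samples (so $T_n$, $\hat\mu$, $\hat r$ are unchanged; $\hat\tau$, $\hat\beta$ are unchanged by Lemma~\ref{InactivityIsUnchanged}) but make the augmented polyhedra Kuratowski converge (Lemma~\ref{BExistence}(d)) and keep the rank of the active rows consistent with the augmented limit (Lemma~\ref{BExistence}(c)). One then applies Theorem~\ref{thm:normal-rcc}(a) to the augmented limit experiment, for which $0$ remains feasible since $\lambda_n\ge 0$. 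Without something playing the role of this lemma — i.e.\ without replacing the naive $(A_\infty,\lambda_\infty)$ limit by the actual set limit, written in $A\mu\le b$ form with a compatible rank bookkeeping — the continuous-mapping step you describe does not hold on a set of full probability, and the one-sided comparison in your last step is not established. You should either reproduce the augmentation argument or otherwise characterize the Kuratowski limit of $\textup{poly}(A_n,\lambda_n)$ as a polyhedron and prove the associated convergence of $\hat r$ and $\hat\beta$.
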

\noindent\textbf{Remarks.}
(1) Part (a) shows that the RCC test is asymptotically uniformly valid. 
Part (b) shows that when all the inequalities bind, or are sufficiently close to binding, the RCC test does not under-reject asymptotically. 
Part (c) shows an asymptotic IDI property of the RCC test: if some of the inequalities are very slack, the test reduces to the one based only on the not-very-slack inequalities. 
Parts (b) and (c) can be combined to show that the RCC test has exact asymptotic size (and hence is asymptotically non-conservative) when there exists at least one fixed triple $(F,\theta,j)$ such that $\theta\in\Theta_0(F)$, $a_j\neq \mathbf{0}$ and $a_j'\mathbb{E}_Fm(W_i,\theta) = b_j$. 

(2) Theorem \ref{thm:asysize-iid} combines with (\ref{CCRCC_bound}) to imply that the CC test is asymptotically uniformly valid, and when the RCC test is asymptotically non-conservative, it can only be conservative to a limited extent:
\begin{equation}
\alpha/2\leq \underset{n\to\infty}{\textup{limsup}}\sup_{F\in {\cal F}}\sup_{\theta\in\Theta_0(F)}\E_{F}\phi^{\textup{CC}}_n(\theta,\alpha)\leq \alpha. 
\end{equation}

(3) The outline of the proof of Theorem \ref{thm:asysize-iid}(a) is conceptually simple. 
The almost sure representation theorem is invoked on the convergence of the moments, and then Theorem \ref{thm:normal-rcc} is invoked on the limiting experiment. 
However, the details are quite complicated. 
A technical complication that arises is that the rank of the inequalities can be lower in the limit than in the finite sample. 
This is handled by adding additional inequalities so the limit is an appropriate approximation to the finite sample (see Lemma \ref{BExistence} in the appendix). 

(4) Theorem \ref{thm:asysize-iid} is stated for finitely many inequalities, but the test may also work well in high dimensions. 
The biggest challenge in high dimensions is the covariance matrix estimation. 
If a good covariance matrix estimator can be found and the moments are approximately normal, then one can appeal to Theorem \ref{thm:normal-rcc} as a good approximation. 
One way to improve covariance matrix estimation is to assume sparsity or use shrinkage as in \cite{LedoitWolf2012}. 
\qed

\section{Subvector Inference}
\label{sec:sub}
In this section, we apply the CC test and the RCC test to a subvector inference problem in a conditional moment inequality model: 
\begin{align}
\E_{F_Z}[B_Z\overline{m}_n(\theta) -C_Z\delta|Z]\leq d_Z, ~a.s.\label{condmom}
\end{align}
where $Z=\{Z_i\}_{i=1}^n$ is a sample of instrumental variables (each $Z_i$ is taken to be a subvector of $W_i$ without loss of generality), $B_Z$, $C_Z$, and $d_Z$ are $k\times d_m$, $k\times p$, and $k\times 1$ matrices, $\delta$ is an unknown nuisance parameter, $\theta$ is the unknown parameter of interest, and $F_Z$ denotes the conditional distribution of $\{W_i\}_{i=1}^n$ given $\{Z_i\}_{i=1}^n$. 
The subscript $Z$ is used to denote dependence on $Z_1,...,Z_n$. 
The quantities $B_Z$, $C_Z$, and $d_Z$ may also depend on $\theta$ and the sample size $n$, a dependence that we keep implicit for simplicity. 
Similar to the full-vector case, this model can succinctly cover both equalities and inequalities by an appropriate choice of $B_Z$, $C_Z$, and $d_Z$, as well as accommodating upper and lower bounds with a gap between the bounds that depends on $\{Z_i\}_{i=1}^n$. 

The model (\ref{condmom}) is similar to that considered in \cite{AndrewsRothPakes2019}, and both are special cases of conditional moment inequality models compared to the  setup of \cite{AndrewsShi2013}. 
It has two key features: (a) the nuisance parameter $\delta$ enters linearly, and (b) the coefficients on $\delta$ depend only on the exogenous variables $\{Z_i\}_{i=1}^n$. 
These two features allow us to develop a simple subvector test that is also tuning parameter and simulation free.\footnote{\cite{AndrewsRothPakes2019} rely on these features as well.} 
Special as it is, these features reflect a recurring theme of many empirical models: exogenous covariates are used to incorporate heterogeneity and/or to control for confounders. 
Here we consider three examples.
\medskip

\begin{example}\label{ex:intreg0} 
\cite{ManskiTamer2002} consider an interval regression model: 
\begin{equation}
Y^\ast_i = X'_i\theta + Z'_{ci}\delta+\varepsilon_i,
\end{equation}
where $Y^\ast_i$ is a dependent variable, $X_i$ is a vector of possibly endogenous variables, $Z_{ci}$ is a vector of exogenous covariates including the constant. 
There is a vector of excluded instrumental variables $Z_{ei}$ that satisfies $\E[\varepsilon_i|Z_i]=0$, where $Z_i = (Z'_{ci},Z'_{ei})'$. 
The outcome $Y^\ast_i$ is not observed. 
Instead, $Y_{Li}$ and $Y_{Ui}$ are observed such that $Y^\ast_i\in [Y_{Li}, Y_{Ui}]$. 
The imperfect observation of $Y^\ast_i$ may be caused by missing data or survey design where respondents are given a few brackets to select from instead of asked to give a precise answer. 

Let $I(Z_i)$ be a finite non-negative vector of instrumental functions. 
Then we have
\begin{align}
E\left[\left.\begin{pmatrix}(Y_{Li}-X'_i\theta_0)I(Z_i)\\-(Y_{Ui}-X'_i\theta_0)I(Z_i)\end{pmatrix} - \begin{pmatrix}I(Z_i)Z'_{ci}\\-I(Z_i)Z'_{ci}\end{pmatrix}\delta_0\right|Z_i\right]\leq 0,
\end{align}
which yields a model of the form (\ref{condmom}) with $B_Z = I$, $W_i = (Y_{Li},Y_{Ui},X'_i,Z'_i)'$, $m(W_i,\theta) = \left(\begin{smallmatrix}(Y_{Li}-X'_i\theta)I(Z_i)\\ -(Y_{Ui} - X'_i\theta)I(Z_i)\end{smallmatrix}\right)$, $C_Z= n^{-1}\sum_{i=1}^n \left(\begin{smallmatrix}I(Z_i)Z'_{ci}\\-I(Z_i)Z'_{ci}\end{smallmatrix}\right)$, and $d_Z=\mathbf{0}$. 
\end{example}

\begin{example}\label{ex:intreg} 
The second example is a generalized interval regression example, where the model is 
\begin{equation}
\psi(Y^\ast_i,X_i,\theta_0) = Z'_{ci}\delta_0+\varepsilon_i, ~~\mathbb{E}[\varepsilon_i|Z_i]=0,\label{momentcondition}
\end{equation}
where $\psi$ is a known function but $Y_i^\ast$ is unobserved. 
Using auxiliary data and model assumptions, one can construct bounds $\psi^U_i(\theta_0)$ and $\psi^L_i(\theta_0)$ such that 
\begin{align}
\E[\psi_i^L(\theta_0)|Z_i]\leq \E[\psi(Y^\ast_i,X_i,\theta_0)|Z_i]\leq \E[\psi_i^U(\theta_0)|Z_i].\label{bounds}
\end{align}
Then, analogously to the previous example, we have
\begin{align}
\E\left[\left.\begin{pmatrix}\psi_i^L(\theta_0)I(Z_i)\\-\psi_i^U(\theta_0)I(Z_i)\end{pmatrix} - \begin{pmatrix}I(Z_i)Z'_{ci}\\-I(Z_i)Z'_{ci}\end{pmatrix}\delta_0\right|Z_i\right]\leq 0,
\end{align}
which also yields a model of the form (\ref{condmom}).

Such a model is constructed in \cite{GandhiLuShi2019} to conduct inference for the aggregate demand function when observed market share has many zero values. 
There $\psi$ is an inverse demand function, $Y^\ast_i$ is the unobserved choice probability of differentiated products. 
In Section \ref{sub:mc-subv}, we consider a Monte Carlo example of this model where we also provide more details of the bound construction. 
In the application of \cite{GandhiLuShi2019}, control variables ($Z_{ci}$) are essential for the validity of the instruments.
\end{example}

\begin{example} 
\cite{Eizenberg2014} studies the portable PC  market to quantify the welfare effect of eliminating a product. 
Central to the question is the fixed cost of providing the product. 
Eizenberg bounds the fixed cost of product $i$, say $\zeta_i$ using the revealed preference approach. 
To describe the bound, let $u_\ell$ be a vector of zeroes and ones of the same length as the number of potential products that firm $\ell$ can provide. 
The value $1$ indicates that the product is provided, and the value $0$ indicates otherwise. 
Let $f(u_\ell)$ denote expected variable profit for providing  the products indicated by $u_\ell$. 
Then, revealed preference of firm $\ell$ implies that
\begin{align}
\zeta_i&\geq f(u_\ell+e_i) - f(u_\ell)\\
\zeta_i&\leq f(u_\ell) - f(u_\ell-e_i),\nonumber
\end{align}
where $e_i$ is a vector of the same length as $u_\ell$ whose $i$th element is equal to 1 and all other elements are zeroes.

Let $Z_i$ be a vector of product characteristics (including the constant). 
One can consider the following conditional moment inequality model:
\begin{align}
\E\left[(f(u_\ell+e_i) - f(u_\ell) - P(Z_i)'\gamma_0)I(Z_i)|Z_i\right]&\leq 0\\
\E\left[(-f(u_\ell) + f(u_\ell-e_i) + P(Z_i)'\gamma_0)I(Z_i)|Z_i\right]&\leq 0,\nonumber
\end{align}
where $P(Z_i)$ is a vector of known functions of $Z_i$ to accommodate nonlinearity and $I(Z_i)$ is a vector of nonnegative instrumental functions. 
The function $P(Z_i)'\gamma_0$ captures the (observed) heterogeneity of fixed cost across products. 
Using our method, one can construct confidence intervals for each element of $\gamma_0$ and any linear combinations of $\gamma_0$ such as the average derivative.
\end{example}

Two additional examples that fit into our framework are \cite{Katz2007} and \cite{Wollman2018} as reviewed in \cite{AndrewsRothPakes2019}. 

To conduct inference on $\theta$, we invert a test for $H_0:\theta=\theta_0$. 
This is equivalent to the following null hypothesis for a given value $\theta_0$:
\begin{align}
H_0: \exists \delta~\text{such that}~~C_Z\delta\geq B_Z\E_{F_Z}[\overline{m}_n(\theta_0)|Z]-d_Z, ~a.s.\label{H00}
\end{align}
In the next few subsections, we proceed to construct a computationally simple, tuning parameter and simulation free test for (\ref{H00}).

\subsection{Eliminating the Nuisance Parameter}\label{sub:elimination}
Directly testing (\ref{H00}) is difficult because it requires checking the validity of the inequality for all possible values of $\delta$. 
Instead, we construct a representation of the hypothesis that does not involve $\delta$. 
The construction makes use of the following lemma that is a corollary of Theorem 4.2 of \cite{Kohler1967}. 
Kohler's result is a refinement of the well-known Fourier-Motzkin method for eliminating variables from a system of linear inequalities. 
The latter was first introduced to the moment inequality literature by \cite{GuggenbergerHahnKim2008}. 
\begin{lemma}\label{lem:dual}
Let $B$ and $C$ be conformable matrices and $d$ be a conformable vector. 
There exists a matrix $A(B,C)$ and a vector $b(C,d)$ such that 
$$
\{\delta: C\delta\geq B\mu - d\}\neq\emptyset~\Leftrightarrow~ A\mu\leq b.
$$
Furthermore, $A(B,C) = H(C)B$ and $b(C,d) = H(C)d$, where $H(C)$ is the matrix with rows formed by the vertices of the polyhedron $\{h\in R^k: h\geq 0, C'h=\mathbf{0},\mathbf{1}'h=1\}$.
\end{lemma}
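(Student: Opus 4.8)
The plan is to read the claim as a finitely-generated \emph{dual} description of the projection of a polyhedron and to establish it in two steps: a theorem-of-the-alternative step that trades the existential quantifier over $\delta$ for a (still infinite) family of linear inequalities in $\mu$, followed by a Minkowski--Weyl step that collapses this family to the finitely many inequalities indexed by the vertices of the normalized dual polytope. This is precisely the content packaged in Theorem 4.2 of \cite{Kohler1967}; the argument below is how I would reconstruct it.

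Fix $\mu$ and set $w=B\mu-d$, so the question becomes whether the system $C\delta\ge w$ has a solution. The set $K:=\{C\delta-s:\ \delta\ \text{unrestricted},\ s\ge\mathbf{0}\}$ is the Minkowski sum of the range of $C$ and the polyhedral cone $-\mathbb{R}^k_{\ge 0}$, hence a closed convex cone, and $C\delta\ge w$ is feasible iff $w\in K$. By the bipolar theorem, $w\in K$ iff $h'w\le 0$ for every $h$ in the polar cone $K^\circ$, and a one-line computation gives $K^\circ=\{h:\ h\ge\mathbf{0},\ C'h=\mathbf{0}\}$: taking $s=\mathbf{0}$ forces $h'C\delta\le 0$ for all $\delta$, i.e.\ $C'h=\mathbf{0}$, and taking $\delta=\mathbf{0}$ forces $h'(-s)\le 0$ for all $s\ge\mathbf{0}$, i.e.\ $h\ge\mathbf{0}$. (This step is the Farkas/Gale alternative, and is the dual organization of one pass of Fourier--Motzkin elimination.) Now $K^\circ$ is a pointed polyhedral cone contained in $\mathbb{R}^k_{\ge 0}$, so every nonzero $h\in K^\circ$ has $\mathbf{1}'h>0$; by Minkowski--Weyl it is generated by its extreme rays, and scaling each to satisfy $\mathbf{1}'h=1$ puts the generators in bijection with the vertices of $\{h:\ h\ge\mathbf{0},\ C'h=\mathbf{0},\ \mathbf{1}'h=1\}$. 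Since $h\mapsto h'w$ is linear, ``$h'w\le 0$ for all $h\in K^\circ$'' is equivalent to ``$h'w\le 0$ for every vertex $h$'', i.e.\ to $H(C)w\le\mathbf{0}$. Substituting $w=B\mu-d$ gives $H(C)B\mu\le H(C)d$, which is $A\mu\le b$ with $A(B,C)=H(C)B$ and $b(C,d)=H(C)d$.

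The only place any care is needed is the degenerate case in which the dual polytope is empty, i.e.\ $h=\mathbf{0}$ is the only $h\ge\mathbf{0}$ with $C'h=\mathbf{0}$. Then $K^\circ=\{\mathbf{0}\}$, so $K=\mathbb{R}^k$ and $C\delta\ge w$ is feasible for every $w$; correspondingly $H(C)$ has no rows, so $A\mu\le b$ is the empty (vacuously true) system, and the stated equivalence still holds. One should also record two standard polyhedral facts used above: that $K$ is genuinely closed (a Minkowski sum of a subspace and a polyhedral cone is a polyhedron, hence closed), so that the bipolar theorem applies with equality rather than only up to closure; and that restricting the dual condition from all of the polytope to its vertices loses nothing because the polytope is the convex hull of its vertices and the functional in question is linear. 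Everything else is bookkeeping, so I expect no real obstacle beyond cleanly handling this empty-polytope boundary case.
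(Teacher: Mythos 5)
Your proof is correct, but it takes a genuinely different route from the paper. The paper treats Kohler's Theorem 4.2 as a black box — it reduces the lemma to the task of verifying that the vertices of the normalized dual polytope form a \emph{sufficient set of extreme vectors} in Kohler's sense, and its entire argument is the algebraic verification of that compatibility condition, using Kohler's rank-based characterization of extreme vectors (the rows of $C$ indexed by the support of $h$ have rank one less than the support size) and a convex-decomposition contradiction. You instead re-derive the elimination theorem from first principles: fix $\mu$, recast feasibility of $C\delta \ge B\mu - d$ as membership of $B\mu - d$ in the closed polyhedral cone $K = \textup{range}(C) - \mathbb{R}^k_{\ge 0}$, compute $K^\circ = \{h \ge 0 : C'h = 0\}$ by Farkas/bipolar duality, and then collapse the infinite family of dual inequalities to finitely many via Minkowski--Weyl and the standard bijection between extreme rays of a pointed cone and vertices of its cross-section. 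Your approach is self-contained and conceptually transparent (the dual description emerges rather than being handed down), and it cleanly handles the empty-polytope degenerate case, which the paper leaves implicit. The paper's approach is shorter on the page but relies on the reader importing Kohler's terminology and statement; it also only argues that extreme vectors yield vertices, leaving the converse inclusion (every vertex lies on an extreme ray, needed for the vertex set to be \emph{exactly} one vector per ray) implicit as standard — a gap your Minkowski--Weyl argument closes for free. Both are valid; yours is the more pedagogically complete one.
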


It is immediately implied by Lemma \ref{lem:dual} that the statement in (\ref{H00}) is equivalent to
\begin{align}
A_Z\E_{F_Z}[\overline{m}_n(\theta_0)|Z]\leq b_Z, \label{eliminated}
\end{align}
where $A_Z = A(B_Z,C_Z)$ and $b_Z=b(C_Z,d_Z)$, which, like $B_Z, C_Z$, and $d_Z$, 
may also depend on $Z_1,...,Z_n$. 
This is the same as model (\ref{momineq}) except that the expectation is conditional on the instrumental variables. 

If $H(C_Z)$ and thus $A_Z$ and $b_Z$ can be obtained, one can apply a conditional version of any inequality testing procedures on (\ref{eliminated}). 
However, a significant obstacle is that calculating $H(C_Z)$ requires enumerating  the vertices of a polyhedron. 
Vertex enumeration is doable in small dimension, but becomes highly nontrivial as the number of conditional moment inequalities increases because the number of vertices can grow exponentially with $k$. 
See for example \cite{SierksmaZwols2015}. 

On the other hand, we shall see below that, if one applies the CC test on the hypothesis (\ref{eliminated}), one only needs to know the {\em rank} of the rows of $A_Z$ corresponding to the active inequalities. 
And this rank can be obtained by solving a relatively small number of linear programming problems without computing $A_Z$ or $b_Z$. 
This is the key insight that makes the subvector version of the CC and RCC tests feasible. 

\subsection{The Subvector CC and RCC Tests}\label{sub:sub-normal}

\subsubsection{The Subvector CC Test}
Let $\widehat\Sigma_n(\theta)$ denote an estimator of the conditional variance matrix of the moments: 
\begin{align}
\Sigma_n(\theta) = \textup{Var}(\sqrt{n}\overline{m}_n(\theta)|Z). 
\end{align}
The subvector CC test for $H_0:\theta=\theta_0$ is the CC test defined in (\ref{Tn})-(\ref{testCC}) for the inequalities (\ref{eliminated}). 
We denote it by $\phi_n^{\textup{sCC}}(\theta,\alpha)$ to distinguish it from the full-vector CC test.

Now we describe how to compute $T_n(\theta)$ and $\hat r$ without computing $A_Z$ or $b_Z$. 
For $T_n(\theta)$, it is immediate from Lemma \ref{lem:dual} that it can be equivalently written as
\begin{align}
T_n(\theta) = \min_{\delta,\mu: C_Z\delta\geq B_Z\mu - d_Z}n(\overline{m}_n(\theta)-\mu)'\widehat\Sigma_n(\theta)^{-1}(\overline{m}_n(\theta)-\mu).\label{Tnsub}
\end{align}
Let $\hat{\mu}$ denote the minimizer. 
For $\hat r$, we first give the following lemma. 
In the lemma, ${\cal H} = \{h\in R^k:h\geq 0, C'_Zh=\mathbf{0},h'(B_Z\hat{\mu} - d_Z) = 0\}$, $B'_Z{\cal H} = \{B'_Zh:h\in{\cal H}\}$, and $\text{rk}(S)$ is the maximum number of linearly independent vectors in $S$. 
\begin{lemma}\label{lem:rank} 
$\hat r=\textup{rk}(B'_Z{\cal H}).$ 
\end{lemma}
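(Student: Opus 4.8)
The plan is to connect the rank of $A_{\widehat J}$ — the matrix whose rows are the active inequalities in the eliminated model $A_Z \mu \le b_Z$ — with the span of $B_Z' \mathcal{H}$, using the structure $A_Z = H(C_Z) B_Z$ and $b_Z = H(C_Z) d_Z$ from Lemma \ref{lem:dual}. First I would recall that the rows of $A_Z$ are indexed by the vertices $h$ of the polyhedron $P = \{h \in \R^k : h \ge 0, C_Z' h = \mathbf{0}, \mathbf{1}' h = 1\}$, and each such row has the form $h' B_Z$ with intercept $h' d_Z$. An inequality (row) is active at $\hat\mu$ precisely when $h' B_Z \hat\mu = h' d_Z$, i.e. $h'(B_Z \hat\mu - d_Z) = 0$; so the set of vertices giving active rows is exactly $P \cap F$, where $F = \{h : h'(B_Z\hat\mu - d_Z) = 0\}$ is a supporting hyperplane of $P$ (it is supporting because $A_Z \hat\mu \le b_Z$ holds, by Lemma \ref{lem:dual} together with the fact that $\hat\mu$ is feasible for the constraint set in \eqref{Tnsub}). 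Hence $\widehat J$ corresponds to the vertices of the face $Q := P \cap F$ of $P$, and $A_{\widehat J}$ is the matrix whose rows are $\{h' B_Z : h \text{ a vertex of } Q\}$.

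The next step is to observe that $\mathcal{H} = \{h \ge 0 : C_Z' h = \mathbf{0}, h'(B_Z\hat\mu - d_Z) = 0\}$ is the cone generated by the face $Q$: since $F$ is a supporting hyperplane for $P$, the set $\{h \ge 0 : C_Z' h = \mathbf{0}\} \cap F$ is the conical hull of $Q$, because every nonzero $h$ in this set scales (by $1/\mathbf{1}'h > 0$) to a point of $Q$, and $\mathbf{1}'h = 0$ would force $h = 0$ given $h \ge 0$. Therefore $\mathcal{H} = \operatorname{cone}(Q)$, and so $\operatorname{span}(\mathcal{H}) = \operatorname{span}\{h : h \text{ vertex of } Q\}$. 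Applying the linear map $B_Z'$, we get $\operatorname{span}(B_Z' \mathcal{H}) = \operatorname{span}\{B_Z' h : h \text{ vertex of } Q\} = \operatorname{row space of } A_{\widehat J}$. Taking dimensions, $\textup{rk}(B_Z' \mathcal{H}) = \dim \operatorname{span}(B_Z'\mathcal{H}) = \textup{rk}(A_{\widehat J}) = \hat r$, which is the claim. (Here $\textup{rk}(B_Z'\mathcal{H})$ means the maximal number of linearly independent vectors in $B_Z'\mathcal{H}$, i.e. the dimension of its span, as defined in the statement.)

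I would need to be a little careful about two degenerate possibilities. One is that $P$ or $Q$ could be empty, or that $\hat\mu$ lies strictly inside the feasible region so that no inequality binds; then $\widehat J = \emptyset$, $A_{\widehat J}$ is the empty matrix with rank $0$, and correspondingly $\mathcal{H} = \{0\}$ (only $h = 0$ satisfies all constraints including $h'(B_Z\hat\mu - d_Z) = 0$ with strict feasibility), so $B_Z'\mathcal{H} = \{0\}$ has rank $0$ — consistent. The other is redundancy: $P$ might have no vertices in the usual sense if it is empty, but when $A_Z$ is nontrivial the constraint $\mathbf{1}'h = 1$ together with $h \ge 0$ makes $P$ a (bounded) polytope, so vertex enumeration is well-defined. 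The main obstacle, and the step deserving the most care, is the identification $\mathcal{H} = \operatorname{cone}(Q)$ and hence $\operatorname{span}(\mathcal{H}) = \operatorname{span}(\text{vertices of } Q)$: this rests on $F$ genuinely being a \emph{supporting} hyperplane (not a cutting one), which in turn uses the equivalence in Lemma \ref{lem:dual} applied at the specific feasible point $\hat\mu$ — if $\hat\mu$ were infeasible the argument would collapse, so I would state explicitly at the outset that $\hat\mu$, being the minimizer in \eqref{Tnsub}, satisfies $\{\delta : C_Z\delta \ge B_Z\hat\mu - d_Z\} \neq \emptyset$ and therefore $A_Z\hat\mu \le b_Z$.
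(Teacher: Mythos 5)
Your proof is correct and takes essentially the same approach as the paper's; you phrase the key step geometrically ($\mathcal{H}$ is the cone over the face $Q = P\cap F$, whose vertices are exactly the rows of $H(C_Z)$ orthogonal to $B_Z\hat\mu - d_Z$, so $\operatorname{span}(\mathcal{H}) = \operatorname{span}$ of those rows), while the paper phrases the same content algebraically (every $\tilde h\in\mathcal{H}$ is a nonnegative combination of the rows of $H(C_Z)$, and the coefficient on any row with $h'(B_Z\hat\mu-d_Z)<0$ must vanish). Both then apply $B_Z'$ and conclude by equating the two spans.
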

The rank of the polyhedral cone $B'{\cal H}$ for some matrix $B$ is the dimension of $\textup{span}(B'{\cal H})$, the linear span of $B'{\cal H}$. 
The key to calculate it is to find the set of linear equalities that define $\textup{span}(B'{\cal H})$. 
This can be done by finding the parametric form of ${\cal H}$ as described in \cite{HuynhLassezLassez1992}. 
That is, a representation of ${\cal H}$ of the form
\begin{align}
{\cal H} = \{h\in \mathbb{R}^k: h_{J_{co}^c} = G_1h_{J_{co}}, G_0h_{J_{co}}\geq0\},\label{paraform}
\end{align}
where $J_{co}$ is a subset of $\{1,\dots,k\}$, $J_{co}^c$ is its complement, $h_J$ is the subvector of $h$ selected by the index set $J$, and $G_1$ and $G_0$ are two matrices constructed so that (\ref{paraform}) holds and $\{h_{J_{co}}:G_0h_{J_{co}}\geq 0\}$ is a full-dimensional polyhedral cone. 
In other words, the parametric form divides $h$ into a core subvector $h_{J_{co}}$ and a non-core subvector $h_{J_{co}^c}$, such that $h_{J_{co}^c}$ is the largest subvector that can be written as a linear combination of $h_{J_{co}}$. 
Based on (\ref{paraform}), it is clear that 
\begin{align}
\textup{span}({\cal H}) = \{h\in \mathbb{R}^k: h_{J_{co}^c }= G_1 h_{J_{co}}\}.
\end{align}
Accordingly, algebra shows that
\begin{align}
\textup{span}(B'{\cal H}) = B'\textup{span}({\cal H}) = \{(B_{J_{co}^c}'G_1+B_{J_{co}}')h_{J_{co}}:h_{J_{co}}\in \mathbb{R}^{k_{co}}\},
\end{align}
where $B_{J}$ is the matrix formed by the rows of $B$ corresponding to the indices $J$, and $k_{co}$ is the number of elements in $J_{co}$. 
This implies that $\text{rk}(B'{\cal H}) = \text{rk}(G_1'B_{J_{co}^c}+B_{J_{co}})$. 
Finally, invoking Lemma \ref{lem:rank}, we have,
\begin{equation}
\hat r = \text{rk}(G_1'B_{J_{co}^c}+B_{J_{co}}).
\end{equation}

It remains to find the parametric form (\ref{paraform}). 
In particular we need to find the matrix $G_1$.\footnote{The matrix $G_0$ needs not be computed since it does not enter the subsequent rank calculation.} 
\cite{HuynhLassezLassez1992} present a way to do this by solving $k$ linear programming (LP) problems. 
The first step is to determine if the constraint $h_j\geq 0$ is an implicit (or implied) equality, which holds if 
\begin{align}
{\cal H} = \{h\in \R^k:h_j=0,h_{-j}\geq 0,C'_Zh= 0, (B_Z\hat{\mu}-d_Z)'h = 0\}, \label{eq:Bset}
\end{align}
where $h_j$ is the $j$th element of $h$ and $h_{-j}$ is $h$ with its $j$th element removed. 
To determine if $h_j\geq 0$ is an implicit equality, one simply solves the LP problem:
\begin{align}
\min_{h} - e_j'h&\label{LPj}\\
s.t.~~~~~ h&\geq 0\nonumber\\
C'_Zh&=\mathbf{0}\nonumber\\
(B_Z\hat{\mu}-d_Z)'h&=0,\nonumber
\end{align}
where $e_j$ is the $j$th column of $I_k$. 
Then $h_j\geq 0$ is an implicit equality if and only if the minimum value of this linear programming problem is zero.  

Let $J_{eq}$ denote the set of all $j\in \{1,\dots,k\}$ such that $h_j\geq 0$ is an implicit equality. 
Let $I_{J}$ denote the submatrix of $I_{k}$ formed by rows of $I_{k}$ corresponding to indices in $J$. 
Then ${\cal H}$ is defined by $h\geq 0$ and the following linear equation system
\begin{align}
\left(\begin{smallmatrix}I_{J_{eq}}\\C'_Z\\(B_Z\hat{\mu}-d_Z)'\end{smallmatrix}\right)h =0.\label{equalities}
\end{align}
Finding the matrix $G_1$ amounts to solving for the $k-k_{co}$ redundant variables from this equation system where $k_{co} = k-\text{rk}\left(\begin{smallmatrix}I_{J_{eq}}\\C'_Z\\(B_Z\hat{\mu}-d_Z)'\end{smallmatrix}\right)$. 
This can be done easily via Gauss-Jordan reduction.

In fact, in the special case that $\text{rk}(B_Z) = k$, we have $\text{rk}(B'_Z{\cal H}) = \text{rk}({\cal H})$, and there is no need to even calculate $G_1$. 
This is because $\text{rk}({\cal H}) = k_{co} = k-\text{rk}\left(\begin{smallmatrix}I_{J_{eq}}\\C'_Z\\(B_Z\hat{\mu}-d_Z)'\end{smallmatrix}\right)$ by the definition of the parametric form. \medskip

To sum up, we propose the following procedure to implement the sCC test.
\begin{enumerate}
\item[Step 1.] Calculate $T_n$ according to (\ref{Tnsub}). If $T_n=0$, let $\hat r=0$. 
Otherwise, proceed to the next step.
\item[Step 2.] For each $j$, solve the linear programming (LP) problem (\ref{LPj}) and collect the $j$'s such that the minimum value of the LP problem is 0 in $J_{eq}$.
\item[Step 3.] Calculate $k_{co} = k-\text{rk}\left(\begin{smallmatrix}I_{J_{eq}}\\C'_Z\\(B_Z\hat{\mu}-d_Z)'\end{smallmatrix}\right)$. 
If $\text{rk}(B_Z) = k$, let $\hat r= k_{co}.$
Otherwise proceed to the next step.
\item[Step 4.] Apply Gauss-Jordan reduction on (\ref{equalities}) to find $G_1$ and $J_{co}$. Then let $\hat r= \text{rk}(G_1'B_{J_{co}^c}+B_{J_{co}}).$
\end{enumerate}

In this procedure, the most time consuming step is Step 2 where we solve $k$ LP problems. 
Fast and accurate algorithms for LP problems are well-developed and widely available, which makes the CC test feasible even for a large number of inequalities and nuisance parameters.

\subsubsection{The Subvector RCC Test}\label{sec:sRCC}
The subvector RCC test for $H_0:\theta=\theta_0$ is the RCC test defined in (\ref{RCC}) for the inequalities (\ref{eliminated}). 
We denote it by $\phi_n^{\textup{sRCC}}(\theta,\alpha)$ to distinguish it from the full-vector RCC test. 

Note that it is only when  $\hat r = 1$ {\em and} $T_n(\theta)\in [\chi^2_{1,1-2\alpha},\chi^2_{1,1-\alpha}]$ that the RCC test can possibly differ from the CC test. Thus, the RCC test can be performed via the following steps. 

\begin{enumerate}
\item[Step R1.] Compute $T_n(\theta)$ and $\hat r$ using the approach described above for the sCC test. 

If $\hat r\neq 1$ or $T_n(\theta)\notin [\chi^2_{1,1-2\alpha},\chi^2_{1,1-\alpha}]$, stop and let
$$
\phi_n^{\textup{sRCC}}(\theta,\alpha)  =\phi_n^{\textup{sCC}}(\theta,\alpha).
$$
Otherwise, proceed to the next step. 

\item[Step R2.] Compute $H(C_Z)$ using a polyhedron vertex enumeration algorithm. 
For example, we use {\bf {con2vert.m}} in Matlab.\footnote{The Matlab function {\bf con2vert.m} requires the polyhedron to be bounded and full-dimensional. Thus, we apply this function after reducing the bounded polyhedron $\{h\in R^k:h\geq 0, C_Z'h=0, \mathbf{1}'h=1\}$ to a lower, full-dimensional polyhedron by finding the parametric form using the method in Huynh et al. (1992).} 
Let $A_Z = H(C_Z)B_Z$ and $b_Z = H(C_Z)d_Z$. 
Then define $\hat\beta$ by the formulae in (\ref{zj})-(\ref{beta}). 
Finally let 
\begin{align}
\phi_n^{\textup{sRCC}}(\theta,\alpha)=1\{T_n(\theta)>\chi^2_{1,1-\hat\beta}\}.\nonumber
\end{align}
\end{enumerate}
The vertex enumeration part of Step R2 can be difficult for large $k$ and $p$. 
However, notice that Step R2 is only needed when $\hat r=1$ and $T_n(\theta)\in [\chi^2_{1,1-2\alpha},\chi^2_{1,1-\alpha}]$ in Step R1, which is infrequent for large $k$. 

\subsection{Finite Sample Validity of the sCC and sRCC Tests}
The following result shows the finite sample properties of the sRCC test assuming normally distributed moments and a known conditional variance matrix. 
The result is a corollary of Theorem \ref{thm:normal-rcc}. 
Let $z$ be a realization of $Z$, and let $\Theta_0(F_z) = \{\theta\in\Theta: \exists\delta\textup{ s.t. }C_z\delta\geq B_z\E_{F_z}[\overline{m}_n(\theta)|z]-d_z\}$. 
Let $A_z=A(B_z,C_z)$ and $b_z=b(C_z,d_z)$ from Lemma \ref{lem:dual}. 
Let $e_j$ denote the $R^{d_A}$-vector with $j$th element one and all other elements zero. 
For any $J\subseteq \{1,\dots,d_A\}$, let $\phi^{\textup{sRCC}}_{n,J}(\theta,\alpha)$ denote the sRCC test defined using $I_J A_z$ and $I_Jb_z$ in place of $A_z$ and $b_z$. 

\begin{corollary}\label{cor:normal} 
Suppose $\Sigma_n(\theta)$ is a positive definite matrix such that the conditional distribution of $\sqrt{n}(\overline{m}_n(\theta) - \E_{F_Z}\overline{m}_n(\theta))$ given $Z=z$ is distributed $N(\mathbf{0},\Sigma_n(\theta))$ and $\widehat\Sigma_n(\theta)=\Sigma_n(\theta)$ a.s. for all $\theta\in\Theta_0(F_z)$. 
Then the following hold. 
\begin{enumerate}
\item[\textup{(a)}] For any $\theta\in \Theta_0(F_z)$, $\E_{F_z}[\phi_n^{\textup{sRCC}}(\theta,\alpha)|z]\leq \alpha$. 
\item[\textup{(b)}] If $A_z\E_{F_z}[\overline{m}_n(\theta)|z]=b_z$ and $A_z\neq\mathbf{0}$, then $\E_{F_z}[\phi_n^{\textup{sRCC}}(\theta,\alpha)|z]= \alpha$. 
\item[\textup{(c)}] If $J\subseteq\{1,...,d_A\}$ and $\{\theta_s\}_{s=1}^\infty$ is a sequence such that $\theta_s\in\Theta_0(F_z)$ for all $s$, and for all $j\notin J$, $e'_jA_z\neq 0$ and $e'_{j}(A_z\E_{F_z} \overline{m}_n(\theta_s)-b_z)/\|e'_jA_z\|\rightarrow-\infty$ as $s\rightarrow\infty$, then 
\[
\lim_{s\rightarrow\infty}\textup{Pr}_{F_z}\left(\phi^{\textup{sRCC}}_{n}(\theta_s,\alpha)\neq\phi^{\textup{sRCC}}_{n,J}(\theta_s,\alpha)|z\right)=0. 
\]
\end{enumerate} 
\end{corollary}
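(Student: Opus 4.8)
The plan is to derive this corollary directly from Theorem \ref{thm:normal-rcc} by identifying the subvector problem, after nuisance-parameter elimination, with a full-vector moment inequality problem of the form studied in Section \ref{sec:momineq}. First, fix a realization $z$ of $Z$. By Lemma \ref{lem:dual}, the event defining $\Theta_0(F_z)$, namely that there exists $\delta$ with $C_z\delta\geq B_z\E_{F_z}[\overline{m}_n(\theta)|z]-d_z$, is equivalent to $A_z\E_{F_z}[\overline{m}_n(\theta)|z]\leq b_z$, where $A_z=A(B_z,C_z)=H(C_z)B_z$ and $b_z=b(C_z,d_z)=H(C_z)d_z$. Moreover, by the discussion following Lemma \ref{lem:dual} and by Lemma \ref{lem:rank}, the statistic $T_n(\theta)$ computed via (\ref{Tnsub}) coincides with the statistic (\ref{Tn}) built from $A_z,b_z$, and the rank $\hat r$ of the active rows of $A_z$ at the restricted minimizer agrees in both formulations. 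Hence the sRCC test $\phi_n^{\textup{sRCC}}(\theta,\alpha)$ is literally the full-vector RCC test $\phi_n^{\textup{RCC}}(\theta,\alpha)$ applied to the inequality system $(A_z,b_z)$, and likewise $\phi^{\textup{sRCC}}_{n,J}$ is the reduced RCC test $\phi^{\textup{RCC}}_{n,J}$ for that system.

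Next I would verify that the hypotheses of Theorem \ref{thm:normal-rcc} hold for this system conditional on $Z=z$. The assumed conditional normality of $\sqrt{n}(\overline{m}_n(\theta)-\E_{F_z}\overline{m}_n(\theta))$ given $Z=z$ with variance $\Sigma_n(\theta)$, together with $\widehat\Sigma_n(\theta)=\Sigma_n(\theta)$ a.s., is exactly the normality-with-known-variance premise of Theorem \ref{thm:normal-rcc}, now understood as a statement about the conditional distribution given $z$; all probabilities and expectations in Theorem \ref{thm:normal-rcc} become the conditional $\Pr_{F_z}(\cdot|z)$ and $\E_{F_z}[\cdot|z]$. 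With this identification, part (a) of the corollary is immediate from part (a) of Theorem \ref{thm:normal-rcc}: for $\theta\in\Theta_0(F_z)$ we have $A_z\E_{F_z}[\overline{m}_n(\theta)|z]\leq b_z$, so $\E_{F_z}[\phi_n^{\textup{sRCC}}(\theta,\alpha)|z]\leq\alpha$. Part (b) follows from Theorem \ref{thm:normal-rcc}(b) applied to $(A_z,b_z)$: the condition $A_z\E_{F_z}[\overline{m}_n(\theta)|z]=b_z$ with $A_z\neq\mathbf 0$ is precisely the binding-with-nonzero-$A$ hypothesis there, yielding exact size $\alpha$. For part (c), I would note that $e_j'A_z=a_j'$ is the $j$th row of $A_z$ and $e_j'b_z=b_j$, so the stated condition $e_j'A_z\neq 0$ and $(e_j'(A_z\E_{F_z}\overline{m}_n(\theta_s)|z)-e_j'b_z)/\|e_j'A_z\|\to-\infty$ for $j\notin J$ matches verbatim the hypothesis of Theorem \ref{thm:normal-rcc}(c) for the system $(A_z,b_z)$; the conclusion on $\Pr_{F_z}(\phi^{\textup{sRCC}}_n\neq\phi^{\textup{sRCC}}_{n,J}|z)\to 0$ then transfers directly.

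The main obstacle, and the one step deserving care, is establishing rigorously that the three computational objects $T_n(\theta)$, $\hat\mu$ (or rather which inequalities are active at it), and $\hat r$ really do coincide between the nuisance-parameter representation (\ref{Tnsub}) and the eliminated representation (\ref{Tn})–(\ref{Jhat}) built from $A_z,b_z$. The equality of the minimized values is a clean consequence of Lemma \ref{lem:dual} applied pointwise to the feasible sets $\{\mu:A_z\mu\leq b_z\}=\{\mu:\exists\delta,\,C_z\delta\geq B_z\mu-d_z\}$. The subtler point is that $\hat r$, defined in the subvector procedure via Lemma \ref{lem:rank} as $\textup{rk}(B_Z'\mathcal H)$, equals $\textup{rk}((A_z)_{\widehat J})$ with $\widehat J$ the active set for $A_z\hat\mu\leq b_z$; this is exactly the content of Lemma \ref{lem:rank} together with the identity $A_z=H(C_z)B_z$ and the fact that the rows of $H(C_z)$ indexed by $\widehat J$ are precisely the vertices $h$ of $\{h\geq 0,C_z'h=\mathbf 0,\mathbf 1'h=1\}$ with $h'(B_z\hat\mu-d_z)=0$, whose nonnegative span is $\mathcal H$. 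Once this identification is in place, there is nothing left to prove beyond invoking Theorem \ref{thm:normal-rcc}; I would present the argument as: (i) reduce to conditional-on-$z$ statements; (ii) invoke Lemma \ref{lem:dual} and Lemma \ref{lem:rank} to recognize the sRCC test as the RCC test for $(A_z,b_z)$; (iii) apply Theorem \ref{thm:normal-rcc}(a), (b), (c) termwise.
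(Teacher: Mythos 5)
Your proposal is correct and matches the paper's approach: the paper simply asserts that the result ``is a corollary of Theorem \ref{thm:normal-rcc},'' and your argument supplies exactly the identification the paper has in mind, namely that Lemma \ref{lem:dual} converts the null to $A_z\E_{F_z}[\overline{m}_n(\theta)\,|\,z]\le b_z$, Lemma \ref{lem:rank} guarantees the computed $\hat r$ equals $\textup{rk}((A_z)_{\widehat J})$, and the sRCC test (whose $\hat\tau,\hat\beta$ are by definition computed from $A_z,b_z$ in Step R2) is then literally the full-vector RCC test for $(A_z,b_z)$, so Theorem \ref{thm:normal-rcc}(a)--(c) apply conditionally on $z$.
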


\noindent\noindent\textbf{Remarks.} 
(1) Part (a) shows the finite sample validity of the sRCC test under normality. 
Part (b) states mild conditions under which the sRCC test is not conservative. 
Part (c) states the IDI property of the sRCC test. 

(2) Since the sRCC test rejects whenever the sCC test does, the corollary implies the validity of the sCC test: $\E_{F_z}[\phi_n^{\textup{sCC}}(\theta,\alpha)|z]\leq \alpha$. 

(3) A result for asymptotically uniform size control of the subvector tests is available in the appendix. 
It relies on the asymptotic normality of the moments conditional on $Z_1,...,Z_n$ and a consistent estimator for $\Sigma_n(\theta)$. 
When the data are i.i.d., we provide low level conditions for two cases: discrete $Z_i$ and continuous $Z_i$. 

In the first case, suppose $Z_i$ takes on a finite number of values in a set $\Z$. 
A straightforward estimator of $\textup{Var}_{F_z}(\sqrt{n}\overline{m}_n(\theta))$ is the weighted average of the sample variances of $m(W_i,\theta)$ within each category of $Z_i$: 
\begin{align}
\widehat{\Sigma}_n(\theta) = \sum_{\ell\in \Z}\frac{n_\ell}{n}\frac{1}{n_\ell-1}\sum_{i=1}^n(m(W_i,\theta) - \overline{m}_{n}^\ell(\theta))(m(W_i,\theta) - \overline{m}_{n}^\ell(\theta))'1\{z_i= \ell\}, \label{condVar1}
\end{align}
where $n_\ell = \sum_{i=1}^n 1\{z_i=\ell\}$ and $\overline{m}_n^\ell(\theta) = \frac{1}{n_\ell}\sum_{i=1}^n m(W_i,\theta)1\{z_i=\ell\}$. 
As we show in Appendix \ref{app:iidsub}, sufficient conditions for the consistency of this estimator involve boundedness of the fourth moment of $m(W_i,\theta)$ and the assumption that every $z_i$ value occur twice or more in the sample $\{z_i\}_{i=1}^n$ eventually. 

When $Z_i$ contains continuous random variables, one can use a nearest neighbor matching estimator similar to that used for the standard error of a regression discontinuity estimator in Abadie, Imbens, and Zheng (2014).\footnote{This is also the estimator used in \cite{AndrewsRothPakes2019}.} 
Let $\Sigma_{Z,n} = n^{-1}\sum_{i=1}^n(Z_i-\overline{Z}_n)(Z_i-\overline{Z}_n)'$ where $\overline{Z}_n = n^{-1}\sum_{i=1}^n Z_i$. 
For each $i$, define the nearest neighbor to be
\begin{align}
\ell_Z(i) = \text{argmin}_{j\in\{1,\dots,n\},j\neq i} (Z_i-Z_j)'\Sigma_{Z,n}^{-1}(Z_i-Z_j).
\end{align}
When the argmin is not unique, picking one randomly does not affect the consistency of the resulting estimator. 
The estimator of $\Sigma_n(\theta)$ is then given by
\begin{align}
\widehat{\Sigma}_n(\theta) = \frac{1}{2n}\sum_{i=1}^n(m(W_i,\theta) -m(W_{\ell_Z(i)},\theta))(m(W_i,\theta) - m(W_{\ell_Z(i)},\theta))'.\label{condVar_nn}
\end{align}
As we show in Appendix \ref{app:iidsub}, sufficient conditions for the consistency of this matching estimator involves the boundedness of $\{z_i\}_{i=1}^\infty$ and the Lipschitz continuity of $\textup{Var}(m(W_i,\theta)|z_i)$ in $z_i$. \qed

\section{Monte Carlo Simulations}
We consider two sets of Monte Carlo simulations, one to evaluate the performance of our tests in a general moment inequality model without nuisance parameters, and the second to evaluate the performance of our subvector tests in the interval regression model of \cite{GandhiLuShi2019}. 

\subsection{Full-Vector Simulations}
Our first set of simulations takes the generic moment inequality design from AB. 
This design allows a variety of correlation structures across moments and thus can approximate a wide range of applications.  

We briefly describe the Monte Carlo design here and refer the readers to Section 6 of AB for further details. 
Consider the moment inequality model 
\begin{equation}
E[W_i - \theta]\geq \mathbf{0},
\end{equation}
and the null hypothesis $H_0:\theta=\mathbf{0}$, where $W_i$ is a $p$-dimensional random vector. 
Let the data $\{W_i\}_{i=1}^n$ be i.i.d. with sample size $n$. 
Let $W_i \sim N(\mu,\Omega)$, where $\Omega$ is a correlation matrix and $\mu$ is a mean-vector. 
Three choices of $\Omega$ are considered: $\Omega_{\text{Neg}}$, $\Omega_{\text{Zero}}$, and $\Omega_{\text{Pos}}$, indicating negative, zero, and positive correlation among the moments, respectively. 
The exact numerical specifications of these matrices for different $p$'s are in Section 4 of AB and Section S7.1 of the Supplemental Material of AB. 
Also, three choices of $p$: $2$, $4$, and $10$, are considered. 

For each combination of $\Omega$ and $p$, we approximate the size of the tests using the maximum null rejection probability (MNRP) over a set of $\mu$ values that satisfies $\mu\geq \mathbf{0}$. 
These $\mu$ values are taken from AB whose calculations suggest that these points are capable of approximating the size of the tests. 
We compute a weighted average power (WAP) for easy comparison. 
The WAP is the simple average of a set of carefully chosen points in the alternative space. 
We take these points also from AB who design them to reflect cases with various degrees of violation or slackness for each of the inequalities. 
These $\mu$ values are given in Section 4 of AB and Section S7.1 of the Supplemental Material of AB. 
Besides WAP, we also report size-corrected WAP, which is obtained by adding a (positive or negative) number to the critical value where the number is set to make the size-corrected MNRP equal to the nominal level. 

We report Monte Carlo simulation results to compare the CC and the RCC tests to the recommended tests in AB and RSW, more specifically, the bootstrap-based AQLR (adjusted quasi-likelihood ratio) test in AB and the two-step procedure in RSW.\footnote{We use the AB test for comparison because it is tuning parameter free like ours (in the sense that AB propose and use an optimal choice of the AS tuning parameter), and we use RSW's two-step test for comparison because it should be insensitive to reasonable choices of its tuning parameter.}

\begin{table}[p]
\begin{threeparttable}
{\scriptsize
\caption{Finite Sample Maximum Null Rejection Probabilities and Size-Corrected Average Power of Nominal 5\% Tests (known $\Omega$, $n=100$)}\label{tab:knownO}
\begin{center}
\begin{tabular}{ccccccccccccccc}
\hline\hline
&\multicolumn{4}{c}{$k=10$}&&\multicolumn{4}{c}{$k=4$}&&\multicolumn{4}{c}{$k=2$}\\
\cline{2-5}\cline{7-10}\cline{12-15}\vspace{-0.1cm}\\
{Test}&{MNRP}&{WAP}&{ScWAP}&{ReT}&& {MNRP}&{WAP}&{ScWAP}&{ReT}&&{MNRP}&{WAP}&{ScWAP}&{ReT}\\
\hline\\
\multicolumn{15}{c}{$\Omega = \Omega_{\text{Neg}}$}\\
\hline\\
RCC& $.051$&.61&.61&1&&.051&.62&.62&1&&.049&.62&.62&1\\
CC&$.050$&.61&.61&1&&.050&.60&.61&1&&.045&.58&.60&1\\
AB&.044&.53&.55&368&&.054&.59&.58&369&&.062&.65&.61&367\\
RSW&.094&.61&.47&217&&.068&.61&.57&204&&.054&.63&.62&220\\
\hline\\
\multicolumn{15}{c}{$\Omega = \Omega_{\text{Zero}}$}\\
\hline\\
RCC&$.051$&.64&.63&1&&.051&.66&.66&1&&.049&.69&.69&1\\
CC&$.050$&.62&.62&1&&.046&.62&.63&1&&.037&.61&.66&1\\
AB&.049&.65&.66&367&&.054&.68&.67&375&&.063&.69&.66&367\\
RSW&.044&.60&.60&216&&.047&.63&.64&208&&.051&.65&.65&222\\
\hline\\
\multicolumn{15}{c}{$\Omega = \Omega_{\text{Pos}}$}\\
\hline\\
RCC&$.051$&.76&.76&1&&.050&.75&.75&1&&.049&.72&.72&1\\
CC&$.040$&.72&.75&1&&.034&.68&.73&1&&.033&.62&.69&1\\
AB&.042&.79&.81&351&&.052&.77&.76&375&&.066&.73&.70&369\\
RSW&.044&.76&.77&213&&.047&.72&.74&214&&.048&.67&.67&231\\
\hline
\end{tabular}
\begin{tablenotes}
\item {\em Note:} CC denotes the conditional chi-squared test, RCC denotes the refined CC test, AB denotes the adjusted quasi-likelihood ratio (AQLR) test with bootstrap critical value in AB and RSW denotes the two-step test in RSW. 
MNRP denotes the maximum null rejection probabilities, WAP denotes the weighted average power, ScWAP denotes the size-corrected WAP, and ReT denotes computation time relative to CC for each Monte Carlo repetition. 
The AB test uses 1000 critical value simulations and the RSW test uses 499 critical value simulations. 
The results for the CC and RCC tests are based on $10^5$ simulations, while the results for the AB and RSW tests are based on 2000 simulations for computational reasons. 
\end{tablenotes}
\end{center}
}
\end{threeparttable}
\end{table}

\begin{table}[p]
\begin{threeparttable}
{\scriptsize
\caption{Finite Sample Maximum Null Rejection Probabilities and Size-Corrected Average Power of Nominal 5\% Tests (Estimated $\Omega$, $n=100$)}\label{tab:unknownO}
\begin{center}
\begin{tabular}{ccccccccccccccc}
\hline\hline
&\multicolumn{4}{c}{$k=10$}&&\multicolumn{4}{c}{$k=4$}&&\multicolumn{4}{c}{$k=2$}\\
\cline{2-5}\cline{7-10}\cline{12-15}\vspace{-0.1cm}\\
{Test }&{MNRP}&{WAP}&{ScWAP}&{ReT}&& {MNRP}&{WAP}&{ScWAP}&{ReT}&&{MNRP}&{WAP}&{ScWAP}&{ReT}\\
\hline\\
\multicolumn{15}{c}{$\Omega = \Omega_{\text{Neg}}$}\\
\hline\\
RCC& $.076$&.64&.53&1&&.059&.63&.60&1&&.053&.62&.61&1\\
CC&$.076$&.63&.53&1&&.059&.61&.58&1&&.049&.59&.59&1\\
AB&.048&.51&.53&400&&.055&.58&.56&398&&.057&.64&.62&375\\
RSW&.104&.60&.42&210&&.066&.59&.54&202&&.054&.62&.60&208\\
\hline\\
\multicolumn{15}{c}{$\Omega = \Omega_{\text{Zero}}$}\\
\hline\\
RCC&$.069$&.66&.59&1&&.056&.66&.64&1&&.053&.69&.68&1\\
CC&$.069$&.64&.57&1&&.053&.66&.64&1&&.040&.61&.65&1\\
AB&.046&.62&.64&405&&.052&.66&.66&382&&.060&.68&.65&371\\
RSW&.043&.57&.59&213&&.044&.61&.63&195&&.046&.63&.65&209\\
\hline\\
\multicolumn{15}{c}{$\Omega = \Omega_{\text{Pos}}$}\\
\hline\\
RCC&$.055$&.77&.75&1&&.053&.75&.74&1&&.052&.72&.71&1\\
CC&$.046$&.73&.74&1&&.038&.68&.73&1&&.035&.63&.69&1\\
AB&.039&.78&.81&412&&.049&.76&.76&388&&.059&.72&.69&379\\
RSW&.043&.75&.77&222&&.045&.71&.73&205&&.046&.65&.68&221\\
\hline
\end{tabular}
\begin{tablenotes}
\item {\em Note:} Same as Table \ref{tab:knownO}.
\end{tablenotes}
\end{center}
}
\end{threeparttable}
\end{table}

Two sets of results are reported. 
The first set assumes a known $\Omega$ and is reported in Table \ref{tab:knownO}. 
In this case, the RCC test should have exact size and the CC test should be somewhat under-sized especially with small $k$. 
These theoretical predictions are exactly confirmed in the table. 
The second set of results does not assume a known $\Omega$ and is reported in Table \ref{tab:unknownO}. 
In this case, the RCC test still has very good MNRP at $k=2$, but has some noticeable over-rejection when $k=10$ with $\Omega = \Omega_{\text{Neg}}$ and $\Omega_{\text{Zero}}$. 
This may reflect the difficulty in estimating the large dimensional $\Omega$ with a small sample size ($n=100$). 
In comparison, the AB and the RSW tests exhibit under-rejection in some cases and over-rejection in other cases, even with known $\Omega$. 
This could be the result of the relatively small number of critical value simulations (1000 for AB and 499 for RSW, as recommended therein). 
Even with the small number of critical value simulations, the AB test and the RSW test are 200-400 times more costly than the RCC test, as shown in the ReT column in the tables.  
Increasing this number will increase their computational cost proportionally.

In terms of power, we find that the RCC test compares favorably to the AB and RSW tests. 
The biggest advantage of the RCC test seems to come from the points with a small number of violated inequalities and a large number of mildly slack inequalities while the magnitude of the advantage varies with $\Omega$.\footnote{For example, with $k=10$, $\Omega=\Omega_{\text{Zero}}$ and estimated $\Omega$, at the point $\mu = (-0.268,0.1, 0.1, 0.1,	0.1, \allowbreak0.1,0.1,	0.1,0.1,	0.1)'$, the size-corrected power of the RCC test is 0.55 while that for AB and RSW are respectively 0.28 and 0.22.} 
By the uni-dimensional criterion ScWAP, the RCC test is better than or the same as the RSW test in all cases but the case with $k=10, \Omega=\Omega_{\text{Pos}}$ with unknown $\Omega$ where the RSW test under-rejects and the size-correction adjusts its power up.  
Also according to ScWAP, the RCC test has better or equal power as AB in half of the cases, while the difference in all cases is small. 

Overall, when we consider not only size and power from the simulations, but also computational cost, simplicity, simulation error (or lack thereof) of the critical values, and finite-sample properties, we recommend the RCC test over the other tests.

\subsection{Subvector Inference in Interval Regression}\label{sub:mc-subv}
Now consider a special case of Example \ref{ex:intreg} above. 
More specifically, we consider a model where $Y_i^\ast=s^\ast_i $ is the probability of an event of interest, for example, death by homicide for a random person county $i$, or a product being purchased by a random consumer in market $i$. 
For simplicity, we consider a simple logit model for the probability: $s_i^\ast =  \frac{\exp(X'_i\theta_0+Z'_{ci}\delta_0+\varepsilon_i)}{1+\exp(X'_i\theta_0+Z'_{ci}\delta_0+\varepsilon_i)}$, where $\varepsilon_i$ is the country or market level unobservable that satisfies $\E[\varepsilon_i|Z_i]=0$. 
Then (\ref{momentcondition}) holds with
\begin{align}
\psi(Y^\ast_i,X_i,\theta) = \log(s_i^\ast/(1-s^\ast_i)) - X'_i\theta.
\end{align}

The variable $s^\ast_i$ is unobserved, but we observe $s_{N,i}$, an empirical estimate of $s^\ast_i$ based on $N$ independent chances for the event of interest to happen: $Ns_{N,i}$ follows a binomial distribution with parameters $(N,s^\ast_i)$. 
For example, $N$ could be the population of the county and $s_{N,i}$ is the homicide rate of the county. 
We use the method introduced in \cite{GandhiLuShi2019} to construct $\psi_{i}^L(\theta)$ and $\psi_{i}^U(\theta)$ based on $s_{N,i}$. 
By \cite{GandhiLuShi2019}, for $N\geq 100$, the following construction satisfies (\ref{bounds}):
\begin{align}
\psi_i^U(\theta)  = \log(s_{N,i}+2/N) - \log(1-s_{N,i}+\underline{s})- X_i'\theta\\
\psi_i^L(\theta)= \log(s_{N,i}+\underline{s})-\log(1-s_{N,i}+2/N)-X_i'\theta,\nonumber
\end{align}
where $\underline{s}$ is the smaller of $0.05$ and half of the minimum possible value of $\min(s^\ast_i,1-s^\ast_i)$.\footnote{ These bounds are not necessarily sharp, but that is not important for our purpose, which is to investigate the statistical performance of the sCC and sRCC tests.} 
We assume that $\underline{s}$ is known  and refer the reader to \cite{GandhiLuShi2019} for practical recommendations regarding $\underline{s}$.\footnote{It is worth pointing out that we deviate from \cite{GandhiLuShi2019} by holding $N$ fixed as the sample size (i.e. the number of observations for $(s_{N,i},X_i,Z_{ci})$) increases, and thus do not aim for point identification.}

For this simulation, we consider a scalar endogenous variable $X_i$, a scalar excluded instrument $Z_{ei}$ and a $d_c$-dimensional exogenous covariate $Z_{ci} = (1,Z_{c,2,i},\dots,Z_{c, d_c,i})'$. 
To generate the data, we let $N=100$, $\varepsilon_i\sim \min\{ \max\{-4,N(0,1)\},4\}$. 
We let the non-constant elements of $Z_i$ be mutually independent Bernoulli variables with success probability $0.5$. 
Let $X_i = 1\{Z_{ei}+\varepsilon_i/2>0\}$. 
Also consider $\theta_0=-1$, $\delta_0=(0,-1, \mathbf{0}_{d_c-2}')'$. 
Given this data generating process, the lowest and the highest possible values for $s^\ast_i$ are respectively
$$
\frac{\exp(-6)}{1+\exp(-6)} = 0.0025\text{ and }\frac{\exp(4)}{1+\exp(4)}=0.982.
$$
Thus $\underline{s} = 0.00125$. 

We calculate that the identified set of $\theta_0$ is approximately $[-1.203,-0.757]$. 
Details of the calculation are given in Appendix \ref{sub:IDset}.

We simulate the rejection rate of the tests using 5000 Monte Carlo repetitions. 
In each repetition, we generate an i.i.d. data set $\{s_{N,i},X_i, Z_i\}_{i=1}^n$, for two sample sizes $n=500$ and $n=1000$. 
We consider three cases of $d_c$: $d_c=2,3$ and $4$. 

For instrumental functions, we use $I(Z_i) = (I_{j,\ell}(Z_i))_{j, \ell  = 2,\dots,d_c+1:j\neq \ell}$, where
\begin{align}
I_{j,\ell}(Z_i) = \begin{pmatrix}1(Z_{ji} = 1,Z_{\ell i}=1)\\1(Z_{ji} = 1,Z_{\ell i}=0)\\1(Z_{ji} = 0,Z_{\ell i}=1)\\1(Z_{ji} = 0,Z_{\ell i}=0)\end{pmatrix},
\end{align}
where $Z_{ji} $ is the $j$th element of $Z_i$. 
Thus, when $d_c=2$ (or $3$, $4$), there are $4$ (or $12$, $48$) instrumental functions, which give us $8$ (or $24$, $96$) moment inequalities. 
 
Figure \ref{Fig:subcc} reports the rejection rates of the  sCC test and the sRCC test for $H_0:\theta_0=\theta$ at $\theta$ values in $[-2.5,0.5]$ in the three cases of $d_c$ and two sample sizes. 
The shaded area indicates the identified set for $\theta_0$. 
As we can see, both tests reject $H_0$ at rates lower than $5\%$ for $\theta$'s in the identified set. 
The rejection rates are closer to 5\% on the boundary of the identified set than in the interior and the sRCC test is better than the sCC test in all cases. 
The rejection rate grows steadily toward 1 as $\theta$ moves away from the identified set, and it grows with the sample size as well, as expected. 

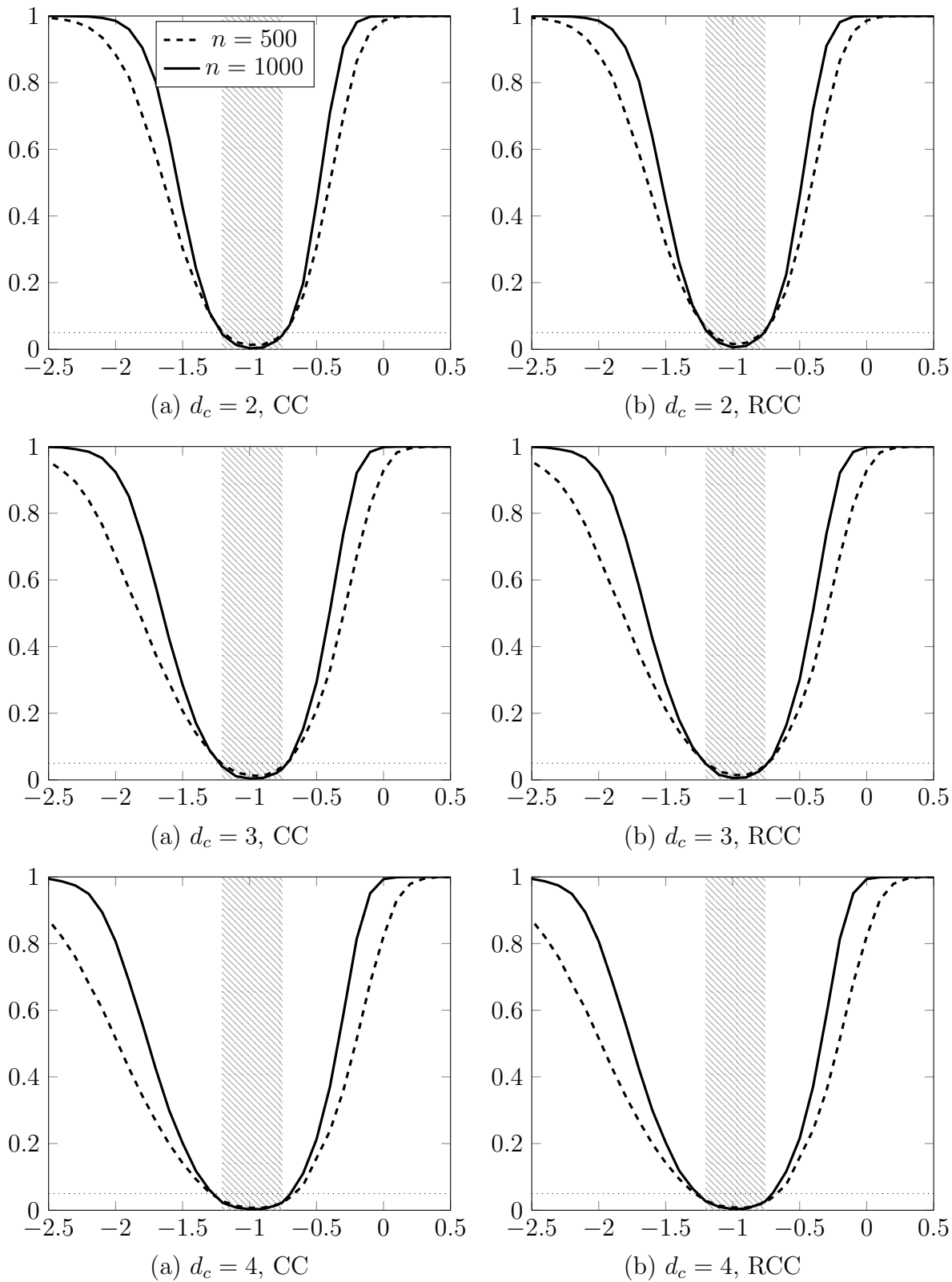
\begin{figure}
\caption{Rejection Rates of the sCC and sRCC Tests for $H_0:\theta=\theta_0$ at a Variety of $\theta$ Values, in 3 Cases of $d_c$ and for 2 Sample Sizes (nominal level = $5\%$)}
\pgfplotsset{width=0.5*\textwidth}
\begin{center}
\begin{tikzpicture}
\begin{axis}
[ymin=0,ymax=1,xmin=-2.5,xmax=0.5,legend style={at={(axis cs:(-1.7,0.985)},anchor=north west}]
\draw [pattern=north west lines, pattern color=gray!70,draw=none] (-1.203,0) rectangle (-0.757,0.996);
\draw[dotted] (-2.5,0.05) -- (0.5,0.05);
\addplot[very thick,dashed] table [x=beta_0, y=ccRej, col sep=semicolon] {IntReg_RejRate_RCCcorrected_S_5000_dx_1_n_500.txt};
\addplot[very thick] table [x=beta_0, y=ccRej, col sep=semicolon] {IntReg_RejRate_RCCcorrected_S_5000_dx_1_n_1000.txt};
\addlegendentry{$n=500$};
\addlegendentry{$n=1000$};
\end{axis}

\node[align=center, below] at (3,-.6) {(a) $d_c=2$, CC};
\end{tikzpicture}
\begin{tikzpicture}
\begin{axis}
[ymin=0,ymax=1,xmin=-2.5,xmax=0.5]
\draw [pattern=north west lines, pattern color=gray!70,draw=none] (-1.203,0) rectangle (-0.757,0.996);
\draw[dotted] (-2.5,0.05) -- (0.5,0.05);
\addplot[very thick,dashed] table [x=beta_0, y=accRej, col sep=semicolon] {IntReg_RejRate_RCCcorrected_S_5000_dx_1_n_500.txt};
\addplot[very thick] table [x=beta_0, y=accRej, col sep=semicolon] {IntReg_RejRate_RCCcorrected_S_5000_dx_1_n_1000.txt};
\end{axis}
\node[align=center, below] at (3,-.6) {(b) $d_c=2$, RCC};
\end{tikzpicture}

\begin{tikzpicture}
\begin{axis}
[ymin=0,ymax=1,xmin=-2.5,xmax=0.5,legend style={at={(axis cs:(-1.7,0.985)},anchor=north west}]
\draw [pattern=north west lines, pattern color=gray!70,draw=none] (-1.203,0) rectangle (-0.757,0.996);
\draw[dotted] (-2.5,0.05) -- (0.5,0.05);
\addplot[very thick,dashed] table [x=beta_0, y=ccRej, col sep=semicolon] {IntReg_RejRate_RCCcorrected_S_5000_dx_2_n_500.txt};
\addplot[very thick] table [x=beta_0, y=ccRej, col sep=semicolon] {IntReg_RejRate_RCCcorrected_S_5000_dx_2_n_1000.txt};
\end{axis}

\node[align=center, below] at (3,-.6) {(a) $d_c=3$, CC};
\end{tikzpicture}
\begin{tikzpicture}
\begin{axis}
[ymin=0,ymax=1,xmin=-2.5,xmax=0.5]
\draw [pattern=north west lines, pattern color=gray!70,draw=none] (-1.203,0) rectangle (-0.757,0.996);
\draw[dotted] (-2.5,0.05) -- (0.5,0.05);
\addplot[very thick,dashed] table [x=beta_0, y=accRej, col sep=semicolon] {IntReg_RejRate_RCCcorrected_S_5000_dx_2_n_500.txt};
\addplot[very thick] table [x=beta_0, y=accRej, col sep=semicolon] {IntReg_RejRate_RCCcorrected_S_5000_dx_2_n_1000.txt};
\end{axis}
\node[align=center, below] at (3,-.6) {(b) $d_c=3$, RCC};
\end{tikzpicture}

\begin{tikzpicture}
\begin{axis}
[ymin=0,ymax=1,xmin=-2.5,xmax=0.5]
\draw [pattern=north west lines, pattern color=gray!70,draw=none] (-1.203,0) rectangle (-0.757,0.996);
\draw[dotted] (-2.5,0.05) -- (0.5,0.05);
\addplot[very thick,dashed] table [x=beta_0, y=ccRej, col sep=semicolon] {IntReg_RejRate_RCCcorrected_S_5000_dx_3_n_500.txt};
\addplot[very thick] table [x=beta_0, y=ccRej, col sep=semicolon] {IntReg_RejRate_RCCcorrected_S_5000_dx_3_n_1000.txt};
\end{axis}
\node[align=center, below] at (3,-.6) {(a) $d_c=4$, CC};
\end{tikzpicture}
\begin{tikzpicture}
\begin{axis}
[ymin=0,ymax=1,xmin=-2.5,xmax=0.5]
\draw [pattern=north west lines, pattern color=gray!70,draw=none] (-1.203,0) rectangle (-0.757,0.996);
\draw[dotted] (-2.5,0.05) -- (0.5,0.05);
\addplot[very thick,dashed] table [x=beta_0, y=accRej, col sep=semicolon] {IntReg_RejRate_RCCcorrected_S_5000_dx_3_n_500.txt};
\addplot[very thick] table [x=beta_0, y=accRej, col sep=semicolon] {IntReg_RejRate_RCCcorrected_S_5000_dx_3_n_1000.txt};
\end{axis}
\node[align=center, below] at (3,-.6) {(b) $d_c=4$, RCC};
\end{tikzpicture}
\end{center}
\label{Fig:subcc}
\end{figure}

On the computational side, fixing the sample size at $n=1000$ and computing each test 5000 times, we document the time needed to compute the rejection rate of each test at one $\theta$ value on a computer with Intel X5680 3.33HZ CPU and 12Gb RAM running Matlab 2018. 
We find that on average, the subvector CC test uses about 13 minutes, 29 minutes, and 43 minutes respectively for $d_c=2$ (8 inequalities and 2 nuisance parameters), $d_c=3$ (24 inequalities and 3 nuisance parameters), and $d_c=4$ (96 inequalities and 4 nuisance parameters). 
The subvector  sRCC test uses a similar amount of time as the sCC test when $d_c=2$ and $3$. 
It uses noticeably more time (2.2 hours) when $d_c=4$, which recall is the time to perform the test 5000 times when there are 96 conditional inequalities to begin with and 4 nuisance parameters to eliminate. 
Thus, for the scale of the job, the computational cost is quite modest.

The advantage of avoiding the computation of $H(C_Z)$ is very large at larger $d_c$. 
If we do not use the procedure described in Section \ref{sec:sRCC}, but instead compute $H(C_Z)$ every time we perform the sRCC test, then the computation time for $d_c=3$ becomes 14 hours, and that for $d_c=4$ becomes a whopping 300 hours.\footnote{The matrix $H(C_Z)$ is found to have around 19, 1600, and 127550 rows, respectively for $d_c=2$, $3$, and $4$.} 
Therefore, we recommend our procedure over performing subvector tests via brute force elimination of the nuisance parameters. 

For comparison, we also compute the rejection rates of two projection-based tests: the projection RCC test with unconditional variance (Proj-U) and the projection RCC test with conditional variance (Proj-C). 
Both tests use the full-vector RCC test. 
That is, the RCC test for the null hypothesis $H_0:(\theta',\delta')' = (\theta'_0,\delta'_0)'$) for the model $\E_{F_z}[\overline{m}_n(\theta) -C_z\delta]\leq 0$.\footnote{In this example, $B_z = I$ and $d_z=\mathbf{0}$.} 
The difference is that the Proj-C test uses the average conditional variance estimator $\widehat{\Sigma}_n(\theta)$ in (\ref{condVar_nn}), while the Proj-U test uses the unconditional variance estimator, $\widehat{\Sigma}_n^{\textup{U}}(\theta,\delta) =$
\begin{equation}
 n^{-1}\sum_{i=1}^n(m(W_i,\theta)-C_{z_i}\delta - \overline{m}_n(\theta)+C_Z\delta)(m(W_i,\theta)-C_{z_i}\delta - \overline{m}_n(\theta)+C_Z\delta)', 
\end{equation}
where $C_{z_i}$ denotes the $C_z$ matrix evaluated with only the $i$th observation for $z$. 
Suppose that the test statistic and the critical value for the full-vector RCC test using $\widehat{\Sigma}_n(\theta)$ are denoted $T_n^{\textup{C}}(\theta,\delta)$ and $\textup{cv}_n^{\textup{C}}(\theta,\delta,1-\alpha)$, and those using $\widehat{\Sigma}_n^{\textup{U}}(\theta,\delta)$ are denoted $T_n^{\textup{U}}(\theta,\delta)$ and ${\textup{cv}}_n^{\textup{U}}(\theta,\delta,1-\alpha)$. 
Then the Proj-C and the Proj-U tests are, respectively, 
\begin{align}
\phi_n^{\textup{Proj-C}}(\theta,\alpha)& =1\{\inf_{\delta}[T_n^{\textup{C}}(\theta,\delta)-\textup{cv}_n^{\textup{C}}(\theta,\delta,1-\alpha)] >0\},\textup{ and }\\
\phi_n^{\textup{Proj-U}}(\theta,\alpha) &= 1\{\inf_{\delta}[T_n^{\textup{U}}(\theta,\delta)-\textup{cv}_n^{\textup{U}}(\theta,\delta,1-\alpha)] >0\}.\nonumber
\end{align}

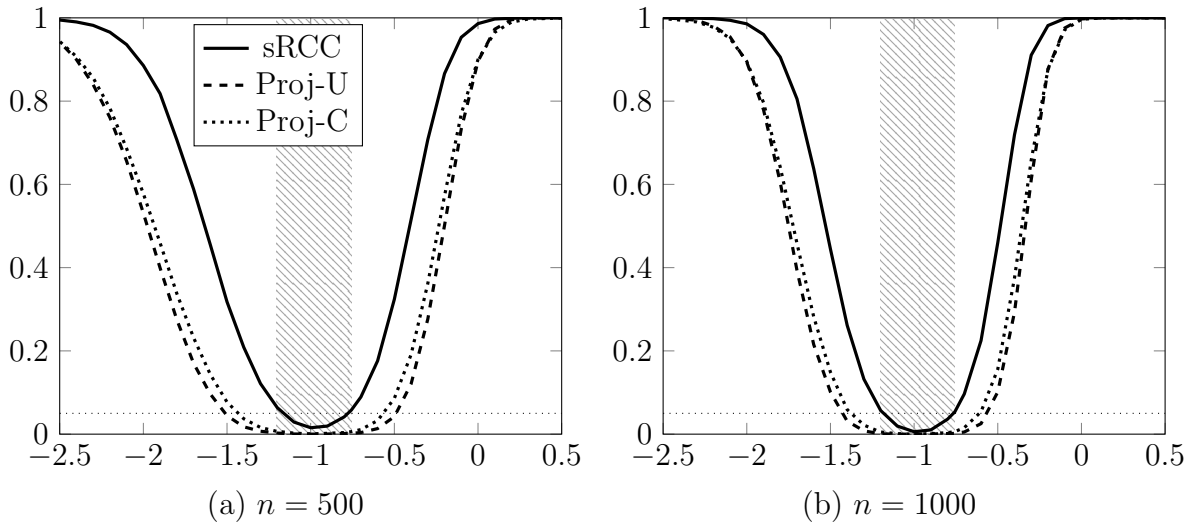
\begin{figure}
\caption{Rejection Rates of the sRCC Test versus the Proj-U and the Proj-C tests ($d_c=2$, nominal level=5\%).}
\pgfplotsset{width=0.5*\textwidth}
\begin{center}
\begin{tikzpicture}
\begin{axis}
[ymin=0,ymax=1,xmin=-2.5,xmax=0.5,legend style={at={(axis cs:(-1.7,0.985)},anchor=north west}]
\draw [pattern=north west lines, pattern color=gray!70,draw=none] (-1.203,0) rectangle (-0.757,0.996);
\draw[dotted] (-2.5,0.05) -- (0.5,0.05);
\addplot[very thick] table [x=beta_0, y=accRej, col sep=semicolon] {IntReg_RejRate_RCCcorrected_S_5000_dx_1_n_500.txt};
\addplot[very thick,dashed] table [x=beta_0, y=rccRej, col sep=semicolon] {IntReg_RejRate_RCCproj_S_5000_maxStart_5_dx_1_n_500.txt};
\addplot[very thick,dotted] table [x=beta_0, y=rccRej, col sep=semicolon] {IntReg_RejRate_condRCCproj_S_5000_maxStart_5_dx_1_n_500.txt};
\addlegendentry{sRCC};
\addlegendentry{Proj-U};
\addlegendentry{Proj-C};
\end{axis}
\node[align=center, below] at (3,-.6) {(a) $n=500$};
\end{tikzpicture}
\begin{tikzpicture}
\begin{axis}
[ymin=0,ymax=1,xmin=-2.5,xmax=0.5,legend style={at={(axis cs:(-1.7,0.985)},anchor=north west}]
\draw [pattern=north west lines, pattern color=gray!70,draw=none] (-1.203,0) rectangle (-0.757,0.996);
\draw[dotted] (-2.5,0.05) -- (0.5,0.05);
\addplot[very thick] table [x=beta_0, y=accRej, col sep=semicolon] {IntReg_RejRate_RCCcorrected_S_5000_dx_1_n_1000.txt};
\addplot[very thick,dashed] table [x=beta_0, y=rccRej, col sep=semicolon] {IntReg_RejRate_RCCproj_S_5000_maxStart_5_dx_1_n_1000.txt};
\addplot[very thick,dotted] table [x=beta_0, y=rccRej, col sep=semicolon] {IntReg_RejRate_condRCCproj_S_5000_maxStart_5_dx_1_n_1000.txt};
\end{axis}
\node[align=center, below] at (3,-.6) {(b) $n=1000$};
\end{tikzpicture}
\end{center}
\label{Fig:srccproj}
\end{figure}

Computing the projection-based tests is tricky even though we already use the computationally simple RCC test for $\textup{cv}$. 
This is because the infimum over $\delta$ is taken over a non-convex function, and finding global minimum over a non-convex function is challenging. 
In section \ref{sub:IDset} in the appendix, we detail the numerical algorithm used to calculate the global minimum. 
The algorithm works well, but is not guaranteed to find the global minimum. 
Not finding the global minimum biases the power upward, and thus what we find is an {\em upper} bound for the power of the projection-based tests.

The results are plotted in Figure \ref{Fig:srccproj} for $d_c=2$ and sample sizes $n=500$ and $1000$. 
As we can see, the Proj-C test and the Proj-U test perform similarly and both are uniformly dominated by the sRCC test. 
Computationally, the projection tests (for one $\theta$ value and 5000 Monte Carlo simulations) each took more than 1 hour, or more than 4 times the 13 minutes needed for the sRCC test. 
Therefore, on the basis of both power and computational cost, we recommend the sRCC test over the projection-based tests.

\section{Conclusion}

This paper proposes the refined conditional chi-squared (RCC) test for moment inequality models. 
This test compares a quasi-likelihood ratio statistic to a chi-squared critical value, where the number of degrees of freedom is the rank of the active inequalities. 
This test has many desirable properties, including being simple, adaptive, and tuning parameter and simulation free. 
We show that, with an easy refinement, it has exact size in normal models and has uniformly asymptotically exact size more generally. 
We also propose a version of the test for subvector inference with conditional moment inequalities and when the nuisance parameters enter linearly that has computational and power advantages. 

\newpage
\appendix
\section{Proof of Theorem \ref{thm:normal-rcc}}\label{sec:proof-normal}

For this proof, we assume $\Sigma_n(\theta)=nI_{d_m}$. 
If this is not the case, then the following proof can be applied after premultiplying $\overline{m}_n(\theta)$ by $n^{1/2}\Sigma_n(\theta)^{-1/2}$ and postmultiplying $A$ by $n^{-1/2}\Sigma_n(\theta)^{1/2}$. 

Fix $\theta$ and let $X=\overline m_n(\theta)\sim N(\mu,I_{d_m})$, where $\mu=\E_F \overline m_n(\theta)$. 
Let $C=\{\mu\in\R^{d_m}| A\mu\le b\}$. 
The fact that $\theta\in\Theta_0(F)$ implies $\mu\in C$. 
These simplifications imply that $T_n(\theta)=\|X-\hat\mu\|^2$ and 
\begin{equation}
\hat\tau_j = \begin{cases}
\frac{\|a_1\|(b_j-a'_j \hat\mu)}{\|a_1\|\|a_j\|-a'_1 a_j}&\text{ if }\|a_1\|\|a_j\|\neq a'_1a_j\\
\infty&\text{ otherwise }
\end{cases}.
\end{equation}
The definitions of $\hat\mu$, $\widehat J$, $\hat r$, $\hat\tau$, and $\hat\beta$ are unchanged. 
$\hat\mu$ is the projection of $X$ onto $C$. 
We also denote it by $P_C X$. 
We also denote $\widehat J$ by $J(X)$, $\hat r$ by $r(X)$, $\hat\tau$ by $\tau(X)$, and $\hat\beta$ by $\beta(X)$. 

\subsection{Auxiliary Lemmas}\label{sub:auxlem}
The proof of Theorem \ref{thm:normal-rcc} relies on four lemmas. 

The first lemma partitions $\R^{d_m}$ according to which inequalities are active. 
We define some notation for the partition. 
For any $J\subseteq\{1,...,d_A\}$, let $J^c=\{1,...,d_A\}/ J$, and let $C_J=\{x\in C: \forall j\in J, a'_j x=b_j, \text{ and }\forall j\in J^c, a'_jx<b_j\}$. 
Then $C_J$ forms a partition of $C$. 
Also let $V_J=\{\sum_{j\in J}v_ja_j: v_j\in \R, v_j\ge 0\}$, and let $K_J=C_J+V_J$.\footnote{When $J=\emptyset$, then $V_J=\{\bm{0}_{d_X}\}$.} 
The following lemma shows that $K_J$ forms a partition that characterizes which inequalities are active. 
\begin{lemma}\label{lem:partition}
\begin{enumerate}[label=(\alph*)]
\item[\textup{(a)}] If $X\in K_J$, then $X-P_C X\in V_J$ and $P_C X\in C_J$. 
\item[\textup{(b)}] The set of all $K_J$ for $J\subseteq \{1,...,d_A\}$ is a partition of $\R^{d_m}$. 
\item[\textup{(c)}] For every $J\subseteq \{1,...,d_A\}$, $X\in K_J$ iff $J=J(X)$. 
\end{enumerate}
\end{lemma}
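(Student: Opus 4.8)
The plan is to reduce the whole lemma to two classical facts about the Euclidean projection onto the convex polyhedron $C=\{\mu\in\R^{d_m}:A\mu\le b\}$: (i) the variational characterization of projection, namely that $\hat\mu=P_CX$ if and only if $\hat\mu\in C$ and $(X-\hat\mu)'(y-\hat\mu)\le0$ for all $y\in C$; and (ii) the polyhedral normal-cone identity, namely that for any $c\in C$ the normal cone $N_C(c)=\{w:w'(y-c)\le0\ \forall y\in C\}$ coincides with $V_{J(c)}$, the conic hull of the active constraint normals $\{a_j:a_j'c=b_j\}$. Granting these, I would prove part (a) directly, deduce part (b) from (a), and then read off part (c) from (a) and (b). Throughout I use that $C\neq\emptyset$ — which holds in the setting of Theorem~\ref{thm:normal-rcc} since $\theta\in\Theta_0(F)$ forces $\mu\in C$ — so that $P_CX$ exists and is unique for every $X$.

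For part (a): write $X\in K_J$ as $X=c+v$ with $c\in C_J$ and $v=\sum_{j\in J}v_ja_j$, $v_j\ge0$. I claim $P_CX=c$. Since $c\in C_J\subseteq C$, it suffices to verify the variational inequality in (i). For any $y\in C$ and any $j\in J$ we have $a_j'y\le b_j=a_j'c$, the equality because $c\in C_J$; hence $a_j'(y-c)\le0$, and multiplying by $v_j\ge0$ and summing over $j\in J$ gives $(X-c)'(y-c)=v'(y-c)\le0$. Thus $c=P_CX$, so $P_CX=c\in C_J$ and $X-P_CX=v\in V_J$, which is exactly (a). This step is an elementary computation and not the obstacle.

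For part (b): first, the $K_J$ cover $\R^{d_m}$. Given any $X$, set $\hat\mu=P_CX$ and $J=J(\hat\mu)=\{j:a_j'\hat\mu=b_j\}$, so $\hat\mu\in C_J$ by definition of $C_J$. By (i), $X-\hat\mu\in N_C(\hat\mu)$, and by (ii) this normal cone equals $V_J$; hence $X=\hat\mu+(X-\hat\mu)\in C_J+V_J=K_J$. For disjointness, if $X\in K_J\cap K_{J'}$, applying part (a) twice gives $P_CX\in C_J$ and $P_CX\in C_{J'}$; but a point of $C$ has a uniquely determined active index set, and $c\in C_{\tilde J}$ forces that active set to equal $\tilde J$, so the sets $C_{\tilde J}$ are pairwise disjoint and therefore $J=J'$. (If $C_J=\emptyset$ then $K_J=\emptyset$ as well and causes no trouble; the empty-active-set case is handled by the convention $V_\emptyset=\{\mathbf{0}_{d_m}\}$.) Combining covering and disjointness gives the partition claim.

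For part (c): recall $J(X)$ is the active set $\widehat J$ at $\hat\mu=P_CX$. If $X\in K_J$, part (a) gives $P_CX\in C_J$, i.e. the active set at $P_CX$ is exactly $J$, so $J(X)=J$. Conversely, by (b) there is a unique $\tilde J$ with $X\in K_{\tilde J}$, and the forward direction shows $\tilde J=J(X)$; hence $J=J(X)$ implies $X\in K_{\tilde J}=K_J$. The only nonroutine ingredient, and the step I expect to need the most care, is the normal-cone identity (ii): the inclusion $V_{J(c)}\subseteq N_C(c)$ is immediate from the computation used in (a), while the reverse inclusion $N_C(c)\subseteq V_{J(c)}$ is precisely Farkas' lemma (equivalently, polyhedral/LP duality) applied to the cone of feasible directions at $c$, described by $\{a_j'(y-c)\le0:j\in J(c)\}$. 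Everything else is bookkeeping with the variational inequality and the disjointness of the faces $C_J$.
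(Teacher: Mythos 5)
Your argument is correct and follows the same overall logic as the paper: part (a) via the variational (inner-product) characterization of projection, part (b) by a covering argument plus disjointness of the faces $C_J$, and part (c) read off from (a) and (b). The one genuine difference is how you establish $X-P_CX\in V_J$ in the covering step of (b). You invoke the polyhedral normal-cone identity $N_C(c)=V_{J(c)}$, citing Farkas' lemma / LP duality for the nontrivial inclusion $N_C(c)\subseteq V_{J(c)}$. The paper instead proves this inclusion from first principles in two stages: it first shows $X-P_CX\in\textup{span}(V_J)$ by perturbing $P_CX$ in directions $z\perp\textup{span}(V_J)$ (using the strict inequalities at indices in $J^c$ to keep $P_CX\pm\epsilon z\in C$, which forces $\langle z,X-P_CX\rangle=0$), and then shows $X-P_CX\in V_J$ by a separating-hyperplane argument: if not, there is a direction $c\in\textup{span}(V_J)$ with $\langle c,X-P_CX\rangle>0$ and $\langle c,a_j\rangle<0$ for all $j\in J$, and $P_CX+\epsilon c\in C$ then violates the variational inequality. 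Both routes are sound; yours is shorter and more modular by delegating to a classical result, while the paper's is self-contained and makes transparent exactly where the polyhedral structure (slackness of inactive constraints, separating hyperplane) enters. Since you flagged the normal-cone identity as the nonroutine step, note that the place requiring care is exactly the identification of the tangent cone of $C$ at $c$ with $\{d:a_j'd\le 0,\,j\in J(c)\}$ — that is where the two-stage argument in the paper is doing its work, and it is the hypothesis under which Farkas delivers the polar.
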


The next lemma considers the event $\hat r=0$, and partitions that event according to which face of $C$ is closest to the realization of $X$. 
Let $J_0=\{j\in\{1,...,d_A\}| a_j=0\}$ and let $J_{00} = \{j\in\{1,...,d_A\}| a_j = 0 \text{ and }b_j = 0\}$. 
Also let 
\begin{align}
\mathcal{J}_1 =& \{J\subseteq \{1,...,d_A\}| \textup{rk}(A_J)=1, J\cap J_0=J_{00}, \\
&\text{ and if }j\in J, \ell\in J^c, \text{ s.t. }\|a_j\|>0, \|a_\ell\|>0, \nonumber\\
&\text{ then }\frac{a_j}{\|a_j\|}\neq \frac{a_\ell}{\|a_\ell\|} \text{ or }\frac{b_j}{\|a_j\|}\neq \frac{b_\ell}{\|a_\ell\|}\}. \nonumber
\end{align}
Further subdivide
\begin{align}
\mathcal{J}_1^{os}=&\{J\in \mathcal{J}_1| \text{ if } j,\ell\in J \text{ s.t. } \|a_j\|>0, \|a_\ell\|>0, \text{ then }\frac{a_j}{\|a_j\|}=\frac{a_\ell}{\|a_\ell\|}\}\\
\mathcal{J}_1^{ts}=&\{J\in\mathcal{J}_1| \exists j,\ell\in J \text{ s.t. } \|a_j\|>0, \|a_\ell\|>0, \frac{a_j}{\|a_j\|}=-\frac{a_\ell}{\|a_\ell\|} \text{ and } \frac{b_j}{\|a_j\|}=-\frac{b_\ell}{\|a_\ell\|}\}.\nonumber
\end{align}

The next lemma provides a partition of $C_0:=\cup_{J\subseteq\{1,...,d_A\}:\textup{rk}(A_J)=0}C_J$. 
(Note that for these sets, $C_J=K_J$.) 
Let $J_{\neq 0}=\{j=1,\dots,d_A: \|a_j\|\neq 0\}$. 
For each $J\in\mathcal{J}_1^{os}$, let 
\begin{equation}
C^\Delta_J=\{x\in C_0 | \text{argmin}_{j\in J_{\neq 0}} \|a_j\|\inv (b_j-a'_j x)=J\cap J_{\neq 0}\}. 
\end{equation}
The set $C^\Delta_J$ is the set of points in $C$ that are closer to $C_J$ than to any other $C_{\tilde J}$ for $\tilde J\in\mathcal{J}_1$. 
It is helpful to picture $C_J$ for $J\in\mathcal{J}_1$ as the faces of a polyhedron, $C$, and $C^\Delta_J$ as a partition of $C$ into triangularly shaped sets. 
Also let 
\begin{equation}
C^| = C_0 / \left(\cup_{J\in \mathcal{J}_1^{os}} C^\Delta_J\right). 
\end{equation}

\begin{lemma}\label{lem:partition2}
\begin{enumerate}[label=(\alph*)]
\item[\textup{(a)}] $C_0=C_{J_{00}}$. 
\item[\textup{(b)}] The sets $C^|$ and $C^\Delta_J$ for $J\in\mathcal{J}_1^{os}$ form a partition of $C_0$. 
\item[\textup{(c)}] If $A\neq 0_{d_A\times d_m}$, then $C^|$ has Lebesgue measure zero. 
\item[\textup{(d)}] $\cup_{J\subseteq\{1,...,d_A\}| \textup{rk}(A_J)=1} K_J = \cup_{J\in \mathcal{J}_1^{os}\cup \mathcal{J}_1^{ts}}K_J$. 
\end{enumerate}
\end{lemma}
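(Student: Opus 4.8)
The plan is to handle the four parts in order (part~(c) will use part~(a), and parts~(b) and~(d) rest only on the definitions of $\mathcal{J}_1$, $\mathcal{J}_1^{os}$, $\mathcal{J}_1^{ts}$ and on Lemma~\ref{lem:partition}). For part~(a), note that $\textup{rk}(A_J)=0$ is equivalent to $A_J=\mathbf{0}$, i.e. $J\subseteq J_0$; if such a $J$ has $C_J\neq\emptyset$, any $x\in C_J$ satisfies $b_j=a_j'x=0$ for all $j\in J$, so $J\subseteq J_{00}$, and an index $j\in J_{00}\setminus J$ would give $a_j'x=0=b_j$, contradicting $a_j'x<b_j$; hence $J_{00}$ is the only index set with $\textup{rk}(A_J)=0$ and $C_J\neq\emptyset$, and $C_0=C_{J_{00}}$. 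For part~(b), $C^|$ and the $C^\Delta_J$ ($J\in\mathcal{J}_1^{os}$) cover $C_0$ because $C^|=C_0\setminus\bigcup_{J\in\mathcal{J}_1^{os}}C^\Delta_J$ and each $C^\Delta_J\subseteq C_0$, and $C^|$ is disjoint from every $C^\Delta_J$ by this same definition; the one remaining point is pairwise disjointness of the $C^\Delta_J$, which holds because $x\in C^\Delta_J$ determines $J\cap J_{\neq 0}$ (the argmin set), while $J\in\mathcal{J}_1$ forces $J\cap J_0=J_{00}$, so $J=(J\cap J_{\neq 0})\cup J_{00}$ is recovered from $J\cap J_{\neq 0}$.

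Part~(c) is the substantive step. Set $g_j(x)=\|a_j\|^{-1}(b_j-a_j'x)$ for $j\in J_{\neq 0}$ (nonempty because $A\neq\mathbf{0}$) and $M(x)=\textup{argmin}_{j\in J_{\neq 0}}g_j(x)$, and call $j,\ell\in J_{\neq 0}$ a \emph{duplicate pair} when $a_j/\|a_j\|=a_\ell/\|a_\ell\|$ and $b_j/\|a_j\|=b_\ell/\|a_\ell\|$, equivalently when $g_j\equiv g_\ell$. The first step shows that if $x\in C_0$ and every pair of distinct indices in $M(x)$ is a duplicate pair, then $J:=M(x)\cup J_{00}$ lies in $\mathcal{J}_1^{os}$ with $J\cap J_{\neq 0}=M(x)$, so $x\in C^\Delta_J$ and $x\notin C^|$; here $J\in\mathcal{J}_1$ is routine because $\{a_j:j\in M(x)\}$ are nonzero and parallel (so $\textup{rk}(A_J)=\textup{rk}(A_{M(x)})=1$), $J\cap J_0=J_{00}$, and condition~3 holds since an index $\ell\in J^c$ duplicating some $j\in J$ would attain the minimum and hence lie in $M(x)\subseteq J$. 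Contrapositively, every $x\in C^|$ admits distinct minimizers $j,\ell\in J_{\neq 0}$ that do not form a duplicate pair, hence $g_j(x)=g_\ell(x)$. The second step is that $\{x:g_j(x)=g_\ell(x)\}$ is Lebesgue-null for any non-duplicate pair: if $a_j,a_\ell$ are linearly independent, or if $a_j/\|a_j\|=-a_\ell/\|a_\ell\|$, then $g_j-g_\ell$ has nonzero linear part and the set lies in an affine hyperplane of $\R^{d_m}$, while if $a_j/\|a_j\|=a_\ell/\|a_\ell\|$ and $b_j/\|a_j\|\neq b_\ell/\|a_\ell\|$ then $g_j-g_\ell$ is a nonzero constant and the set is empty. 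As the union over pairs is finite, $C^|$ is null.

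For part~(d), the inclusion $\supseteq$ is immediate from $\mathcal{J}_1^{os}\cup\mathcal{J}_1^{ts}\subseteq\{J:\textup{rk}(A_J)=1\}$, and for $\subseteq$, Lemma~\ref{lem:partition}(c) ($X\in K_J$ iff $J=J(X)$) reduces the claim to showing $J(X)\in\mathcal{J}_1^{os}\cup\mathcal{J}_1^{ts}$ whenever $\textup{rk}(A_{J(X)})=1$. With $J=J(X)$ and $\hat\mu=P_CX$, so $J=\{j:a_j'\hat\mu=b_j\}$, I would first verify $J\in\mathcal{J}_1$ (rank by hypothesis; $J\cap J_0=J_{00}$ because $a_j=0$ makes $a_j'\hat\mu=b_j$ equivalent to $b_j=0$; condition~3 because a $J^c$-index with the same normalization as a $J$-index would be active), and then use $\textup{rk}(A_J)=1$: every nonzero $a_j$ ($j\in J$) equals $\pm\|a_j\|u$ for a fixed unit vector $u$, so if all such rows share one sign then $J\in\mathcal{J}_1^{os}$; otherwise, picking $j,\ell\in J$ with $a_j/\|a_j\|=u=-a_\ell/\|a_\ell\|$ and using $a_j'\hat\mu=b_j$, $a_\ell'\hat\mu=b_\ell$ gives $b_j/\|a_j\|=u'\hat\mu=-b_\ell/\|a_\ell\|$, so $j,\ell$ witness $J\in\mathcal{J}_1^{ts}$.

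The main obstacle is part~(c): one must see that the coincidence sets $\{g_j=g_\ell\}$ are null exactly for non-duplicate pairs --- duplicate pairs give $g_j\equiv g_\ell$, which is why they collapse into a single $C^\Delta_J$ --- and carefully connect the generic uniqueness of the nearest face (up to duplicated inequalities) to membership in the $C^\Delta_J$, $J\in\mathcal{J}_1^{os}$; parts~(a), (b) and~(d) are largely bookkeeping with $\{C_J\}$ and the defining conditions of $\mathcal{J}_1,\mathcal{J}_1^{os},\mathcal{J}_1^{ts}$.
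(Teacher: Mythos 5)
Your proof is correct and follows essentially the same route as the paper's in each of the four parts: (a) by the same contradiction on index sets $J\subseteq J_0$, (b) by recovering $J$ from the argmin set together with $J\cap J_0=J_{00}$, (c) by showing that every $x\in C^|$ forces two rows of $A$ with distinct normalized directions to satisfy $g_j(x)=g_\ell(x)$ so that $C^|$ is covered by finitely many affine hyperplanes, and (d) by verifying $J(X)\in\mathcal{J}_1$ and then splitting on the sign pattern of the collinear rows. The only noticeable difference is presentational: in (c) your ``duplicate pair'' contrapositive gets to that conclusion somewhat more directly than the paper's two-case rank dichotomy on $\tilde J(x)$, but the underlying idea is the same.
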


The next lemma bounds the probabilities of translations of sets in the multivariate normal distribution. 
Let $V$ denote an arbitrary cone in $\R^{r}$ for a positive integer $r$.\footnote{A cone is a set, $V$, such that for all $v\in V$ and for all $\lambda\ge 0$, $\lambda v\in V$.} 
Let $V^*$ denote the polar cone. 
That is, $V^*=\{\gamma\in\R^r| \langle y,\gamma\rangle\le 0\text{ for all }y\in V\}$. 
For any $\gamma\in V^*$, let $Y\sim N(\gamma,I_r)$. 
The following lemma provides a property of probabilities of cones under a translation. 
\begin{lemma}\label{lem:prob-bnd}
For every $\gamma\in V^*$, $Pr_\gamma(\|Y\|^2>\chi^2_{r, 1-\alpha}|Y\in V)\le\alpha$, with equality if $\gamma=0$. 
\end{lemma}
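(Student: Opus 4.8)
The plan is to pass to polar coordinates, which turns the statement into a one-dimensional monotone-likelihood-ratio fact. Write $Y=RU$ with $R=\|Y\|\ge 0$ and $U=Y/\|Y\|\in S^{r-1}$. Because $V$ is a cone, the event $\{Y\in V\}$ equals $\{U\in V\cap S^{r-1}\}$, so it is an event about the \emph{direction} only. I would assume without loss of generality that $V$ has positive Lebesgue measure, so that $\Pr_\gamma(Y\in V)>0$ and the conditioning is non-degenerate; in the application one first restricts to $\textup{span}(V)$, so the relevant $r$ is already the dimension in which $V$ is full.

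Next I would write out the joint density of $(R,U)$ under $N(\gamma,I_r)$. Using $\|ru-\gamma\|^2=r^2-2r\langle u,\gamma\rangle+\|\gamma\|^2$ together with the polar Jacobian $r^{r-1}$, this density is proportional to $r^{r-1}e^{-r^2/2}e^{r\langle u,\gamma\rangle}1\{u\in V\}$ on $(0,\infty)\times S^{r-1}$. Hence, conditionally on $Y\in V$ and on $U=u$, the radius $R$ has a density proportional to $r^{r-1}e^{-r^2/2}e^{cr}$, where $c:=\langle u,\gamma\rangle$. Indexed by $c$, this is an exponential family in $r$ whose likelihood ratio is increasing in $r$, so a larger $c$ yields a first-order stochastically larger $R$ and therefore a larger value of $\Pr(R^2>t)$ for every threshold $t$; and at $c=0$ the density is exactly the $\chi$ density with $r$ degrees of freedom, so $R^2\sim\chi^2_r$.

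The decisive point is that $\gamma\in V^*$ forces $c=\langle u,\gamma\rangle\le 0$ for every $u\in V$, so $c$ never exceeds its value at $\gamma=0$. Consequently $\Pr_\gamma(R^2>t\mid U=u)\le \Pr(\chi^2_r>t)$ for every $u\in V$, and at $t=\chi^2_{r,1-\alpha}$ this equals $\alpha$. Averaging over the conditional law of $U$ given $Y\in V$,
\begin{equation}
\Pr_\gamma\left(\|Y\|^2>\chi^2_{r,1-\alpha}\mid Y\in V\right)=\int \Pr_\gamma\left(R^2>\chi^2_{r,1-\alpha}\mid U=u\right)\,dP_\gamma(u\mid Y\in V)\le\alpha,
\end{equation}
and when $\gamma=0$ every integrand equals $\alpha$ exactly, giving equality.

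Most of this is bookkeeping: the polar change of variables, and checking that once we condition on $U=u\in V$ the further conditioning on $\{Y\in V\}$ is vacuous. The one genuinely delicate point is the degeneracy when $V$ is lower-dimensional, dispatched by passing to $\textup{span}(V)$ at the outset. The conceptual crux — and the reason the bound is attained precisely at $\gamma=0$ — is that conditioning on a cone is conditioning on the direction alone, so everything collapses to stochastic monotonicity along rays.
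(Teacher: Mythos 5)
Your proof is correct, and it fills a gap the paper leaves implicit: the paper does not prove Lemma~5 at all, instead citing Lemma~1 of Mohamad, Goeman, and van~Zwet (2020) and explicitly omitting the argument. So you are supplying a self-contained proof where the paper defers to an external reference. The route is clean: writing $Y=RU$ and noting that $\{Y\in V\}$ is a pure direction event reduces everything to the conditional law of $R$ given $U=u$, which from $\|ru-\gamma\|^2=r^2-2r\langle u,\gamma\rangle+\|\gamma\|^2$ and the Jacobian $r^{r-1}$ has density proportional to $r^{r-1}e^{-r^2/2}e^{cr}$ with $c=\langle u,\gamma\rangle$. This is a one-parameter exponential family with monotone likelihood ratio in $r$, so larger $c$ gives first-order stochastic dominance of $R$; the polar-cone condition forces $c\le 0$ on all of $V\cap S^{r-1}$, and $c=0$ recovers exactly the $\chi_r$ law, which pins down both the inequality and the equality at $\gamma=0$. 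Your bookkeeping about the conditional decomposition is right as well: given $U=u\in V\cap S^{r-1}$, further conditioning on $\{Y\in V\}$ is vacuous, so $\Pr_\gamma(R^2>t\mid U=u,\,Y\in V)=\Pr_\gamma(R^2>t\mid U=u)$, and averaging over $P_\gamma(u\mid Y\in V)$ gives the stated bound. You are also right to flag the lower-dimensional case: the lemma as written for an ``arbitrary cone'' would make the conditioning degenerate if $V$ were contained in a proper subspace; in the application $V=Q_J'V_J$ is the image of the conical hull of the $a_j$, $j\in J$, under the isometry onto $\R^{\mathrm{rk}(A_J)}$, so it contains a nondegenerate simplicial cone and has positive Lebesgue measure, which keeps $\Pr_\gamma(Y\in V)>0$. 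So the approach is essentially the same reduction the cited reference relies on, but made fully explicit; if anything, your version makes transparent why the polar-cone hypothesis is exactly what is needed and where the equality at $\gamma=0$ comes from.
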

Lemma \ref{lem:prob-bnd} states that the probability that a random vector, $Y$, belongs to the tail of its distribution, conditional on belonging to the cone, $V$, is less than or equal to $\alpha$, where the tail is any point outside a sphere of radius $\sqrt{\chi^2_{r,1-\alpha}}$. 
The key assumption is that the mean of $Y$ must belong to the polar cone, $V^*$, which translates the distribution away from the cone, $V$. 
When $\gamma=0$, this lemma holds with equality because unconditionally $\|Y\|^2\sim \chi^2_r$, the tail of which has mass exactly $\alpha$, and because $\|Y\|^2$ has exactly the same distribution whether or not we condition on $Y\in V$. 
Lemma \ref{lem:prob-bnd} follows from Lemma 1 in \cite{MohamadGoemanZwet2020}, and thus the proof is omitted. 

The following lemma is the key to validity of the refinement to the CC test. 
It is a bound on translations of sets in the univariate normal model. 
\begin{lemma}\label{lem:prob-bnd2}
For every $\mu\le 0$, for every $\tau\ge 0$, and for every $\alpha\in[0,1/2]$, 
\[
Pr_\mu \left(Z> z_{1-\beta/2}|Z> -\tau\right)\le\alpha,
\]
where $Z\sim N(\mu,1)$ and $\beta=2\alpha \Phi(\tau)$, with equality if $\mu=0$. 
\end{lemma}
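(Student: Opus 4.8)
The plan is to reduce the conditional probability to an explicit expression in terms of $\Phi$ and then exploit monotonicity in $\mu$. Write $Z \sim N(\mu,1)$ with $\mu \le 0$. For $\tau \ge 0$ and $\beta = 2\alpha\Phi(\tau) \in [0,\alpha]$ (using $\alpha \le 1/2$, so $\beta \le 2\alpha \cdot \tfrac{1}{2} = \alpha$ only when $\tau$ is small, but in general $\beta \le 2\alpha \le 1$, and $z_{1-\beta/2}$ is well-defined), the conditional probability is
\begin{equation}
\Pr{}_\mu\!\left(Z > z_{1-\beta/2} \mid Z > -\tau\right) = \frac{\Pr_\mu(Z > z_{1-\beta/2})}{\Pr_\mu(Z > -\tau)} = \frac{1 - \Phi(z_{1-\beta/2} - \mu)}{1 - \Phi(-\tau - \mu)} = \frac{\Phi(\mu - z_{1-\beta/2})}{\Phi(\mu + \tau)}.
\end{equation}
Here the numerator makes sense only when $z_{1-\beta/2} \ge -\tau$, i.e. when the tail threshold lies inside the conditioning event; if instead $z_{1-\beta/2} < -\tau$ the conditional probability equals $1$ and the claim can fail — so the first step is to check that $\beta = 2\alpha\Phi(\tau)$ is chosen precisely so that $z_{1-\beta/2} \ge -\tau$ always holds. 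Indeed, $z_{1-\beta/2} \ge -\tau \iff 1 - \beta/2 \ge \Phi(-\tau) = 1 - \Phi(\tau) \iff \beta \le 2\Phi(\tau)$, which holds since $\beta = 2\alpha\Phi(\tau) \le 2\Phi(\tau)$ as $\alpha \le 1$. So the formula above is valid.

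Next I would show the displayed ratio is nondecreasing in $\mu$ on $(-\infty, 0]$, so that its supremum over $\mu \le 0$ is attained at $\mu = 0$. Equivalently, I want $\Phi(\mu - c)/\Phi(\mu + \tau)$ nondecreasing in $\mu$, where $c := z_{1-\beta/2} \ge -\tau$, i.e. $c + \tau \ge 0$. Differentiating (or using the log-concavity of $\Phi$): $\frac{d}{d\mu}\log\frac{\Phi(\mu-c)}{\Phi(\mu+\tau)} = \frac{\varphi(\mu-c)}{\Phi(\mu-c)} - \frac{\varphi(\mu+\tau)}{\Phi(\mu+\tau)}$, and since the hazard-type function $x \mapsto \varphi(x)/\Phi(x)$ is nonincreasing (a standard consequence of log-concavity of $\Phi$), and $\mu - c \le \mu + \tau$ because $c + \tau \ge 0$, the bracket is $\ge 0$. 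Hence the ratio is nondecreasing in $\mu$, so for all $\mu \le 0$,
\begin{equation}
\Pr{}_\mu\!\left(Z > z_{1-\beta/2} \mid Z > -\tau\right) \le \frac{\Phi(-z_{1-\beta/2})}{\Phi(\tau)} = \frac{\beta/2}{\Phi(\tau)} = \frac{\alpha\Phi(\tau)}{\Phi(\tau)} = \alpha,
\end{equation}
with equality when $\mu = 0$ (and this equality requires $\Phi(\tau) > 0$, which holds for all finite $\tau$; the boundary case $\tau = \infty$, giving $\beta = 2\alpha$, is handled by a limiting argument or directly as the unconditional statement $\Pr_0(Z > z_{1-\alpha}) = \alpha$).

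The main obstacle is establishing the monotonicity in $\mu$ cleanly — specifically, invoking the right form of the hazard-rate / log-concavity fact for the normal cdf and making sure the inequality $\mu - c \le \mu + \tau$ that drives it is guaranteed by the choice of $\beta$. Everything else (rewriting the conditional probability, verifying the threshold lies in the conditioning set, reading off $\alpha$ at $\mu = 0$) is routine algebra with $\Phi$. A secondary point of care is the edge behavior: $\tau = 0$ gives $\beta = 2\alpha\Phi(0) = \alpha$ and the statement reduces to $\Pr_\mu(Z > z_{1-\alpha/2} \mid Z > 0) \le \alpha$, which the same computation covers; and one should note $\alpha \in [0,1/2]$ is used so that $z_{1-\beta/2}$ with $\beta \le 2\alpha \le 1$ is a genuine (finite) quantile, avoiding degenerate cases.
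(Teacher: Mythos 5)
Your proof is correct but takes a genuinely different route from the paper's. You express the conditional probability explicitly as $\Phi(\mu - z_{1-\beta/2})/\Phi(\mu+\tau)$, check that the choice $\beta = 2\alpha\Phi(\tau)$ guarantees $z_{1-\beta/2}\ge -\tau$ so that formula applies, and prove the ratio is nondecreasing in $\mu$ via the nonincreasing reverse hazard rate $\varphi/\Phi$ (equivalently, log-concavity of $\Phi$); evaluating at $\mu = 0$ then yields exactly $\alpha$. The paper instead works with $f(\lambda)$ proportional to $\alpha\Pr_{-\lambda}(Z>-\tau) - \Pr_{-\lambda}(Z > z_{1-\beta/2})$, shows $f(0)=0$, and establishes the Gronwall-type differential inequality $f'(\lambda)\ge -(z_{1-\beta/2}+\lambda)f(\lambda)$, which together with $f(0)=0$ forces $f\ge 0$ on $[0,\infty)$ by a mean-value-theorem contradiction. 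Both arguments reduce to showing the conditional rejection probability is maximized at $\mu=0$, but you offload the monotonicity in $\mu$ to a standard structural property of $\Phi$, which is shorter and more transparent, whereas the paper's differential-inequality argument is entirely self-contained. One small inaccuracy in your write-up: the parenthetical claim $\beta\in[0,\alpha]$ should be $\beta\in[\alpha,2\alpha]$ (since $\Phi(\tau)\ge 1/2$ for $\tau\ge 0$), but this plays no role in the argument — what you actually use, namely $\beta\le 1$ and $z_{1-\beta/2}\ge -\tau$, you verify correctly.
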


\subsection{Proof of Theorem \ref{thm:normal-rcc}}

\noindent\underline{First, we show part (a)}. Notice that 
\begin{align}
&\Pr(||X-P_C X||^2>\chi^2_{\textup{rk}(A_{J(X)}), 1-\beta(X)})\nonumber\\
=&\sum_{J\subseteq \{1,...,d_A\}} \Pr(X\in K_J \text{ and } ||X-P_C X||^2>\chi^2_{\textup{rk}(A_J), 1-\beta(X)})\nonumber\\
=&\sum_{J\subseteq \{1,...,d_A\}| \textup{rk}(A_J)\ge 2} \Pr(X\in K_J \text{ and } ||X-P_C X||^2>\chi^2_{\textup{rk}(A_J), 1-\alpha})\label{PrJ-a2}\\
&+\sum_{J\in\mathcal{J}_1^{ts}} \Pr(X\in K_J \text{ and } ||X-P_C X||^2>\chi^2_{1, 1-\alpha})\label{PrJ-a1ts}\\
&+\sum_{J\in\mathcal{J}_1^{os}} \Pr(X\in K_J \text{ and } ||X-P_C X||^2>\chi^2_{1, 1-\beta(X)})\label{PrJ-a1os}\\
&+\sum_{J\subseteq \{1,...,d_A\}| \textup{rk}(A_J)=0} \Pr(X\in K_J \text{ and } ||X-P_C X||^2>\chi^2_{0, 1-\alpha}), \label{PrJ-a0}
\end{align}
where the first equality follows from Lemma \ref{lem:partition}(b,c), and the second equality uses Lemma \ref{lem:partition2}(d) and the fact that $\beta(X)=\alpha$ whenever $\textup{rk}(A_{J(X)})\neq 1$ or $J\in\mathcal{J}_1^{ts}$. 
That latter fact follows because for $J\in\mathcal{J}_1^{ts}$ with $X\in K_J$, there exists $j,\ell\in J$ such that $\|a_\ell\|\inv a_\ell = - \|a_j\|\inv a_j$ and $\|a_\ell\|\inv b_\ell = - \|a_j\|\inv b_j$, which implies that $b_\ell-a'_\ell P_C X = b_j-a'_j P_C X = 0$ (and therefore $\tau(X)=0$). 

For each $J$, we consider the span of $V_J$ as a subspace of $\R^{d_m}$. 
Let $P_J$ denote the projection onto $\text{span}(V_J)$, and $M_J$ denote the projection onto its orthogonal complement. 
We note that, given $J$, there exists a $\kappa_J\in \text{span}(V_J)$ such that for every $z\in C_J$, $P_J z=\kappa_J$. 
This follows because for two $z_1, z_2\in C_J$, and for any $v\in \text{span}(V_J)$, $\langle z_1-z_2,v \rangle=0$, which implies $z_1-z_2\perp \text{span}(V_J)$, so that $P_J(z_1-z_2)=\bm{0}_{d_m}$. 
Thus, for any $X\in K_J$, we can write $P_J X=P_J(X-P_C X)+P_J P_C X=X-P_CX+\kappa_J$, where the second equality follows by lemma \ref{lem:partition}(a) and the above discussion. 
We also write $M_J X=X-P_J X=P_C X-\kappa_J$. 

First, let's consider the terms in (\ref{PrJ-a0}). 
For $J$ such that $\textup{rk}(A_J)=0$, we have $\text{span}(V_J) = \{\mathbf{0}_{d_m}\}$. 
Thus, $P_JX = \kappa_J = \mathbf{0}_{d_m}$. 
This implies that $\|X-P_CX\|=0$. 
Therefore, 
\begin{align}
\Pr(X\in K_J \text{ and } ||X-P_C X||^2>\chi^2_{0, 1-\alpha})=0.\label{r(J)=0}
\end{align}

For $J$ such that $\textup{rk}(A_J)>0$, we define a linear isometry from $\text{span}(V_J)$ to $\R^{\textup{rk}(A_J)}$. 
Let $B_J$ be a $d_m\times \textup{rk}(A_J)$ matrix whose columns form a basis for $\text{span}(V_J)$. 
Then $P_JX = B_J(B_J'B_J)^{-1}B_J' X$. 
The projection matrix $B_J(B_J'B_J)^{-1}B_J'$ is idempotent with rank $\textup{rk}(A_J)$, and thus there exists a $d_m\times \textup{rk}(A_J)$ matrix with orthonormal columns, $Q_J$, such that $Q_JQ_J' = B_J(B_J'B_J)^{-1}B_J'$. 
The linear isometry from $\text{span}(V_J)$ to $\R^{\textup{rk}(A_J)}$ is $Q_J(X) = Q_J'X$. 
This is an isometry because for any  $v_1, v_2\in \text{span}(V_J)$, 
\begin{align}
\|v_1 - v_2\|^2 &= (v_1-v_2)'(v_1-v_2) \nonumber\\
&=(v_1-v_2)'(P_J(v_1-v_2))\nonumber\\
&= (v_1-v_2)'Q_JQ_J'(v_1-v_2) \nonumber\\
&=\|Q_J(v_1)-Q_J(v_2)\|^2,
\end{align} 
where the second equality holds because $v_1, v_2\in \text{span}(V_J)$. 
Now let $Q_J' V_J = \{Q_J'v: v\in V_J\}$. 
Then $P_JX-\kappa_J\in V_J$ if and only if $Q_J'(P_JX-\kappa_J)\in Q_J' V_J $ because an isometry is bijective. 

Next, we consider the terms in (\ref{PrJ-a2}) and (\ref{PrJ-a1ts}). 
Notice that 
\begin{align}
&\Pr(X\in K_J \text{ and } ||X-P_C X||^2>\chi^2_{\textup{rk}(A_J), 1-\alpha})\notag\\
=&\Pr(M_J X+\kappa_J\in C_J, P_J X-\kappa_J\in V_J, \text{ and } ||P_J X-\kappa_J||^2>\chi^2_{\textup{rk}(A_J), 1-\alpha})\notag\\
=&\Pr(M_J X+\kappa_J\in C_J)\times \Pr(P_J X-\kappa_J\in V_J \text{ and } ||P_J X-\kappa_J||^2>\chi^2_{\textup{rk}(A_J), 1-\alpha}), \label{Prob-Dcmpsd-a}
\end{align}
where the first equality uses Lemma \ref{lem:partition}(a) and the facts that $M_JX+\kappa_J = P_CX$ and $X  = P_JX+M_JX$, and the second equality follows from the fact that $P_J X$ is independent of $M_J X$. 
Applying the isometry, we have
\begin{align}
&\Pr(P_J X-\kappa_J\in V_J \text{ and } ||P_J X-\kappa_J||^2>\chi^2_{\textup{rk}(A_J), 1-\alpha})\nonumber\\
&=\Pr(Q_J'(P_J X-\kappa_J)\in Q_J'V_J \text{ and } ||Q_J'(P_J X-\kappa_J)||^2>\chi^2_{\textup{rk}(A_J), 1-\alpha}). \label{Prob-decompose-b}
\end{align}

We would like to apply Lemma \ref{lem:prob-bnd} to this probability. 
Since $X\sim N\left(\mu,I\right)$, we have
\begin{align}
Q_J'(P_JX-\kappa_J)\sim N(Q_J'(P_J\mu-\kappa_J), Q_J'IQ_J) = N(Q_J'(P_J\mu-\kappa_J),I).
\end{align}
Also note that $Q_J'V_J $ is a cone in $\R^{\textup{rk}(A_J)}$. 
The random vector $Q_J'(P_J X-\kappa_J)\sim N(\gamma,I)$ where $\gamma=Q_J'(P_J \mu-\kappa_J)$. 
The vector $\gamma$ is in the polar cone because, for all $\tilde{y}\in Q_J'V_J$, there exists a  $y=\sum_{j\in J}v_j a_j \in V_J$ such that $\tilde{y} =Q_J'y$, and thus
\begin{align}
\langle \gamma,\tilde{y}\rangle 
&=\langle Q_J'(P_J\mu-\kappa_J), Q_J'y\rangle\nonumber\\
&=\langle P_J\mu-\kappa_J,y\rangle\nonumber\\
&=\langle( \mu-M_J \mu-P_J z),y\rangle\nonumber\\
&=\langle( \mu-M_J \mu-z+M_J z),y\rangle\nonumber\\
&=\langle(\mu-z),y \rangle\nonumber\\
&=\sum_{j\in J}v_j (\langle \mu, a_j\rangle -\langle z, a_j\rangle)\le 0,
\end{align}
where $z$ is any element\footnote{If $C_J$ is empty, so that no such $z$ exists, then (\ref{Prob-Dcmpsd-a}) is zero, and so (\ref{r(J)geq1}) below is not needed.} of $C_J$ so that $\kappa_J=P_J z$, the second equality holds because $\langle Q_J'(P_J\mu-\kappa_J), Q_J'y\rangle= y'Q_JQ_J'(P_J\mu-\kappa_J) = y'P_J(P_J\mu-\kappa_J) = y'(P_J\mu-\kappa_J)$, and the inequality follows because $\langle z, a_j\rangle=b_j\ge\langle \mu, a_j\rangle$, using the facts that $z\in C_J$ and $\mu\in C$. 

Therefore, we can apply Lemma \ref{lem:prob-bnd} to get that, for every $J\subseteq\{1,...,d_A\}$ such that $\textup{rk}(A_J)\ge 1$, we have
\begin{align}
&\Pr(P_J X-\kappa_J\in V_J \text{ and }||P_JX-\kappa_J||^2>\chi^2_{\textup{rk}(A_J),1-\alpha})\nonumber\\
&= Pr\left(Q_J'(P_J X-\kappa_J)\in Q_J'V_J \text{ and } ||Q_J'(P_J X-\kappa_J)||^2>\chi^2_{\textup{rk}(A_J), 1-\alpha}\right)\nonumber\\
&\le \alpha \Pr(Q_J'(P_J X-\kappa_J)\in Q_J'V_J)\nonumber\\
&=\alpha \Pr(P_J X-\kappa_J\in V_J), \label{r(J)geq1}
\end{align}
where the inequality holds as equality if $\gamma=Q_J'(P_J\mu-\kappa_J)=0$. 

Next, consider the terms in (\ref{PrJ-a1os}). 
For each $J\in\mathcal{J}_1^{os}$, let $\bar j\in J$ such that $\|a_{\bar j}\|\neq 0$. 
Notice that we can take $B_J = a_{\bar j}$, so that $P_J = \|a_{\bar j}\|^{-2}a_{\bar j}a'_{\bar j}$, $Q_J = \|a_{\bar j}\|^{-1}a_{\bar j}$, and $\kappa_J = \|a_{\bar j}\|^{-2}a_{\bar j}b_{\bar j}$. 
Notice that 
\begin{align}
&\Pr(X\in K_J \text{ and } ||X-P_C X||^2>\chi^2_{1, 1-\beta(X)})\notag\\
=&\Pr(M_J X+\kappa_J\in C_J, P_J X-\kappa_J\in V_J, \text{ and } ||P_J X-\kappa_J||^2>\chi^2_{1, 1-\beta(X)})\label{Prob-Dcmpsd-a1os}\\
=&\Pr(M_J X+\kappa_J\in C_J, Q'_J(P_J X-\kappa_J)\in Q'_JV_J, \text{ and } ||Q'_J(P_J X-\kappa_J)||^2>\chi^2_{1, 1-\beta(M_JX+\kappa_J)}), \notag
\end{align}
where the first equality uses the definitions of $M_J$, $P_J$, and $\kappa_J$, and the second equality uses the isometry $Q_J$, together with the fact that $\beta(X)$ depends on $X$ only through $M_J X+\kappa_J$ (because the formula for $\tau(X)$ only depends on $P_CX=M_J X+\kappa_J$). 

We next note that $Q'_J V_J=[0,\infty)$. 
This follows because for any $c\ge 0$, $c=Q'_J c a_{\bar j}\|a_{\bar j}\|$, where $c a_{\bar j}\|a_{\bar j}\|\in V_J$. 
Conversely, for any $v=\sum_{\ell\in J}c_\ell a_\ell\in V_J$ for some constants $c_\ell\ge 0$, we have $Q_J'v=\sum_{\ell\in J}c_\ell \|a_{\bar j}\|^{-1} a'_{\bar j}a_\ell $, where $a'_{\bar j}a_\ell\ge 0$ because $a_\ell$ is either zero or a positive scalar multiple of $a_{\bar j}$ by the definition of $\mathcal{J}_1^{os}$. 

Also, the fact that $X\sim N(\mu,I)$ implies that $Z:=Q'_J(P_JX-\kappa_J)\sim N(\gamma,1)$, where $\gamma = Q'_J(P_J\mu-\kappa_J)=\|a_{\bar j}\|\inv (a'_{\bar j}\mu - b_{\bar j})\le 0$. 
Note that $Z$ is independent of $M_JX$. 

Let $z_{1-\alpha}$ denote the $1-\alpha$ quantile of the standard normal distribution. 
We have 
\begin{align}
&\Pr(M_J X+\kappa_J\in C_J, Q'_J(P_J X-\kappa_J)\in Q'_JV_J, \text{ and } ||Q'_J(P_J X-\kappa_J)||^2>\chi^2_{1, 1-\beta(M_JX+\kappa_J)})\nonumber\\
=&\Pr(M_J X+\kappa_J\in C_J, Z>z_{1-\beta(M_JX+\kappa_J)/2})\nonumber\\
=&\E\mathds{1}(M_J X+\kappa_J\in C_J)\Pr(Z>z_{1-\beta(M_JX+\kappa_J)/2}|M_JX+\kappa_J)\nonumber\\
\le&\alpha \E\mathds{1}(M_J X+\kappa_J\in C_J)\Pr(Z>-\tau(M_JX+\kappa_J)|M_JX+\kappa_J)\label{rJ=1os_inequality}\\
=&\alpha\Pr(M_J X+\kappa_J\in C_J, Z>-\tau(M_JX+\kappa_J))\nonumber\\
=&\alpha\Pr(M_J X+\kappa_J\in C_J, Q'_J(P_J X-\kappa_J)\in Q'_JV_J) \nonumber\\
&+ \alpha \Pr(M_J X + \kappa_J\in C_J, Q'_J(P_J X-\kappa_J)\in(-\tau(M_JX+\kappa_J),0))\nonumber\\
=&\alpha(\Pr(X\in K_J)+\Pr(X\in C^\Delta_J)), \label{J1os_result}
\end{align}
where the first equality follows from the events $Z\ge 0$ and $Z^2>\chi^2_{1,1-\beta(M_J X+\kappa_J)}$ being equivalent to the event  $Z>z_{1-\beta(M_JX+\kappa_J)/2}$, the second and third equalities uses the conditional distribution of $Z$ given $M_JX+\kappa_J$, the inequality follows by Lemma \ref{lem:prob-bnd2}, the fourth equality follows from splitting the event $Z>-\tau$ into $Z\ge 0$ (equivalent to $Z\in Q'_JV_J$) and $Z\in (-\tau,0)$, and the final equality follows from the fact that $Q'_J(P_J X-\kappa_J)\in Q'_JV_J$ if and only if $P_J X-\kappa_J\in V_J$, the characterization of $K_J$ using Lemma \ref{lem:partition}(a), together with the argument that follows. 

To show (\ref{J1os_result}), we show that for all $J\in\mathcal{J}_1^{os}$, 
\begin{equation}
C^\Delta_J = \{x\in\R^{d_X}| M_Jx+\kappa_J\in C_J \text{ and }Q'_J(P_J x-\kappa_J)\in(-\tau(M_Jx+\kappa_J),0)\}. \label{J1os1}
\end{equation}
Denote the set on the right hand side of (\ref{J1os1}) by $\Upsilon$. 
We show (1) $x\in C^\Delta_J$ implies $x\in\Upsilon$ and (2) $x\in\Upsilon$ implies $x\in C^\Delta_J$. 
It is useful to point out that for any $x$, we can write $Q'_J(P_J x - \kappa_J)=\|a_{\bar j}\|\inv (a'_{\bar j} x - b_{\bar j})$ and $M_J x +\kappa_J =x-\|a_{\bar j}\|^{-2}a_{\bar j}(a'_{\bar j} x - b_{\bar j})$ using the formulas for $P_J$, $Q_J$, and $\kappa_J$. 

(1) Let $x\in C^\Delta_J$. 
We calculate that $M_Jx+\kappa_J\in C_J$ by showing that equality holds for every $\ell\in J$ and strict inequality holds for every $\ell\notin J$. 
For any $\ell\in J$ either $\ell\in J_{00}$ or $\ell\in J\cap J_{\neq 0}$. If $\ell\in J_{00}$, $a'_\ell (M_Jx+\kappa_J)=0=b_\ell$, so equality holds. 
If $\ell\in J\cap J_{\neq 0}$, 
\begin{equation}
a'_\ell(x-\|a_{\bar j}\|^{-2}a_{\bar j}(a'_{\bar j} x - b_{\bar j}))=a'_\ell(x-\|a_{\bar j}\|^{-1}\|a_\ell\|\inv a_{\bar j}(a'_\ell x - b_\ell))=b_\ell, \label{J1os2}  
\end{equation}
where the first equality uses the fact that $\|a_\ell\|\inv (b_\ell-a'_\ell x)=\|a_{\bar j}\|\inv (b_{\bar j}-a'_{\bar j} x)$ by the definition of $C^\Delta_J$, and the second equality uses the fact that $a'_\ell a_{\bar j} = \|a_{\bar j}\| \|a_\ell\|$ by the definition of $\mathcal{J}_1^{os}$. 
Therefore, equality holds for every $\ell\in J$. 
For any $\ell\in J^c$, we show that strict inequality holds. 
Either $\ell\in J_0/J_{00}$ or $\ell\in J_{\neq 0}/J$. 
If $\ell\in J_0/J_{00}$, $a'_\ell (M_Jx+\kappa_J)=0<b_\ell$.\footnote{$b_\ell$ cannot be negative because, by assumption, $\theta\in\Theta_0(F)$, so $\mu\in C$, and therefore $C$ is non-empty.} 
If $\ell\in J_{\neq 0}$, then 
\begin{align}
a'_\ell(x-\|a_{\bar j}\|^{-2}a_{\bar j}(a'_{\bar j} x - b_{\bar j}))=&a'_\ell x - \|a_{\bar j}\|^{-2}a'_\ell a_{\bar j}(a'_{\bar j}x-b_{\bar j})\nonumber\\
=&a'_\ell x-b_\ell - \|a_{\bar j}\|^{-2}a'_\ell a_{\bar j}(a'_{\bar j}x-b_{\bar j})+b_\ell\nonumber\\
<&\frac{\|a_\ell\|}{\|a_{\bar j}\|}(a'_{\bar j} x-b_{\bar j}) - \|a_{\bar j}\|^{-2}a'_\ell a_{\bar j}(a'_{\bar j}x-b_{\bar j})+b_\ell\nonumber\\
=&\frac{(\|a_\ell\|\|a_{\bar j}\|-a'_\ell a_{\bar j})(a'_{\bar j} x-b_{\bar j})}{\|a_{\bar j}\|^2}+b_\ell
\le b_\ell, \label{J1os3}
\end{align}
where the first inequality uses the fact that $\|a_{\bar j}\|\inv (b_{\bar j}-a'_{\bar j} x)<\|a_\ell\|\inv (b_\ell - a'_\ell x)$ by the definition of $x\in C^\Delta_J$ (because $\ell$ is not in the argmin), and the second inequality uses the fact that $a'_{\bar j} x<b_{\bar j}$ and $\|a_{\bar j}\|\|a_\ell\|\ge a'_\ell a'_{\bar j}$. 
This shows that for every $\ell\in J^c$ the inequality is strict. 
Therefore, $M_J x+\kappa_J\in C_J$. 

We also calculate that $Q'_J(P_Jx-\kappa_J)\in(-\tau(M_J x+\kappa_J),0)$. 
The fact that $x\in C_0$ implies that $a'_{\bar j} x - b_{\bar j}<0$, and so $Q'_J(P_Jx-\kappa_J)=\|a_{\bar j}\|\inv (a'_{\bar j} x - b_{\bar j})<0$. 
Let $\ell\in \{1,...,d_A\}/\{\bar j\}$.\footnote{We note here that $\tau(x)$ is defined for an arbitrary active inequality $\bar j\in J\cap J_{\neq 0}$. One can verify that the definition of $\tau(x)$ does not depend on which $\bar j\in J\cap J_{\neq 0}$ is selected.} 
We show that 
\begin{equation}
\|a_{\bar j}\|\inv (a'_{\bar j} x - b_{\bar j})>-\tau_j(M_Jx+\kappa_J).\label{J1os4}
\end{equation}
If the $\|a_\ell\| \|a_{\bar j}\|-a'_\ell a_{\bar j}=0$, then by definition the right hand side of (\ref{J1os4}) is $-\infty$. 
Otherwise, we can plug in $M_J x+\kappa_J = x-\|a_{\bar j}\|^{-2}a_{\bar j}a'_{\bar j} x + \|a_{\bar j}\|^{-2}a_{\bar j} b_{\bar j}$ and rewrite (\ref{J1os4}) as 
\begin{equation}
(a'_{\bar j} x - b_{\bar j})(\|a_\ell\| \|a_{\bar j}\|-a'_\ell a_{\bar j})>-\|a_{\bar j}\|^2(b_\ell - a'_\ell (x-\|a_{\bar j}\|^{-2}a_{\bar j}a'_{\bar j} x + \|a_{\bar j}\|^{-2}a_{\bar j} b_{\bar j})). \label{J1os5}
\end{equation}
We can simplify this to show that it holds if and only if 
\begin{equation}
\|a_{\bar j}\|\inv (b_{\bar j}-a'_{\bar j} x)<\|a_\ell\|\inv (b_\ell-a'_\ell x). \label{J1os6}
\end{equation}
The fact that $\|a_\ell\|\|a_{\bar j}\|\neq a'_\ell a_{\bar j}$ implies that $\ell\notin J$ (by the definition of $\mathcal{J}_1^{os}$) and therefore, by the definition of $C^\Delta_J$, (\ref{J1os6}) holds (because $\ell$ is not in the argmin). 
Therefore, (\ref{J1os4}) holds for every $\ell\in\{1,...,d_A\}/\{\bar j\}$, which implies that 
\begin{equation}
Q'_J(P_J x-\kappa_J)=\|a_{\bar j}\|\inv (a'_{\bar j} x - b_{\bar j})>-\tau(M_Jx+\kappa_J). \label{J1os7}
\end{equation}
This shows that $x\in\Upsilon$. 

(2) Let $x\in\Upsilon$. 
Consider the set $\argmin_{j\in J_{\neq 0}}\|a_j\|\inv (b_j-a'_j x)$. 
We first show that the argmin is equal to $J\cap J_{\neq 0}$. 
If $\ell\in J_{\neq 0}/J$, an algebraic manipulation similar to above shows that 
\begin{align}
Q'_J(P_J x-\kappa_J)&>-\tau(M_Jx+\kappa_J) \nonumber\\
\Rightarrow \|a_{\bar j}\|\inv (a'_{\bar j} x-b_{\bar j})&>-\frac{\|a_{\bar j}\|(b_\ell - a'_\ell (x-\|a_{\bar j}\|^{-2}a_{\bar j}(a'_{\bar j} x - b_{\bar j})))}{\|a_{\bar j}\|\|a_\ell\|-a'_\ell a_{\bar j}} \nonumber\\
\iff \|a_\ell\|\inv (b_\ell - a'_\ell x)&>\|a_{\bar j}\|\inv (b_{\bar j}-a'_{\bar j}x),  \label{J1os8}
\end{align}
where the first implication uses the definition of $\tau(x)$ and the ``iff'' follows from multiplying by $\|a_{\bar j}\| \|a_\ell\|-a'_\ell a_{\bar j}$ and cancelling the $a'_\ell a_{\bar j}$ term. 
This shows that $\ell\in J_{\neq 0}/J$ cannot be in the argmin. 
Also consider $\ell\in J\cap J_{\neq 0}$. 
The definition of $\mathcal{J}_1^{os}$ implies that $\|a_\ell\|\inv a_\ell = \|a_{\bar j}\|\inv a_{\bar j}$. 
Notice that 
\begin{align}
&0=b_{\bar j}-a'_{\bar j}(M_J x + \kappa_J)=b_\ell-a'_\ell(M_J x + \kappa_J)\nonumber\\
\iff &0=b_{\bar j}-a'_{\bar j}(x-\|a_{\bar j}\|^{-2}a_{\bar j}(a'_{\bar j} x - b_{\bar j}))=b_\ell-a'_\ell(x-\|a_{\bar j}\|^{-2}a_{\bar j}(a'_{\bar j} x - b_{\bar j}))\nonumber\\
\Rightarrow\hspace{2mm} &b_\ell=a'_\ell(x-\|a_{\bar j}\|^{-2}a_{\bar j}(a'_{\bar j} x - b_{\bar j})) \nonumber\\
\iff & \|a_\ell\|\inv (b_\ell - a'_\ell x) = \|a_\ell\|\inv a'_\ell \|a_{\bar j}\|^{-2}a_{\bar j}(b_{\bar j}-a'_{\bar j} x)=\|a_{\bar j}\|^{-1}(b_{\bar j}-a'_{\bar j} x), \label{J1os9}
\end{align}
where the first line holds because $M_J x + \kappa_J\in C_J$, the first ``iff'' holds by plugging in the formula for $M_Jx+\kappa_J$, the implication holds by solving for $b_\ell$ and cancelling $b_{\bar j}$, the second ``iff'' holds by rearranging and the fact that $a'_\ell a_{\bar j} = \|a_\ell\| \|a_{\bar j}\|$. 
We have shown that $\ell$ should be in the argmin. 
Therefore, the argmin is equal to $J\cap J_{\neq 0}$. 

We also show that $x\in C_0$. 
Note that $a'_{\bar j}x<b_{\bar j}$ because $Q'_J(P_J x-\kappa_J)<0$ and plugging in the formulas for $Q_J$, $P_J$, and $\kappa_J$. 
For any other $\ell\in J_{\neq 0}$, we have 
\begin{equation}
\|a_\ell\|\inv (b_\ell-a'_\ell x)\ge \|a_{\bar j}\|\inv (b_{\bar j}-a'_{\bar j}x)>0, \label{J1os10}
\end{equation}
because $\bar j$ belongs to the argmin. 
Thus, $x\in C_0$ because all the inequalities for $\ell\in J_{\neq 0}$ are inactive. 
Therefore, $x\in C^\Delta_J$. 

Therefore, we have shown (\ref{J1os1}), which implies (\ref{J1os_result}). 

To finish the proof of part (b), we plug in (\ref{r(J)=0}), (\ref{Prob-Dcmpsd-a}), (\ref{Prob-decompose-b}), (\ref{r(J)geq1}), (\ref{Prob-Dcmpsd-a1os}), and (\ref{J1os_result}) into (\ref{PrJ-a2}), (\ref{PrJ-a1ts}), (\ref{PrJ-a1os}), and (\ref{PrJ-a0}) to get that 
\begin{align}
&\sum_{J\subseteq \{1,...,d_A\}} \Pr(X\in K_J \text{ and } ||X-P_C X||^2>\chi^2_{\textup{rk}(A_J), 1-\alpha})\nonumber\\
\le& \sum_{J\subseteq \{1,...,d_A\}:\textup{rk}(A_J)\ge 2} \alpha \Pr(M_J X+\kappa_J\in C_J)\times \Pr(P_J X-\kappa_J\in V_J)\nonumber\\
&+\sum_{J\in\mathcal{J}_1^{ts}} \alpha \Pr(M_J X+\kappa_J\in C_J)\times \Pr(P_J X-\kappa_J\in V_J)\nonumber\\
&+\sum_{J\in\mathcal{J}_1^{os}} \alpha (\Pr(X\in K_J)+\Pr(X\in C^\Delta_J))\nonumber\\
=&\alpha\times \left(\sum_{J\subseteq \{1,...,d_A\}:\textup{rk}(A_{J(X)})>0} \Pr(X\in K_J)+\sum_{J\in\mathcal{J}_1^{os}} \Pr(X\in C^\Delta_J)\right)\nonumber\\
=&\alpha(1-\Pr(X\in C^|))\leq \alpha, \label{thm1partafinal}
\end{align}
where the first equality uses Lemma \ref{lem:partition}(a) and the fact that $P_JX$ is independent of $M_J X$, together with Lemma \ref{lem:partition2}(d), and the second equality uses Lemma \ref{lem:partition}(b), together with Lemma \ref{lem:partition2}(b). 

\underline{We next prove part (b)}. 
Fix a $J\subseteq\{1,\dots,d_A\}$. 
We first note that when $C_J$ is empty, the inequality in (\ref{r(J)geq1}) holds with equality because both sides are zero. 
When $C_J\neq \emptyset$, we show that $\kappa_J=P_J\mu$. 
Let $z\in C_J$ and for every $\lambda\in [0,1]$ let $\mu_\lambda=\lambda z+(1-\lambda)\mu$. 
Recall that $A\mu=b$. 
Thus, for each $\lambda\in (0,1]$
\begin{align}
a'_j \mu_\lambda&=\lambda a'_j z+(1-\lambda)a'_j\mu=\lambda b_j+(1-\lambda)b_j=b_j \text{ for }j\in J, \text{ and }\\
a'_j \mu_\lambda&=\lambda a'_j z+(1-\lambda)a'_j\mu<\lambda b_j+(1-\lambda)b_j=b_j \text{ for }j\in J^c.\nonumber
\end{align} 
This implies that  $\mu_\lambda\in C_J$, and hence, for every $\lambda\in (0,1]$, $P_J \mu_\lambda=\kappa_J$. 
Take $\lambda\rightarrow 0$ and by the continuity of the projection, $\kappa_J=P_J \mu$. 
Thus, $\gamma = Q_J'(P_J\mu-\kappa_J)=0$, implying that the inequality in (\ref{r(J)geq1}) holds with equality. 
The fact that $\gamma=0$ also implies that the inequality in (\ref{rJ=1os_inequality}) holds with equality by Lemma 6.  
The inequality in the last line of (\ref{thm1partafinal}) holds with equality by Lemma \ref{lem:partition2}(c). 
Thus, part (b) has been proved.

\underline{Part (c)}. 
Let $\tilde C=\{\mu\in\R^{d_m}| A_J\mu\le b_J\}$. 
Let $\tilde T_n(\theta)$, $P_{\tilde C} X$, $\tilde J(X)$, $\tilde r(X)$, $\tilde\tau(X)$, and $\tilde\beta(X)$ be defined with $A_J$ and $b_J$ in place of $A$ and $b$. 
For each $L\subseteq J$, also define $\tilde C_L$ and $\tilde K_L$ similarly. 
Note that all these objects also depend on $s$ because $A$ and $b$ may depend on $\theta$. 

Notice that 
\begin{align}
&\Pr\left(\phi^{\text{RCC}}_n(\theta_s,\alpha)\neq\phi^{\text{RCC}}_{n,J}(\theta_s,\alpha)\right)\nonumber\\
\le&\sum_{L\subseteq J}\Pr(X\in K_L \text{ and }\phi^{\text{RCC}}_n(\theta_s,\alpha)\neq\phi^{\text{RCC}}_{n,J}(\theta_s,\alpha))+\sum_{L\subseteq \{1,...,d_A\}: L\not\subseteq J} \Pr(X\in K_L)\nonumber\\
=&\sum_{L\subseteq J}\Pr(X\in K_L \text{ and }\phi^{\text{RCC}}_n(\theta_s,\alpha)\neq\phi^{\text{RCC}}_{n,J}(\theta_s,\alpha))+o(1)\label{idi1}\\
=&\sum_{L\subseteq J} \Pr\left(X\in K_L \text{ and }\chi^2_{\text{rk}(A_L),1-\beta(X)}\ge\|X-P_C X\|^2>\chi^2_{\text{rk}(A_L),1-\tilde\beta(X)}\right)+o(1)\label{idi2}\\
=&\sum_{L\subseteq J: \text{rk}(A_L)=1} \Pr\left(X\in K_L \text{ and }\chi^2_{1,1-\beta(X)}\ge\|X-P_C X\|^2>\chi^2_{1,1-\tilde\beta(X)}\right)+o(1)\label{idi3}\\
&\rightarrow0,\label{idi4}
\end{align}
where the first inequality follows from Lemma \ref{lem:partition}(b) and the subsequent equalities and convergence are justified below. 

For (\ref{idi1}), let $\tilde X = X-\mu\sim N(0,I)$. 
Fix the value of $\tilde X$. 
We show that for any $L\not\subseteq J$, $\tilde X+\mu\notin K_L$ eventually as $s\rightarrow\infty$. 
For $\ell\in L$ but $\ell\notin J$, we have 
\begin{equation}
a'_\ell P_{C}(\tilde X+\mu)=a'_\ell (P_{C-\mu}\tilde X+\mu).
\end{equation}
This expression is less than $b_\ell$ eventually because 
\begin{equation}
\frac{1}{\|a_\ell\|}\left(a'_\ell P_{C-\mu}\tilde X+a'_\ell \mu-b_\ell\right)=\frac{a'_\ell P_{C-\mu}\tilde X}{\|a_\ell\|}+\frac{a'_\ell \mu-b_\ell}{\|a_\ell\|}\rightarrow -\infty \text{ as }s\rightarrow\infty, \label{idi5}
\end{equation}
where the convergence follows from the boundedness of the first term (for fixed $\tilde X$ using the fact that $0\in C-\mu$) and by assumption on the second term. 
Therefore, $P_C(\tilde X+\mu)\notin \tilde C_L$ eventually. 
By Lemma \ref{lem:partition}(a), $\tilde X+\mu\notin K_L$ eventually as $s\rightarrow\infty$. 
Therefore, for every $L\not\subseteq J$, 
\[
\Pr(X\in K_L)=\Pr(\tilde X+\mu\in K_L)\rightarrow 0, 
\]
where the equality follows from the fact that $X$ has the same distribution as $\tilde X+\mu$, and the convergence follows from the bounded convergence theorem. 

For (\ref{idi2}), note that for any $L\subseteq J$, if $X\in K_L$, then $P_CX\in C_L$ by Lemma \ref{lem:partition}(a). 
Also, $C_L\subseteq C\subseteq \tilde C$ because $\tilde C$ is defined by removing some inequalities. 
By the uniqueness of projection onto a convex set, it follows that $P_CX=P_{\tilde C}X$. 
The fact that $P_{\tilde C}X\in C_L$ implies that $P_{\tilde C}X\in \tilde C_L$ because $C_L\subseteq \tilde C_L$. 
By Lemma \ref{lem:partition}(b), this implies that $X\in\tilde K_L$. 
Thus, $r(X)=\tilde r(X) = \text{rk}(A_L)$, and $T_n(\theta)=X-P_C X=\tilde T_n(\theta)$. 
Also, the fact that $\tilde\beta(X)\ge \beta(X)$ implies that the only way $\phi^{\text{RCC}}_n(\theta_s,\alpha)\neq\phi^{\text{RCC}}_{n,J}(\theta_s,\alpha)$ is if $\chi^2_{\text{rk}(A_L),1-\beta(X)}\ge\|X-P_C X\|^2>\chi^2_{\text{rk}(A_L),1-\tilde\beta(X)}$.

For (\ref{idi3}), we use the fact that $\beta(X)=\tilde\beta(X)=\alpha$ whenever $\text{rk}(A_L)\neq 1$. 

Finally we show (\ref{idi4}). 
For each $L\subseteq J$, note that $\text{rk}(A_L)$ may depend on $s$. 
Fix any subsequence in $s$ such that $\text{rk}(A_L)=1$ along the subsequence. 
Write $P_L=Q^P_L{Q^P_L}'$ and $M_L=Q^M_L {Q^M_L}'$, where $Q^P_L$ is a $d_m\times 1$ vector with unit length, $Q^M_L$ is a $d_m\times (d_m-1)$ matrix with orthonormal columns, and ${Q^P_L}'Q^M_L=0$. 
Let $\tilde X=(\tilde x_1,\tilde x_2)\sim N(0,I)$, where $\tilde x_1\in\R^{d_m-1}$ and $\tilde x_2\in\R$. 
We can then write 
\begin{equation}
X=Q^M_L\tilde x_1+Q^P_L \tilde x_2+\mu.\label{idi6}
\end{equation} 
Note that $\tilde X$ does not depend on $s$, while $\mu$, $Q^M_L$, and $Q^P_L$ may. 

For each $L$, we can rewrite the term in (\ref{idi3}) as 
\begin{align}
&\int_{\tilde x_1}\int_{\tilde x_2}1\{X\in K_L\}1\{\chi^2_{1,1-\beta(X)}\ge\|X-P_C X\|^2>\chi^2_{1,1-\tilde\beta(X)}\}\phi(\tilde{x}_1)\phi(\tilde{x}_2)d\tilde x_2d\tilde x_1\nonumber\\
=&\int_{\tilde x_1}1\{M_L X+\kappa_L\in C_L\} \int_{\tilde x_2}g_s(\tilde x_1,\tilde x_2)\phi(\tilde{x}_2)d\tilde x_2 \phi(\tilde{x}_1)d\tilde x_1, \label{idi7}
\end{align}
where $X$ is viewed as a function of $(\tilde x_1,\tilde x_2)$ using (\ref{idi6}), $\phi(\cdot)$ is the probability density function of the standard normal distribution of the dimension determined by the dimension of its argument, and 
\begin{equation}
g_s(\tilde x_1,\tilde x_2)=1\{P_L X-\kappa_L\in V_L\}1\{\chi^2_{1,1-\beta(M_LX+\kappa_L)}\ge\|P_L X-\kappa_L\|^2>\chi^2_{1,1-\tilde\beta(M_LX+\kappa_L)}\}, \label{idi8}
\end{equation}
which uses the same decomposition of $X\in K_L$ as in (\ref{Prob-Dcmpsd-a}), and the fact that $M_LX$ only depends on $\tilde x_1$ (and that $\beta(X)$ and $\tilde\beta(X)$ only depend on $X$ through $P_CX=M_LX+\kappa_L$). 
Fix $\tilde x_1$. 
We show that the inner integral goes to zero as $s\rightarrow\infty$. 

Fix an arbitrary subsequence in $s$. 
We show that there exists a further subsequence such that the inner integral goes to zero. 
Since $\beta(M_L X+\kappa_L)$ and $\tilde\beta(M_L X+\kappa_L)$ do not depend on $\tilde x_2$ and both lie in $[\alpha,2\alpha]$ for all $s$, there exists a further subsequence along which both converge. 
Denote the limits by $\beta_\infty$ and $\tilde\beta_\infty$. 
We show that the limits must be the same: $\beta_\infty=\tilde\beta_\infty$. 
Let $\bar j\in L$ such that $a_{\bar j}\neq 0$. 
Then note that for each $\ell\notin J$, 
\begin{equation}
\tau_\ell(X)=\frac{\|a_{\bar j}\|(b_\ell-a'_\ell P_C X)}{\|a_{\bar j}\|\|a_\ell\|-a'_{\bar j}a_\ell}=\frac{(b_\ell-a'_\ell P_C X)/\|a_\ell\|}{1-a'_{\bar j}a_\ell/(\|a_{\bar j}\|\|a_\ell\|)}\ge \frac{1}{2}\frac{b_\ell-a'_\ell P_CX}{\|a_\ell\|}\rightarrow \infty,
\end{equation}
where the convergence follows from (\ref{idi5}). 
Therefore, 
\begin{equation}
\tau(X) = \inf_{\ell\neq \bar j}\tau_\ell(X)=\min\left(\inf_{\ell\in J; \ell\neq \bar j}\hat\tau_\ell(X),\inf_{\ell\notin J}\hat\tau_\ell(X)\right)=\min\left(\tilde\tau(X),\inf_{\ell\notin J}\hat\tau_\ell(X)\right). 
\end{equation}
If $\tilde\tau(X)\rightarrow\infty$, then $\tau(X)\rightarrow\infty$ too, and if $\tilde\tau(X)$ converges to a finite value, $\tau(X)$ converges to the same value. 
This shows that $\beta_\infty=\tilde\beta_\infty$. 

Returning to the inner integral, note that $P_LX-\kappa_L=Q^P_L\tilde x_2+P_L\mu-\kappa_L$. 
Take a further subsequence such that $P_L\mu-\kappa_L$ diverges or converges and such that $Q^P_L$ converges to $Q^P_{L,\infty}$ (since $Q^P_L$ has unit length, it must converge along a subsequence). 
If $P_L\mu-\kappa_L$ diverges, then for every $\tilde x_2$, $\|Q^P_L\tilde x_2+P_L\mu-\kappa_L\|^2\ge (\|P_L\mu-\kappa_L\|-\|Q^P_L\tilde x_2\|)^2\rightarrow\infty$, so $g_s(\tilde x_1,\tilde x_2)=0$ eventually as $s\rightarrow\infty$ along this subsequence. 
Therefore by the bounded convergence theorem, the inner integral in (\ref{idi7}) goes to zero. 
If $P_L\mu-\kappa_L$ converges to some $\kappa_\infty$, then for every $\tilde x_2$ such that $\|Q^P_{L,\infty}\tilde x_2+\kappa_\infty\|^2\neq\chi^2_{1,1-\beta_\infty}$, $g_s(\tilde x_1,\tilde x_2)=0$ eventually (along this subsequence). 
Note that the set of such $\tilde x_2$ is a set of probability one with respect to $\tilde x_2\sim N(0,1)$. 
Therefore by the bounded convergence theorem, the inner integral in (\ref{idi7}) goes to zero. 

Since the inner integral in (\ref{idi7}) converges to zero for every fixed $\tilde x_1$, by the bounded convergence theorem, the outer integral converges to zero too. 
This shows (\ref{idi4}). \qed
 
\subsection{Proof of the Auxiliary Lemmas}\label{sub:proof-lem}

\subsection*{Proof of Lemma \ref{lem:partition}}
\begin{enumerate}[label=(\alph*)]
\item By assumption, $X\in K_J=C_J+V_J$. 
So, we write $X=X_1+X_2$, where $X_1\in C_J$ and $X_2\in V_J$. 
Then, $P_C X_1=X_1$ because $X_1\in C$ already. 
We show that $P_C X=X_1$. 
By a property of projection onto convex sets, it is necessary and sufficient that for all $y\in C$, we have $\langle X-X_1,y-X_1\rangle\le 0$.\footnote{See Section 3.12 in \cite{Luenberger1969}. Hereafter, call this property of projection onto a convex set the ``inner-product property.''} 
This follows because $X_2=\sum_{j\in J} v_j a_j$ with $v_j\ge 0$, so 
\begin{equation}
\langle X_2,y-X_1 \rangle=\sum_{j\in J}v_j (\langle a_j,y \rangle-\langle a_j,X_1\rangle) \le 0, 
\end{equation}
where the inequality uses the fact that $y\in C$, so $a'_jy\le b_j$ and $X_1\in C_J$, so $a'_j X_1=b_j$. 
Combining these, we get that $P_C X=X_1\in C_J$ and $X-P_CX=X-X_1=X_2\in V_J$. 

\item We first show that every $X$ belongs to some $K_J$. 
For every $X$, $P_C X\in C$, so there exists a $J$ such that $P_C X\in C_J$. 

By the inner-product property of projection, we know that for all $y\in C$, $\langle y-P_C X,X-P_C X\rangle\le 0$. 
Using this fact, let $z\perp \text{span}(V_J)$. 
Then, there exists a $\epsilon>0$ such that $P_C X+\epsilon z$ and $P_C X-\epsilon z$ both belong to $C$.\footnote{This uses the slackness of the inequalities in the definition of $C_J$.} 
Then, $\langle \epsilon z,X-P_C X\rangle\le 0$ and $\langle -\epsilon z,X-P_C X\rangle\le 0$. 
These two inequalities imply that $\langle z,X-P_C X\rangle =0$. 
Thus, $X-P_C X$ is orthogonal to all vectors, $z$, which are orthogonal to $\text{span}(V_J)$. 
This implies that $X-P_C X\in \text{span}(V_J)$. 

If $X-P_C X\notin V_J$, then by the separating hyperplane theorem,\footnote{See Section 11 of \cite{Rockafellar1970} or Section 5.12 in \cite{Luenberger1969}.} there exists a direction, $c\in \text{span}(V_J)$ such that $\langle c, X-P_C X\rangle>0$ and $\langle c, a_j\rangle<0$ for all $j\in J$. 
We consider $P_C X+\epsilon c$. 
We show that for $\epsilon$ sufficiently small, (1) $P_C X+\epsilon c\in C$, and (2) $\langle X-P_C X,\epsilon c \rangle >0$. 

(1) For $j\in J$, $\langle P_CX+\epsilon c,a_j \rangle=b_j+\epsilon\langle c,a_j\rangle<b_j$, where the equality follows because $P_C X\in C_J$ and the inequality follows from the definition of $c$. 
For $j\in J^c$, $\langle P_C X +\epsilon c, a_j\rangle=\langle P_C X, a_j\rangle +\epsilon\langle c, a_j\rangle$, which is less than $b_j$ for $\epsilon$ sufficiently small because $\langle P_C X, a_j\rangle<b_j$. 

(2) $\langle X-P_C X, \epsilon c\rangle = \epsilon\langle X-P_C X,  c\rangle >0$ by the definition of $c$. 

This contradicts the inner-product property of projection onto a convex set, and therefore $X-P_C X\in V_J$, and $X\in K_J$. 

We next show that no $X$ belongs to two distinct $K_J$. 
If $X\in K_J$ and $K_{J'}$, then, by part (a), $P_C X\in C_J$ and $P_C X\in C_{J'}$. 
But this is a contradiction because the projection onto a convex set is unique, and the $C_J$ form a partition of $C$. 

\item If $X\in K_J$, then $P_C X\in C_J$, so all the inequalities in $J$ are active. 
If $X\notin K_J$, then $X$ is in a different $K_{J'}$, for some $J'\neq J$, by part (b). 
Thus, $J\neq J(X)=J'$. \qed
\end{enumerate}

\subsection*{Proof of Lemma \ref{lem:partition2}}
\begin{enumerate}[label=(\alph*)]
\item Note that $J_{00}$ satisfies $\textup{rk}(A_{J_{00}})=0$. 
Thus, it is sufficient to show that $C_J=\emptyset$ for all $J\subseteq J_0$ that are not $J_{00}$. 
If $J\neq J_{00}$, then either (i) there exists $j\in J_{00}/J$ or (ii) there exists $j\in J/J_{00}$. 
In the first case, any $x\in C_J$ would have to satisfy $0'x<0$, a contradiction. 
In the second case, any $x\in C_J$ would have to satisfy $0'x=b_j$, where $b_j\neq 0$, another contradiction. 

\item We first show that the $C^\Delta_J$ are disjoint for different $J\in\mathcal{J}_1^{os}$. 
If $x\in C^\Delta_{J_1}\cap C^\Delta_{J_2}$ for $J_1, J_2\in\mathcal{J}_1^{os}$, then both 
\begin{equation}
J_{\neq 0}\cap J_1 = J_{\neq 0}\cap J_2 = \text{argmin}_{j\in J_{\neq 0}} \|a_j\|\inv (b_j-a'_j x)
\end{equation}
and 
\begin{equation}
J_0\cap J_1 = J_0\cap J_2 = J_{00}. 
\end{equation}
This implies that $J_1 = J_2$. 
Part (b) then follows from the definition of $C^|$. 

\item For any $x\in C^|$, let $\tilde J_{\neq 0}(x) = \text{argmin}_{j\in J_{\neq 0}} \|a_j\|\inv (b_j-a'_j x)$. 
We show below that
\begin{align}
\exists j,\ell\in\tilde{J}_{\neq 0}(x)\text{ s.t. }\|a_j\|^{-1}a_j\neq \|a_\ell\|^{-1}a_\ell.\label{ajl}
\end{align}
That implies that for any $x\in C^|$, there exists $j,\ell\in J_{\neq 0}$ such that $\|a_j\|^{-1}a_j\neq \|a_\ell\|^{-1}a_\ell$ and $\|a_j\|\inv (b_j-a'_j x)=\|a_\ell\|\inv (b_\ell-a'_\ell x)$. 
Or equivalently,
\begin{align}
C^|\subseteq \cup_{\underset{\|a_{j}\|a_{\ell}\neq\|a_{\ell}\|a_{j}}{j,\ell\in J_{\neq 0}}}\{x\in\R^{d_m}| \|a_{j}\|b_{\ell}-\|a_{\ell}\|b_{j}=(\|a_{j}\|a_{\ell}-\|a_{\ell}\|a_{j})'x\}.
\end{align}
Since the right hand-side is a finite union of measure-zero subspaces of $\mathbb{R}^{d_m}$, it must be that  $C^|$ has Lebesgue measure zero, establishing part (c). 

Now we show (\ref{ajl}). 
Let $\tilde J(x) = J_{00}\cup \tilde J_{\neq 0}(x)$. 
We note that $\tilde J_{\neq 0}(x)$ is not empty because $A\neq 0_{d_A\times d_X}$. 
This implies that $\textup{rk}(A_{\tilde J(x)})\ge 1$. 
Then there are two possibilities: $\textup{rk}(A_{\tilde J(x)})\ge 2$ and $\textup{rk}(A_{\tilde J(x)})= 1$. 
In the first case, (\ref{ajl}) holds trivially. 

In the latter case, we first show that $\tilde{J}(x)\in {\cal J}_1$. 
Suppose there exists $j\in\tilde J(x)$ and $\ell\in \{1,...,d_A\}/\tilde J(x)$ such that $\|a_j\|>0, \|a_\ell\|>0$, $\frac{a_j}{\|a_j\|}= \frac{a_\ell}{\|a_\ell\|}$, and $\frac{b_j}{\|a_j\|}= \frac{b_\ell}{\|a_\ell\|}$. 
This implies that 
\begin{equation}
\|a_\ell\|\inv (b_\ell-a'_\ell x)=\|a_j\|\inv (b_j-a'_j x), 
\end{equation}
so $\ell$ should also belong to $\tilde J(x)$. 
Since such a $j$ and $\ell$ cannot exist, it must be the case that $\tilde J(x)\in \mathcal{J}_1$. 
The fact that $x\in C^|$ means that $\tilde J(x)\notin \mathcal{J}_1^{os}$. 
Thus, it must be that $\tilde J(x)\in {\cal J}_1^{ts}$, which also implies (\ref{ajl}). 
Therefore (\ref{ajl}) holds in all cases. 
This concludes the proof of part (c).

\item First note that for every $J\in\mathcal{J}_1^{os}\cup \mathcal{J}_1^{ts}$ we have $\textup{rk}(A_J)=1$. 
Thus, it is sufficient to show that for every $J\subseteq \{1,...,d_A\}$ with $\textup{rk}(A_J)=1$ and $J\notin \mathcal{J}_1^{os}\cup \mathcal{J}_1^{ts}$, we have $K_J=\emptyset$. 

Note that if $J\cap J_0\neq J_{00}$, then either (i) there exists $j\in J_{00}/(J_0\cap J)$ or (ii) there exists $j\in (J\cap J_0)/J_{00}$. 
In the first case, any $x\in C_J$ would have to satisfy $0'x<0$, a contradiction. 
In the second case, any $x\in C_J$ would have to satisfy $0'x=b_j$, where $b_j\neq 0$, another contradiction. 
This implies that $C_J$, and therefore $K_J$, is empty. 

We next note that if $j\in J$ while $\ell\in\{1,...,d_A\}/J$ with $\|a_j\|>0$, $\|a_\ell\|>0$, $\frac{a_j}{\|a_j\|}= \frac{a_\ell}{\|a_\ell\|}$, and $\frac{b_j}{\|a_j\|}= \frac{b_\ell}{\|a_\ell\|}$, then any $x\in C_J$ should satisfy 
\begin{equation}
\|a_\ell\|\inv (b_\ell-a'_\ell x)=\|a_j\|\inv (b_j-a'_j x)=0, 
\end{equation}
so $\ell$ should also belong to $J$. 
This contradiction implies that $C_J$, and therefore $K_J$, must be empty. 

This implies that the only nonempty $K_J$ with $\textup{rk}(A_J)=1$ must belong to $\mathcal{J}_1$. 
If we suppose that $J\notin \mathcal{J}_1^{os}$, then there must exist $j,\ell\in J$ s.t. $\|a_j\|>0, \|a_\ell\|>0$, and $\frac{a_j}{\|a_j\|}\neq\frac{a_\ell}{\|a_\ell\|}$. 
However, since $\textup{rk}(A_J)=1$, $a_\ell$ and $a_j$ must be collinear. 
This implies that $\frac{a_j}{\|a_j\|}=-\frac{a_\ell}{\|a_\ell\|}$. 
Then, any $x\in C_J$ must satisfy 
\begin{equation}
0=\|a_\ell\|\inv (b_\ell-a'_\ell x)=\|a_j\|\inv (b_j-a'_j x). 
\end{equation}
This implies $\|a_\ell\|\inv b_\ell=-\|a_j\|\inv b_j,$ which implies that $J\in \mathcal{J}^{ts}$. 

Therefore, the only $J\subseteq\{1,...,d_A\}$ with $\textup{rk}(A_J)=1$ and $K_J\neq \emptyset$ belong to $\mathcal{J}_1^{os}\cup\mathcal{J}_1^{ts}$. \qed
\end{enumerate}

\subsection*{Proof of Lemma \ref{lem:prob-bnd2}}
For every $\lambda\ge 0$, let 
\begin{equation}
f(\lambda)=\int_{-\tau}^\infty (\alpha -\mathds{1}\{Z>z_{1-\beta/2}\})e^{-\frac{1}{2}(Z+\lambda)^2}dZ. 
\end{equation}
We show that $f(\lambda)\ge 0$ for all $\lambda\ge 0$. 
This is sufficient because 
\begin{align}
&\alpha\Pr{}_\mu(Z\ge -\tau)-Pr_\mu(\{Z\ge z_{1-\beta/2}\})\nonumber\\
&=\int_{-\tau}^\infty (\alpha -\mathds{1}\{Z>z_{1-\beta/2}\})\frac{1}{\sqrt{2\pi}}e^{-\frac{1}{2}(Z-\mu)^2}dZ\nonumber\\
&=\frac{f(-\mu)}{\sqrt{2\pi}}\ge 0
\end{align}
for all $\mu\le 0$. 

Let $f'(\lambda)$ denote the derivative of $f$. 
We show that (1) $f(0)\ge 0$ and (2) for all $\lambda\ge 0$, $f'(\lambda)\ge -\left(z_{1-\beta/2}+\lambda\right)f(\lambda)$. 
Together, these two properties imply that $f(\lambda)\ge 0$ because, if not, then there exists a $\lambda>0$ such that $f(\lambda)<0$. 
Then, by the mean value theorem, there exists a $\tilde\lambda\in(0,\lambda)$ such that $f(\tilde\lambda)<0$ and $f'(\tilde\lambda)<0$, which contradicts property (2). 

Property (1) holds because 
\begin{align}
\frac{f(0)}{\sqrt{2\pi}}&=\int_{-\tau}^\infty (\alpha -\mathds{1}\{Z>z_{1-\beta/2}\})\frac{1}{\sqrt{2\pi}}e^{-\frac{1}{2}Z^2}dZ\nonumber\\
&=\alpha\Phi(\tau)-(1-\Phi(z_{1-\beta/2}))=\alpha \Phi(\tau)-\beta/2=0. 
\end{align}
This also shows that equality holds when $\mu=0$. 

To show that property (2) holds, we evaluate 
\begin{align}
f'(\lambda)&=\frac{d}{d\lambda}\int_{-\tau}^\infty (\alpha -\mathds{1}\{Z>z_{1-\beta/2}\})e^{-\frac{1}{2}(Z+\lambda)^2}dZ\nonumber\\
&=-\int_{-\tau}^\infty (Z+\lambda)(\alpha -\mathds{1}\{Z>z_{1-\beta/2}\})e^{-\frac{1}{2}(Z+\lambda)^2}dZ\nonumber\\
&=-\int_{-\tau}^{z_{1-\beta/2}} \alpha(Z+\lambda)e^{-\frac{1}{2}(Z+\lambda)^2}dZ\nonumber\\
&\hphantom{=}+\int_{z_{1-\beta/2}}^\infty (1-\alpha)(Z+\lambda)e^{-\frac{1}{2}(Z+\lambda)^2}dZ\nonumber\\
&\ge-\int_{-\tau}^{z_{1-\beta/2}} \alpha(z_{1-\beta/2}+\lambda)e^{-\frac{1}{2}(Z+\lambda)^2}dZ\nonumber\\
&\hphantom{=}+\int_{z_{1-\beta/2}}^\infty (1-\alpha)(z_{1-\beta/2}+\lambda)e^{-\frac{1}{2}(Z+\lambda)^2}dZ\nonumber\\
&=-(z_{1-\beta/2}+\lambda)\int_{-\tau}^\infty (\alpha -\mathds{1}\{Z>z_{1-\beta/2}\}) e^{-\frac{1}{2}(Z+\lambda)^2}dZ\nonumber\\
&=-(z_{1-\beta/2}+\lambda) f(\lambda),
\end{align}
where the second equality follows by dominated convergence and the inequality follows from the events $\{Z>z_{1-\beta/2}\}$ and $\{Z\le z_{1-\beta/2}\}$. \qed

\section{Proof of Theorem \ref{thm:asysize-iid} and General Asymptotic Results}\label{sec:asyRCC}
In this section we prove a general theorem for uniform asymptotic properties of the CC and RCC tests. 
The general theorem is used to prove Theorem \ref{thm:asysize-iid}. 

\subsection{A General Asymptotic Theorem}
The following general asymptotic theorem uses the following condition. 
In this section, we sometimes make explicit the dependence of $A$ and $b$ on $\theta$, denoting them by $A(\theta)$ and $b(\theta)$. 
The rows of $A(\theta)$ are denoted by $a_j(\theta)$, and submatrices composed of the rows of $A(\theta)$ are denoted by $A_J(\theta)$. 

\begin{assumption}\label{assu:para_space} 
The given sequence $\{(F_n,\theta_n):F_n\in\mathcal{F}, \theta_n\in\Theta_0(F_n)\}_{n=1}^{\infty}$ satisfies, for every subsequence, $n_m$, there exists a further subsequence, $n_q$, and there exists a sequence of positive definite $d_m\times d_m$ matrices, $\{D_q\}$ such that:

\textup{(a)} Under the sequence $\{F_{n_q}\}_{q=1}^\infty$,
\begin{equation}
\sqrt{n_q}D_{q}^{-1/2}(\overline{m}_{n_q}(\theta_{n_q})-\E_{F_{n_q}}\overline{m}_{n_q}(\theta_{n_q}))\to_d N(\mathbf{0},\Omega), \label{weakconvergence}
\end{equation}
for some positive definite correlation matrix, $\Omega$, and
\begin{equation}
\|D_{q}^{-1/2}\widehat\Sigma_{n_q}(\theta_{n_q})D_{q}^{-1/2}-\Omega\|\rightarrow_p 0. \label{varianceconsistency}
\end{equation}

\textup{(b)}  $\Lambda_q A(\theta_{n_q}) D_q\rightarrow \bar A_0$ for some $d_A\times d_m$ matrix $\bar{A}_0$, and for every $J\subseteq\{1,...,d_A\}$, $\textup{rk}(I_JA(\theta_{n_q})D_q)=\textup{rk}(I_J \bar A_0)$, where $\Lambda_q$ is the diagonal $d_A\times d_A$ matrix whose $j$th diagonal entry is one if $e'_j A(\theta_{n_q})=\mathbf{0}$ and $\|e'_j A(\theta_{n_q})D_q\|\inv$ otherwise.
\end{assumption}

\noindent\textbf{Remark.} 
The matrix $D_q$ typically is the diagonal matrix of variances of the elements of $\sqrt{n_q}\overline{m}_{n_q}(\theta_{n_q})$. 
In part (a), we allow each diagonal element to go to zero (or infinity) at different rates, to incorporate the cases where different moments are on different scales or where different moments involve time series processes that are integrated at different orders. 
\cite{AndrewsGuggenberger2009}, AS, and \cite{AndrewsChengGuggenberger2019} also use a diagonal normalizing matrix for this purpose. 

Moreover, the matrix $D_q$ can be non-diagonal, which is useful when the asymptotic variance matrix of $\sqrt{n_q}(\overline{m}_{n_q}(\theta _{n_q})- \E_{F_n}\overline{m}_{n_q}(\theta_{n_q}))$ is singular but a certain rotation of the vector with proper scaling has a non-singular asymptotic variance matrix. 

Part (b) is not required to show the uniform asymptotic validity of the RCC test. 
It is only used to show asymptotic size-exact and the asymptotic IDI property. 
The existence of $\bar A_0$ follows by the choice of the subsequence, while the rank condition is used to verify Lemma \ref{lem:set_consistency}, below. \qed

The following theorem is a general asymptotic theorem used to show the uniform asymptotic properties of the RCC test. 
\begin{theorem}\label{thm:momineq-sz} 
\textup{(a)} Suppose Assumption \textup{\ref{assu:para_space}(a)} holds for all sequences $\{(F_n,\theta_n): F_n\in {\cal F}, \theta_n\in\Theta_0(F_n)\}_{i=1}^n$. 
Then,
\begin{align*}
\underset{n\to\infty}{\textup{limsup}}\sup_{F\in {\cal F}}\sup_{\theta\in\Theta_0(F)}\E_{F}(\phi^{\textup{RCC}}_n(\theta,\alpha))\leq \alpha.
\end{align*}
Next consider a sequence $\{(F_n,\theta_n):F_n\in{\cal F},\theta_n\in\Theta_0(F_n)\}_{n=1}^\infty$ satisfying Assumption \textup{\ref{assu:para_space}(a,b)}.

\textup{(b)} If, along any further subsequence, for all $j=1,...,d_A$, $\sqrt{n_q}e'_j\Lambda_q(A(\theta_{n_q})\E_{F_{n_q}}\overline{m}_{n_q}(\theta_{n_q})-b(\theta_{n_q}))\rightarrow 0$, and if $\bar A_0\neq 0_{d_A\times d_m}$, then, 
\[
\lim_{n\rightarrow\infty}\E_{F_n}\phi^{\textup{RCC}}_n(\theta_n,\alpha)=\alpha. 
\]

\textup{(c)} If, for $J\subseteq\{1,...,d_A\}$, along any further subsequence, $\sqrt{n_q}e'_j\Lambda_q(A(\theta_{n_q})\E_{F_{n_q}}\overline{m}_{n_q}(\theta_{n_q})-b(\theta_{n_q}))\rightarrow-\infty$ as $q\rightarrow\infty$, for all $j\notin J$, then 
\[
\lim_{n\rightarrow\infty}\textup{Pr}_{F_n}\left(\phi^{\textup{RCC}}_n(\theta_n,\alpha)\neq \phi^{\textup{RCC}}_{n,J}(\theta_n,\alpha)\right)=0. 
\]
\end{theorem}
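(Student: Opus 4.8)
The plan is to reduce all three parts to the finite-sample normal result, Theorem \ref{thm:normal-rcc}, via a subsequencing argument combined with the almost-sure representation theorem. Throughout I use the principle that it suffices to show every subsequence of $\{n\}$ has a further subsequence along which the relevant quantity --- the rejection probability in (a) and (b), the disagreement probability in (c) --- has the desired limit. For (a) this means: if $\limsup_n\sup_{F,\theta}\E_{F}\phi^{\textup{RCC}}_n(\theta,\alpha)>\alpha$, choosing along a subsequence $(F_n,\theta_n)$ that nearly attains the inner supremum yields a sequence with $\E_{F_n}\phi^{\textup{RCC}}_n(\theta_n,\alpha)$ converging to a limit $>\alpha$, which I shall contradict; for (b) and (c) the sequence is the one in the statement. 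Given such a sequence and any of its subsequences, apply Assumption \ref{assu:para_space} to extract a further subsequence $n_q$ with normalizers $D_q$ along which (\ref{weakconvergence})--(\ref{varianceconsistency}) hold, and (in (b) and (c)) part (b) of that assumption as well. The first observation is that $T_n(\theta)$, $\hat r$, $\hat\tau$ and hence $\hat\beta$ are unchanged under an invertible linear change of variables $\mu\mapsto L\mu$ applied simultaneously to the moment vector, to $\widehat\Sigma_n$, and to $(A,b)$, and under positive rescaling of individual rows of $(A,b)$ --- the latter because $\hat\tau_j$ in (\ref{zj}) depends on $(A,b,\widehat\Sigma_n)$ only through the scale-invariant quantities $\|a\|_{\widehat\Sigma_n}$ and $a_1'\widehat\Sigma_n a_j$. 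Hence I may replace the raw moments by their studentized, recentered version, $\widehat\Sigma_{n_q}$ by $D_q^{-1/2}\widehat\Sigma_{n_q}D_q^{-1/2}$, and $(A(\theta_{n_q}),b(\theta_{n_q}))$ by the row-normalized, rescaled pair, whose coefficient matrix is bounded and so converges, along a further subsequence, to some $\bar A_0$ that in (b) and (c) additionally preserves all submatrix ranks by Assumption \ref{assu:para_space}(b). After centering, $T_{n_q}(\theta_{n_q})$ becomes $\min_{\bar A_q\zeta\le s_q}(\widetilde Y_q-\zeta)'\widetilde\Omega_q^{-1}(\widetilde Y_q-\zeta)$ with $\widetilde Y_q\to_d N(\mathbf 0,\Omega)$, $\widetilde\Omega_q\to_p\Omega$, $\bar A_q\to\bar A_0$, and $s_q\ge\mathbf 0$ the rescaled slack vector; the almost-sure representation theorem realizes $(\widetilde Y_q,\widetilde\Omega_q)$ on one probability space with $(\widetilde Y_q,\widetilde\Omega_q)\to(\widetilde\xi,\Omega)$ a.s., $\widetilde\xi\sim N(\mathbf 0,\Omega)$, and along a further subsequence every coordinate of $s_q$ converges in $[0,\infty]$.

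\noindent\textbf{Part (a).}
Dropping the rows with divergent slack, the feasible sets converge to a polyhedron $C_\infty\ni\mathbf 0$, and by continuity of the value of the quadratic program $T_{n_q}(\theta_{n_q})\to T_\infty$, the squared $\Omega^{-1}$-distance from $\widetilde\xi$ to $C_\infty$. The one real difficulty is that, absent Assumption \ref{assu:para_space}(b), the rank of the inequalities active in the limit can be strictly lower than in finite sample, so $\hat r$ and $\hat\tau$ need not converge to the quantities read off the ``obvious'' representation of $C_\infty$; this is precisely what Lemma \ref{BExistence} repairs, by adding redundant inequalities to the limiting representation --- leaving $C_\infty$ and the feasibility of $\mathbf 0$ intact --- to produce a representation $(\bar A_\infty,\bar b_\infty)$ whose associated rank and refinement statistic at $\widetilde\xi$ agree with the finite-sample limits, so that the RCC test attached to $(\bar A_\infty,\bar b_\infty,\widetilde\xi)$ is exactly the a.s.\ limit of $\phi^{\textup{RCC}}_{n_q}(\theta_{n_q},\alpha)$. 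Since $T_\infty$ a.s.\ avoids its (random) critical value --- conditional on the realized active set and on $\hat\tau$, the relevant component of $T_\infty$ has a density on $(0,\infty)$ --- the indicator converges a.s., so by bounded convergence $\E_{F_{n_q}}\phi^{\textup{RCC}}_{n_q}(\theta_{n_q},\alpha)$ converges to the rejection probability of this limiting RCC test, which Theorem \ref{thm:normal-rcc}(a), applied to $(\bar A_\infty,\bar b_\infty)$ at mean $\mathbf 0$, bounds by $\alpha$. The subsequence principle then gives the $\limsup$ bound.

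\noindent\textbf{Parts (b) and (c).}
In (b) the hypothesis forces $s_q\to\mathbf 0$, so $C_\infty=\{\zeta:\bar A_0\zeta\le\mathbf 0\}$ is a cone with its feasible mean $\mathbf 0$ at the apex, where every inequality binds; the same construction (Lemma \ref{BExistence}) identifies the a.s.\ limit of $\phi^{\textup{RCC}}_{n_q}$ with a genuine RCC test for a representation of that cone, and Theorem \ref{thm:normal-rcc}(b), applicable since $\bar A_0\neq\mathbf 0$ by hypothesis and $\bar A_0\mathbf 0=\mathbf 0$, makes the limiting rejection probability exactly $\alpha$; bounded convergence and the subsequence principle then upgrade this to $\lim_n\E_{F_n}\phi^{\textup{RCC}}_n(\theta_n,\alpha)=\alpha$. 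Part (c) I would prove directly, mirroring the proof of Theorem \ref{thm:normal-rcc}(c): because $\E_{F_n}\overline m_n(\theta_n)$ is feasible, the restricted estimator $\hat\mu$ lies within $O_p(1)$ of $\overline m_n(\theta_n)$ after studentization, so for each $j\notin J$ the rescaled quantity $\sqrt{n_q}\,e_j'\Lambda_q(A(\theta_{n_q})\hat\mu-b(\theta_{n_q}))$ equals its population analogue plus $O_p(1)$ and hence tends to $-\infty$; on the probability-$\to 1$ event that all of these are very negative, no row $j\notin J$ is active, so $\hat\mu$ solves the reduced program, $T_n(\theta_n)=T_{n,J}(\theta_n)$, $\widehat J\subseteq J$, and each $\hat\tau_j$ with $j\notin J$ is a positive multiple of $b_j-a_j'\hat\mu\to\infty$, so $\hat\tau$ and $\hat\beta$ take their reduced-problem values; therefore $\phi^{\textup{RCC}}_n(\theta_n,\alpha)=\phi^{\textup{RCC}}_{n,J}(\theta_n,\alpha)$ on that event, and the subsequence principle finishes the proof.

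\noindent\textbf{Main obstacle.}
The delicate step is Part (a): arranging that the a.s.\ limit of $\phi^{\textup{RCC}}_{n_q}$ is itself a genuine RCC test for some polyhedron that still satisfies the hypotheses of Theorem \ref{thm:normal-rcc} --- i.e.\ constructing, via Lemma \ref{BExistence}, the right limiting representation so that the limiting rank and refinement statistic match the finite-sample ones off a null set --- together with the accompanying continuity of the quadratic-program value and the no-tie argument for the critical value. Parts (b) and (c) then follow with comparatively little extra effort.
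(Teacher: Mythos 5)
Your high-level plan — subsequencing, a.s.\ representation, reduction to Theorem~\ref{thm:normal-rcc} — is the right skeleton, and your treatment of parts~(b) and~(c) is essentially sound (the paper proves (c) by applying the machinery of (b) to both the full and the reduced inequality systems rather than arguing from scratch, but your direct argument in the spirit of Theorem~\ref{thm:normal-rcc}(c) is a workable alternative). The serious problem is in part~(a), where Assumption~\ref{assu:para_space}(b) is \emph{not} available: there your claim that ``the RCC test attached to $(\bar A_\infty,\bar b_\infty,\tilde\xi)$ is exactly the a.s.\ limit of $\phi^{\textup{RCC}}_{n_q}$'' and that ``the indicator converges a.s.'' is false, and the no-tie argument you offer cannot repair it. The obstruction is rank collapse: the active-set rank in finite samples, $\hat r$, can be \emph{strictly larger} than the rank read off any limiting representation of the same polyhedron, on an event of non-vanishing probability. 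Concretely, take two rows $a_{1,q}=(1,1/q)$, $a_{2,q}=(-1,1/q)$ with zero slack; then $\textup{poly}(A_q,0)=\{\mu:\mu_2\le -q|\mu_1|\}$ is a narrowing wedge whose normal cone at the vertex is the \emph{widening} cone $\{\mu_2\ge |\mu_1|/q\}$, tending to a half-plane, so $\hat r=2$ with probability tending to $1/2$; yet in the limit both rows are $\pm(1,0)$, $C_\infty$ is a line, and the limiting active rank is $1$. No amount of ``adding redundant inequalities to the limiting representation'' fixes this, because the finite-sample rank is a genuine function of the finite-sample geometry. You have also mislocated what Lemma~\ref{BExistence} does: it augments the \emph{finite-sample} system with inequalities that are redundant there (so the feasible set, $\hat\mu$, $\hat r$ and $\hat\tau$ are unchanged; this is the content of parts (b),(c) together with Lemma~\ref{InactivityIsUnchanged}) and whose sole purpose is to make the augmented polyhedra Kuratowski-converge, so that Lemma~\ref{lem:argmin_converge} gives $t^*_q(x_q)\to t^*_\infty(x_\infty)$; it does not equate ranks.

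What the paper actually proves in place of your a.s.\ convergence is the one-sided bound
\begin{equation*}
\mathds 1\{\,\|x_q-t^*_q(x_q)\|^2\le \chi^2_{r^q(x_q),\,1-\beta^q(x_q)}\}\;\ge\;\mathds 1\{\,\|x_\infty-t_\infty^*(x_\infty)\|^2\le \chi^2_{r^\infty(x_\infty),\,1-\beta^\infty(x_\infty)}\}
\end{equation*}
eventually, which is enough for $\liminf_q\Pr(\text{accept})\ge\Pr(\text{limit accepts})\ge 1-\alpha$. Establishing this inequality requires three separate ingredients that your sketch omits: a precise statement and proof that eventually $x_q\in K^q_L$ only for $L\subseteq L^\infty$ with $r^q_L\ge r^\infty_{L^\infty}$ (using a carefully constructed full-measure event $\Xi$ that rules out the borderline realizations); the monotonicity of $\chi^2$ quantiles in the degrees of freedom, so that the larger finite-sample rank only makes the critical value larger; and the one-sided limit $\limsup_q\beta^q(x_q)\le\beta^\infty(x_\infty)$, which again goes in the favorable direction. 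Your ``continuity of the value of the quadratic program'' is also doing hidden work: without the $B_q$-augmentation and its Kuratowski convergence, the projections need not be continuous in the first place. For part~(b), your reasoning becomes correct precisely because Assumption~\ref{assu:para_space}(b) pins down the subsystem ranks and makes the one-sided inequalities above into equalities, so the limiting rejection probability is attained rather than merely bounded; it would strengthen the exposition to say explicitly that rank preservation in (b) comes from that assumption, not from Lemma~\ref{BExistence}.
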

\noindent \textbf{Remarks.}  
(1) Notice that no assumptions are placed on $A(\theta)$ for Theorem \ref{thm:momineq-sz}(a). It can be low-rank or any submatrix of $A(\theta)$ can be local to singular as $\theta$ varies. This is achieved by an extra step in the proof that adds inequalities that are redundant in the finite sample but are relevant in the limit (see Lemma \ref{BExistence} below). 

(2) Part (c) states the asymptotic IDI property of the RCC test under Assumption \ref{assu:para_space}. Part (c) can be combined with part (b) to show that the RCC test has exact asymptotic size (and thus not conservative) if there exists a sequence $(F_n,\theta_n)$ along which a subset of inequalities are binding and the rest asymptote to infinitely slack, a condition that is automatically satisfied if, for example, the data are stationary, the moment inequalities are continuous in $\theta$, and $\Theta\neq \Theta_0(F_0)$ where such a sequence is $(F_n,\theta_n) = (F,\theta)$ for any fixed $(F,\theta)$ such that $\theta$ is on the boundary of $\Theta_0(F_0)$ (but not on the boundary of $\Theta$). \qed

\subsection{Auxiliary Lemmas for Theorem \ref{thm:momineq-sz}}
The proof of Theorem \ref{thm:momineq-sz} uses four important lemmas. 
Lemma \ref{lem:argmin_converge} establishes a condition under which the projection onto a sequence of polyhedra converges when the coefficient matrix defining the polyhedra converges. 
The condition is verified in a special context in Lemma \ref{lem:set_consistency}, which is used to prove part (b) of Theorem \ref{thm:momineq-sz}. 
The conditions for part (a) are not strong enough for us to apply Lemma \ref{lem:argmin_converge} because we do not restrict the rank of $A(\theta)$. 
Nonetheless, Lemma \ref{BExistence} shows that inequalities redundant in finite sample but relevant in the limit can be added to guarantee the condition of Lemma \ref{lem:argmin_converge}, and help us to prove part (a) of Theorem \ref{thm:momineq-sz}. 
Lemma \ref{InactivityIsUnchanged} shows that the additional inequalities from Lemma \ref{BExistence} do not change the definition of $\hat\beta$. 

First we define some notation. 
For any $d_A\times d_m$ real-valued matrix $A$ and vector $h\in \R_{+,\infty}^{d_A}:=[0,\infty]^{d_A}$, let $\textup{poly}(A,h)=\{\mu\in\R^{d_m}: A\mu\le h\}$ denote the polyhedron defined by inequalities with coefficients given by $A$ and constants given by $h$. 
Also define 
\begin{equation}
\mu^*(x;A,h) = \underset{\mu{\scriptstyle\in\textup{poly}}(A,h)}{\text{argmin}}\|x-\mu\|^2. 
\end{equation}
The lemma considers a sequence of $d_A\times d_m$ real-valued matrices $\{A_n\}_{n=1}^\infty$ and a sequence of $d_A\times 1$ vectors $h_n\in \R_{+}^{d_A}:=[0,\infty)^{d_A}$ such that, as $n\to\infty$, $A_n\to A_0$ and $h_n\to h_0$ for a $d_A\times d_m$ real-valued matrix $A_0$ and a vector $h_0\in \R_{+,\infty}^{d_A}$. 
Also, let $x_n\in \R^{d_m}$ be a sequence of vectors such that $x_n\to x_0\in \R^{d_m}$ as $n\to\infty$. 
We say that a sequence of sets, $\textup{poly}(A_n,h_n)$, Kuratowski converges to a limit set, $\textup{poly}(A_0,h_0)$, denoted by 
\begin{equation}
\textup{poly}(A_n,h_n)\overset{K}{\rightarrow}\textup{poly}(A_0,h_0), 
\end{equation}
if (i) for every $x_0\in\textup{poly}(A_0,h_0)$ there exists a sequence $x_n\in\textup{poly}(A_n,h_n)$ such that $x_n\rightarrow x_0$, and (ii) for every subsequence $n_q$ and for every converging sequence $x_{n_q}\in\textup{poly}(A_{n_q},h_{n_q})$ that converges to a point $x_0$, we have $x_0\in \textup{poly}(A_0,h_0)$.\footnote{One can check that this definition of Kuratowski convergence is equivalent to other definitions given in, for example \cite{AubinFrankowska1990}.} 

\begin{lemma}\label{lem:argmin_converge} 
If $\textup{poly}(A_n,h_n)\overset{K}{\rightarrow}\textup{poly}(A_0,h_0)$, then $\mu^*(x_n;A_n,h_n)\to \mu^*(x_0; A_0, h_0)$. 
\end{lemma}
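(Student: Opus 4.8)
The plan is to use the classical fact that the metric projection onto a nonempty closed convex subset of Euclidean space is unique, together with both halves of the definition of Kuratowski convergence, to run a boundedness-plus-subsequence argument. Write $\mu_n := \mu^*(x_n;A_n,h_n)$ and $\mu_0 := \mu^*(x_0;A_0,h_0)$. First I would note these are well defined: each $\textup{poly}(A_n,h_n)$ and $\textup{poly}(A_0,h_0)$ is a closed convex polyhedron (an entry of $h_0$ equal to $+\infty$ simply deletes the corresponding inequality, so the limit set is still a genuine closed convex polyhedron), and it is nonempty for $n$ large, since $\textup{poly}(A_0,h_0)$ is nonempty and part (i) of Kuratowski convergence produces an approximating sequence lying in $\textup{poly}(A_n,h_n)$. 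Hence the projections exist and are unique.

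Next I would show $\{\mu_n\}$ is bounded. Pick any $y_0\in\textup{poly}(A_0,h_0)$ and, by part (i) of Kuratowski convergence, a sequence $y_n\in\textup{poly}(A_n,h_n)$ with $y_n\to y_0$. Since $\mu_n$ is the nearest point of $\textup{poly}(A_n,h_n)$ to $x_n$,
\[
\|x_n-\mu_n\|\le\|x_n-y_n\|\to\|x_0-y_0\|,
\]
so $\|x_n-\mu_n\|$ is bounded, and therefore so is $\|\mu_n\|$ because $x_n\to x_0$.

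Then I would run a subsequence argument. Let $\mu_{n_q}\to\mu_*$ be any convergent subsequence. By part (ii) of Kuratowski convergence, $\mu_*\in\textup{poly}(A_0,h_0)$. For an arbitrary $z_0\in\textup{poly}(A_0,h_0)$, take $z_n\in\textup{poly}(A_n,h_n)$ with $z_n\to z_0$ (part (i)); then $\|x_{n_q}-\mu_{n_q}\|\le\|x_{n_q}-z_{n_q}\|$, and letting $q\to\infty$ gives $\|x_0-\mu_*\|\le\|x_0-z_0\|$. As $z_0$ was arbitrary, $\mu_*$ minimizes the distance to $x_0$ over $\textup{poly}(A_0,h_0)$, so by uniqueness of the projection $\mu_*=\mu_0$. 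Since every convergent subsequence of the bounded sequence $\{\mu_n\}$ has the same limit $\mu_0$, the whole sequence converges to $\mu_0$, which is the claim.

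There is no deep obstacle here; the steps requiring care are purely bookkeeping: (a) verifying nonemptiness of the finite-$n$ polyhedra so that $\mu_n$ is defined, which comes from condition (i) and nonemptiness of the limit set; and (b) handling the possible $+\infty$ entries of $h_0$, which only correspond to vacuous constraints and so do not affect closedness, convexity, or uniqueness of the projection. I expect the boundedness step and the "every subsequence has the same limit" conclusion to be the parts worth stating explicitly, but both are routine.
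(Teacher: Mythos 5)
Your proof is correct and takes essentially the same approach as the paper's: both rely on a boundedness-plus-subsequence argument driven by the two halves of Kuratowski convergence and the uniqueness of the metric projection onto a closed convex set. The only cosmetic difference is that the paper organizes the argument via $\limsup$/$\liminf$ bounds on the squared distance $\|x_n-\mu_n\|^2$ before invoking uniqueness, whereas you argue directly that every subsequential limit of $\{\mu_n\}$ must equal $\mu_0$; these are the same ideas in a slightly different order.
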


We denote submatrices of $A_n$ and $A_0$ formed by the rows with indices in $J\subseteq\{1,...,d_A\}$ by $A_{J,n}$ and $A_{J,0}$. 
Important for the following lemma is the fact that every element of $h_n$ is nonnegative for all $n$. 
\begin{lemma}\label{lem:set_consistency} 
If for all $J\subseteq \{1,\dots,d_A\}$, $\textup{rk}(A_{J,n}) = \textup{rk}(A_{J,0})$ for all $n$, then $\textup{poly}(A_n,h_n)\overset{K}{\rightarrow}\textup{poly}(A_0,h_0)$. 
\end{lemma}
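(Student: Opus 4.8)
The plan is to verify directly the two defining conditions of Kuratowski convergence, noting that the rank hypothesis is needed only for the harder ``inner limit'' condition (i). Throughout write $a_{j,n}'$ for the $j$th row of $A_n$ and $h_{j,n}$ for the $j$th entry of $h_n$.

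For condition (ii) I would simply pass to the limit: if $x_{n_q}\in\textup{poly}(A_{n_q},h_{n_q})$ and $x_{n_q}\to x_0$, then $a_{j,n_q}'x_{n_q}\le h_{j,n_q}$ for every $j$; when $h_{j,0}<\infty$ this yields $a_{j,0}'x_0\le h_{j,0}$ using $a_{j,n_q}\to a_{j,0}$ and $h_{j,n_q}\to h_{j,0}$, and when $h_{j,0}=\infty$ the inequality is trivial, so $x_0\in\textup{poly}(A_0,h_0)$. No rank condition enters here.

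For condition (i), fix $x_0\in\textup{poly}(A_0,h_0)$; it suffices to show $\textup{dist}(x_0,\textup{poly}(A_n,h_n))\to 0$, since then the metric projections $p_n$ of $x_0$ onto the closed convex sets $\textup{poly}(A_n,h_n)$ obey $\|p_n-x_0\|\to 0$. Here the hypothesis $h_n\in[0,\infty)^{d_A}$ is used: it forces $0\in\textup{poly}(A_n,h_n)$, so $p_n$ is well defined and $\|p_n-x_0\|\le\|x_0\|$. Suppose the claim fails, so along a subsequence (relabel $n$) $\|p_n-x_0\|\ge\delta>0$. The projection characterization places $x_0-p_n$ in the normal cone of $\textup{poly}(A_n,h_n)$ at $p_n$, i.e.\ $x_0-p_n=\sum_{j\in J_n}\lambda_{j,n}a_{j,n}$ with $\lambda_{j,n}\ge 0$ supported on the active set $J_n=\{j:a_{j,n}'p_n=h_{j,n}\}$; by Carath\'{e}odory for cones this combination may be taken over $S_n\subseteq J_n$ with $\{a_{j,n}\}_{j\in S_n}$ linearly independent and every $\lambda_{j,n}>0$ (and $S_n\neq\emptyset$ since $x_0-p_n\neq 0$).

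The crux — and the step I expect to be the main obstacle — is boundedness of the multipliers, which is exactly where the rank hypothesis is invoked. Pass to a further subsequence so that $S_n\equiv S$ is constant and $p_n\to p_\infty$ (with $\|p_\infty-x_0\|\ge\delta$, and $p_\infty\in\textup{poly}(A_0,h_0)$ by condition (ii)). Linear independence gives $\textup{rk}(A_{S,n})=|S|$, so the hypothesis gives $\textup{rk}(A_{S,0})=|S|$; hence $A_{S,0}A_{S,0}'$ is positive definite, and by continuity of eigenvalues $\lambda_{\min}(A_{S,n}A_{S,n}')$ is bounded away from zero along the subsequence. Since $\|\sum_{j\in S}\lambda_{j,n}a_{j,n}\|=\|x_0-p_n\|\le\|x_0\|$, the vectors $(\lambda_{j,n})_{j\in S}$ are bounded, so after a further subsequence $\lambda_{j,n}\to\lambda_{j,\infty}\ge 0$. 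Using complementary slackness $a_{j,n}'p_n=h_{j,n}$ for $j\in S$,
\[
\|x_0-p_n\|^2=\langle x_0-p_n,x_0\rangle-\langle x_0-p_n,p_n\rangle=\sum_{j\in S}\lambda_{j,n}\bigl(a_{j,n}'x_0-h_{j,n}\bigr).
\]
Split $S$ into $\{j:h_{j,0}<\infty\}$ and $\{j:h_{j,0}=\infty\}$: the latter terms are eventually $\le 0$ (as $h_{j,n}\to\infty$ while $a_{j,n}'x_0$ is bounded), and the former converge to $\sum_{j:h_{j,0}<\infty}\lambda_{j,\infty}(a_{j,0}'x_0-h_{j,0})\le 0$ since $x_0\in\textup{poly}(A_0,h_0)$. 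Taking limits gives $\delta^2\le\|p_\infty-x_0\|^2\le 0$, a contradiction. Hence $\textup{dist}(x_0,\textup{poly}(A_n,h_n))\to 0$, establishing (i); together with (ii) this gives $\textup{poly}(A_n,h_n)\overset{K}{\rightarrow}\textup{poly}(A_0,h_0)$. The only delicate bookkeeping I anticipate is handling the possibly-infinite entries of $h_0$ throughout, and recording the uniform conditioning of $A_{S,n}$ carefully.
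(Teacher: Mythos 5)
Your proof is correct, and it takes a genuinely different route from the paper's. For the inner-limit condition (i), the paper gives an explicit construction: it partitions the active set $J_0=\{j:a_{j,0}'x_0=h_{j,0}\}$ into a maximal-rank core $J_0^\ast$ and a remainder $J_0^o$ chosen to minimize $\sum_{j^\ast\in J_0^\ast}h_{j^\ast,0}$, solves a linear system to build a candidate $z_n^\dagger\to x_0$ that satisfies the $J_0^\ast$-constraints exactly, and then radially rescales by a factor $\lambda_n\uparrow 1$ (here using $h_n\ge 0$ so that the origin is feasible) to sweep up the remaining constraints. You instead run a compactness argument on the metric projections $p_n$ of $x_0$ onto $\textup{poly}(A_n,h_n)$: the normal-cone/Carath\'eodory representation $x_0-p_n=\sum_{j\in S_n}\lambda_{j,n}a_{j,n}$ with $S_n$ linearly independent, the rank hypothesis to keep $\lambda_{\min}(A_{S,n}A_{S,n}')$ bounded away from zero (hence multipliers bounded), and the identity $\|x_0-p_n\|^2=\sum_{j\in S}\lambda_{j,n}(a_{j,n}'x_0-h_{j,n})$ force a contradiction. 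Both proofs use $h_n\ge 0$ (you, to guarantee $0\in\textup{poly}(A_n,h_n)$ so the projection is bounded; the paper, to make the $\lambda_n$-scaling work), and both invoke the rank condition to control degenerate geometry, but your invocation is noticeably cleaner --- it directly yields bounded KKT multipliers rather than consistency of the weights $w_{j^o,j^\ast,n}$ in a hand-built linear representation. One small observation that would streamline your argument: for $j$ with $h_{j,0}=\infty$, eventually $a_{j,n}'p_n<h_{j,n}$ because $p_n$ is bounded, so such $j$ drop out of $J_n$ (and hence out of $S$) along the subsequence; the split of $S$ into finite and infinite $h_{j,0}$ parts is therefore vacuous once you note this, and the final limit can be taken without the ``eventually $\le 0$'' bookkeeping.
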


For any $d_A\times d_m$ matrix, $A$, and for any vector, $g$, let $J(x; A,g)= \{j\in\{1,\dots,d_A\}:a_j'\mu^*(x;A,g)=g_j\}$. 
This generalizes the previous notation for active inequalities to make explicit the dependence on $A$ and $g$. 
Also let $[A;B]$ denotes the vertical concatenation of two matrices, $A$ and $B$. 

\begin{lemma}\label{BExistence}
Let $A_n$ be a sequence of $d_A\times d_m$ matrices such that each row is either zero or belongs to the unit circle. 
Let $g_n$ be a sequence of nonnegative $d_A$-vectors. 
Then, there exists a subsequence, $n_q$, a sequence of $d_B\times d_m$ matrices, $B_q$, and a sequence of nonnegative $d_B$-vectors $h_q$ such that the following hold. 
\begin{enumerate}[label=(\alph*)]
\item[\textup{(a)}] $A_{n_q}\rightarrow A_0$, $B_q\rightarrow B_0$, $g_{n_q}\rightarrow g_0$, and $h_{q}\rightarrow h_0$ (some of the elements of $g_0$ and $h_0$ may be $+\infty$, in which case the convergence/divergence occurs elementwise). 
\item[\textup{(b)}] $\textup{poly}(A_{n_q},g_{n_q})\subseteq \textup{poly}(B_q,h_{q})$ for all $q$. 
\item[\textup{(c)}] For all $q$ and for all $x\in \textup{poly}(A_{n_q},g_{n_q})$, $$\textup{rk}(I_{J(x;A_{n_q},g_{n_q})}A_{n_q})=\textup{rk}([I_{J(x;A_{n_q},g_{n_q})}A_{n_q};I_{J(x;B_q,h_q)}B_{q}]).$$ 
\item[\textup{(d)}] $\textup{poly}([A_{n_q};B_q],[g_{n_q};h_{q}])\overset{K}{\rightarrow} \textup{poly}([A_0;B_0],[g_0;h_0])$ as $q\rightarrow\infty$. 
\end{enumerate}
\end{lemma}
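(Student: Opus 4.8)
The plan is to build $B_q$ by adjoining to $A_{n_q}$ finitely many inequalities that are implied by the system $A_{n_q}\mu\le g_{n_q}$ but whose \emph{limiting} normals lie outside the row space of $A_0$, so that in the limit they shrink $\textup{poly}(A_0,g_0)$ down to exactly the Kuratowski limit of $\textup{poly}(A_{n_q},g_{n_q})$. First I would extract the subsequence required for (a): each row of $A_n$ lies in the compact set $\{\mathbf 0\}\cup\{v:\|v\|=1\}$ and each $g_n$ lies in $[0,\infty]^{d_A}$, so a subsequence gives $A_n\to A_0$ and $g_n\to g_0$, and a further thinning makes $\textup{rk}(A_{J,n})$ constant in $n$ and the row spaces of $A_{J,n}$ convergent for every $J\subseteq\{1,\dots,d_A\}$. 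Because $g_n\ge\mathbf 0$, the origin lies in every $\textup{poly}(A_n,g_n)$, so all these polyhedra (and $\textup{poly}(A_0,g_0)$) are nonempty, disposing of a degenerate case.

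Next comes the construction of $B_q$, which is the heart of the argument. Call a pair $(c,\delta)$ with $c\in\{\mathbf 0\}\cup\{v:\|v\|=1\}$ and $\delta\in[0,\infty]$ a \emph{limiting implied inequality} if there are vectors $\lambda_q\ge\mathbf 0$ with $A_{n_q}'\lambda_q\to c$ and $g_{n_q}'\lambda_q\to\delta$; by Farkas duality the set of actual implied inequalities of $\textup{poly}(A_{n_q},g_{n_q})$ is a polyhedral cone (the homogenized dual of the polyhedron), and I expect its Kuratowski limit to again be polyhedral, hence finitely generated. I would pick generators $(c^{(1)},\delta^{(1)}),\dots,(c^{(d_B)},\delta^{(d_B)})$ of this limit cone, realize each as a limit $c^{(i)}=\lim_q A_{n_q}'\lambda_q^{(i)}$ with $\lambda_q^{(i)}\ge\mathbf 0$, set $h_q^{(i)}=g_{n_q}'\lambda_q^{(i)}$ (the Farkas value, not a slack one), and stack the $c_q^{(i)}:=A_{n_q}'\lambda_q^{(i)}$ as $B_q$ and the $h_q^{(i)}$ as $h_q$, so that $B_q\to B_0$ and $h_q\to h_0$ by construction. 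If no limiting implied inequality escapes the row space of $A_0$, then $\textup{rk}(A_{J,n})=\textup{rk}(A_{J,0})$ for all $J$ and Lemma~\ref{lem:set_consistency} already yields (d) with $B_q$ empty.

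Then I would verify (a)--(d). Part (a) is immediate from the repeated subsequence extractions. Part (b): each row of $B_q$ is $c_q^{(i)}=A_{n_q}'\lambda_q^{(i)}$ with $h_q^{(i)}=g_{n_q}'\lambda_q^{(i)}$, so $(c_q^{(i)})'\mu=(\lambda_q^{(i)})'A_{n_q}\mu\le(\lambda_q^{(i)})'g_{n_q}=h_q^{(i)}$ on $\textup{poly}(A_{n_q},g_{n_q})$, giving $\textup{poly}(A_{n_q},g_{n_q})\subseteq\textup{poly}(B_q,h_q)$. Part (c): if $(c_q^{(i)})'x=h_q^{(i)}$ at a point $x\in\textup{poly}(A_{n_q},g_{n_q})$, then $(\lambda_q^{(i)})'(g_{n_q}-A_{n_q}x)=0$ with both factors nonnegative, so $\lambda_{q,j}^{(i)}>0$ forces the $j$th $A_{n_q}$-inequality to be active at $x$; hence $c_q^{(i)}$ is a nonnegative combination of the active $A_{n_q}$-rows at $x$, so stacking the active $B_q$-rows on top of the active $A_{n_q}$-rows leaves the rank unchanged and the two ranks in (c) coincide. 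Part (d): the outer Kuratowski limit of $\textup{poly}([A_{n_q};B_q],[g_{n_q};h_q])=\textup{poly}(A_{n_q},g_{n_q})$ is automatically contained in $\textup{poly}([A_0;B_0],[g_0;h_0])$ by passing to limits in the inequalities ($+\infty$ entries of $g_0,h_0$ impose no constraint); for the reverse inner-limit inclusion, if some $x_0\in\textup{poly}([A_0;B_0],[g_0;h_0])$ were not a limit of points of $\textup{poly}(A_{n_q},g_{n_q})$, the separating hyperplane at the projection of $x_0$ onto $\textup{poly}(A_{n_q},g_{n_q})$ would be an implied inequality that, after passing to a subsequence, converges to a limiting implied inequality strictly violated by $x_0$ --- contradicting $x_0\in\textup{poly}(B_0,h_0)$ since $(B_0,h_0)$ generates the limit cone.

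The step I expect to be the main obstacle is establishing that the cone of limiting implied inequalities is finitely generated and that the chosen $(B_0,h_0)$ really does cut $\textup{poly}(A_0,g_0)$ down to the Kuratowski limit. The delicate point is that the Farkas multipliers $\lambda_q^{(i)}$ need not stay bounded when $A_{n_q}$ develops near-dependencies, so one cannot simply pass to the limit in $\lambda_q^{(i)}$; instead one must argue at the level of the images $(A_{n_q}'\lambda_q^{(i)},g_{n_q}'\lambda_q^{(i)})$ and use that a Kuratowski limit of polyhedral cones with a bounded number of generators is again polyhedral (possibly requiring ``new'' generators that are limits of cone elements rather than of the generating rows), while keeping every added inequality implied in each finite sample so that (b) and (c) are undisturbed.
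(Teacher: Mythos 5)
Your proposal takes a genuinely different route from the paper --- you argue at the level of the cone of implied inequalities (via Farkas duality) and aim to identify a finite generating set for its Kuratowski limit, whereas the paper constructs $B_q$ concretely by projecting the polyhedron onto the ``implicit-equality'' subspace $S$ spanned by $\textup{poly}(A_{J^0_A,0},0)$, invoking Theorem 19.3 of Rockafellar (projections of polyhedra are polyhedra) to obtain a representation $B^1_q,h^1_q$ of the projection, and then running an induction on $\dim S$. Your parts (b) and (c) are fine: any $B_q$ whose rows are nonnegative combinations of rows of $A_{n_q}$ automatically satisfies (b), and your complementary-slackness argument for (c) is essentially the same observation the paper makes in the first paragraph of its proof (that (b) implies (c) for free). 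The subsequence extraction for (a) is also standard.

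The genuine gap is exactly the step you flag at the end, and it is not a minor one: you need the Kuratowski limit of the cones of implied inequalities $\{(A_{n_q}'\lambda,\lambda'g_{n_q}+\mu):\lambda\ge 0,\mu\ge 0\}$ to be a \emph{finitely generated} polyhedral cone, and you need each generator to be realizable as a limit of normalized implied inequalities. This is false to deduce from convergence of the generators $(a_{j,n_q},g_{j,n_q})$ alone, precisely because the interesting case is when $\textup{rk}(I_J A_{n_q})>\textup{rk}(I_J A_0)$ for some $J$, so the Farkas multipliers $\lambda_q$ achieving a given limiting normal must diverge; the limiting cone is then strictly larger than $\textup{cone}\{(a_{j,0},g_{j,0})\}$. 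Your argument for (d)(ii) via separating hyperplanes is correct only \emph{conditional} on having captured a generating set in $(B_0,h_0)$ --- otherwise the contradiction ``$x_0$ violates a limiting implied inequality, hence $x_0\notin\textup{poly}(B_0,h_0)$'' doesn't follow, so the argument is circular. Proving finite generation of the limit cone (and that Lemma~\ref{lem:set_consistency}'s rank hypothesis can be restored after augmentation) is essentially the content of the lemma itself; the paper's projection-plus-induction construction is what makes this tractable, while your plan leaves the load-bearing step as an assertion. If you want to salvage the duality route, you would likely end up re-proving something equivalent to the paper's projection step to justify finite generation, since the abstract claim ``Kuratowski limits of polyhedral cones with boundedly many generators are polyhedral'' is itself nontrivial and, in the version needed here, also requires controlling how the limiting generators arise from the finite-sample data.
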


Suppose that $\bar j\in J(x;A,g)$ and $a_{\bar j}\neq \mathbf{0}$. 
If such a $\bar j$ does not exist, let $\tau_j(x; A,g)=0$ for all $j\in\{1,...,d_A\}$. 
Otherwise, let
\begin{align}
\tau_j(x;A,g) = \left\{\begin{array}{ll}\frac{\|a_{\bar j}\|(g_j-a'_j\mu^*(x;A,g))}{\|a_{\bar j}\|\|a_j\|-a'_{\bar j}a_j}&\text{ if } \|a_{\bar j}\|\|a_j\|\neq a'_{\bar j}a_j\\ \infty&\text{ otherwise.}\end{array}\right.\label{zAgj}
\end{align}
Let $\tau(x;A,g) = \inf_{j\in\{1,\dots,d_A\}} \tau_j(x;A,g)$. 
One can verify that the definition of $\tau(x;A,g)$ does not depend on which $\bar j\in J(x;A,g)$ is used to define it, when more than one is available. 
This definition coincides with the definition of $\hat\tau$ or $\tau(X)$, as used in the proof of Theorem \ref{thm:normal-rcc}, making explicit the dependence on $A$ and $g$. 

\begin{lemma}
\label{InactivityIsUnchanged}
If $\textup{poly}(A,g)\subseteq\textup{poly}(B,h)$, then $\tau(x;A,g)=\tau(x;[A;B],[g;h])$ for all $x\in\R^{d_m}$. 
\end{lemma}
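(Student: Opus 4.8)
The plan is to reduce the claim to a single inequality about the ``bisector distances'' $\tau_j$. Since $\textup{poly}(A,g)\subseteq\textup{poly}(B,h)$ we have $\textup{poly}([A;B],[g;h])=\textup{poly}(A,g)$, so (assuming, as always holds in the applications, that this polyhedron is nonempty) the Euclidean projection is the same for both systems: write $\mu^{*}=\mu^{*}(x;A,g)=\mu^{*}(x;[A;B],[g;h])$. Because the first $d_A$ rows of $[A;B]$ are exactly the rows of $A$, the active index set of the larger system at $\mu^{*}$ is $J(x;A,g)$ together with the rows $j$ of $B$ with $b_j'\mu^{*}=h_j$. I would split into two cases according to whether $J(x;A,g)$ contains an index $\bar j$ with $a_{\bar j}\neq\mathbf{0}$.

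\emph{Case 1: such a $\bar j$ exists.} Then $\bar j$ is an active index of $[A;B]$ with nonzero coefficient, so by the remark that $\tau(x;\cdot)$ does not depend on the chosen reference index I may evaluate both $\tau(x;A,g)$ and $\tau(x;[A;B],[g;h])$ using this same $\bar j$. For $j\le d_A$ the term $\tau_j(x;[A;B],[g;h])$ is then literally $\tau_j(x;A,g)$ (same $a_{\bar j},a_j,g_j,\mu^{*}$), so
\[
\tau(x;[A;B],[g;h])=\min\Big(\tau(x;A,g),\ \inf_{j}\tau^{B}_j\Big),
\]
where $\tau^{B}_j$ is the term built from the $j$-th row $(b_j,h_j)$ of $(B,h)$ and the reference $\bar j$. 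It therefore suffices to show $\tau^{B}_j\ge\sigma:=\tau(x;A,g)$ for every $j$. For this, consider the ray $p(t)=\mu^{*}-t\,a_{\bar j}/\|a_{\bar j}\|$, $t\ge 0$. For each row $a_k\neq\mathbf{0}$ of $A$ the signed distance $d_k(t):=(g_k-a_k'p(t))/\|a_k\|$ is affine in $t$ with $d_k(0)\ge 0$, and $\tau_k(x;A,g)$ is exactly the value of $t$ solving $d_k(t)=t$ (with the convention $\infty$ when the two slopes agree); since $\sigma$ is the infimum of the $\tau_k(x;A,g)$, it follows that $d_k(t)\ge t$ for every $t\in[0,\sigma]$ and every $k$ with $a_k\neq\mathbf{0}$.

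Now, if $b_j=\mathbf{0}$ then $\tau^{B}_j=\infty\ge\sigma$ trivially; otherwise Farkas' lemma (equivalently, feasibility duality for the consistent system $A\mu\le g$) yields $\lambda\ge\mathbf{0}$ with $b_j=\sum_k\lambda_k a_k$ and $\sum_k\lambda_k g_k\le h_j$, whence for $t\in[0,\sigma]$,
\[
h_j-b_j'p(t)=\Big(h_j-\sum_k\lambda_k g_k\Big)+\sum_k\lambda_k\big(g_k-a_k'p(t)\big)\ \ge\ t\sum_{k:\,a_k\neq\mathbf{0}}\lambda_k\|a_k\|\ \ge\ t\,\|b_j\|,
\]
using $g_k-a_k'p(t)=\|a_k\|d_k(t)\ge\|a_k\|t$ when $a_k\neq\mathbf{0}$, $g_k-a_k'p(t)=g_k\ge 0$ when $a_k=\mathbf{0}$, and $\|b_j\|\le\sum_k\lambda_k\|a_k\|$. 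Thus the affine map $t\mapsto(h_j-b_j'p(t))/\|b_j\|$ stays above the line $t\mapsto t$ on $[0,\sigma]$, and one checks that $\tau^{B}_j$ is the value of $t$ at which the two agree (with $\tau^{B}_j=\infty$ when they are parallel); in either case $\tau^{B}_j\ge\sigma$, which finishes Case 1.

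\emph{Case 2: every index of $J(x;A,g)$ has zero coefficient,} so $\tau(x;A,g)=0$ by definition. Here I claim every row $b_j$ of $B$ active at $\mu^{*}$ is also zero, which forces $\tau(x;[A;B],[g;h])=0$ as well: if $b_j'\mu^{*}=h_j$, the Farkas multipliers give $h_j=b_j'\mu^{*}=\sum_k\lambda_k a_k'\mu^{*}\le\sum_k\lambda_k g_k\le h_j$, so $a_k'\mu^{*}=g_k$ for every $k$ with $\lambda_k>0$; since all such rows are zero by the case hypothesis, $b_j=\sum_{k:\lambda_k>0}\lambda_k a_k=\mathbf{0}$. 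This completes the proof. The main obstacle is the inequality $\tau^{B}_j\ge\sigma$ in Case 1: the geometric fact that a redundant face cannot produce a bisector closer to $\mu^{*}$ than the nearest genuine face is intuitively clear, but pinning it down requires combining the Farkas representation of the redundant inequality with the elementary bound $d_k(t)\ge t$ on $[0,\sigma]$.
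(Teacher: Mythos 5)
Your proof is correct, and it takes a genuinely different route from the paper's. Both arguments reduce to showing $\tau_{j+d_A}(x;[A;B],[g;h])\ge\tau(x;A,g)$ for each redundant row $b_j$ (your $\tau^B_j\ge\sigma$), but the mechanism differs. The paper argues by contradiction with an explicit geometric construction: assuming $\tau^*:=\tau^B_j<\tau(x;A,g)$, it exhibits a concrete point ($t^*=\mu^*+\epsilon(I-a_{\bar j}a_{\bar j}'/\|a_{\bar j}\|^2)b_j$ when $\tau^*=0$, or $t^*=\mu^*+\tau^*(b_j/\|b_j\|-a_{\bar j}/\|a_{\bar j}\|)$ when $\tau^*>0$) and verifies directly that $t^*\in\textup{poly}(A,g)$ yet $t^*\notin\textup{poly}(B,h)$, contradicting the subset hypothesis; the degenerate case (no active $a_{\bar j}\neq\mathbf{0}$) is handled similarly by perturbing $\mu^*$ along $b_{\bar j}$. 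You instead invoke the affine Farkas lemma to write $b_j=\sum_k\lambda_k a_k$, $\lambda\ge 0$, $\sum_k\lambda_k g_k\le h_j$, and then push the bound $d_k(t)\ge t$ on $[0,\sigma]$ through the nonnegative combination along the ray $p(t)=\mu^*-ta_{\bar j}/\|a_{\bar j}\|$, using the triangle inequality $\|b_j\|\le\sum_k\lambda_k\|a_k\|$ to close the estimate. Your Farkas route unifies the two subcases (both the $\tau^*=0$ and $\tau^*>0$ analyses collapse into the single inequality $(h_j-b_j'p(t))/\|b_j\|\ge t$), and the degenerate case falls out for free since the multiplier identity forces $b_j=\mathbf{0}$. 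The cost is that the duality step requires $\textup{poly}(A,g)\neq\emptyset$ to apply the affine Farkas lemma, which you flag explicitly and which is indeed satisfied wherever the lemma is invoked in the paper (the polyhedron contains the projection of a feasible mean). The paper's construction is more self-contained (only inequality manipulations, no appeal to LP duality), while yours is arguably more transparent about the underlying geometry: a redundant face, being a conic combination of genuine faces, cannot generate a closer bisector.
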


\subsection{Proof of Theorems \textup{\ref{thm:asysize-iid}} and \textup{\ref{thm:momineq-sz}}}

\begin{proof}[Proof of Theorem \textup{\ref{thm:asysize-iid}}] 
We verify the conditions of Theorem \ref{thm:momineq-sz}. 
We first show that Assumption \ref{assu:iidasy} implies that Assumption \ref{assu:para_space}(a) holds for any sequence $\{(\theta_n,F_n): F_n\in {\cal F}, \theta_n\in\Theta_0(F_n)\}$. 
Fix an arbitrary sequence $\{(\theta_n,F_n): F_n\in {\cal F}, \theta_n\in\Theta_0(F_n)\}$. 
Let $\Sigma_n = \textup{Var}_{F_n}(m(W_i,\theta_n))$, which does not depend on $i$ due to Assumption \ref{assu:iidasy}(a). 
Let $D_n$ be the diagonal matrix formed by the diagonal elements of $\Sigma_n$. 
By Assumption \ref{assu:iidasy}(b), $D_n$ is invertible and thus we can define
\begin{align}
\Omega_n = D_n^{-1/2}\Sigma_nD_n^{-1/2}= \textup{Corr}_{F_n}(m(W_i,\theta_n)).
\end{align}
The elements of $\Omega_n\in[-1,1]$ which is a compact set. 
Thus, for any subsequence of $\{n\}$, there is a further subsequence $\{n_q\}$ such that
\begin{align}
\Omega_{n_q}\to \Omega,
\end{align}
for some matrix $\Omega$. 
By Assumption \ref{assu:iidasy}(c), $\Omega$ is positive definite. 

Now consider an arbitrary vector $a\in \R^{d_m}$ such that $a'a=1$, and consider the sequence of random variables:
\begin{align}
&n^{1/2}a'D_q^{-1/2}(\overline{m}_{n_q}(\theta_{n_q}) -\E_{F_{n_q}}\overline{m}_{n_q}(\theta_{n_q})) \nonumber\\
&= n^{-1/2}\sum_{i=1}^n a'D_q^{-1/2}(m(W_i,\theta_{n_q}) - \E_{F_{n_q}}m(W_i,\theta_{n_q}))\nonumber\\
&\to_d N(0,a'\Omega a),\label{convdistr}
\end{align}
by the Lindeberg-Feller central limit theorem where the Lindeberg condition holds because
\begin{align}
\E_{F_{n_q}}|a'D_q^{-1/2}m(W_i,\theta_{n_q})|^{2+\epsilon}\leq \E_{F_{n_q}}[\sum_{j=1}^{d_m}a_j|m_j(W_i,\theta_{n_q})/\sigma_{F_{n_q},j}(\theta_{n_q})|^{2+\epsilon}]\leq M<\infty,
\end{align}
where the first inequality holds by the convexity of $g(x) = |x|^{2+\epsilon}$ and the second and the third inequalities hold by Assumption \ref{assu:iidasy}(d). 

The Cramer-Wold device combined with (\ref{convdistr}) proves (\ref{weakconvergence}) in Assumption \ref{assu:para_space}.

To show (\ref{varianceconsistency}), consider that
\begin{align}
&D_q^{-1/2}\widehat{\Sigma}_{n_q}(\theta_{n_q})D_q^{-1/2}\nonumber\\
& = n^{-1}\sum_{i=1}^nD_q^{-1/2}(m(W_i,\theta_{n_q})-\E_{F_{n_q}}m(W_i,\theta_{n_q}))(m(W_i,\theta_{n_q})-\E_{F_{n_q}}m(W_i,\theta_{n_q}))'D_q^{-1/2}\nonumber\\
& - D_q^{-1/2}(\overline{m}_{n_q}(\theta_{n_q})-\E_{F_{n_q}}m(W_i,\theta_{n_q}))(\overline{m}_{n_q}(\theta_{n_q})-\E_{F_{n_q}}m(W_i,\theta_{n_q}))'D_q^{-1/2}. 
\end{align}
By Assumptions \ref{assu:iidasy}(a) and (d), the law of large numbers for rowwise i.i.d. triangular arrays applies and gives us
\begin{align}
&n^{-1}\sum_{i=1}^nD_q^{-1/2}(m(W_i,\theta_{n_q})-\E_{F_{n_q}}m(W_i,\theta_{n_q}))(m(W_i,\theta_{n_q})-\E_{F_{n_q}}m(W_i,\theta_{n_q}))'D_q^{-1/2} \nonumber\\
&\to_p \Omega,
\end{align}
and similarly
\begin{align}
D_q^{-1/2}(\overline{m}_{n_q}(\theta_{n_q})-\E_{F_{n_q}}m(W_i,\theta_{n_q}))\to_p \mathbf{0}.
\end{align}
Thus, (\ref{varianceconsistency}) is also verified. 

Next, we show that Assumption \ref{assu:iidasy}, combined with the additional assumptions in Theorem \ref{thm:asysize-iid}(b), implies Assumption \ref{assu:para_space}(b). 
First note that each element of $\Lambda_q$ is either one or $\|e'_jA(\theta_{n_q})D_q\|^{-1}$. 
By the common additional condition for Theorem \ref{thm:asysize-iid}(b,c), $\|e'_jA(\theta_{n_q})D_q\|^{-1}\to \|e_j'A_{\infty}\|^{-1}$. 
Note that $e'_jA(\theta_{n_q})D_q$ cannot go to zero if $e_j'A(\theta_{n_q})\neq \mathbf{0}$ because that would violate the common additional condition for Theorem \ref{thm:asysize-iid}(b,c) for $J=\{j\}$. 
Therefore, there exists a further subsequence along which $\Lambda_q\to \Lambda_\infty$ for  a positive definite diagonal matrix $\Lambda_\infty$. 
Therefore, $\Lambda_q A(\theta_{n_q})D_q\to\bar A_0=\Lambda_\infty A_\infty$. 
Also note that for each $J\subseteq\{1,...,d_A\}$, $\text{rk}(A_J(\theta_{n_q})D_q)=\text{rk}(I_J A_\infty)=\text{rk}(I_J \bar A_0)$, where the first equality follows from the common additional condition for Theorem \ref{thm:asysize-iid}(b,c) and the second equality follows because each row of $\bar A_0$ is a positive scalar multiple of the corresponding row of $A_\infty$. 
This verifies Assumption \ref{assu:para_space}(b). 

We also note that along every further subsequence, each diagonal element of $\Lambda_q$ converges to a positive value. 
This implies that, for part (b), we have for every $j=1,...,d_A$, 
\begin{equation}
\sqrt{n_q}e'_j\Lambda_q(A_j(\theta_{n_q})\E_{F_{n_q}}\overline{m}_{n_q}(\theta_{n_q})-b(\theta_{n_q}))\rightarrow 0. 
\end{equation}
Also, for part (c), we have for every $j\notin J$, 
\begin{equation}
\sqrt{n_q}e'_j\Lambda_q(A_j(\theta_{n_q})\E_{F_{n_q}}\overline{m}_{n_q}(\theta_{n_q})-b(\theta_{n_q}))\rightarrow -\infty. 
\end{equation}
Also, for part (b), $\bar{A}_0\neq \mathbf{0}$ is implied by $A_\infty\neq \mathbf{0}$ because $\Lambda_\infty$ is positive definite. 

Therefore, Theorem \ref{thm:asysize-iid} follows from Theorem \ref{thm:momineq-sz}. 
\end{proof}

\begin{proof}[Proof of Theorem \textup{\ref{thm:momineq-sz}}]
\noindent\underline{We first prove part (a)}. 
Let $\{\theta_n,F_n\}_{n=1}^{\infty}$ be an arbitrary sequence satisfying $F_n\in{\cal F}$ and $\theta_n\in\Theta_0(F_n)$ for all $n$. 
Let $\{n_m\}$ be an arbitrary subsequence of $\{n\}$. 
It is sufficient to show that there exists a further subsequence, $\{n_q\}$, such that as $q\to\infty$,
\begin{align}
\underset{q\to\infty}{\textup{liminf}}\Pr{}_{F_{n_q}}\left(T_{n_q}(\theta_{n_q})\leq
\chi^2_{\hat r, 1-\hat\beta}\right)\geq 1-\alpha. \label{subsequencing}
\end{align}

Fix an arbitrary subsequence, $\{n_m\}$. 
By Assumption \ref{assu:para_space}(a), there exists a further subsequence, $\{n_q\}$, a sequence of positive definite matrices, $D_q$, and a positive definite correlation matrix, $\Omega_0$, such that\footnote{For notational simplicity, we denote all further subsequences by $\{n_q\}$}
\begin{align}
\sqrt{n_q}D_q^{-1/2}(\overline{m}_{n_q}(\theta_{n_q})-\E_{F_{n_q}}\overline{m}_{n_q}(\theta_{n_q}))&\rightarrow_d Y\sim N(0, \Omega_0),\text{ and } \label{convergence}\\
D_q^{-1/2}\widehat\Sigma_{n_q}(\theta_{n_q})D_q^{-1/2}&\rightarrow_p \Omega_0.\label{variance}
\end{align} 
We introduce some simplified notation. 
Let $\widehat\Omega_q=D_q^{-1/2}\widehat\Sigma_{n_q}(\theta_{n_q})D_q^{-1/2}$, $X=\Omega^{-1/2}_0 Y\sim N(0,I)$, $Y_q=\sqrt{n_q}D_{q}^{-1/2}(\overline{m}_{n_q}(\theta_{n_q})-\E_{F_{n_q}}\overline{m}_{n_q}(\theta_{n_q}))$, and $X_q=\widehat{\Omega}_q^{-1/2}Y_q$. 
Equations (\ref{convergence}) and (\ref{variance}) imply that 
\begin{align}
X_q&\to_d X\sim N(0,I),\text{ and } \label{convergence1}\\
\widehat{\Omega}_q&\rightarrow_p \Omega_0.\label{variance1}
\end{align} 

The remainder of the proof proceeds in four steps. 
(A) In the first step, the problem defined in (\ref{Tn}) is transformed to include additional inequalities. 
(B) In the second step, notation is defined for partitioning $\R^{d_m}$ according to Lemma \ref{lem:partition}, for both finite $q$ and the limit. 
(C) In the third step, the almost sure representation theorem is invoked on the convergence in (\ref{convergence1}) and (\ref{variance1}). 
(D) In the final step, we show that (almost surely) the event $T_{n_q}(\theta_{n_q})\le  \chi^2_{\hat r, 1-\hat\beta}$ eventually implies a limiting event based on $X$ and $\Omega_0$. 
This limiting event has probability at least $1-\alpha$ from Theorem \ref{thm:normal-rcc}. 

(A) Consider the sequence of matrices $A(\theta_{n_q})D_q^{1/2}$. 
For each $q$, let $\Lambda_q$ denote a $d_A\times d_A$ diagonal matrix with positive entries on the diagonal such that each row of $\Lambda_q A(\theta_{n_q})D_q^{1/2}$ is either zero or belongs to the unit circle. 
Such a $\Lambda_q$ always exists by taking the diagonal element to be the inverse of the magnitude of the corresponding row of $A(\theta_{n_q})D_q$, if it is nonzero, and one otherwise. 
Let $g_q=\sqrt{n_q}\Lambda_q(b(\theta_{n_q})-A(\theta_{n_q})\E_{F_{n_q}}\overline{m}_{n_q}(\theta_{n_q}))$. 
With this notation, we can write 
\begin{align}
T_{n_q}(\theta_{n_q})&=\inf_{y: \Lambda_q A(\theta_{n_q})D_{q}^{1/2}y\le g_q} (Y_q-y)'\widehat\Omega_q^{-1}(Y_q-y), \label{NewTn2}
\end{align}
which adds and subtracts $\E_{F_{n_q}} \overline{m}_{n_q}(\theta_{n_q})$ in the objective and applies the change of variables, $y=\sqrt{n_q}D_q^{-1/2}(\mu-\E_{F_{n_q}} \overline{m}_{n_q}(\theta_{n_q}))$. 

We can apply Lemma \ref{BExistence} to $\Lambda_q A(\theta_{n_q})D_q^{1/2}$ and $g_q$ to get a further subsequence, $n_q$, a sequence of matrices, $B_q$, a sequence of vectors, $h_q$, matrices $A_0$ and $B_0$, and vectors $g_0$ and $h_0$, satisfying Lemma \ref{BExistence}(a-d). 
Let 
\begin{equation}
\bar A_q=\left[\begin{array}{c}\Lambda_q A(\theta_{n_q})D_{q}^{1/2}\\B_q\end{array}\right] \text{ and } \bar h_q = \left[\begin{array}{c}g_q\\h_q\end{array}\right], \label{InequalityAugmentation}
\end{equation}
and similarly for $\bar A_0$ and $\bar h_0$. 
Let $d_{\bar A}=d_A+d_B$. 
We have that 
\begin{align}
T_{n_q}(\theta_{n_q})&=\inf_{y: \bar{A}_qy\le \bar h_q} (Y_q-y)'\widehat\Omega_q^{-1}(Y_q-y) \label{NewTn3}\\
&=\inf_{t: \bar{A}_q\widehat\Omega^{1/2}_q t\le \bar h_q} (X_q-t)'(X_q-t), \label{NewTn}
\end{align}
where the first equation follows from Lemma \ref{BExistence}(b) and the second equation follows from the change of variables $t=\widehat\Omega^{-1/2}_q y$. 

Equation (\ref{NewTn}) has changed the problem by adding additional inequalities. 
We verify that the rank of the active inequalities is unchanged. 
For any positive definite matrix, $\Omega$, let $\bar J_q( x,  \Omega)$ be the set of indices for the active inequalities in the problem: 
\begin{equation}
\inf_{y: \Lambda_q A(\theta_{n_q}) D_q^{1/2} y \le g_q} (x-y)'\Omega^{-1}(x-y). 
\end{equation}
Recall that $\widehat J$ is the set of active inequalities for the problem defined in (\ref{Tn}), which is equal to $\bar J_q(Y_q,\widehat\Omega_q)$ by a change of variables. 
Similarly, let $J_q(x,\Omega)$ be the set of active inequalities in the problem: 
\begin{equation}
\inf_{t: \bar A_q\Omega^{1/2}t \le \bar h_q} (x-t)'(x-t). \label{qMinimizerDefinition}
\end{equation}
Also let $t^*_q(x, \Omega)$ denote the unique minimizer. 
We have that for any $y\in\R^{d_m}$ and for any positive definite $\Omega$, 
\begin{align}
\textup{rk}(A_{\bar J_q(y,\Omega)}(\theta_{n_q}))&=\textup{rk}(I_{\bar J_q(y,\Omega)}\Lambda_q A(\theta_{n_q})D_q^{1/2})\nonumber\\&=\textup{rk}(I_{J_q(\Omega^{-1/2}y,\Omega)}\bar A_{q})=\textup{rk}(I_{J_q(\Omega^{-1/2}y,\Omega)}\bar A_{q}\Omega^{1/2}), \label{J_hat}
\end{align} 
where the first equality follows because $\Lambda_q$ is diagonal with positive entries on the diagonal and $D_q$ is positive definite, the second equality follows by Lemma \ref{BExistence}(c), and the final equality follows from the fact that $\Omega$ is positive definite. 

Before proceeding to the next step, we simplify the rank calculation by taking a further subsequence. 
Notice that for each $J\subseteq \{1,...,d_{\bar A}\}$, $\textup{rk}(I_J\bar A_q)\in\{1,...,d_m\}$. 
We can denote it by $r^q_J$, and then take a subsequence, $n_q$, so that for all $J$, $r^q_J$ does not depend on $q$. 
Similarly, we define $r^\infty_J=\textup{rk}(I_J\bar A_0)$. 
Note that by the convergence of $\bar A_q$ to $\bar A_0$, $r^q_J\ge r^\infty_J$ for all $J$. 

(B) For any positive definite $d_m\times d_m$ matrix, $\Omega$, and for every $J\subseteq \{0,1,...,d_{\bar A}\}$, let 
\begin{align}
A^q(\Omega)&=\bar A_q \Omega^{1/2}\nonumber\\
{{a}^q_{\ell}}'(\Omega)&=\text{$\ell^{\text{th}}$ row of } A^q(\Omega)\nonumber\\
C^q(\Omega)&=\{x\in\R^{d_m}:  {{a_{\ell}^{q}}'}(\Omega) x\le \bar h_{\ell, q} \text{ for all }\ell =1,...,d_{\bar A}\}\nonumber\\
C^q_J(\Omega)&=\{x\in  C^q:  {{a_{\ell}^{q}}'(\Omega)} x= \bar h_{\ell, q} \text{ for all }\ell\in J \text{ and } {{a_{\ell}^{q}}'(\Omega)} x< \bar h_{\ell, q} \text{ for all }\ell\in J^c\}\nonumber\\
V^q_J(\Omega)&=\left\{\sum_{\ell\in J} v_\ell  a^q_{\ell}(\Omega) : v_\ell\in\R, v_\ell\ge 0\right\}\text{, and }\nonumber\\
K^q_J(\Omega)&= C^q_J(\Omega)+ V^q_J(\Omega).\label{finite-objects}
\end{align}
Furthermore, for every $J\subseteq\{1,...,d_{\bar A}\}$, let $ P^q_J(\Omega)$ denote the projection onto $\text{span}( V^q_J(\Omega))$, and let $ M^q_J(\Omega)$ denote its orthogonal projection. 
There exists a $\kappa^q_J(\Omega)\in \text{span}( V^q_J(\Omega))$ such that for every $ x\in  C^q_J(\Omega)$, $ P^q_J(\Omega)  x=\kappa^q_J(\Omega)$. 
This follows because for two $ x_1, x_2\in C^q_J(\Omega)$, and for any $v\in\text{span}( V^q_J(\Omega))$, $v'( x_1- x_2)=0$, which implies that $ P^q_J(\Omega)( x_1- x_2)=0$. 

For every given $\Omega$, we can apply Lemma \ref{lem:partition} to the objects defined in (\ref{finite-objects}). 
This implies that
\begin{itemize}
\item[(a)] if $x\in  K^q_J(\Omega)$ then $x-t^*_q( x, \Omega)\in  V^q_J(\Omega)$ and $t^*_q( x, \Omega)\in  C^q_J(\Omega)$, 
\item[(b)] the sets $ K^q_J(\Omega)$ for all $J\subseteq \{1,\dots,d_{\bar A}\}$ form a partition of $\R^{d_m}$, and 
\item[(c)] for each $J\subseteq \{1,\dots,d_{\bar A}\}$, we have $x\in  K^q_J(\Omega)\text{ iff }J=J_q(x,\Omega).$
\end{itemize}
These properties imply that, for all $ x\in K^q_J(\Omega)$, we can write 
\begin{align} 
P^q_J (\Omega)x= P^q_J(\Omega) (x- t^*_q( x, \Omega))+ P^q_J(\Omega) t^*_q(x, \Omega)= x-t^*_q(x, \Omega)+ \kappa^q_J(\Omega),\label{Proj_J}
\end{align} 
where the second equality follows by (a) and the definition of $ \kappa^q_J(\Omega)$. 
Then, we can also write $ M^q_J(\Omega) x= x- P^q_J(\Omega) x=t^*_q( x, \Omega)-\kappa^q_J(\Omega)$. 

Let $r^q(x,\Omega)=\text{rk}(\bar A_{J_q(x,\Omega),q})$. 
When $r^q(x,\Omega)=1$, we can define 
\begin{equation}
\tau^q_j(x,\Omega)=\begin{cases}
\frac{\|a_{\bar j}^q(\Omega)\|(\bar h_{j, q}-a_j^q(\Omega)'t^*_q(x,\Omega))}{\|a_j^q(\Omega)\|\|a_{\bar j}^q(\Omega)\|-a_j^q(\Omega)'a_{\bar j}^q(\Omega)}&\text{ if }\|a_j^q(\Omega)\|\|a_{\bar j}^q(\Omega)\|\neq a_j^q(\Omega)'a_{\bar j}^q(\Omega)\\
\infty&\text{ else }
\end{cases}, 
\end{equation}
where $\bar j\in J_q(x,\Omega)$ such that $a_{\bar j}^q(\Omega)\neq 0$. 
We also let $\tau^q(x,\Omega)=\inf_{j=1,...,d_{\bar A}} \tau^q_j(x,\Omega)$, and $\beta^q(x,\Omega)=2\alpha\Phi(\tau^q(x,\Omega))$. 
When $r^q(x,\Omega)\neq 1$, let $\tau^q_j(x,\Omega)=0$, so that $\beta^q(x,\Omega)=\alpha$. 
Note that $\hat\beta=\beta^q(X_q,\widehat\Omega_q)$, where the addition of extra inequalities via Lemma \ref{BExistence} has no effect on $\hat\beta$ or $\hat\tau$ because of Lemma \ref{InactivityIsUnchanged}, where the condition is satisfied by Lemma \ref{BExistence}(b). 

We define similar notation for the limiting objects. 
Let $J^\infty=\{\ell\in\{1,...,d_{\bar A}\}: \bar h_{\ell,0} <\infty\}$. 
These are the indices for the inequalities that are ``close-to-binding.'' 
For any positive definite matrix, $\Omega$, let $A^\infty(\Omega)$ denote the matrix formed by the rows of $\bar A_0\Omega^{1/2}$ associated with the indices in $J^\infty$. 
For notational simplicity, we refer to the rows of $A^\infty(\Omega)$ using $\ell\in J^\infty$ even though the matrix $A^\infty(\Omega)$ has been compressed. 

Let 
\begin{align}
{a^\infty_\ell}'(\Omega)&=\text{$\ell^{\text{th}}$ row of }A^\infty(\Omega)\text{ for }\ell\in J^\infty\nonumber\\
C^{\infty}(\Omega)&=\{x\in\R^{d_m}: {a^\infty_\ell}(\Omega)' x\le \bar h_{\ell,0} \text{ for all }\ell \in J^\infty\}\nonumber\\
C^{\infty}_J(\Omega)&=\{x\in C^{\infty}(\Omega): {a^\infty_\ell}(\Omega)' x= \bar h_{\ell,0} ~\forall\ell\in J \text{ and }{a^\infty_\ell}(\Omega)' x< \bar h_{\ell,0} ~\forall\ell\in J^\infty/ J\}\nonumber\\
V^{\infty}_J(\Omega)&=\left\{\sum_{\ell\in J} v_\ell a^\infty_\ell(\Omega) : v_\ell\in\R, v_\ell\ge 0\right\}\text{, and } \nonumber\\
K^{\infty}_J(\Omega)&=C^{\infty}_J(\Omega)+V^{\infty}_J(\Omega).\label{limit-objects}
\end{align}
Furthermore, for every $J\subseteq J^\infty$, let $P^\infty_J(\Omega)$ denote the projection onto $\text{span}(V^\infty_J(\Omega))$. 
There exists a $\kappa^\infty_J(\Omega)\in \text{span}(V^\infty_J(\Omega))$ such that for every $ x\in C^\infty_J(\Omega)$, 
\begin{align}
P^\infty_J(\Omega)  x=\kappa^\infty_J(\Omega).\label{kappa_J_dagger}
\end{align} 
This follows because for two $ x_1, x_2\in C^\infty_J(\Omega)$, and for any $v\in\text{span}(V^\infty_J(\Omega))$, $v'( x_1- x_2)=0$, which implies that $P^\infty_J(\Omega)(x_1- x_2)=0$. 

Let $h^\infty$ denote the vector formed from the elements of $\bar h_0$ that are finite. 
Let $J^\infty( x,\Omega)$ be the indices for the binding inequalities in the problem: 
\begin{equation}
\inf_{t: A^\infty(\Omega) t\le h^\infty} ( x-t)'( x-t). 
\end{equation}
Also let $t_{\infty}^{*}( x,\Omega)$ denote the unique minimizer. 
We can apply Lemma \ref{lem:partition} to the objects defined in (\ref{limit-objects}). 
This implies that 
\begin{itemize}
\item[($\text{a}^\infty$)] if $x\in K^\infty_J(\Omega)$ then $x-t_{\infty}^{*}(x,\Omega)\in V^\infty_J(\Omega)$ and $t_{\infty}^{*}( x,\Omega)\in C^\infty_J(\Omega)$, 
\item[($\text{b}^\infty$)] the set of all $K^\infty_J(\Omega)$ form a partition of $\R^{d_m}$, and 
\item[($\text{c}^\infty$)] for each $J\subseteq J^\infty$, we have $x\in K^\infty_J(\Omega)$ iff $J=J^\infty( x,\Omega)$. 
\end{itemize}

Let $r^\infty(x,\Omega)=\text{rk}(A^\infty_{J_\infty(x,\Omega)})$. 
When $r^\infty(x,\Omega)=1$, we can define 
\begin{equation}
\tau^\infty_j(x,\Omega)=\begin{cases}
\frac{\|a_{\bar j}^\infty(\Omega)\|(\bar h_{j, 0}-a_j^\infty(\Omega)'t^*_\infty(x,\Omega))}{\|a_j^\infty(\Omega)\|\|a_{\bar j}^\infty(\Omega)\|-a_j^\infty(\Omega)'a_{\bar j}^\infty(\Omega)}&\text{ if }\|a_j^\infty(\Omega)\|\|a_{\bar j}^\infty(\Omega)\|\neq a_j^\infty(\Omega)'a_{\bar j}^\infty(\Omega)\\
\infty&\text{ else }
\end{cases}, 
\end{equation}
where $\bar j\in J^\infty(x,\Omega)$ such that $a_{\bar j}^\infty(\Omega)\neq 0$. 
We also let $\tau^\infty(x,\Omega)=\inf_{j=J^\infty} \tau^\infty_j(x,\Omega)$, and $\beta^\infty(x,\Omega)=2\alpha\Phi(\tau^\infty(x,\Omega))$. 
When $r^\infty(x,\Omega)\neq 1$, let $\tau^\infty_j(x,\Omega)=0$, so that $\beta^\infty(x,\Omega)=\alpha$.

Before proceeding to the next step, consider $M^q_J(\Omega_0)$, which is a sequence of  projection matrices in $\R^{d_m}$ onto a space of dimension $d_m-r^q_J$. 
Since the space of such matrices is compact, we can find a subsequence, $n_q$, such that for all $J\subseteq\{1,...,d_{\bar A}\}$, $M^q_J(\Omega_0)\rightarrow M^N_{J}$, where $M^N_J$ is a projection matrix onto a subspace, $N_J$, of dimension $d_m-r^q_J$.\footnote{Recall that $r_J^q$ does not depend on $q$ due to the construction of the subsequence $\{n_q\}$.} 
Furthermore, for any sequence of positive definite matrices such that $\Omega_q\rightarrow\Omega_0$, we have $M^q_J(\Omega_q)\rightarrow M^N_J$. 
This follows because, if we let $E_q$ denote a $d_m\times r^q_J$ matrix whose columns form an orthonormal basis for $\text{span}(V^q_J(\Omega_0))$ (which is the range of $P^q_J(\Omega_0)$) then for any positive definite matrix, $\Omega$, the columns of ${\Omega^{1/2}}{\Omega_0^{-1/2}}E_q$ form a basis for $\text{span}(V^q_J(\Omega))$, which implies that 
\begin{align}
M^q_J(\Omega_q)&=I_{d_m}-{\Omega_q^{1/2}}{\Omega_0^{-1/2}}E_q(E'_q\Omega_0^{-1/2}\Omega_q{\Omega_0^{-1/2}}E_q)\inv E'_q\Omega_0^{-1/2}\Omega_q^{1/2}\nonumber\\
&=I_{d_m}-E_q(E'_qE_q)\inv E'_q+o(1)=M^q_J(\Omega_0)+o(1).
\end{align}

(C) Next, we invoke the almost sure representation theorem on the convergence in (\ref{convergence1}) and (\ref{variance1}).\footnote{See \cite{VaartWellner1996}, Theorem 1.10.3, for the a.s. representation theorem.} 
Then, we can treat the convergence in (\ref{convergence1}) and (\ref{variance1}) as holding almost surely.\footnote{This can be formalized by defining random variables, $\ddot X_q$, $\ddot X$, and $\ddot\Omega_q$, satisfying $\ddot X_q=_d X_q$, $\ddot X=_d X$, $\ddot \Omega_q=_d \widehat\Omega_q$, $\ddot X_q\rightarrow_{a.s.} \ddot X$, and $\ddot \Omega_q\rightarrow_{a.s.} \Omega_0$.} 
For the rest of the proof of part (a), consider $A^\infty(\Omega)$, $P^\infty_J(\Omega)$, $\kappa_J^\infty(\Omega)$, and the objects defined in (\ref{limit-objects}) and let the objects without the argument $(\Omega)$ denote the objects evaluated at $\Omega_0$. 
For example, $A^\infty = A^\infty(\Omega_0)$. 

We now construct an event, $\Xi\subseteq\R^{d_m}$, such that $\Pr(X\in\Xi)=1$. 
For every $L\subseteq J^\infty$, let 
\begin{align}
&V^{\infty}_{L +}=\{x\in V^\infty_L| \forall L'\subseteq L, \text{ if }r^q_{L'}<r^\infty_L \text{ then } M^N_{L'} x\neq 0\}.
\end{align} 
For each $L\subseteq J^\infty$ such that $r^\infty_L>0$, let 
\begin{align}
\Xi_L&=\{x\in K^\infty_L: P^\infty_Lx-\kappa^\infty_L\in V^\infty_{L+}\text{ and }(P^\infty_L x-\kappa^\infty_L)'(P^\infty_L x-\kappa^\infty_L)\neq \chi^2_{r^\infty_L, 1- \beta^\infty(x)}\}.
\end{align} 
Since $r^\infty_L>0$, $P^\infty_LX\sim N(0,P^\infty_L)$, which is absolutely continuous on $\text{span}(V^\infty_L)$, and therefore the probability that $P^\infty_L X-\kappa^\infty_L$ lies in any one of the finitely many subspaces, $null(M^N_{L'})=\{x\in\R^{d_m}: M_{L'}^Nx=0\}$, each with dimension $r^q_{L'}<r^\infty_L$, is zero. 
Also, $(P^\infty_L X-\kappa^\infty_L)'(P^\infty_L X-\kappa^\infty_L)$ is absolutely continuous because it can be written as the sum of $\textup{rk}(A^\infty_L)$ squared normal random variables. 
Also, $\chi^2_{r^\infty_L,1-\beta^\infty(X)}$ depends on $X$ only through $M^\infty_LX$, which is independent of $P^\infty_LX$. 
Therefore, for each fixed $M^\infty_LX$, the conditional probability that $(P^\infty_L X-\kappa^\infty_L)'(P^\infty_L X-\kappa^\infty_L)=\chi^2_{r^\infty_L,1-\beta^\infty(M^\infty_LX+\kappa^\infty_L)}$ is zero. 
This implies that the unconditional probability is also zero. 
Therefore, 
\begin{align}
\Pr(P^\infty_L X-\kappa^\infty_L\in V^\infty_L / V^\infty_{L+} \text{ or } (P^\infty_L X-\kappa^\infty_L)'(P^\infty_L X-\kappa^\infty_L)=\chi^2_{r^\infty_L, 1-\beta^\infty(X)})=0.\label{almost-sure}
\end{align} 

For $L\subseteq J^\infty$ such that $\textup{rk}(A^\infty_L)=0$, let $\Xi_L=K^\infty_L$. 
Then, let $\Xi=\cup_{L\subseteq J^\infty}\Xi_L$. 
Therefore, by property ($\text{b}^\infty$) and equation (\ref{almost-sure}), $\Pr(X\in \Xi)=1$. 

(D) We consider the set of all sequences such that $x_q\rightarrow x_\infty\in\Xi$ and $\Omega_q\rightarrow\Omega_0$. 
By the definition of $\Xi$ and the almost sure convergence of $(X_q,\widehat{\Omega}_q)$, these sequences occur with probability one. 
Fix such a sequence for the remainder of the proof of part (a). 
For this step, consider $A^q(\Omega)$, $P^q_J(\Omega)$, and the objects defined in (\ref{finite-objects}), and let the objects without the argument $(\Omega)$ denote the objects evaluated at $\Omega_q$. 

Below we show that for each sequence, 
\begin{equation}
\mathds{1}\{ \|x_q-t^*_q(x_q)\|^2\le \chi^2_{r^q(x_q),1-\beta^q(x_q)}\}\ge \mathds{1}\{ \|x_\infty-t_{\infty}^{ *}(x_\infty)\|^2\le \chi^2_{r^\infty (x_\infty),1-\beta^\infty(x_\infty)}\}\label{StepDObjective}
\end{equation}
eventually. 
Notice that by (\ref{NewTn}) and (\ref{qMinimizerDefinition}), the left hand side is equal to $\mathds{1}\{T_{n_q}(\theta_{n_q})\leq\chi^2_{\hat r, 1-\hat\beta}\}$. 
If (\ref{StepDObjective}) holds, then by the bounded convergence theorem, 
\begin{align}
&\underset{q\to\infty}{\textup{liminf}}\Pr{}_{F_{n_q}}(T_{n_q}(\theta_{n_q})\leq\chi^2_{\hat r, 1-\hat\beta})\geq \Pr{}(\|X-t_{\infty}^{ *}(X,\Omega_0)\|^2\le \chi^2_{r^\infty(X,\Omega_0),1-\beta^\infty(X,\Omega_0)}). \label{AlmostDone}
\end{align}
Also, 
\begin{equation}
\Pr{}(\|X-t_{\infty}^{*}(X,\Omega_0)\|^2\le \chi^2_{r^\infty(X,\Omega_0),1-\beta^\infty(X,\Omega_0)})\ge 1-\alpha \label{Done}
\end{equation}
by Theorem \ref{thm:normal-rcc}(a), which applies with $n=1$ and $\overline{m}_n(\theta)=X$ because $t_{\infty}^{ *}(X,\Omega_0)=P_{C^\infty}X$, where $P_{C^\infty}$ is the projection of $X$ onto $C^\infty=\{\mu\in\R^{d_m}: A^\infty(\Omega_0) \mu\le h^\infty\}$. 
Together, (\ref{AlmostDone}) and (\ref{Done}) imply (\ref{subsequencing}) for the given subsequence, $n_q$. 

To finish the proof of part (a), we prove (\ref{StepDObjective}). 
Let $L^\infty$ be the subset of $J^\infty$ for which $x_\infty\in K^\infty_{L^\infty}$. 
We show that 
\begin{equation}
1=\sum_{L\subseteq \{1,...,d_{\bar A}\}}\mathds{1}\{x_q\in K^q_L\}=\sum_{L\subseteq L^\infty: r^q_L\ge r^\infty_{L^\infty}}\mathds{1}\{x_q\in K^q_L\} \label{indicatorconvergence}
\end{equation} 
eventually. 
Property (b) above implies that the first equality holds at every $q$. 
Thus it is sufficient to show the second equality. 
Note that $t^*_q(x_q)\rightarrow t^*_\infty(x_\infty)$ by Lemma \ref{lem:argmin_converge}, using Lemma \ref{BExistence}(d) to verify the condition. 
For the second equality, it is sufficient to show that, for all $L\notin \{L\subseteq L^\infty: r^q_L\ge r^\infty_{L^\infty}\}$, $x_q\notin K^q_L$ eventually. 
Specifically, we consider three cases: (I) $L\not\subseteq J^\infty$, (II) $L\subseteq J^\infty$ but $L\not\subseteq L^\infty$, (III) $L\subseteq L^\infty$ but $r^q_L<r^\infty_{L^\infty}$.

\textup{(I)} Let $L\not\subseteq J^\infty$. 
Then, there exists a $\ell\in L$ such that $\bar h_{\ell, q}\rightarrow\infty$. 
Then ${a_\ell^q}' t^*_q(x_q)< \bar h_{\ell, q}$ eventually because $t^*_q(x_q)\rightarrow t_{\infty}^{*}(x_\infty)$. 
This implies that $t^*_q(x_q)\notin C^q_L$, and therefore by (a), $x_q\notin K^q_L$ eventually. 

\textup{(II)} Let $L\subseteq J^\infty$ but $L\not\subseteq L^\infty$. 
Then, there exists a $\ell\in L$ such that ${a^\infty_\ell}' t_{\infty}^{*}(x_\infty)<\bar h_{\ell,0}$. 
By the fact that ${a_\ell^q}' t^*_q(x_q)\rightarrow {a^\infty_\ell}' t_{\infty}^{*}(x_\infty)$ and $\bar h_{\ell, q}\rightarrow \bar h_{\ell,0}$, we have that ${a_\ell^q}' t^*_q(x_q)<\bar h_{\ell, q}$ eventually. 
This implies that $z^*_q(x_q)\notin C^q_L$, and therefore by property (a) above, $x_q\notin K^q_L$ eventually. 

\textup{(III)} Let $L\subseteq L^\infty$ such that $r^q_L<r^\infty_{L^\infty}$. 
This case is impossible if $r_{L^\infty}^\infty=0$. 
Thus we only need to consider $r_{L^\infty}^\infty>0$. 
Note that $x_\infty-t_{\infty}^{*}(x_\infty)=P^\infty_{L^\infty}x_\infty-\kappa^\infty_{L^\infty}$ by property ($\text{a}^\infty$) above. 
Also, by the definition of $\Xi$ we have $x_\infty\in \Xi_{L^\infty}$, which implies that $x_\infty-t_{\infty}^{*}(x_\infty)\in V^\infty_{L^\infty +}$, which in turn means that $M^N_L(x_\infty-t_{\infty}^{*}(x_\infty))\neq 0$. 
By the convergence, $M^q_L(x_q-t^*_q(x_q))\rightarrow M^N_L(x_\infty-t_{\infty}^{*}(x_\infty))$, we have that $M^q_L(x_q-t^*_q(x_q))\neq 0$ eventually. 
However, if $x_q\in K^q_L$, then by property (a) above, $x_q-t^*_q(x_q)\in V^q_L$, which implies that $M^q_L(x_q-t^*_q(x_q))=0$. 
This means that $x_q\notin K^q_L$ eventually. 
Therefore, (\ref{indicatorconvergence}) holds eventually. 

We next show that 
\begin{equation}
\limsup_{q\rightarrow\infty}\beta^q(x_q)\le \beta^\infty(x_\infty). \label{betaconv}
\end{equation}
It is sufficient to show that for every subsequence, there exists a further subsequence such that (\ref{betaconv}) holds. 
Thus, it is without loss of generality to suppose that $\beta^q$ converges. 
If $\beta^q\to \alpha$, then (\ref{betaconv}) holds simply because $\beta^\infty(x_\infty)\ge\alpha$. 
If $\lim_{q\to\infty}\beta^q>\alpha$, then for every $q$ large enough, there exists a $\bar j_q$ such that $a_{\bar j_q}^q\neq \mathbf{0}$ and ${a_{\bar j_q}^q}'x_q=\bar{h}_{\bar j_q,q}$. 
We can take a further subsequence so that $\bar j_q$ does not vary with $q$ (and denote it by $\bar j$). 
Then $\lim_{q\rightarrow\infty}a^q_{\bar j}=a^\infty_{\bar j}\neq 0$ because each row of $\bar A_q$ is either zero or belongs to the unit circle. 
Also, the fact that $\bar h_{\bar j,q}={a^q_{\bar j}}'x_q\rightarrow {a^\infty_{\bar j}}'x_\infty=\bar h_{\bar j,0}$ implies that $\bar j\in J^\infty$. 
Thus, $\bar j$ can be used to define $\tau_j^\infty(x_\infty)$. 

Take $j\in J^\infty$, and consider two cases. 
(i) For $j$ such that $\|a_{\bar j}^\infty\|\|a_j^\infty\|={a_j^\infty}'a_{\bar j}^\infty$, we have $\tau_j^\infty(t_\infty^\ast(x_\infty);\bar{A}_0\Omega^{1/2},\bar{h}_0)=\infty$. 
(ii) For $j $ such that $\|a_{\bar j}^\infty\|\|a_j^\infty\|\neq{a_j^\infty}'a_{\bar j}^\infty$, we have 
\begin{align}
\tau^q_j(x_q)&=\frac{\|a_{\bar j}^q\|(\bar h_{j, q}-{a_j^q}'t^*_q(x_q))}{\|a_j^q\|\|a_{\bar j}^q\|-{a_j^q}'a_{\bar j}^q}\nonumber\\
&\rightarrow \frac{\|a_{\bar j}^\infty\|(\bar h_{j, 0}-{a_j^\infty}'t^*_\infty(x_\infty)}{\|a_j^\infty\|\|a_{\bar j}^\infty\|-{a_j^\infty}'a_{\bar j}^\infty}=\tau^\infty_j(x_\infty), \label{betaconvergence2}
\end{align}
which uses $a_{\bar j}^q\rightarrow a_{\bar j}^\infty$, $a_j^q\rightarrow a_j^\infty$, $\bar h_{j, q}\rightarrow \bar h_{j, 0}$, and $t^*_q(x_q)\rightarrow t^*_\infty(x_\infty)$ by Lemma \ref{lem:argmin_converge} and Lemma \ref{BExistence}(d). 
Therefore, 
\begin{align}
\lim_{q\rightarrow\infty}\tau^q(x_q)=&\lim_{q\rightarrow\infty}\inf_{j\in\{1,...,d_{\bar A}\}} \tau^q_j(x_q)\nonumber\\
\le& \lim_{q\rightarrow\infty}\inf_{\{j\in J^\infty: \|a_{\bar j}^\infty\|\|a_j^\infty\|\neq{a_j^\infty}'a_{\bar j}^\infty\}} \tau^q_j(x_q)\label{BetaConvergenceFinal1}\\
=&\inf_{\{j\in J^\infty: \|a_{\bar j}^\infty\|\|a_j^\infty\|\neq{a_j^\infty}'a_{\bar j}^\infty\}} \tau^\infty_j(x_\infty)\nonumber\\
=&\inf_{j\in J^\infty} \tau^\infty_j(x_\infty)=\tau^\infty(x_\infty). \nonumber
\end{align}
This shows that (\ref{betaconv}) holds. 

We now verify (\ref{StepDObjective}). 
Notice that 
\begin{align}
&\mathds{1}\{ \|x_q-t^*_q(x_q)\|^2\le \chi^2_{r^q(x_q),1-\beta^q(x_q)}\}\nonumber\\
=&\sum_{J\subseteq \{1,...,d_{\bar A}\}} \mathds{1}\{ \|x_q-t^*_q(x_q)\|^2\le \chi^2_{r^q(x_q),1-\beta^q(x_q)}\}\mathds{1}\{x_q\in K^q_J\}\nonumber\\
=&\sum_{L\subseteq L^\infty: r^q_L\ge r^\infty_{L^\infty}} \mathds{1}\{\| x_q-t^*_q(x_q)\|^2\leq \chi^2_{r^q(x_q),1-\beta^q(x_q)}\}\mathds{1}\{x_q\in  K^q_L\}\nonumber\\
=&\sum_{L\subseteq L^\infty: r^q_L\ge r^\infty_{L^\infty}} \mathds{1}\{\|x_q-t^*_q(x_q)\|^2\leq \chi^2_{r^q_L,1-\beta^q(x_q)}\}\mathds{1}\{x_q\in  K^q_L\}\nonumber\\
\ge&\sum_{L\subseteq L^\infty: r^q_L\ge r^\infty_{L^\infty}} \mathds{1}\{\| x_q-t^*_q(x_q)\|^2\leq \chi^2_{r^\infty_{L^\infty},1-\beta^q(x_q)}\}\mathds{1}\{x_q\in  K^q_L\}\label{rankinequality}\\
\ge& \mathds{1}\{\| x_\infty-t_{\infty}^{*}(x_\infty)\|^2\leq \chi^2_{r^\infty_{L^\infty},1-\beta^\infty(x_\infty)}\}\sum_{L\subseteq L^\infty: r^q_L\ge r^\infty_{L^\infty}}\mathds{1}\{ x_q\in  K^q_L\}\label{tothelimit}\\
=& \mathds{1}\{\|x_\infty-t_{\infty}^{ *}(x_\infty)\|^2\leq \chi^2_{r^\infty_{L^\infty},1-\beta^\infty(x_\infty)}\}\mathds{1}\{x_\infty\in  K^\infty_{L^\infty}\}\nonumber\\
=& \sum_{J\subseteq J^\infty}\mathds{1}\{\| x_\infty-t_{\infty}^{ *}(x_\infty)\|^2\leq \chi^2_{r^\infty_J,1-\beta^\infty(x_\infty)}\}\mathds{1}\{x_\infty\in  K^\infty_{J}\}\nonumber\\
=& \sum_{J\subseteq J^\infty}\mathds{1}\{\| x_\infty-t_{\infty}^{ *}(x_\infty)\|^2\leq \chi^2_{r^\infty(x_\infty),1-\beta^\infty(x_\infty)}\}\mathds{1}\{x_\infty\in  K^\infty_{J}\}\nonumber\\
=& \mathds{1}\{\| x_\infty-t_{\infty}^{ *}(x_\infty)\|^2\leq \chi^2_{r^\infty(x_\infty),1-\beta^\infty(x_\infty)}\}, \label{finalinequality}
\end{align}
where: the first equality follows from property (b); the second equality follows from (\ref{indicatorconvergence}); the third equality follows from property (c); the first inequality follows because $r^q_L\ge r^\infty_{L^\infty}$; the second inequality must hold eventually because, when $r^\infty_{L^\infty}>0$, 
\begin{equation}
\|x_q-t^*_q(x_q)\|^2\rightarrow \|x_\infty-t_{\infty}^{*}(x_\infty)\|^2=\|P^\infty_{L^\infty} x_\infty-\kappa^\infty_{L^\infty}\|^2\neq\chi^2_{r_{L^\infty}^\infty, 1-\beta^\infty(x_\infty)} 
\end{equation}
and 
\begin{equation}
\liminf_{q\rightarrow\infty}\chi^2_{r^\infty_{L^\infty}, 1-\beta^q(x_q)}\ge \chi^2_{r^\infty_{L^\infty},1-\beta^\infty(x_\infty)}
\end{equation}
(the equality follows from property ($\text{a}^{\infty}$) above because $x_\infty\in K^\infty_{ L^\infty}$, the $\neq$ follows because $x\in \Xi_{L^\infty}$, and the $\ge$ follows from (\ref{betaconv})), and when $r^\infty_{L^\infty}=0$, $A^\infty_{L^\infty}=0$, we have $A^q_{L^\infty}=0$ eventually (because each row of $A^\infty$ either belongs to the unit circle or is zero), and therefore, $x_q\in K^q_L$ for $L\subseteq K^\infty$ implies $x_q-t^*_q(x_q)=0$ eventually; the fourth equality follows from (\ref{indicatorconvergence}) and $x_\infty\in K^\infty_{L^\infty}$; the fifth equality follows because all the terms with $J\neq L^\infty$ are zero; the sixth equality follows from ($\text{c}^{\infty}$); and the final equality follows from ($\text{a}^{\infty}$). 
This verifies (\ref{StepDObjective}), proving part (a). \medskip

\noindent{\underline{Next, we prove part (b)}}. 
Consider the sequence $\{F_n,\theta_n\}_{n=1}^\infty$. 
It is sufficient to show that for every subsequence, $n_m$, there exists a further subsequence, $n_q$, such that 
\begin{align}
\lim_{q\rightarrow\infty}\Pr{}_{F_{n_q}}(T_{n_q}(\theta_{n_q})\leq \chi^2_{\hat r, 1-\hat\beta})= 1-\alpha. \label{subsequencing-equality}
\end{align}
The proof follows that of part (a) with the following changes. 

(i) The augmentation of the inequalities with additional inequalities defined by $B_q$ and $h_q$ in (\ref{InequalityAugmentation}) is no longer needed. We can take $\bar A_q=\Lambda_q A(\theta_{n_q})D^{1/2}_q$ and $\bar h_q=g_{n_q}$. 

(ii) Note that for each $j\in\{1,...,d_A\}$, $g_{j,q}$ (which is equal to $\bar h_{j,q}$) is either zero or 
\begin{equation}
\sqrt{n_q}\frac{b_j(\theta_{n_q})-a_j(\theta_{n_q})'\E_{F_{n_q}}\overline{m}_n(\theta_{n_q})}{\|a_j(\theta_{n_q})'D_q\|}\to 0,
\end{equation}
by assumption. 
Thus, $\bar h_0=0_{d_A}$. 

(iii) Without $B_q$, (\ref{J_hat}) still holds without appealing to Lemma \ref{BExistence} by Assumption \ref{assu:para_space}(b). 

(iv) By Assumption \ref{assu:para_space}(b), for all $J\subseteq\{1,...,d_A\}$, $r_J^q= r_J^\infty$ for all $q$. 

(v) The appeal to Lemma \ref{BExistence} below (\ref{indicatorconvergence}) is replaced by Lemma \ref{lem:set_consistency} to get $t^*_q(x_q)\rightarrow t^*_\infty(x_\infty)$. 

(vi) The expression in (\ref{betaconv}) becomes 
\begin{equation}
\lim_{q\rightarrow\infty}\beta^q(x_q)=\beta^\infty(x_\infty). \label{betaconv_eq}
\end{equation}
To show this, we consider two cases. 
In the first case, we have $\textup{rk}(A_{L^\infty}^\infty)=0$. 
Then by (\ref{indicatorconvergence}), $x_q\in K_L^q$ for some $L$ such that $\textup{rk}(A_{L}^q)=0$ eventually. 
This implies that $\beta^q(x_q)=\alpha = {\beta}^\infty(x_\infty)$ eventually. 
In the second case, we have $\textup{rk}(A_{L^\infty}^\infty)\geq 1$. 
Then by (\ref{indicatorconvergence}), $x_q\in K_L^q$ for some $L$ such that $\textup{rk}(A_{L}^q)\geq1$ eventually. 
That is, for large enough $q$, there exists a $\bar j_q$ such that ${a_{\bar j_q}^q}'t^*_q(x_q)=\bar{h}_{\bar j_q, q}$. 
By a subsequencing argument, we can suppose $\bar j_q$ does not depend on $q$, and denote it by $\bar j$. 
If $\|a_{\bar j}^\infty\|\|a_j^\infty\|={a_j^\infty}'a_{\bar j}^\infty$, then $\tau_j^\infty(x_\infty)=\infty$ and $\tau_j^q(x_q)=\infty$, using the fact that $\|a_j^q\|\|a_{\bar j}^q\|={a_j^q}'a_{\bar j}^q$, which follows from the fact that $\text{rk}(A^q_{\{j,\bar j\}})=\text{rk}(A^\infty_{\{j,\bar j\}})$. 
If $\|a_{\bar j}^\infty\|\|a_j^\infty\|\neq{a_j^\infty}'a_{\bar j}^\infty$, then the convergence in (\ref{betaconvergence2}) continues to hold by appealing to Lemma \ref{lem:set_consistency} instead of Lemma \ref{BExistence}(d) to get $t^*_q(x_q)\rightarrow t^*_\infty(x_\infty)$. 
This implies that the inequality in (\ref{BetaConvergenceFinal1}) holds with equality because $J^\infty=\{1,...,d_{\bar A}\}$, and for all $j$ such that $\|a_{\bar j}^\infty\|\|a_j^\infty\|={a_{\bar j}^\infty}'a_j^\infty$, $\tau_j^q(x_q)=\infty$. 
Therefore, (\ref{betaconv_eq}) holds. 

(vii) Now (\ref{StepDObjective}) is satisfied with equality for the following reasons: (1) the inequality in (\ref{rankinequality}) holds with equality because $r^q_L=r^\infty_L$, and thus $r^\infty_L= r^\infty_{L^\infty}$ for all $L\subseteq L^\infty:r_L^q\geq r_{L^\infty}^\infty$, and (2) the inequality in (\ref{tothelimit}) holds with equality eventually using (\ref{betaconv_eq}). 

(viii) An appeal to Theorem \ref{thm:normal-rcc}(b) implies that equality holds in (\ref{Done}). The conditions for Theorem \ref{thm:normal-rcc}(b) are satisfied because $A^\infty\neq \mathbf{0}$ and $h^\infty=\mathbf{0}$. 

Combining these changes with the proof of part (a) proves (\ref{subsequencing-equality}), and therefore, part (b). \medskip

\noindent\underline{Finally, we prove part (c)}. 
We can apply the proof of part (b) twice. 
We can define $\tilde t^*_q(x)$, $\tilde r^q(x)$ and $\tilde\beta^q(x)$ in the same way as $t^*_q(x)$, $r^q(x)$ and $\beta^q(x)$ except with $A_J$ and $b_J$ replacing $A$ and $b$. 
Consider any sequence $\Omega_q\rightarrow\Omega_0$ and $x_q\rightarrow x_\infty\in\Xi$. 
The proof of part (b) applied to the original inequalities ($A$ and $b$) implies (\ref{StepDObjective}) holds with equality eventually. 
We note that the limiting objects are the same when applied to $A_J$ and $b_J$ because $J^\infty\subseteq J$. 
This is because the assumption for part (c) implies that the $j$th element of $g_q=\sqrt{n_q}\Lambda_q(b(\theta_{n_q})-A(\theta_{n_q})\E_{F_{n_q}}\overline{m}_{n_q}(\theta_{n_q}))$ diverges to $+\infty$ for all $j\notin J$. 
Therefore, when we apply the proof of part (b) to the reduced inequalities ($A_J$ and $b_J$), we get that 
\begin{equation}
\mathds{1}\{\|x_q-\tilde t^*_q(x_q)\|^2\le \chi^2_{\tilde r^q(x_q),1-\tilde \beta^q(x_q)}\}=\mathds{1}\{ \|x_\infty-t_{\infty}^{ *}(x_\infty)\|^2\le \chi^2_{r^\infty (x_\infty),1-\beta^\infty(x_\infty)}\}
\end{equation}
eventually as $q\rightarrow\infty$. 
Therefore, 
\begin{equation}
\mathds{1}\{\mathds{1}\{\|x_q-\tilde t^*_q(x_q)\|^2\le \chi^2_{\tilde r^q(x_q),1-\tilde \beta^q(x_q)}\}\neq\mathds{1}\{\|x_q- t^*_q(x_q)\|^2\le \chi^2_{r^q(x_q),1-\beta^q(x_q)}\}\}=0
\end{equation}
eventually. 
Therefore, by the bounded convergence theorem, 
\begin{equation}
\text{Pr}_{F_{n_q}}\left(\phi^{\text{RCC}}_{n_q}(\theta_{n_q},\alpha)\neq\phi^{\text{RCC}}_{n_q, J}(\theta_{n_q},\alpha)\right)\rightarrow 0. \label{adskdj}
\end{equation}
Since, for every subsequence, $n_m$, there exists a further subsequence, $n_q$, such that (\ref{adskdj}) holds, part (c) of Theorem \ref{thm:momineq-sz} follows. 
\end{proof}

\subsection{Proof of Auxiliary Lemmas \ref{lem:argmin_converge} - \ref{InactivityIsUnchanged}}

\begin{proof}[Proof of Lemma \textup{\ref{lem:argmin_converge}}] 
The assumption implies that there exists a sequence, $z^*_n$, such that, for large enough $n$, 
\begin{align}
A_nz_n^\ast\leq h_n \text{ and } z_n^\ast\to \mu^*(x_0,A_0,h_0) \text{ as }n\to\infty. \label{eq:set_consistency}
\end{align}
This implies that 
\begin{align}
\|x_n-\mu^*(x_n,A_n,h_n)\|^2\leq \|x_n-z_n^\ast\|^2\to \|x_0-\mu^*(x_0,A_0,h_0)\|^2.
\end{align}
Taking $\lim\sup$ on both sides, we get
\begin{align}
\lim\sup{}_{n\to\infty}\|x_n-\mu^*(x_n,A_n,h_n)\|^2\leq  \|x_0-\mu^*(x_0,A_0,h_0)\|^2.\label{lowerbd}
\end{align}

Now note that $\mu^*(x_n,A_n,h_n) = \arg\min_{z:A_nz\leq h_n}\|x_n-z\|^2$. 
This sequence of minimizers is necessarily bounded because otherwise (\ref{lowerbd}) cannot hold. 
Thus for any subsequence $\{n_m\}$ there is a further subsequence $\{n_q\}$ such that $\mu^*(x_{n_q},A_{n_q},h_{n_q})\to z_\infty$ for some $z_\infty\in \R^{d_x}$. 
Since $A_{n_q}\mu^*(x_{n_q},A_{n_q},h_{n_q})\leq h_{n_q}$, we have $A_0z_{\infty}\leq h_0$. 
Thus,
\begin{align}
\lim_{q\to\infty}\| x_{n_q}-\mu^*(x_{n_q},A_{n_q},h_{n_q})\|^2= \|x_0-z_\infty\|^2\geq  \|x_0-\mu^*(x_0,A_0,h_0)\|^2 .\label{upperbd0}
\end{align}
Since the subsequence is arbitrary, this implies that
\begin{align}
\text{liminf}_{n\to\infty}\|x_n-\mu^*(x_n,A_n,h_n)\|^2\geq   \|x_0-\mu^*(x_0,A_0,h_0)\|^2.\label{upperbd}
\end{align}
Combining (\ref{lowerbd}) and (\ref{upperbd}), we have $\text{lim}_{n\to\infty}\|x_n-\mu^*(x_n,A_n,h_n)\|^2=   \|x_0-\mu^*(x_0,A_0,h_0)\|^2$. 
This, (\ref{upperbd0}), and the uniqueness of $\arg\min_{z:A_0z\leq h_0}\|x_0-z\|^2$ together imply that 
\begin{equation}
\mu^*(x_n,A_n,h_n)\to z_\infty= \mu^*(x_0,A_0,h_0)\text{ as } n\to\infty,
\end{equation}
proving the lemma.
\end{proof}

\begin{proof}[Proof of Lemma \textup{\ref{lem:set_consistency}}] 
Let $a_{j,0}'$ denote the $j$th row of $A_{0}$ and let $a_{j,n}'$ denote the $j$th row of $A_n$. 
For part (ii) of the definition of convergence, let $n_q$ be a subsequence and $z_q$ be a sequence such that $z_q\in\textup{poly}(A_{n_q},h_{n_q})$ for all $q$ and $z_q\rightarrow z_0$ as $q\rightarrow\infty$. 
Then, 
\begin{equation}
a'_{j,0}z_0 = \lim_{q\rightarrow\infty}a'_{j,n_q} z_q\le \limsup_{q\rightarrow\infty}h_{j,n_q}=h_{j,0}, 
\end{equation}
showing that $z_0\in\textup{poly}(A_0,h_0)$. 

For part (i) of the definition of the convergence, let $z_0\in\textup{poly}(A_0,h_0)$, and let $J_0 = \{j=1,\dots,d_A: a_{j,0}'z_0 = h_{j,0}\}.$ 
If $J_0=\emptyset$, then  $z_n^\ast = z_0$ satisfies the requirement by $A_n\to A_0$ and $h_n\to h_0$. 
If $J_0\neq\emptyset$ but $\textup{rk}(A_{J_0,0})=0$, then $a_{j,0} = \mathbf{0}$ for all $j\in J_0$, which implies that $a_{j,n}=\mathbf{0}$ for all $j\in J_0$ by the rank condition stated in the lemma. 
Then, we can again let $z_n^\ast = z_0$ and $a_{j,n}'z_n^\ast = 0\leq h_{j,n}$ for all $j\in J_0$. 
Again, $\{z_n^\ast\}$ satisfies the requirement due to $A_n\to A_0$ and $h_n\to h_0$. 

Now suppose that $\textup{rk}(A_{J_0,0})>0$. 
The key for the next step is to partition $J_0$ into two subsets $J_0^\ast$ and $J_0^o$. 
We require the partition to satisfy the following conditions:
\begin{itemize}
\item[(i)] $J_0^\ast$ contains $\textup{rk}(A_{J_0,0})$ elements such that $\{a_{j^\ast,0}:j^\ast\in J_0^\ast\}$ has full rank, and for any element in $j^o\in J_0^o$, there exists a unique linear representation $a_{j^o,0} =\sum_{j^\ast\in J_0^\ast} w_{j^o,j^\ast} a_{j^\ast,0}$, where $w_{j^o,j^\ast}:j^\ast\in J_0^\ast$ are real-valued weights. 
\item[(ii)] The linear representation satisfies: for any $j^\ast\in J_0^\ast$ and $j^o\in J_0^o$ such that $w_{j^o,j^\ast}\neq 0$, we have $h_{j^\ast,0}\leq h_{j^o,0}$.  
\end{itemize}
Such a partition always exists. 
To see why, note that the existence of a partition satisfying (i) is guaranteed by linear algebra. 
The number of partitions satisfying (i) is finite because $J_0$ is a finite set. 
If we choose the partition to be one that minimizes $\sum_{j^\ast\in J_0^\ast}h_{j^\ast,0}$ among those satisfying (i), then the chosen partition also satisfies (ii). 

We note that for all $n$, $\textup{rk}(A_{J^*_0, n})=\textup{rk}(A_{J_0})$ implies that for every $j^o\in J^o_0$ and $j^*\in J^*_0$, there exist weights, $w_{j^o,j^*,n}$, such that 
\begin{equation}
a_{j^o,n}=\sum_{j^*\in J^*_0}w_{j^o,j^*,n}a_{j^*,n}. 
\end{equation}
Furthermore, we know that if $w_{j^o,j^*,n}\neq 0$, then $w_{j^o,j^*}\neq 0$. 
This follows because, otherwise, we would have 
\begin{equation}
\textup{rk}(I_{\{j^o\}\cup (J_0^\ast/ \{j^\ast\})}A_n)>\textup{rk}(I_{(J_0^\ast/ \{j^\ast\})}A_n) =\textup{rk}(I_{(J_0^\ast/ \{j^\ast\})}A_0) = \textup{rk}(I_{\{j^o\}\cup(J_0^\ast/ \{j^\ast\})}A_0), 
\end{equation}
contradicting the assumed rank condition. 

Let $A_{J_0^\ast,0}$ denote the submatrix of $A_0$ formed by the rows selected by $J_0^\ast$, and let $A_{J_0^\ast,n}$, $h_{J_0^\ast,0}$, and $h_{J_0^\ast,n}$ be defined analogously. 
Now let $D$ be a $(d_m -|J_0|)\times d_m$ matrix, the rows of which form an orthonormal basis for the orthogonal complement of the space spanned by $\{a_{j,0}: j\in J_0^\ast\}$. 
Then the matrix $\left(\begin{smallmatrix}A_{J_0^\ast,0}\\ D\end{smallmatrix}\right)$ is invertible, which implies that the matrix $\left(\begin{smallmatrix}A_{J_0^\ast,n}\\ D\end{smallmatrix}\right)$ is invertible for large enough $n$. 
Let ${h}^{\wedge}_{J_0^\ast,n}  = \min(h_{J_0^\ast,n},h_{J_0^\ast,0})$ where the minimum is taken element by element. 
Let
\begin{equation}
z_n^\dagger= \left(\begin{smallmatrix}A_{J_0^\ast,n}\\ D\end{smallmatrix}\right)^{-1}\left(\begin{smallmatrix}h^{\wedge}_{J_0^\ast,n}\\ Dz_0\end{smallmatrix}\right).
\end{equation}
It is easy to verify that 
\begin{align}
z_n^\dagger&\to  \left(\begin{smallmatrix}A_{J_0^\ast,0}\\ D\end{smallmatrix}\right)^{-1}\left(\begin{smallmatrix}h_{J_0^\ast,0}\\ Dz_0\end{smallmatrix}\right) = z_0, \text{ and } \label{eq:znconverge}\\
A_{J_0^\ast,n}z_n^\dagger &= h^{\wedge}_{J_0^\ast,n}\leq h_{J_0^\ast,n}.\label{eq:J0star}
\end{align}

If $a_{j,n}'z_n^\dagger\leq h_{j,n}$ for all $j\in J_0^o$ for large enough $n$, then (\ref{eq:set_consistency}) holds with $z_n^\ast = z_n^\dagger$ and we are done. 
Otherwise, let
\begin{equation}
\lambda_n =\left\{\begin{array}{ll}\min\left\{1, \min_{j\in J_0^o:h_{j,0}>0} \frac{h_{j,n}}{a_{j,n}'z_n^\dagger}\right\}&\text{if }\{j\in J_0^o:h_{j,0}>0\}\neq \emptyset
\\1&
\text{otherwise}\end{array}\right..
\end{equation}
This is well-defined for large enough $n$ since $a_{j,n}'z_n^\dagger\to a_{j,0}'z_0 = h_{j,0}$ and thus $a_{j,n}'z_n^\dagger\neq 0$ for large enough $n$. 
Also, by definition $\lambda_n\le 1$, and 
\begin{align}
\lambda_n\to \min_{j\in J_0^o:h_{j,0}>0} \frac{h_{j,0}}{a_{j,0}'z_0} = 1.\label{eq:lambda_converge}
\end{align}

Now let 
\begin{align}
z_n^\ast  = \lambda_n z_n^\dagger. \label{eq:z_nstar_def}
\end{align}
Then for any $j\in J^o_0$ such that $h_{j,0}>0$, we have 
\begin{align}
a_{j,n}'z_n^\ast \leq h_{j,n}. \label{eq: hpos}
\end{align}
For any $j\in J^o_0$ such that $h_{j,0}=0$, we have 
\begin{align}
a_{j,n}'z_n^\ast &=\lambda_n \sum_{j^\ast\in J_0^\ast}w_{j,j^\ast,n} a_{j^\ast,n}'z_n^\dagger \nonumber\\
&= \lambda_n \sum_{j^\ast\in J_0^\ast}w_{j,j^\ast,n} \min(h_{j^\ast,n},h_{j^\ast,0})\nonumber\\
&=0\le h_{j, n}, \label{eq:h0}
\end{align}
where the first equality follows by the definition of the weights, $w_{j,j^*,n}$, the second equality follows from the definition of $z^\dagger_n$, the third equality follows because, if $w_{j,j^*,n}\neq 0$, then $w_{j,j^*}\neq 0$, and therefore $0\le \min(h_{j^\ast,n},h_{j^\ast,0})\le h_{j^\ast,0}\le h_{j,0}=0$ by property (ii) of the partition.

Equations (\ref{eq:znconverge}), (\ref{eq:lambda_converge}), and (\ref{eq:z_nstar_def}) together imply that $z_n^\ast\to z_0$. 
This also implies that, for all $j\notin J_0$, $a_{j,n}'z_n^\ast - h_{j,n}\to a_{j,0}'z_0-h_{j,0}<0$ and thus, for large enough $n$, 
\begin{equation}
A_{\{1,\dots,d_A\}/ J_0,n}z_n^\ast < h_{\{1,\dots,d_A\}/ J_0,n}.
\end{equation}
This combined with equations (\ref{eq:J0star}), $\lambda_n\le 1$, and (\ref{eq:z_nstar_def})-(\ref{eq:h0}) implies that $A_nz_n^\ast\leq h_n$. 
Therefore, $\{z_n^\ast\}$ satisfies the requirement and the lemma is proved.
\end{proof}

\begin{proof}[Proof of Lemma \textup{\ref{BExistence}}] 
The proof of Lemma \ref{BExistence} makes use of three additional lemmas which are stated and proved at the end of this subsection. 
We use $b_{j,q}$ to denote the transpose of the $j^{\text{th}}$ row of $B_q$, and similarly for $a_{j,n_q}$, $a_{j,0}$, and $b_{j,0}$.
An equivalent way to state condition (d) is: 
\begin{enumerate}[label=(\roman*)]
\item[\textup{(i)}] for any further subsequence, $n_q$, and for every sequence $x_q\in \textup{poly}(A_{n_q},g_{n_q})\cap \textup{poly}(B_q,h_{q})$ such that $x_q\rightarrow x_0$, $x_0\in \textup{poly}(A_0,g_0)\cap \textup{poly}(B_0,h_0)$, and 
\item[\textup{(ii)}] for every $x_0\in \textup{poly}(A_0,g_0)\cap \textup{poly}(B_0,h_0)$, there exists $x_q\in \textup{poly}(A_{n_q},g_{n_q})\cap \textup{poly}(B_q,h_{q})$ such that $x_q\rightarrow x_0$.
\end{enumerate}

Before proving the lemma, we note that for any subsequence, $n_q$ such that $A_{n_q}\to A_0$ and $g_{n_q}\to g_0$, and for any $B_q\rightarrow B_0$ and $h_q\rightarrow h_0$, 
condition (d)(i) is satisfied. 
Specifically, let $x_q$ denote a sequence that belongs to $\textup{poly}(A_{n_q},g_{n_q})\cap \textup{poly}(B_q,h_{q})$ for all $q$, and such that $x_q\rightarrow x_0$. 
Then 
\begin{equation}
a'_{j,0}x_0 = \lim_{q\rightarrow\infty}a'_{j,n_q}x_q\le \lim_{q\rightarrow\infty} g_{j,n_q}=g_{j,0}. 
\end{equation}
 
Also, by the convergence of $h_q$, we have that 
\begin{equation}
b'_{j,0}x_0 = \lim_{q\rightarrow\infty} b'_{j,q}x_q\le \lim_{q\rightarrow\infty} h_{j,q}=h_{j,0}.
\end{equation}
Therefore, $x_0\in \textup{poly}(A_0,g_0)\cap \textup{poly}(B_0,h_0)$. 

We also note that for any $q$, $B_q$, and $h_q$ satisfying (b), condition (c) must also be satisfied. 
If not, then there exists a $q$, an $x\in \textup{poly}(A_{n_q},g_{n_q})$  and a $j'\in J(x;B_q,h_q)$ such that $b_{j',q}$ cannot be written as a linear combination of $a_{j,n_q}$ for $j\in J(x;A_{n_q},g_{n_q})$. 
This implies that there exists a $v$ such that $b'_{j',q}v>0$ and $v\perp a_{j,n_q}$ for all $j\in J(x;A_{n_q},g_{n_q})$. 
But then, $x+\alpha v\in \textup{poly}(A_{n_q},g_{n_q})$ for sufficiently small $\alpha$, at the same time that $b'_{j',q}(x+\alpha v)>h_q$. 
This contradicts the fact that $\textup{poly}(A_{n_q},g_{n_q})\subseteq \textup{poly}(B_q,h_q)$. 
Therefore, (c) holds. 

We now prove the lemma by finding a subsequence, $n_q$, and sequences $\{B_q\}$ and $\{h_q\}$ that satisfy conditions (a), (b), and (d)(ii). 
We first consider $A_n$ and $g_n$. 
By the compactness of the unit circle, let $n_q$ be a subsequence so that $A_{n_q}$ converges to some $A_0$. 
Also suppose $g_{n_q}$ converges along the subsequence to some vector $g_0\in(\R_{+}\cup\{+\infty\})^{d_A}$. 

Let $J^+_A$ denote the subset of $\{1,...,d_A\}$ for which $g_{j,0}>0$, and let $J^0_A$ denote the subset for which $g_{j,0}=0$. 
Consider $A_{J^0_A, 0}$, which defines a cone in $\R^{d_m}$: $\textup{poly}(A_{J^0_A,0}, 0)=\{x\in\R^{d_m}: A_{J^0_A,0}x\le 0\}$. 
Let $S$ denote the smallest linear subspace of $\R^{d_m}$ that contains this cone. 
Let the dimension of $S$ be $d_S$. 
Let $J^S_A$ be the subset of $J^0_A$ for which $a_{j, 0}\perp S$ for all $j\in J^S_A$. 
Let $J^N_A=\{1,...,d_A\}/ J^S_A$. 

Next, we define sequences $B_q$ and $h_q$ that satisfy conditions (a), (b), and (d)(ii) by induction on the dimension of $S$. 
If $d_S=0$, then no $B_q$ or $h_q$ is required. 
Condition (a) is satisfied by the above choice of the subsequence. 
Condition (b) is satisfied because $\textup{poly}(B_q,h_q)=\R^{d_m}$ for all $q$. 
Condition (d)(ii) is satisfied because $\textup{poly}(A_0,g_0)=\{0\}$, and then we can take $x_q=0$ for all $q$, which belongs to $\textup{poly}(A_{n_q},g_{n_q})$ and converges to $x_0=0\in \textup{poly}(A_0,g_0)$. 

If $d_S>0$, then suppose that the conclusion of Lemma \ref{BExistence} holds for all values of the dimension of $S$ less than $d_S$. 

Let $C_q=\textup{poly}(A_{J^S_A, n_q},g_{J^S_A, n_q})$. 
Let $C_q^S$ be the projection of $C_q$ onto $S$. 
That is, $C_q^S=\{P_S x: x\in C_q\}$, where $P_S$ denotes the projection onto $S$ and $M_S=I-P_S$. 
The fact that $C_q$ is a polyhedral set (defined by finitely many affine inequalities) implies by Theorem 19.3 in \cite{Rockafellar1970} that $C_q^S$ is also a polyhedral set. 
Therefore, there exists a $d_{B_1}\times d_m$ matrix of unit vectors in $S$, $B^1_q$ and a vector $h^1_q$ such that $C_q^S=\{y\in S: B^1_q y\le h^1_q\}$. 
We note that $C_q^S$ contains zero, so $h^1_q\ge 0$. 
In the special case of $d_S = d_m$, $C_q^s=C_q$ and we let $B_q^1$ be the matrix composed of all the non-zero rows of $A_{n_q}$ and let $h_q^1$ be the corresponding elements of $g_{n_q}$. 

Let $n_q$ be a further subsequence so that $B^1_q\rightarrow B^1_0$ and $h^1_q\rightarrow h^1_0$, where some of the elements of $h^1_0$ may be $+\infty$, in which case the convergence holds elementwise. 
We note that this construction satisfies conditions (a) and (b) because $\textup{poly}(A_{n_q},g_{n_q})\subseteq C_q\subseteq \textup{poly}(B^1_q,h^1_q)$ for all $q$, where the second subset holds because $B^1_q x=B^1_q M_S x+B^1_qP_Sx=B^1_qP_S x\le h^1_q$ for all $x\in C_q$ because the rows of $B^1_q$ belong to $S$ and $P_S x\in C^S_q$. 

Let $J^{+}_B$ denote the set of $j\in\{1,...,d_{B_1}\}$ for which $h^1_{j,0}>0$, and let $J^{0}_B$ denote the set for which $h^1_{j,0}=0$, where $h^1_{j,0}$ is the $j$th element of $h^1_0$. 
Consider $B^1_{J^{0}_B,0}$ and $A_{J^0_A,0}$, which together define a cone in $S$: $\{x\in S: B^1_{J^{0}_B,0}x\le 0 \text{ and } A_{J^0_A, 0}x\le 0\}$. 
As before, let $S^\dagger$ denote the smallest linear subspace of $S$ that contains this cone. 
Let $J^{S^\dagger}_B$ denote the set of all $j\in J^{0}_B$ for which $b^1_{j,0}\perp S^\dagger$. 
Also let $J^{S^\dagger}_A$ denote the set of all $j\in J^0_A$ for which $a_{j,0}\perp S^\dagger$. 
Let the dimension of $S^\dagger$ be $d_{S^\dagger}$. 

If $d_{S^\dagger}<d_S$, then the result follows by the induction assumption. 
In particular, if we let  
\[
\tilde A_q=\left[\begin{array}{c}A_{n_q}\\ B^1_{q}\end{array}\right] \text{ and } \tilde g_q=\left[\begin{array}{c}g_{n_q}\\ h^1_q\end{array}\right], 
\]
then the subspace, $\tilde S$, defined to be the smallest linear subspace containing $\textup{poly}(\tilde A_0, \tilde g_0)$, is equal to $S^\dagger$. 
Therefore, there exists a further subsequence, $n_q$, and another matrix of inequalities, $B^2_q$ and $h^2_q$ such that: (a) $B^2_q\rightarrow B^2_0$ and $h^2_q\rightarrow h^2_0$, (b) $\textup{poly}(\tilde A_q,\tilde g_q)\subseteq \textup{poly}(B^2_q,h^2_q)$ for all $q$ along the subsequence, and (d)(ii) $\textup{poly}(\tilde A_q,\tilde g_q)\cap \textup{poly}(B^2_q,h^2_q)\rightarrow \textup{poly}(\tilde A_0,\tilde g_0)\cap \textup{poly}(B^2_0,h^2_0)$ pointwise. 
It is easy to see that these conditions imply conditions (a), (b), and (d)(ii) for the original $A_n$ and $g_n$ along this subsequence, with 
\[
B_q=\left[\begin{array}{c}B^1_q\\B^2_q\end{array}\right] \text{ and } h_q=\left[\begin{array}{c}h^1_q\\h^2_q\end{array}\right],
\]
using the fact that $\textup{poly}(\tilde A_q,\tilde g_q)=\textup{poly}(A_{n_q},g_{n_q})\cap \textup{poly}(B^1_q,h^1_q)$. 

Therefore, we only need to show condition (d)(ii) in the case that $d_{S^\dagger}=d_S$. 
In this case, $S=S^\dagger$, and so $J^{S^\dagger}_B=\emptyset$ and $J^{S^\dagger}_A=J^S_A$. 
Fix $x_0\in \textup{poly}(A_0,g_0)\cap \textup{poly}(B^1_0,h^1_0)$. 
We show that for every $\epsilon>0$ there exists a $Q$ such that for all $q\ge Q$ there exists a $y_q\in \textup{poly}(A_{n_q},g_{n_q})\cap \textup{poly}(B^1_q,h^1_q)$ such that $\|y_q-x_0\|\le 2\epsilon$. 
If true, then this can be used to construct a sequence satisfying $y_q\rightarrow x_0$, establishing condition (d)(ii). 

Fix $\epsilon>0$. 
By Lemma \ref{SlackExistence}, there exists a point, $\tilde x$, in $S$ that satisfies ${b^1_{j,0}}' \tilde x<h^1_{j,0}$ for all $j\in \{1,...,d_{B_1}\}$, and $a'_{j,0}\tilde x<g_{j,0}$ for all $j\in J^N_A$. 
There exists a $\lambda\in(0,1)$ small enough that $x^\dagger=\lambda\tilde x+(1-\lambda)x_0\in \bar B(x_0,\epsilon)$, where $\bar B(x_0,\epsilon)$ denotes the closed ball of radius $\epsilon$ around $x_0$. 
Note that $x^\dagger$ satisfies $a'_{j,0}x^\dagger<g_{j,0}$ for all $j\in J^N_A$ and ${b^1_{j,0}}'x^\dagger<h_{j,0}$ for all $j\in \{1,...,d_{B_1}\}$. 
Therefore, there exists a $\delta\in (0,\epsilon)$ and a $Q$ such that for all $q\ge Q$, and for all $x\in \bar B(x^\dagger,\delta)$, ${b^1_{j,q}}' x<h^1_{j,q}$ for all $j\in \{1,...,d_{B_1}\}$, and $a'_{j,n_q}x<g_{j,n_q}$ for all $j\in J^N_A$.  
Notice that, for all $q\ge Q$, $x^\dagger\in C^S_q=\{y\in S: B^1_q y\le h^1_q\}$, which means that there exists a $y_q\in C_q$ such that $x^\dagger=P_S y_q$. 
By Lemma \ref{CloseToS} applied to $K=\{x^\dagger\}$ (where the condition is satisfied because, by Lemma \ref{InequalityRepresentationofS}, $S=\{x\in\R^{d_x}: A_{J^S,0}x\le 0\}$), there exists a larger $Q$ such that for all $q\ge Q$, $y_q\in \bar B(x^\dagger,\delta)$. 
Therefore, $\|y_q-x_0\|\le 2\epsilon$. 
\end{proof}

\begin{proof}[Proof of Lemma \ref{InactivityIsUnchanged}]
Fix $x\in\R^{d_m}$. 
The fact that $\textup{poly}(A,g)\subseteq\textup{poly}(B,h)$ implies that $\mu^*(x;A,g)=\mu^*(x;[A;B],[g,h])$. 
Denote the common value by $\mu^*$. 

If there does not exist a $\bar j\in J(x;A,g)$ such that $a_{\bar j}\neq 0$, then $x=\mu^*$ and $a'_j x<g_j$ for all $j\notin J(x;A,g)$. 
Suppose, to reach a contradiction, that there does exist a $\bar j\in J(x;[A;B],[g;h])$ such that $b_{\bar j-d_A}\neq 0$. 
Then, there would exist a point, $y$, very close to $x$ (say, $y=x+\epsilon b_{\bar j-d_A}$ for some $\epsilon>0$) such that $y\notin \textup{poly}(B,h)$ but $y\in\textup{poly}(A,g)$. 
This contradicts the assumption that $\textup{poly}(A,g)\subseteq\textup{poly}(B,h)$. 
Therefore, there does not exist a $\bar j\in J(x;[A;B],[g;h])$ such that $b_{\bar j-d_A}\neq 0$. 
This implies that, in this case, $\tau_j(x,A,g)=0$ for all $j\in\{1,...,d_A\}$ and $\tau_j(x;[A;B],[g;h])=0$ for all $j\in\{1,...,d_A+d_B\}$. 
Therefore, $\tau(x;A,g)=\tau(x;[A;B],[g;h])$. 

Suppose there does exist a $\bar j\in J(x;A,g)$ such that $a_{\bar j}\neq 0$. 
Then, the same $\bar j$ can be used to define $\tau_j(x;[A;B],[g;h])$ because $J(x;A,g)\subseteq J(x;[A;B],[g;h])$. 

We show that for every $j=1,...,d_B$, $\tau_{j+d_A}(x;[A;B],[g;h])\ge \tau(x;A,g)$. 
The result holds trivially if $\|b_j\|\|a_{\bar j}\|=b'_ja_{\bar j}$ because then $\tau_{j+d_A}(x;[A;B],[g;h])=\infty$. 
Suppose, to reach a contradiction, that $\tau_{j+d_A}(x;[A;B],[g;h])<\tau(x;A,g)$. 
Let $\tau^*=\tau_{j+d_A}(x;[A;B],[g;h])$, and consider two cases. 

(i) If $\tau^*=0$, then for some $\epsilon>0$, the point $t^*=\mu^*+\epsilon (I_{d_m}-a_{\bar j}a'_{\bar j}\|a_{\bar j}\|^{-2})b_j$ belongs to $\textup{poly}(A,g)$ but not $\textup{poly}(B,h)$. 
To see that $t^*\in\textup{poly}(A,h)$, note that for all $\ell\in J(x;A,g)$, the fact that $\tau(x;A,g)>0$ implies that $a_\ell$ is collinear with $a_{\bar j}$. 
Then, $a'_\ell t^*=a'_\ell \mu^*=g_\ell$. 
For all $\ell\notin J(x;A,g)$, $a'_\ell\mu^*<g_\ell$, so $\epsilon$ can be chosen small enough that $a'_\ell t^*<g_\ell$ for all $\ell\notin J(x;A,g)$. 
To see that $t^*\notin \textup{poly}(B,h)$, note that 
\begin{equation}
b'_jt^*=b'_j\mu^*+\epsilon\|b_j\|^2-\epsilon(b'_j a_{\bar j})\|a_{\bar j}\|^{-2}=h_j+\epsilon\|b_j\|^2-\epsilon(b'_j a_{\bar j})^2\|a_{\bar j}\|^{-2}>h_j,
\end{equation}
where the second equality follows because $\tau^*=0$ and $b_j$ is not collinear with $a_{\bar j}$ (so $b'_j\mu^*=h_j$), and the inequality follows because $(b'_j a_{\bar j})^2<\|a_{\bar j}\|^2\|b_j\|^2$. 
This contradicts the assumption that $\textup{poly}(A,g)\subseteq\textup{poly}(B,h)$. 
Therefore, in this case, $\tau^*\ge \tau(x;A,g)$. 

(ii) If $\tau^*>0$, then let $t^*=\mu^*+\tau^*\left(\frac{b_j}{\|b_j\|}-\frac{a_{\bar j}}{\|a_{\bar j}\|}\right)$. 
We show that $t^*$ belongs to the interior of $\textup{poly}(A,g)$ but is on the boundary of $\textup{poly}(B,h)$. 
Note that for every $\ell\in\{1,...,d_A\}$, 
\begin{equation}
a'_\ell t^*=a'_\ell\mu^*+\tau^*\left(\frac{a'_\ell b_j}{\|b_j\|}-\frac{a'_\ell a_{\bar j}}{\|a_{\bar j}\|}\right)=a'_\ell\mu^*+\frac{\|a_{\bar j}\|(h_j-b'_j\mu^*)}{\|a_{\bar j}\|\|b_j\|-b'_ja_{\bar j}}\left(\frac{a'_\ell b_j}{\|b_j\|}-\frac{a'_\ell a_{\bar j}}{\|a_{\bar j}\|}\right). \label{InactivityUnchanged1}
\end{equation}
When $a_\ell$ is collinear with $a_{\bar j}$, the right hand side of (\ref{InactivityUnchanged1}) is less than $g_\ell$ because $a'_\ell\mu^*\le g_\ell$, $h_j>b'_j\mu^*$ (because $\tau^*>0$), and $\|a_{\bar j}\|\|b_j\|>a'_{\bar j}b_j$ (because $b_j$ is not collinear with $a_{\bar j}$).  
When $a_\ell$ is not collinear with $a_{\bar j}$, the right hand side of (\ref{InactivityUnchanged1}) is less than $g_\ell$ because 
\begin{align}
&\|a_{\bar j}\|(h_j-b'_j\mu^*)\left(\frac{a'_\ell b_j}{\|b_j\|}-\frac{a'_\ell a_{\bar j}}{\|a_{\bar j}\|}\right)-(g_\ell - a'_\ell\mu^*)(\|a_{\bar j}\|\|b_j\|-b'_ja_{\bar j})\nonumber\\
<&(h_j-b'_j\mu^*)\left(\|a_{\bar j}\|\left(\frac{a'_\ell b_j}{\|b_j\|}-\frac{a'_\ell a_{\bar j}}{\|a_{\bar j}\|}\right)-(\|a_{\bar j}\|\|a_\ell\|-a'_\ell a_{\bar j})\right)\nonumber\\
\le&0, 
\end{align}
where the first inequality follows because
\begin{equation}
(g_\ell - a'_\ell\mu^*)(\|a_{\bar j}\|\|b_j\|-b'_ja_{\bar j})>(h_j-b'_j\mu^*)(\|a_{\bar j}\|\|a_\ell\|-a'_\ell a_{\bar j})
\end{equation}
(by the assumption that $\tau^*<\tau(x;A,g)\le\tau_\ell(x,A,g)$), and the second inequality follows because $b'_j\mu^*\le h_j$ and $\|a_\ell\|\|b_j\|\ge a'_\ell b_j$. 
This shows that $t^*$ is on the interior of $\textup{poly}(A,g)$. 
We also show that $t^*$ is on the boundary of $\textup{poly}(B,h)$ by calculating that $b'_j t^*=h_j$. 
By a similar calculation to above, we see that 
\begin{align}
&(b'_j t^*-h_j)(\|b_j\|\|a_{\bar j}\|-b'_ja_{\bar j})\nonumber\\
=&\|a_{\bar j}\|(h_j-b'_j\mu^*)\left(\|b_j\|-\frac{b'_ja_{\bar j}}{\|a_{\bar j}\|}\right)+(b'_j\mu^*-h_j)(\|b_j\|\|a_{\bar j}\|-b'_ja_{\bar j})=0. 
\end{align}
This implies that there exists a point, $y$, very close to $t^*$ (say $y=t^*+\epsilon b_j$ for some $\epsilon>0$) such that $y\notin \textup{poly}(B,h)$ but $y\in\textup{poly}(A,g)$. 
This contradicts the assumption that $\textup{poly}(A,g)\subseteq\textup{poly}(B,h)$. 
Therefore, $\tau_{j+d_A}(x;[A;B],[g;h])\ge\tau(x;A,g)$ for all $j=1,...,d_B$. 
\end{proof}

\begin{lemma}
\label{SlackExistence}
Let $A$ be a $d_A\times d_m$ matrix. Let $g$ be nonnegative. Let $J^+$ denote the subset of $\{1,...,d_A\}$ such that $g_j>0$, and let $J^0$ denote the subset of $\{1,...,d_A\}$ such that $g_j=0$. Let $S$ denote the smallest linear subspace containing $\textup{poly}(A_{J^0},0)=\{x\in\R^{d_m}: A_{J^0}x\le 0\}$. Let $J^S$ be the subset of $J^0$ for which $A_{J^S}\perp S$. Let $J^N=\{1,...,d_A\}/ J^S$. There exists a $\tilde x\in S$ such that $a'_{j}\tilde x<g_j$ for all $j\in J^N$. 
\end{lemma}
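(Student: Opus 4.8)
The plan is to construct $\tilde{x}$ explicitly as a strictly-interior point of the cone $S$ relative to the inequalities indexed by $J^N$. First I would observe that the linear subspace $S$ is itself a polyhedral cone: by Lemma \ref{InequalityRepresentationofS} (invoked in the proof of Lemma \ref{BExistence}), $S = \{x\in\R^{d_m}: A_{J^S}x \le 0\} = \{x\in\R^{d_m}: A_{J^S}x = 0\}$, since each row of $A_{J^S}$ is orthogonal to $S$ and $S$ contains $\textup{poly}(A_{J^0},0)$. The key structural fact to extract is that, for every $j\in J^0\setminus J^S$, the inequality $a_j'x \le 0$ is \emph{not} an implicit equality on the cone $\textup{poly}(A_{J^0},0)$: if it were, then $a_j$ would be orthogonal to $S$ (since $S$ is the span of that cone), contradicting $j\notin J^S$. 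Hence for each such $j$ there exists a point $x_j\in\textup{poly}(A_{J^0},0)\subseteq S$ with $a_j'x_j < 0$ strictly.

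Next I would average these witnesses: set $\tilde{x} = \sum_{j\in J^0\setminus J^S} x_j$ (or any strictly positive combination). Since each $x_j\in S$ and $S$ is a linear subspace, $\tilde{x}\in S$. Since each $x_j$ satisfies $a_\ell' x_j \le 0$ for all $\ell\in J^0$, and for each $\ell\in J^0\setminus J^S$ at least one summand (namely $x_\ell$) gives a strict inequality, we get $a_\ell'\tilde{x} < 0 = g_\ell$ for every $\ell\in J^0\setminus J^S$. For $\ell\in J^S$ we have $\ell\notin J^N$, so those indices are not required. It remains to handle $\ell\in J^+$, i.e.\ $g_\ell > 0$. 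Here I would scale: replace $\tilde x$ by $\eta\tilde x$ for a small $\eta>0$. This preserves membership in $S$ (a linear subspace) and preserves all the strict inequalities $a_\ell'(\eta\tilde x) < 0 = g_\ell$ for $\ell\in J^0\setminus J^S$ (scaling a negative number by a positive constant keeps it negative). For $\ell\in J^+$, continuity gives $a_\ell'(\eta\tilde x)\to 0 < g_\ell$ as $\eta\to 0$, so for $\eta$ small enough $a_\ell'(\eta\tilde x) < g_\ell$ for all $\ell\in J^+$. Since $J^N = J^+\cup(J^0\setminus J^S)$, this $\eta\tilde x$ is the desired $\tilde x$.

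The main obstacle I anticipate is the first step: rigorously justifying that for each $j\in J^0\setminus J^S$ there is a point of $\textup{poly}(A_{J^0},0)$ strictly satisfying $a_j'x<0$. This is the statement that a non-implicit-equality constraint of a polyhedral cone admits a strictly interior witness, which follows from standard polyhedral theory (e.g.\ the characterization of implicit equalities via the lineality/span of the cone, cf.\ \cite{Rockafellar1970} Section 6, or the argument already used for Lemma \ref{lem:set_consistency} and in \cite{HuynhLassezLassez1992}). One has to be careful that "implicit equality on $\textup{poly}(A_{J^0},0)$" is exactly equivalent to "$a_j\perp S$", which is immediate once $S=\textup{span}(\textup{poly}(A_{J^0},0))$: if $a_j'x\le 0$ held with equality for all $x$ in the cone, it would hold for all $x$ in its span $S$, forcing $a_j\perp S$; conversely if $a_j\perp S$ then $a_j'x=0$ on all of $S\supseteq\textup{poly}(A_{J^0},0)$. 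Everything after that is the elementary averaging-and-scaling argument above.
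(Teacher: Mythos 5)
Your proof is correct and rests on the same conceptual pillar as the paper's: the observation that $j\in J^0\setminus J^S$ means $a_j$ is not orthogonal to $S=\textup{span}(\textup{poly}(A_{J^0},0))$, so the constraint $a_j'x\le 0$ cannot vanish identically on the cone and therefore admits a strict witness $x_j\in\textup{poly}(A_{J^0},0)$ with $a_j'x_j<0$; both proofs then close by rescaling toward the origin to pick up the $J^+$ constraints. Where you differ is in how a \emph{simultaneous} strict witness for all of $J^0\setminus J^S$ is produced. The paper argues topologically: since none of the proper subspaces $\{x\in S:a_j'x=0\}$ can contain $\textup{poly}(A_{J^0},0)$, it extracts a common strict point via a sequence of shrinking relatively-open neighborhoods inside $S$. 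You argue algebraically: you sum the individual witnesses $\sum_j x_j$, note this stays in the cone (cones are closed under addition) and in the subspace $S$, and observe that each $a_\ell'(\cdot)$ is a sum of nonpositive terms with at least one strictly negative term, hence strictly negative. Your sum-of-witnesses construction is shorter and easier to verify; the paper's nested-neighborhood argument is a touch more opaque but is driven by the same underlying fact.

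One cautionary note. At the outset you cite Lemma \ref{InequalityRepresentationofS} to identify $S$ with $\textup{poly}(A_{J^S},0)$. In the paper, Lemma \ref{InequalityRepresentationofS} is itself proved \emph{using} Lemma \ref{SlackExistence}, so that citation would be circular if it were load-bearing. Fortunately it is not: your witness-existence step uses only that $S$ is the span of $\textup{poly}(A_{J^0},0)$ together with the definition of $J^S$, and your summing step uses only that $S$ is a linear subspace and $\textup{poly}(A_{J^0},0)$ a convex cone. You should simply delete the sentence invoking Lemma \ref{InequalityRepresentationofS}; the rest of the argument is self-contained and correct.
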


\begin{proof}[Proof of Lemma \textup{\ref{SlackExistence}}]
First, let $M>\max_{j\in J^+}\|a_j\|$, and let $\epsilon\in (0,\min_{j\in J^+}\{g_{j}\}/M)$. 
Then, for all $\tilde x\in \bar B(0,\epsilon)$, $a'_{j}\tilde x<g_{j}$ for all $j\in J^+$, where $\bar B(x,\epsilon)$ denotes the closed ball of radius $\epsilon$ around $x$. 
Also, for every $j\in J^N\cap J^0$, $\{x\in S: a'_{j}x=0\}$ defines a subspace of $S$. 
We note that for all $j\in J^N\cap J^0$, $\{x\in S: a'_{j}x=0\}$ is a proper subset of $S$, because otherwise $j$ would belong to $J^S$. 
By the definition of $S$, $S\cap \textup{poly}(A_{J^N\cap J^0},0)$ is not contained within any of these subspaces. 
In particular, for each $j\in J^N\cap J^0$, we can find a $\tilde x_j$ and a neighborhood, $N_j$, (relatively open in $S$) that belongs to $S\cap \textup{poly}(A_{J^N\cap J^0, 0},0)/\{x\in S: a'_{j}x=0\}$. 
Indeed, we can consider $j\in J^N\cap J^0$ sequentially, and define each neighborhood to be a subset of the previous one. 
Therefore, the final $\tilde x_j$ must belong to $S\cap \textup{poly}(A_{J^N\cap J^0, 0},0)$ and satisfy $a'_{j}\tilde x<0$ for all $j\in J^N\cap J^0$. 
Take $\tilde x=\lambda \tilde x_j$, where $\lambda>0$ is small enough that $\tilde x\in \bar B(0,\epsilon)$. 
Then, $\tilde x$ satisfies $a'_{j}\tilde x<g_{j}$ for all $j\in J^N$. 
\end{proof}

\begin{lemma}\label{CloseToS}
Let $A_n \to A_0$ and $g_n\to 0$, where $g_n\geq 0$ for all $n$. 
Suppose $S=\{x\in {\R}^{d_m}:A_0x\leq 0\}$ is a linear subspace of ${\R}^{d_m}$. 
Let $S^\perp$ denote the orthogonal subspace to $S$ in $\R^{d_m}$. 
Let $P_S x$ denote the projection of $x\in \R^{d_m}$ onto $S$ and let $M_Sx$ denote $x-P_Sx$. 
Then, for every $K\subseteq S$, compact, and for every $\varepsilon>0$, we have
\begin{equation}
\{x\in \textup{poly}(A_n,g_n):P_S x\in K, \|M_Sx\|\geq \varepsilon\}=\emptyset
\end{equation}
eventually as $n\to\infty$. 
\end{lemma}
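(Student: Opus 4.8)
The plan is to argue by contradiction, combining a subsequence extraction with a normalization (recession-direction) argument. Suppose the conclusion fails. Then there is a subsequence, $n_q$, and a sequence of points $x_q\in\textup{poly}(A_{n_q},g_{n_q})$ satisfying $P_Sx_q\in K$ and $\|M_Sx_q\|\ge\varepsilon$ for all $q$. Since $K$ is compact, the sequence $\{P_Sx_q\}$ is bounded, so $\{x_q\}$ is bounded if and only if $\{\|M_Sx_q\|\}$ is bounded. I would split into two cases accordingly.

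First, suppose $\{\|M_Sx_q\|\}$ is bounded. Then $\{x_q\}$ is bounded, so along a further subsequence $x_q\to x_0$ for some $x_0\in\R^{d_m}$. Passing to the limit in $A_{n_q}x_q\le g_{n_q}$, and using $A_{n_q}\to A_0$ together with $g_{n_q}\to 0$, gives $A_0x_0\le 0$, i.e. $x_0\in S$, whence $M_Sx_0=0$. But by continuity of $M_S$, $\|M_Sx_0\|=\lim_q\|M_Sx_q\|\ge\varepsilon>0$, a contradiction.

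Second, suppose $\{\|M_Sx_q\|\}$ is unbounded; along a further subsequence let $t_q:=\|M_Sx_q\|\to\infty$. Set $v_q=M_Sx_q/t_q$, a unit vector in $S^\perp$; along a further subsequence $v_q\to v_0$ with $v_0\in S^\perp$ and $\|v_0\|=1$. Writing $x_q=P_Sx_q+t_qv_q$, dividing $A_{n_q}x_q\le g_{n_q}$ by $t_q$ yields
\[
\frac{1}{t_q}A_{n_q}P_Sx_q+A_{n_q}v_q\le\frac{1}{t_q}g_{n_q}.
\]
Letting $q\to\infty$, the first term vanishes because $P_Sx_q$ remains in the compact set $K$ while $t_q\to\infty$; the middle term converges to $A_0v_0$ since $A_{n_q}\to A_0$ and $v_q\to v_0$; and the right-hand side vanishes since $g_{n_q}\to 0$. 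Hence $A_0v_0\le 0$, so $v_0\in S$. This contradicts $v_0\in S^\perp$ with $\|v_0\|=1$.

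The only genuine subtlety is ruling out the possibility that $x_q$ escapes to infinity, which is exactly what the normalization in the second case does: any such escape would produce a nonzero recession direction of the sets $\textup{poly}(A_{n_q},g_{n_q})$ orthogonal to $S$, yet the limiting inequality forces that direction into $S=\{x:A_0x\le 0\}$, a contradiction. I expect this second case to be the main point to get right, while the bounded case is routine. If convenient, the two cases can be merged into a single argument by simultaneously tracking either a limit point (when bounded) or a limit direction (when unbounded), both of which must lie in $S$ but are forced to have a nonzero component in $S^\perp$.
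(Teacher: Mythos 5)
Your proposal is correct and uses essentially the same idea as the paper: extract a limiting direction of $M_S x_q$ in $S^\perp$, pass to the limit in the constraints, and force that direction (or limit point) into $S$, a contradiction. The paper merges your two cases into a single argument by writing $a'_{j,n_q}x_{n_q}-g_{j,n_q}=o(1)+\|M_Sx_{n_q}\|(o(1)+a'_{j,0}x^\perp)$ for a coordinate $j$ with $a'_{j,0}x^\perp>0$ and using only the lower bound $\|M_Sx_{n_q}\|\ge\varepsilon$ rather than a boundedness dichotomy, which is exactly the merge you flag as possible at the end.
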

\begin{proof}[Proof of Lemma \textup{\ref{CloseToS}}]
Suppose that the conclusion of the lemma is not true. 
Then there exists a sequence $\{x_n\in \textup{poly}(A_n,g_n)\}$ and a subsequence $n_m$ such that $P_Sx_{n_m}\in K$ and $\|M_Sx_{n_m}\|\geq \varepsilon$ for all $m\ge 1$. 
Define the unit vector $x_{n_m}^\perp = M_Sx_{n_m}/\|M_Sx_{n_m}\|$. 
Then, by the compactness of $K$ and the unit circle, there exists a further subsequence $n_q$ such that $P_Sx_{n_q}\to x^S$ and $x_{n_q}^\perp\to x^\perp$  for some $x^S\in S $ and $x^\perp\in S^\perp$ as $q\to\infty$. 

Because $x^\perp\in S^\perp$ and $x^\perp \neq 0$, we know that $x^\perp\notin S=\{x\in\R^{d_m}: A_0 x\le 0\}$, and therefore there exists a $j$ such that
\begin{align}
a'_{j,0} x^\perp>0.\label{perp}
\end{align}
Also, since $x^S\in S$, $a'_{j,0}x^S\leq 0$. Since $S$ is a linear subspace, we have $a'_{j,0}(-x^S)\leq 0$ as well. 
This shows that $a'_{j,0}x^S =0$ (and more generally, $S=\{x\in \R^{d_m}: A_0 x=0\}$). 

Now consider
\begin{align}
a'_{j,n_q}x_{n_q} - g_{j,n_q} &=a'_{j,n_q}P_Sx_{n_q} +a'_{j,n_q}M_Sx_{n_q} - g_{j,n_q}\nonumber\\
&=o(1)+a'_{j,0}x^S + \|M_Sx_{n_q}\| (o(1)+a'_{j,0}x^\perp) - o(1)\nonumber\\
&=o(1)+ \|M_Sx_{n_q}\| (o(1)+a'_{j,0}x^\perp).
\end{align}
By (\ref{perp}), $o(1)+a'_{j,0}x^\perp>0$ eventually. 
This, combined with $\|M_S x_{n_q}\|\ge\epsilon$ implies that 
\begin{align}
a'_{j,n_q}x_{n_q} - g_{j,n_q} >0 
\end{align}
eventually. 
This contradicts the definition of the sequence $x_{n}$ which requires that $x_n\in\textup{poly}(A_n,g_n)$ for all $n$. 
\end{proof}

\begin{lemma}
\label{InequalityRepresentationofS}
Let $A$ be a matrix. 
Let $S$ be the smallest linear subspace containing $C=\textup{poly}(A,0)$. 
Let $J=\{j: a_j\perp S\}$. 
Then, $S=\textup{poly}(A_J,0)$. 
\end{lemma}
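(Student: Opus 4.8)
The plan is to establish the two inclusions $S\subseteq\textup{poly}(A_J,0)$ and $\textup{poly}(A_J,0)\subseteq S$ separately, using throughout that $C:=\textup{poly}(A,0)$ is a convex cone containing the origin, so that $S=\textup{span}(C)$ is closed under sums of elements of $C$ and under differences of such.

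First I would record the identity $J=\{j:a'_jx=0\text{ for all }x\in C\}$. Indeed, if $a_j\perp S$ then, since $C\subseteq S$, we get $a'_jx=0$ on $C$; conversely, if $a'_jx=0$ for all $x\in C$ then $a_j\perp\textup{span}(C)=S$, so $j\in J$. Given this identity, the inclusion $S\subseteq\textup{poly}(A_J,0)$ is immediate: for $x\in S$ and $j\in J$ we have $a'_jx=0\le0$.

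The substantive direction is $\textup{poly}(A_J,0)\subseteq S$. Here I would first build a ``relatively interior'' point of $C$: for each $j\notin J$, the identity above furnishes some $x^{(j)}\in C$ with $a'_jx^{(j)}<0$ (any $x\in C$ satisfies $a'_jx\le0$, so failing to vanish identically on $C$ forces a strictly negative value somewhere); summing these over $j\notin J$ yields $\bar x\in C$ with $a'_j\bar x<0$ for all $j\notin J$ and $a'_j\bar x=0$ for all $j\in J$. Then, given any $x$ with $A_Jx\le0$, I would examine $z_\varepsilon=\bar x+\varepsilon x$ for small $\varepsilon>0$: the rows indexed by $J$ keep $a'_jz_\varepsilon=\varepsilon a'_jx\le0$, while the finitely many rows not in $J$ satisfy $a'_jz_\varepsilon\to a'_j\bar x<0$ as $\varepsilon\downarrow0$ and hence remain strictly negative for $\varepsilon$ small enough. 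So $z_\varepsilon\in\textup{poly}(A,0)=C\subseteq S$ for some $\varepsilon>0$, and since $\bar x\in C\subseteq S$ as well, $x=\varepsilon^{-1}(z_\varepsilon-\bar x)\in S$.

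The one place requiring care — and the main obstacle, such as it is — is establishing the sign pattern of $\bar x$, i.e., that every constraint not indexed by $J$ is genuinely slack at some point of $C$; this is precisely the content of the first-step identity, so once that is in hand the rest is routine (the empty-sum case $J=\{1,\dots,d_A\}$, where $\bar x=0$, is handled trivially). A byproduct of the argument is that $\textup{poly}(A_J,0)$ is in fact a linear subspace, namely $S$, even though a priori it is only a cone. No polarity or duality machinery is needed.
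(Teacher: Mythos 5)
Your proof is correct and follows the same two-inclusion skeleton as the paper's, but handles the crux of the hard direction — producing a point of $C$ at which every constraint outside $J$ is strictly slack — in a more self-contained and elementary way. The paper invokes its Lemma~\ref{SlackExistence} to obtain a $\tilde x\in S$ with $a'_j\tilde x<0$ for $j\in J^c$ (a lemma whose own proof uses nested relatively-open neighborhoods), and must then separately verify $A_J\tilde x\le 0$ by appealing to the already-established inclusion $S\subseteq\textup{poly}(A_J,0)$. You instead observe the identity $J=\{j:a'_jx=0\text{ for all }x\in C\}$, which directly supplies, for each $j\notin J$, a witness $x^{(j)}\in C$ with $a'_jx^{(j)}<0$; summing these over $j\notin J$ gives $\bar x\in C$ (using that $C$ is closed under addition) with exactly the needed sign pattern, and $A\bar x\le0$ is then automatic. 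The final step — writing $x=\varepsilon^{-1}(z_\varepsilon-\bar x)$ for small $\varepsilon$ — is the same as the paper's $x=y-M\tilde x$ for large $M$ up to a rescaling. What your route buys is independence from Lemma~\ref{SlackExistence} and a cleaner construction; what the paper's route buys is reuse of a lemma it needs anyway for the more general (nonzero-$g$) setting elsewhere in the argument.
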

\begin{proof}[Proof of Lemma \textup{\ref{InequalityRepresentationofS}}]
First, notice that if $x\in S$, then $x\perp a_j$ for all $j\in J$, and therefore, $A_J x=0$, so $x\in \textup{poly}(A_J,0)$. 

To go the other way, let $x\in \textup{poly}(A_J,0)$. 
Lemma \ref{SlackExistence} implies that there exists an $\tilde x\in S$ such that $a'_j \tilde x<0$ for all $j\in J^c$, where $J^c=\{1,...,d_A\}/ J$. 
Consider $y=x+M\tilde x$ for $M$ large. 
We note that $A_J y = A_J x+MA_J\tilde x\le 0$ since $x\in \textup{poly}(A_J,0)$ and $\tilde x\in S\subseteq \textup{poly}(A_J,0)$. 
We also note that for every $j\in J^c$, $a'_j y=a'_j x + M a'_j \tilde x\rightarrow -\infty$ as $M$ diverges. 
Thus, there exists an $M$ large enough that $y\in \textup{poly}(A,0)$. 
This implies that $y\in S$ because $\textup{poly}(A,0)\subseteq S$. 
This also implies that $x=y-M\tilde x\in S$ because $S$ is a linear subspace. 
\end{proof}

\section{Proof of Lemmas \ref{lem:dual}-\ref{lem:rank} in Section \ref{sec:sub}}\label{sec:proof-dualrank}
\begin{proof}[Proof of Lemma \textup{\ref{lem:dual}}] 
Theorem 4.2 of Kohler (1967) shows that the conclusion of the lemma holds if $H(C)$ is a sufficient set of extreme vectors in the cone $\{h\geq 0:h'C=0\}$. 
Here a vector is an extreme vector in $\{h\geq 0:h'C=0\}$ if the rows of $C$ corresponding to non-zero elements of the vector have rank exactly one less than the number of non-zero elements of the vector. 
If $h$ is an extreme vector, then $\{\lambda h:\lambda\geq 0\}$ is an extreme ray of $\{h\geq 0:h'C=0\}$. 
A sufficient set of extreme vectors is a set formed by taking exactly one non-zero vector from each extreme ray.

Now it suffices to show that the set of vertices of the polyhedron $\{h\geq 0:h'C=\mathbf{0},\mathbf{1}'h=1\}$ is a sufficient set of extreme vectors of $\{h\geq 0:h'C=0\}$. 
Let $h_0$ be an extreme vector in the cone $\{h\geq 0:h'C=0\}$. 
Without loss of generality suppose that  $\mathbf{1}'h_0=1$. We show by contradiction that there do not exist $h^{\ast}\neq h_0$ and $h^{\dagger}\neq h_0$ in the polyhedron $\{h\geq 0:h'C=\mathbf{0},\mathbf{1}'h=1\}$ such that $h_0=\lambda h^\ast +(1-\lambda )h^{\dagger}$ for some $\lambda\in (0,1)$. 
Suppose the contrary. 
Then the zero elements in $h_0$ must correspond to zero elements in $h^\ast$ and $h^\dagger$. 
Let $h_{0,+}$ be the subvector of $h_0$ without the zero elements. 
Let $h^\ast_{+}$ be subvector of $h^\ast $ corresponding to the positive elements of $h_0$. Similarly define $h^\dagger_{+}$. 
Then we must have $h^\ast_{+}\neq h_{0,+}$ and $h^\dagger_{+}\neq h_{0,+}$, $h_{0,\neq0} = \lambda h^\ast_{+}+(1-\lambda)h^\dagger_{+}$, and also $h_{+}^\ast\neq \mathbf{0}$ and $h_{+}^\dagger\neq \mathbf{0}$. 
Let $C_{+}$ denote the rows of $C$ corresponding to the positive elements of $h_0$. 
Then we have 
\begin{equation}
\begin{pmatrix}h_{0,+}&h^\ast_{+}&h^\dagger_{+}\end{pmatrix}'C_{+} =0.
\end{equation}
The rank of the matrix $\begin{pmatrix}h_{0,+}&h^\ast_{\neq0}&h^\dagger_{+}\end{pmatrix}$ is at least two. 
This contradicts the premise that $C_{+}$ is only rank-deficient by one (since $h_0$ is an extreme vector of the cone $\{h\geq 0,h'C=0\}$). 
Therefore, $h_0$ is a vertex of the polyhedron $\{h\geq 0:h'C=\mathbf{0},\mathbf{1}'h=1\}$. 

The above shows that all extreme vectors of the cone $\{h\geq 0:h'C=\mathbf{0}\}$ satisfying the normalization $\mathbf{1}'h=1$ are vertices of the polyhedron $\{h\geq 0: h'C=\mathbf{0},\mathbf{1}'h=1\}$. 
This proves that the set of vertices of the polyhedron is a sufficient set of extreme vectors of the cone. 
The result then follows from Theorem 4.2 of Kohler (1967). 
\end{proof}

\begin{proof}[Proof of Lemma \textup{\ref{lem:rank}}] 
Denote $B_Z$, $C_Z$, and $d_Z$ by $B$, $C$, and $d$. 
Let $h_1',\dots,h_{m_1}'$ be all the rows of $H(C)$ orthogonal to $B\hat{\mu}-d$. 
Then by definition, $A_{\widehat{J}} = [B'h_1,\dots,B'h_{m_1}]'$, and thus $\textup{rk}(A_{\widehat{J}}) = \textup{rk}(B'h_1,\dots,B'h_{m_1})$. 
Since $h_1,\dots,h_{m_1}\in \{h\geq 0:h'C=0,h'(B\hat{\mu}-d)=0\}$, we have $B'h_1,\dots,B'h_{m_1}\in \{B'h:h\geq 0,h'C=0,h'(B\hat{\mu}-d)=0\}$. 
This implies that $\textup{rk}(A_{\widehat{J}}) \leq\textup{rk}(\{B'h:h\geq 0,h'C=0,h'(B\hat{\mu}-d)=0\})$. 

Next, suppose that $\tilde{h}_1,\dots,\tilde{h}_{m_2}\in \{h\geq 0:h'C=0,h'(B\hat{\mu}-d)=0\}$ such that $\text{rk}(B'\mathcal{H})= \textup{rk}(B'\tilde{h}_1,\dots,B\tilde{h}_{m_2})$. 
By the definition of $H(C)$,  $\tilde{h}_1,\dots,\tilde{h}_{m_2}$ must all be linear combinations of the rows of $H(C)$. 
In fact, they must all be linear combinations of $h_1,\dots,h_{m_1}$ defined in the first part of the proof because any other row (say, $h_\ast$) of $H(C)$ must satisfy the strict inequality $h_\ast'(B\hat{\mu}-d)>0$ (since they correspond to the inactive inequalities). 
Consequently, $B'\tilde{h}_1,\dots,B'\tilde{h}_{m_2}$ must be  linear combinations of $B'h_1,\dots,B'h_{m_1}$. 
This implies that $\text{rk}(B'\mathcal{H})\leq \textup{rk}({A}_{\widehat{J}})$. 
Therefore, the lemma is proved.
\end{proof}

\section{Asymptotic Validity of the Subvector Tests}\label{sec:asymsub}
\subsection{General Conditions for Asymptotic Validity}\label{app:gensub}

We fix the realization of $\{Z_i\}_{i=1}^n$ and denote it by $z$.\footnote{Technically, $z$ and the objects that are defined given $z$, including $\Theta_0(F_z)$ depend on $n$ as well. We keep this dependence implicit for simplicity.} 

Let ${\cal F}_{z}$ be a collection of distributions $F_{z}$. 
The following high-level assumption is sufficient for the uniform asymptotic validity of the sCC and the sRCC tests. 
This assumption is the conditional version of Assumption \ref{assu:para_space}. 
\begin{assumption}\label{assu:para_space_sub} 
The given sequence $\{(F_{z, n}, \theta_n):F_{z,n}\in{\cal F}_z, \theta_n\in\Theta_0(F_{z,n})\}_{n=1}^\infty$ satisfies, for every subsequence, $n_m$, there exists a further subsequence, $n_q$, and there exists a sequence of positive definite $d_m\times d_m$ matrices, $\{D_q\}$, such that:

\textup{(a)} Under the sequence $\{F_{z,n_q}\}_{q=1}^\infty$,
\begin{equation}
\sqrt{n_q}D_{q}^{-1/2}(\overline{m}_{n_q}(\theta_{n_q})-\E_{F_{z,n_q}}\overline{m}_{n_q}(\theta_{n_q}))\to_d N(\mathbf{0},\Omega), \label{weakconvergence_sub}
\end{equation}
for a positive definite correlation matrix $\Omega$, and 
\begin{equation}
\|D_{q}^{-1/2}\widehat\Sigma_{n_q}(\theta_{n_q})D_{q}^{-1/2}-\Omega\|\rightarrow_p 0. \label{varianceconsistency_sub}
\end{equation}

\textup{(b)} Let $A(\theta)$ and $b(\theta)$ be defined in Lemma \textup{\ref{lem:dual}}. 
$\Lambda_q A(\theta_{n_q}) D_q \rightarrow \bar A_0$ for some $d_A\times d_m$ matrix $\bar A_0$, and for every $J\subseteq\{1,...,d_A\}$, $\textup{rk}(I_JA(\theta_{n_q})D_q)=\textup{rk}(I_J\bar A_0)$, where $\Lambda_q$ is the diagonal $d_A\times d_A$ matrix whose $j$th diagonal entry is one if $e'_j A(\theta_{n_j})=0$ and $\|e'_j A(\theta_{n_q})D_q\|\inv$ otherwise. 
\end{assumption}
The following corollary of Theorem \ref{thm:momineq-sz} shows the uniform asymptotic validity of the sRCC test.

\begin{corollary}\label{cor:momineq-sz} \textup{(a)}
Suppose Assumption \textup{\ref{assu:para_space_sub}(a)} holds for all sequences $\{(F_{z,n},\theta_n) : F_{z,n}\in {\cal F}_z,\theta_n\in\Theta_0(F_{z,n})\}_{n=1}^n$. Then, 
\begin{align*}
\underset{n\to\infty}{\textup{limsup}}\sup_{F_{z\in {\cal F}_{z}}}\sup_{\theta\in\Theta_0(F_{z})}\E_{F_{z}}(\phi^{\textup{sRCC}}_n(\theta,\alpha) |z)\le \alpha.
\end{align*}

Next consider a sequence $\{(F_{z,n},\theta_n):F_{z,n}\in {\cal F}_z, \theta_n\in\Theta_0(F_{z,n})\}_{n=1}^\infty$ satisfying Assumption \textup{\ref{assu:para_space_sub}(a,b)}. 

\textup{(b)}  If, along any further subsequence, for all $j=1,...,d_A$, $\sqrt{n_q}e'_j\Lambda_q(A(\theta_{n_q})\E_{F_{z,n_q}}\overline{m}_{n_q}(\theta_{n_q})-b(\theta_{n_q}))\rightarrow 0$, and if $\bar A_0\neq 0_{d_A\times d_m}$, then 
\[
\lim_{n\rightarrow\infty}\E_{F_{z,n}}\phi^{\textup{sRCC}}_n(\theta_n,\alpha)=\alpha.
\]

\textup{(c)} If, for $J\subseteq\{1,...,d_A\}$, along any further subsequence, $\sqrt{n_q}e'_j\Lambda_q(A(\theta_{n_q})\E_{F_{z,n_q}}\overline{m}_{n_q}(\theta_{n_q})-b(\theta_{n_q}))\rightarrow -\infty$ as $q\rightarrow\infty$, for all $j\notin J$, then 
\[
\lim_{n\rightarrow\infty}\textup{Pr}_{F_{z,n}}(\phi^{\textup{RCC}}_n(\theta_n,\alpha)\neq\phi^{\textup{RCC}}_{n,J}(\theta_n,\alpha))=0.
\]
\end{corollary}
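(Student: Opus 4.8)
The plan is to prove Corollary~\ref{cor:momineq-sz} as a direct consequence of Theorem~\ref{thm:momineq-sz}, applied conditionally on the realization $Z=z$ to the nuisance-parameter-eliminated inequality system. First I would fix $z$ and record the dictionary between the two settings: by Lemma~\ref{lem:dual}, the null hypothesis (\ref{H00}) is equivalent to the finitely many linear inequalities (\ref{eliminated}), $A_z\E_{F_z}[\overline{m}_n(\theta_0)\mid z]\le b_z$, with $A_z=A(B_z,C_z)$ and $b_z=b(C_z,d_z)$; and the same lemma shows that the statistic $T_n(\theta)$ in (\ref{Tnsub}) equals the QLR statistic (\ref{Tn}) built from $(A_z,b_z)$. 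Thus, conditional on $z$, the mechanism generating $\overline{m}_n(\theta)$ together with $(A_z(\theta),b_z(\theta))$ is an instance of the full-vector model of Section~\ref{sec:momineq} in which the role of the distribution $F$ is played by the conditional distribution of $\{W_i\}$ given $Z=z$, and $\E_F$, $\textup{Var}_F$ are replaced by their $z$-conditional counterparts.

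Next I would verify that $\phi_n^{\textup{sRCC}}(\theta,\alpha)$ is, conditional on $z$, literally the RCC test $\phi_n^{\textup{RCC}}(\theta,\alpha)$ of (\ref{RCC}) for the system $(A_z,b_z)$. Steps R1--R2 of Section~\ref{sec:sRCC} compute $\hat r$ via $\textup{rk}(B_Z'\mathcal H)$, which by Lemma~\ref{lem:rank} equals $\textup{rk}(A_{z,\widehat J})$ for $\widehat J$ the set of rows of $A_z$ active at $\hat\mu$; and when $\hat r=1$ and $T_n(\theta)\in[\chi^2_{1,1-2\alpha},\chi^2_{1,1-\alpha}]$, Step~R2 evaluates $\hat\beta$ through the very formulae (\ref{zj})--(\ref{beta}) applied to $A_z=H(C_z)B_z$ and $b_z=H(C_z)d_z$. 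Hence $\phi_n^{\textup{sRCC}}(\theta,\alpha)=\phi_n^{\textup{RCC}}(\theta,\alpha)$ and, for each $J$, $\phi_{n,J}^{\textup{sRCC}}(\theta,\alpha)=\phi_{n,J}^{\textup{RCC}}(\theta,\alpha)$, where the right-hand sides are formed from $(A_z,b_z)$. Likewise, Assumption~\ref{assu:para_space_sub} is by construction Assumption~\ref{assu:para_space} for this conditional experiment with $(A(\theta),b(\theta))$ replaced by $(A_z(\theta),b_z(\theta))$, as its statement makes explicit.

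With the dictionary in place, the three parts follow by invoking the corresponding parts of Theorem~\ref{thm:momineq-sz}. For part~(a), Assumption~\ref{assu:para_space_sub}(a) holding along every sequence in $\mathcal F_z$ is exactly the hypothesis of Theorem~\ref{thm:momineq-sz}(a), whose conclusion already carries the uniformity over $(F_z,\theta)$, so $\limsup_{n}\sup_{F_z\in\mathcal F_z}\sup_{\theta\in\Theta_0(F_z)}\E_{F_z}(\phi_n^{\textup{sRCC}}(\theta,\alpha)\mid z)\le\alpha$ transfers directly. For parts~(b) and~(c), the sequence-level slackness conditions stated in the corollary coincide with those of Theorem~\ref{thm:momineq-sz}(b) and~(c), so the conclusions $\E_{F_{z,n}}\phi_n^{\textup{sRCC}}(\theta_n,\alpha)\to\alpha$ and $\textup{Pr}_{F_{z,n}}(\phi_n^{\textup{sRCC}}(\theta_n,\alpha)\neq\phi_{n,J}^{\textup{sRCC}}(\theta_n,\alpha))\to 0$ follow verbatim.

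I do not expect a genuine obstacle: the substantive work is already packaged in Lemma~\ref{lem:dual} (the Fourier--Motzkin/Kohler equivalence), Lemma~\ref{lem:rank} (the rank identity), and Theorem~\ref{thm:momineq-sz}. The only points requiring care are bookkeeping ones: making precise that conditioning on $Z=z$ merely relabels the probability space on which the proof of Theorem~\ref{thm:momineq-sz} operates --- that proof is written pathwise in $\overline{m}_n(\theta)$ and $\widehat\Sigma_n(\theta)$, so nothing in it changes when the governing distribution is the $z$-conditional one --- and checking that the optional Step~R2 refinement does not alter $\hat r$ and only lowers the critical value within the stated range, so that the identification $\phi_n^{\textup{sRCC}}=\phi_n^{\textup{RCC}}$ on the eliminated system is exact rather than approximate.
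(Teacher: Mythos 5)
Your overall strategy coincides with the paper's: by Lemma~\ref{lem:dual} the eliminated system $(A_z,b_z)$ turns the conditional subvector problem into an instance of the full-vector model, and the sRCC test is then literally the RCC test on $(A_z,b_z)$, so Theorem~\ref{thm:momineq-sz} applies to the $z$-conditional experiment. That much is right, and your identification of $\phi_n^{\textup{sRCC}}$ with $\phi_n^{\textup{RCC}}$ via Lemma~\ref{lem:rank} and Steps R1--R2 is also correct.

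However, you dismiss as ``bookkeeping'' a point the paper treats as the one genuine obstacle: the number of rows $d_A$ of $A_z(\theta)=H(C_z(\theta))B_z(\theta)$ is not fixed. It equals the number of vertices of $\{h\geq 0,\,C_z'h=\mathbf 0,\,\mathbf 1'h=1\}$, which depends on $C_z$ and hence on the realization $(z_1,\dots,z_n)$, so it can change with $n$. Theorem~\ref{thm:momineq-sz} and Assumption~\ref{assu:para_space} are formulated for a fixed $d_A$; without some device, the convergence $\Lambda_q A(\theta_{n_q})D_q\to\bar A_0$ in Assumption~\ref{assu:para_space}(b) is not even a well-posed statement when the matrices change size. ``Fixing $z$'' cannot sidestep this, because $z$ itself grows with $n$. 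The paper closes the gap by invoking McMullen's upper bound theorem to show that $d_A$ is bounded by a function of $k$ and $p$ only, so along any subsequence of $\{n\}$ one can extract a further subsequence on which $d_A$ is constant, after which Theorem~\ref{thm:momineq-sz} applies verbatim (and the final conclusion is subsequence-invariant). You would need to add this boundedness-plus-subsequencing step; as written, your reduction to Theorem~\ref{thm:momineq-sz} does not go through.
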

\noindent\textbf{Remark.} Corollary \ref{cor:momineq-sz} follows from Theorem \ref{thm:momineq-sz} because $\Theta_0(F_{z, n})$ has the equivalent representation
\begin{align}
\Theta_0(F_{z, n}) = \{\theta\in\Theta:A(\theta)E_{F_{z,n}}[\overline{m}_n(\theta)|z]\leq b(\theta)\},
\end{align}
by Lemma \ref{lem:dual}. 
There is one subtle point: $A(\theta) = H(C_z(\theta))B_z(\theta)$ might change dimension because $H(C_z)$ might change dimension with $C_z$, and $C_z$ might change with the sample size. 
But this does not cause a problem because the dimension of $H(C_z)$ and thus that of $A(\theta)$ is bounded by a function of $k$ and $p$ which does not change with the sample size.\footnote{This fact is know as the McMullen's upper bound theorem. See e.g. Section 8.4 of  \cite{Ziegler1995}.} 
Due to this boundedness, for any subsequence of $\{n\}$ we can always find a further subsequence along which the dimension of $A(\theta)$ does not change. 
Then the problem falls into the framework of Theorem \ref{thm:momineq-sz}. 

\subsection{Primitive Conditions under i.i.d. Sampling}\label{app:iidsub}
Now we assume that  $\{W_i\}_{i=1}^n$ is an i.i.d. sample unconditionally and derive primitive conditions for Assumption \ref{assu:para_space_sub}(a). 
Let the conditional distribution of $W_i$ given $Z_i=z_i$ be represented by the mapping: $F_{|}:z_i\mapsto F_{|z_i}$. 
Let ${\cal F}_{|}$ denote a collection of $F_|$ and let ${\cal F}_{z} =\{\times_{i=1}^nF_{|z_i}:F_|\in {\cal F}_|\}$, where $\times_{i=1}^nF_{|z_i}$ denotes the joint distribution whose marginal distributions are independent $F_{|z_i}$. 
The following assumption is sufficient for (\ref{weakconvergence_sub}) in Assumption \ref{assu:para_space_sub}. 
In the assumption $\sigma^2_{j|z}(\theta) := n^{-1}\sum_{i=1}^n\textup{Var}_{F_{|z_i}}(m_j(W_i,\theta)|z_i)$ and 
\begin{equation}
D_{|z}(\theta) = \textup{diag}(\sigma^2_{1|z}(\theta),\dots,\sigma^2_{d_m|z}(\theta)).
\end{equation}
Let $\textup{eig}_{\min}(V)$ denote the minimum eigenvalue of a matrix $V$.

\begin{assumption}\label{assu:PWZ} 
There exists an $M_0<\infty$ and an $\epsilon_0>0$ such that for all $F_|\in{\cal F}_|$, the following hold. 
\begin{enumerate}
\item[\textup{(a)}] $\sigma^2_{j|z}(\theta) >0$ for all $j=1,\dots,d_m$, $\theta\in\Theta$, and for all $n$. 
\item[\textup{(b)}] $n^{-1}\sum_{i=1}^n\E_{F_{|z_i}}((m_j(W_i,\theta)/\sigma_{j|z}(\theta))^{4}|z_i) <M_0$ for all $j$, all $\theta\in\Theta$, and for all $n$. 
\item[\textup{(c)}] $\textup{eig}_{\min}(n^{-1}\sum_{i=1}^n[\textup{Var}_{F_{|z_i}}(D_{|z}^{-1/2}(\theta)m(W_i,\theta)|z_i)])>\epsilon_0$ for all $\theta\in\Theta$ and for all $n$. 
\end{enumerate}
\end{assumption}

\noindent\textbf{Remark.} 
Part (b) requires $m(W_i,\theta)$ to have finite 4th moment conditional on $Z_i=z_i$. 
This is used both to derive the asymptotic normality of $\overline{m}_n(\theta)$ using the Lindeberg-Feller central limit theorem under the sequence of $F_{z,n}$, and to show the consistency of the average conditional variance estimator $\widehat{\Sigma}_n(\theta)$. 
Part (c) requires that the average conditional variance of $m(W_i,\theta)$ to be invertible uniformly over $\theta$ and $F_{|}\in{\cal F}_|$. 
This is required since we use the quasi-likelihood ratio statistic which involves inverting an estimator of the average conditional variance. 

When the nearest neighbor matching variance estimator in (\ref{condVar_nn}) is used, the following additional assumption is used for consistency. 
\begin{assumption}\label{assu:PZ}
\begin{enumerate}
\item[\textup{(a)}]  $\{z_i\}_{i=1}^\infty$ is a bounded sequence of distinct values.\medskip

\item[\textup{(b)}] $\Sigma_{Z,n}\to \Sigma_Z$ where $\Sigma_Z$ is finite positive definite matrix.
\item[\textup{(c)}] There exist $M_g>0$ and $M_V>0$ such that for all $\theta\in \Theta$ and $F_{|}\in {\cal F}_{|}$ the conditional mean and variance, $\E_{F_{|z_i}}[D_{|z}(\theta)^{-1/2}m(W_i,\theta)|z_i]$ and $Var_{F_{|z_i}}(D_{|z}(\theta)^{-1/2}m(W_i,\theta)|z_i)$, are Lipschitz continuous in $z_i$ with Lipschitz constants $M_g$ and $M_V$, respectively. 
\end{enumerate}
\end{assumption}
\noindent\textbf{Remark.} 
The boundedness part of part (a) is used to show that  $z_i$ and its nearest neighbor get close to each other on average as $n\to\infty$. 
This can be guaranteed by pre-normalizing $Z_i$ before applying the matching estimator. 
For example, if the raw conditioning variable $\tilde{Z}_i$ is supported in $(0,\infty)$, one can let $Z_i = \Phi(\tilde{Z}_i)$ where $\Phi(\cdot)$ is the standard normal cumulative distribution function. 
This and the Lipschitz continuity in part (c) together ensure that the nearest neighbor provides the correct information about the conditional variance in the limit. 
The distinct value part of part (a) ensures that each point can be the nearest neighbor of at most uniformly bounded number of other points. 
This holds with probability one if $Z_i$ has no probability mass on any single point. 
It can be made to hold by adding a tiny continuous noise to $Z_i$ when $Z_i$ has repeated values. 
The noise should be set small enough to be a tie breaker only in the nearest neighbor calculation. 
Part (b) of the assumption can be established for a probability-one set of $\{Z_i\}$ values by the strong law of large numbers. 

The following theorem verifies Assumption \ref{assu:para_space_sub}(a). 

\begin{theorem}\label{lem:sub-normal} 
\begin{enumerate}
\item[\textup{(a)}] Assumption \textup{\ref{assu:PWZ}} implies \textup{(\ref{weakconvergence_sub})} in Assumption \textup{\ref{assu:para_space_sub}} for all sequences $\{(F_{z,n},\theta_n):F_{z,n}\in{\cal F}_z,\theta_n\in\Theta_0(F_{z,n})\}_{n=1}^\infty$. 
\item[\textup{(b)}] If $\{z_i\}_{i=1}^n$ contains at least two instances of each value eventually as $n\rightarrow\infty$, and Assumption \textup{\ref{assu:PWZ}} holds, then \textup{(\ref{varianceconsistency_sub})} holds for $\widehat{\Sigma}_n(\theta)$ defined in \textup{(\ref{condVar1})}, for all sequences $\{(F_{z,n},\theta_n):F_{z,n}\in{\cal F}_z,\theta_n\in\Theta_0(F_{z,n})\}_{n=1}^\infty$. 
\item[\textup{(c)}] If Assumptions \textup{\ref{assu:PWZ}} and \textup{\ref{assu:PZ}} hold, then \textup{(\ref{varianceconsistency_sub})} holds for $\widehat{\Sigma}_n(\theta)$ defined in \textup{(\ref{condVar_nn})}, for all sequences $\{(F_{z,n},\theta_n):F_{z,n}\in{\cal F}_z,\theta_n\in\Theta_0(F_{z,n})\}_{n=1}^\infty$. 
\end{enumerate}
\end{theorem}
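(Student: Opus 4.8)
The plan is to verify the two displays (\ref{weakconvergence_sub}) and (\ref{varianceconsistency_sub}) of Assumption \ref{assu:para_space_sub}, taking throughout $D_q$ equal to the diagonal matrix $D_{|z}(\theta_{n_q})$ of average conditional variances and passing to further subsequences as needed (which is allowed by the ``for every subsequence there is a further subsequence'' structure of Assumption \ref{assu:para_space_sub}). For part (a), set $\Omega_n := D_{|z}(\theta_n)^{-1/2}\Sigma_n(\theta_n)D_{|z}(\theta_n)^{-1/2}$; under i.i.d.\ unconditional sampling the $W_i$ are independent given $z$, so $\Sigma_n(\theta_n)=n^{-1}\sum_{i=1}^n\textup{Var}_{F_{|z_i}}(m(W_i,\theta_n)|z_i)$, which makes $\Omega_n$ a correlation matrix (diagonal entries one) with entries in $[-1,1]$; hence a further subsequence satisfies $\Omega_{n_q}\to\Omega$, and Assumption \ref{assu:PWZ}(c) gives $\textup{eig}_{\min}(\Omega_{n_q})>\epsilon_0$, so $\Omega$ is a positive definite correlation matrix. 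By the Cram\'er--Wold device it then suffices to show, for each unit vector $a$, that $n_q^{-1/2}\sum_{i=1}^{n_q} a'D_q^{-1/2}(m(W_i,\theta_{n_q})-\E_{F_{|z_i}}m(W_i,\theta_{n_q}))\to_d N(0,a'\Omega a)$. This is a sum of row-wise independent, mean-zero variables with variances summing to $a'\Omega_{n_q}a\to a'\Omega a$; the Lindeberg condition follows from a truncation argument using the uniform bound on the average fourth moment of the standardized moments in Assumption \ref{assu:PWZ}(b) (which yields uniform integrability of the squared standardized summands), so the Lindeberg--Feller CLT applies.

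For parts (b) and (c), since $D_q^{-1/2}\Sigma_{n_q}(\theta_{n_q})D_q^{-1/2}=\Omega_{n_q}\to\Omega$, the target reduces to $D_q^{-1/2}(\widehat\Sigma_{n_q}(\theta_{n_q})-\Sigma_{n_q}(\theta_{n_q}))D_q^{-1/2}\to_p 0$, and because $D_q$ is diagonal this is exactly the analogous statement for the estimator formed from the standardized moments $\tilde m_i := D_q^{-1/2}m(W_i,\theta_{n_q})$, whose average fourth moment is bounded by $M_0$. In part (b), $\mathcal{Z}$ is finite, so after passing to a further subsequence I would split $\mathcal{Z}$ into the categories $\ell$ with $n_\ell\to\infty$ and those with $n_\ell$ bounded. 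Given $z$, the observations with $z_i=\ell$ are i.i.d.\ from $F_{|\ell,n_q}$, so the within-category sample covariance of the $\tilde m_i$ has mean equal to the standardized conditional variance up to the factor $n_\ell/(n_\ell-1)\to1$; for categories with $n_\ell\to\infty$ a law of large numbers for row-wise i.i.d.\ triangular arrays with uniformly bounded fourth moments gives convergence in probability, while categories with bounded $n_\ell$ carry weight $n_\ell/n\to0$ multiplying an $O_p(1)$ within-category term (bounded via $M_0$) and are negligible. Summing the finitely many categories gives (\ref{varianceconsistency_sub}).

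Part (c) carries the main work. Conditional on $z$, $W_i$ and $W_{\ell_Z(i)}$ are independent, so $\E[\widehat\Sigma_n(\theta_n)|z]$ equals $\tfrac12$ times the sum of $n^{-1}\sum_i\textup{Var}_{F_{|z_i}}(m(W_i,\theta_n)|z_i)$, $n^{-1}\sum_i\textup{Var}_{F_{|z_{\ell_Z(i)}}}(m(W_{\ell_Z(i)},\theta_n)|z_{\ell_Z(i)})$, and the bias term $n^{-1}\sum_i(\mu_n(z_i)-\mu_n(z_{\ell_Z(i)}))(\mu_n(z_i)-\mu_n(z_{\ell_Z(i)}))'$ with $\mu_n(z)=\E_{F_{|z,n}}[m(W,\theta_n)|z]$. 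Applying the Lipschitz bounds in Assumption \ref{assu:PZ}(c) to the standardized moments, the second variance sum differs from the first by at most $M_V n^{-1}\sum_i\|z_i-z_{\ell_Z(i)}\|$ and the bias term is bounded by $M_g^2 n^{-1}\sum_i\|z_i-z_{\ell_Z(i)}\|^2$, reducing everything to $n^{-1}\sum_i\|z_i-z_{\ell_Z(i)}\|\to0$, which I would prove by a covering argument on the bounded set containing $\{z_i\}$ from Assumption \ref{assu:PZ}(a,b). For the stochastic fluctuation, each $W_j$ enters at most $K_n(j)$ of the $n$ summands, where $K_n(j)$ is bounded by a constant depending only on $\dim Z_i$ (a standard packing fact for nearest neighbors in Euclidean space, using the distinct-values part of Assumption \ref{assu:PZ}(a)); hence $\widehat\Sigma_n$ is an average with a bounded-degree dependency graph, and a Chebyshev bound using the uniform fourth-moment control gives $\textup{Var}(D_q^{-1/2}\widehat\Sigma_{n_q}(\theta_{n_q})D_q^{-1/2}|z)=O(1/n_q)\to0$. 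The hardest part is exactly here: establishing the two geometric lemmas (the uniform bound on how often a point is a nearest neighbor, and the vanishing average nearest-neighbor distance) with enough uniformity to survive the triangular-array setup, and carefully tracking the cross terms when bounding the variance of the matching estimator.
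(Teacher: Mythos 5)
Your proposal tracks the paper's proof closely in parts (a) and (c). In (a) you use the same normalizing matrix $D_q=D_{|z}(\theta_{n_q})$, the same $\Omega_n$, the same Cram\'er--Wold plus Lindeberg--Feller route, and the same fourth-moment Lindeberg verification; your observation that $\Omega_n$ is a correlation matrix gives compactness slightly more directly than the paper's elementwise fourth-moment bound, but this is cosmetic. Part (c) is also the paper's argument: vanishing average nearest-neighbor distance (the paper invokes Lemma 1 of Abadie and Imbens, 2008 for the squared version and then uses Cauchy--Schwarz), Lipschitz control of the two bias contributions, and a bounded-degree dependency bound on the conditional variance with the $3^{d_z}-1$ packing constant.

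Where you genuinely diverge is part (b). The paper does not split $\mathcal{Z}$ into categories with $n_\ell\to\infty$ and categories with bounded $n_\ell$. It simply notes the within-category sample covariance is exactly unbiased for $\Sigma_n(\theta_n)$, then computes the conditional variance of each entry of $D_n^{-1/2}\widehat\Sigma_n D_n^{-1/2}$ in closed form and bounds it by $3M_0/n$ using Assumption \ref{assu:PWZ}(b) and $n_{z_i}\geq 2$; Chebyshev finishes. Your split-and-LLN route is an alternative, but as stated it has a gap: Assumption \ref{assu:PWZ}(b) bounds the \emph{average over $i$} of the standardized fourth moments, not the within-category fourth moments. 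For a category with $n_\ell\to\infty$ but $n_\ell/n\to 0$, the bound $\E[\tilde m_j^4\mid\ell]\leq M_0\,n/n_\ell$ can diverge, so the ``uniformly bounded fourth moments'' hypothesis of a row-wise triangular-array LLN does not directly hold category by category. What does hold is that the category's contribution to the variance of the overall estimate is $O(n_\ell/n^2)\cdot\E[\tilde m_j^4\mid\ell]\leq M_0/n$, but establishing this is exactly the paper's aggregate variance calculation, so the category-splitting step does not actually reduce the work and introduces the opening you would need to patch. The same imprecision affects your claim that the bounded-$n_\ell$ categories are ``$O_p(1)$ bounded via $M_0$'': they are negligible only after the weight $n_\ell/n$ is paired with the fourth-moment average, which is again the paper's direct bound in disguise. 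The cleanest fix is to drop the dichotomy and compute the aggregate conditional variance directly, as the paper does.
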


\begin{proof}[Proof of Theorem \textup{\ref{lem:sub-normal}}]
(a) Let $\{(F_{z,n},\theta_n):F_{z,n}\in{\cal F}_z,\theta_n\in\Theta_0(F_{z,n})\}$ be an arbitrary sequence. 
Let $F_{|z_i, n}$ denote the conditional distribution of $W_i$ given $Z_i=z_i$ implied by $F_{z,n}$.  
Let $\sigma_{j|z,n}^2(\theta)$ and $D_{|z, n}(\theta)$ be defined just like $\sigma_{j|z}^2(\theta)$ and $D_{|z}(\theta)$ except with $F_{|z_i}$ replaced by $F_{|z_i,n}$. 
Let $D_n = D_{|z,n}(\theta_n)$. 
Then $D_n$ is positive definite for every $n$ by Assumption \ref{assu:PWZ}(a). 

Let $\Omega_n = D_n^{-1/2}n^{-1}\sum_{i=1}^n\textup{Var}_{F_{|z_i,n}}(m(W_i,\theta_n)|z_i)D_n^{-1/2}$. 
Algebra shows that the square of the $(j,\ell)$th element of $\Omega_n$ is bounded by
\begin{align}
2n^{-1}\sum_{i=1}^n\E_{F_{|z_i,n}}\left[\left(\frac{m_j(W_i,\theta_n)}{\sigma_{j|z,n}(\theta_n)}\right)^4|z_i\right]+2n^{-1}\sum_{i=1}^n\E_{F_{|z_i,n}}\left[\left(\frac{m_\ell(W_i,\theta_n)}{\sigma_{j|z,n}(\theta_n)}\right)^4|z_i\right],
\end{align}
which is bounded by $4M_0$ by Assumption \ref{assu:PWZ}(a). 
Thus $\textup{vec}(\Omega_n)\in [0,4M_0]^{d_m^2}$ which is a compact set. 
This implies that a subsequence $n_q$ can be found for any subsequence of $\{n\}$ such that $\Omega_{n_q}\to \Omega_\infty$. 
Furthermore, Assumption \ref{assu:PWZ}(c) implies that $\Omega_\infty$ is positive definite.

It remains to verify the Lindeberg condition for the Lindeberg-Feller central limit theorem (CLT) along the subsequence $\{n_q\}$. 
Let $a$ be an arbitrary real vector on the unit sphere in $\R^{d_m}$. 
Let 
\begin{equation}
\hat{m}_{n,i}(\theta) = a'D_n^{-1/2}(m(W_i,\theta)-\E_{F_{|z_i,n}}[m(W_i,\theta)|z_i]).
\end{equation}
Let
\begin{align}
s_q^2 = {n_q}^{-1}\sum_{i=1}^{n_q}\E_{F_{|z_i, n_q}}\left[\hat{m}_{n_q,i}(\theta_{n_q})^2|z_i\right].
\end{align}
For an arbitrary $\varepsilon>0$, consider the derivation, 
\begin{align}
&\sum_{i=1}^{n_q} n_q^{-1}s_q^{-2}\E_{F_{|z_i,n_q}}[\hat{m}_{n_q,i}(\theta_{n_q})^2 1\{n_q^{-1}s_q^{-2}\hat{m}_{n_q,i}(\theta_{n_q})^2>\varepsilon\}|z_i]\nonumber\\
&\leq  n_q^{-2}s_q^{-4}\varepsilon^{-1}\sum_{i=1}^{n_q}\E_{F_{|z_i,n_q}}[\hat{m}_{n_q,i}(\theta_{n_q})^4|z_i]\nonumber\\
&\leq 16n_q^{-2}s_q^{-4}\varepsilon^{-1}\sum_{i=1}^{n_q}\E_{F_{|z_i,n_q}} [(a'D_{n_q}^{-1/2}m(W_i,\theta_{n_q}))^4|z_i]\nonumber\\
&\leq 16n_q^{-2}s_q^{-4}\varepsilon^{-1}\sum_{i=1}^{n_q}\E_{F_{|z_i,n_q}} [\|D_{n_q}^{-1/2}m(W_i,\theta_{n_q})\|^4|z_i]\nonumber\\
& =O(n_q^{-1}s_q^{-4}\varepsilon^{-1})\nonumber\\
&\to 0, \text{ as }q\to\infty,
\end{align}
where the first inequality holds because $1(x>\varepsilon)\leq \frac{x}{\varepsilon}$ for any $x\geq 0$, the second inequality holds because $E[(X-E(X))^4]\leq 16 E[X^4]$, the third inequality holds by the Cauchy-Schwarz inequality and $\|a\|=1$, the equality holds by Assumption \ref{assu:PWZ}(b), and the convergence holds because $s_q^2\to a'\Omega_\infty a$ by the definition of the subsequence $\{n_q\}$. 
Therefore, the Lindeberg condition holds and the CLT applies, proving part (a). 

(b) Note that $\widehat{\Sigma}_n(\theta)$ is the weighted average of the standard sample variance estimator within subsamples with same $z_i$ values. 
Thus, by standard argument, we have
\begin{align}
\E_{F_{z,n}}[\widehat{\Sigma}_n(\theta_n)|z]& = \sum_{\ell\in\mathcal{Z}}\frac{n_\ell}{n}\textup{Var}_{F_{|\ell,n}}(m(W_i,\theta_n)|\ell)=\frac{1}{n}\sum_{i=1}^n\textup{Var}_{F_{|z_i,n}}(m(W_i,\theta_n)|z_i),
\end{align}
where the second equality holds by rearranging terms. 
Thus,
\begin{align}
\E_{F_{z,n}}[D_n^{-1/2}\widehat{\Sigma}_n(\theta_n)D_n^{-1/2}|z] = \Omega_n.
\end{align}
Also by standard calculation, the $(j,j')$ element of $D_n^{-1/2}\widehat{\Sigma}_n(\theta_n)D_n^{-1/2}$ has a conditional variance given $z$:
\begin{align}
\frac{1}{n^2}\sum_{i=1}^n\textup{Var}_{F_{|z_i,n}}\left(\frac{m_j(W_i,\theta_n)m_{j'}(W_i,\theta_n)}{\sigma_{j|z,n}(\theta_n)\sigma_{j'|z,n}(\theta_n)}|z_i\right) + \frac{1}{n^2}\sum_{i=1}^n\frac{\omega_{j|z_i,n}^2(\theta_n)\omega_{j'|z_i,n}^2(\theta)+\omega_{jj'|z_i,n}(\theta_n)^2}{n_{z_i}-1},\label{varbound}
\end{align}
where $\omega_{j|z_i,n}(\theta) =\textup{Var}_{F_{|z_i,n}}\left(\frac{m_j(W_i,\theta)}{\sigma_{j|z,n}(\theta)}|z_i\right)$ and $\omega_{jj'|z_i,n}(\theta) = \textup{Cov}_{F_{|z_i,n}}\left(\frac{m_j(W_i,\theta)}{\sigma_{j|z,n}(\theta)},\frac{m_{j'}(W_i,\theta)}{\sigma_{j'|z,n}(\theta)}|z_i\right)$. 
By standard algebraic manipulation, we have
\begin{align}
\textup{Var}_{F_{|z_i,n}}\left(\frac{m_j(W_i,\theta_n)m_{j'}(W_i,\theta_n)}{\sigma_{j|z,n}(\theta_n)\sigma_{j'|z,n}(\theta_n)}|z_i\right)&\leq \frac{1}{2}(M_{ji}+M_{j'i})\text{, and}\nonumber\\
\omega_{j|z_i,n}^2(\theta_n)\omega_{j'|z_i,n}^2(\theta)+\omega_{jj'|z_i,n}(\theta_n)^2&\leq M_{ji}+M_{j'i}
\end{align}
where $M_{ji}  = \E_{F_{|z_i,n}}\left[\left(\frac{m_j(W_i,\theta_n)}{\sigma_{j|z,n}(\theta_n)}\right)^4|z_i\right]$. 
Therefore, by Assumption \ref{assu:PWZ} and the additional assumption that $n_{z_i}\geq 2$ for all $i$, we have that the expression in (\ref{varbound}) is bounded by $\frac{1}{n}(M_0+2M_0)$ which converges to zero as $n\to\infty$. 
This proves part (b). 

(c) First, we prove that
\begin{align}
n^{-1}\sum_{i=1}^n\|z_i-z_{\ell_Z(i)}\|^2\to 0.\label{nnZ}
\end{align}
To begin, define $\tilde{z}_i=\Sigma_{Z,n}^{-1/2}z_i$. 
By Assumption \ref{assu:PZ}(b), $\Sigma_{Z,n}^{-1/2}\to \Sigma_Z^{-1/2}$ as $n\to\infty$ and this limit is finite. 
Thus, $\Sigma_{Z,n}^{-1/2}$ is uniformly bounded over all large enough $n$. 
This and Assumption \ref{assu:PZ}(a) together imply that the elements of the array $\{\tilde{z}_1,\dots,\tilde{z}_n\}_{n\geq 1}$ are chosen from a bounded set. 
Then Lemma 1 of \cite{AbadieImbens2008} applies directly and implies that
\begin{align}
n^{-1}\sum_{i=1}^n\|\tilde{z}_i-\tilde{z}_{\ell_Z(i)}\|^2\to 0.\label{nnZ2}
\end{align}
Consider the derivation
\begin{align}
n^{-1}\sum_{i=1}^n\|z_i-z_{\ell_Z(i)}\|^2 &= n^{-1}\sum_{i=1}^n(\tilde{z}_i-\tilde{z}_{\ell_Z(i)})'\Sigma_{Z,n}(\tilde{z}_i-\tilde{z}_{\ell_Z(i)})\nonumber\\
&\leq n^{-1}\sum_{i=1}^n\|\tilde{z}_i-\tilde{z}_{\ell_Z(i)}\|^2 \textup{eig}_{\max}(\Sigma_{Z,n})\nonumber\\
&\to 0,
\end{align}
where $\textup{eig}_{\max}(\cdot)$ stands for maximum eigenvalue and the convergence holds by (\ref{nnZ2}) and Assumption \ref{assu:PZ}(b). 
This proves (\ref{nnZ}).\medskip

Next consider an arbitrary unit vector $a$ in $\R^{d_m}$, let
\begin{align}
s_{n,i}^2(\theta) = a'D_n^{-1/2}(m(W_i,\theta) - m(W_{\ell_Z(i)},\theta))(m(W_i,\theta) - m(W_{\ell_Z(i)},\theta))'D_n^{-1/2}a.
\end{align}
Then $a'D_n^{-1/2}\widehat{\Sigma}_n(\theta)D_n^{-1/2}a = \frac{1}{2n}\sum_{i=1}^n s_{n,i}^2(\theta_n)$. 
Since $a$ is arbitrary, it suffices to show that for any subsequence of $\{n\}$ there exists a further subsequence $\{n_q\}$ such that
\begin{align}
\frac{1}{2n_q}\sum_{i=1}^{n_q} s_{n_q,i}^2(\theta_{n_q})\to_p a'\Omega_\infty a.\label{aSa}
\end{align}
as $q\to\infty$. 

Let $\hat{m}_{n,i}(\theta)$ be defined in the proof of part (a). 
Then
\begin{align}
&\E_{F_{z,n}}[s_{n,i}^2(\theta_n)|z] \nonumber\\&= \E_{F_{z,n}}[a'(\hat{m}_{n,i}(\theta_n)-\hat{m}_{n,\ell_Z(i)}(\theta_n)+\Delta_{ni})(\hat{m}_{n,i}(\theta_n)-\hat{m}_{n,\ell_Z(i)}(\theta_n)+\Delta_{ni})'a|z]\nonumber\\
&=a'\E_{F_{|z_i,n}}[\hat{m}_{n,i}(\theta_n)\hat{m}_{n,i}(\theta_n)'|z_i]a+a'\E_{F_{|z_{\ell_Z(i)},n}}[\hat{m}_{n,\ell_Z(i)}(\theta_n)\hat{m}_{n,\ell_Z(i)}(\theta_n)'|z_{\ell_Z(i)}]a+a'\Delta_{ni}\Delta_{ni}'a\nonumber\\
&=2a'D_n^{-1/2}\textup{Var}_{F_{|z_i,n}}[m(W_i,\theta_n)|z_i]D_n^{-1/2}a +a'\Delta^V_{ni}a+a'\Delta_{ni}\Delta_{ni}'a,\label{si}
\end{align}
where $\Delta_{ni} = \E_{F_{|z_i,n}}[D_n^{-1/2}m(W_i,\theta_n)|z_i] - \E_{F_{|z_{\ell_Z(i)},n}}[D_n^{-1/2}m(W_{\ell_Z(i)},\theta_n)|z_{\ell_Z(i)}]$, and $$\Delta_{ni}^V = \textup{Var}_{F_{|z_i,n}}[D_n^{-1/2}m(W_i,\theta_n)|z_i] - \textup{Var}_{F_{|z_{\ell_Z(i)},n}}[D_n^{-1/2}m(W_{\ell_Z(i)},\theta_n)|z_{\ell_Z(i)}].$$
By Assumption \ref{assu:PZ}(c) we have,
\begin{align}
\|\Delta_{ni}\|\leq C_g\|z_i-z_{\ell_Z(i)}\|\text{ and }\|\Delta_{ni}^V\|\leq C_V\|z_i-z_{\ell_Z(i)}\|.
\end{align}
Thus, 
\begin{align}
n^{-1}\sum_{i=1}^na'\Delta_{ni}\Delta_{ni}'a\leq n^{-1}\sum_{i=1}^n \|\Delta_{ni}\|^2\leq n^{-1}\sum_{i=1}^n C_g^2\|z_i-z_{\ell_Z(i)}\|^2 \to 0,\label{ada}
\end{align}
and
\begin{align}
n^{-1}\sum_{i=1}^n a'\Delta_{ni}^Va&\leq n^{-1}\sum_{i=1}^n\|\Delta_{ni}^V\|\leq n^{-1}\sum_{i=1}^nC_V\|z_i-z_{\ell_Z(i)}\|\nonumber\\
&\leq C_V\sqrt{n^{-1}\sum_{i=1}^n\|z_i-z_{\ell_Z(i)}\|^2}\to 0.\label{adva}
\end{align}
For an arbitrary subsequence of $n$, consider a further subsequence $\{n_q\}$ such that $\Omega_n\to \Omega_\infty$. 
Such a further subsequence always exists by the proof of part (a). 
Then as $q\to\infty$,
\begin{align}
n_q^{-1}\sum_{i=1}^{n_q} 2a'\textup{Var}_{F_{|z_i, n_q}}[D_{n_q}^{-1/2}m(W_i,\theta_{n_q})a|z_i]a \to 2a'\Omega_\infty a.\label{ava}
\end{align}
Combining (\ref{si}), (\ref{ada}), (\ref{adva}), and (\ref{ava}), we have
\begin{align}
\E_{F_{z,n_q}}[a'D_{n_q}^{-1/2}\widehat{\Sigma}_{n_q}(\theta_{n_q})D_{n_q}^{-1/2}a|z]=\frac{1}{2n_q}\sum_{i=1}^{n_q} \E_{F_{z,n_q}}[s_{n_q,i}^2(\theta_n)|z]\to a'\Omega_\infty a.
\end{align}

Now it suffices to show that 
\begin{align}
\E_{F_{z,n}}\left[\left(n^{-1}\sum_{i=1}^n \left(s_{n,i}^2(\theta_n) - \E_{F_{z,n}}[s_{n,i}^2(\theta_n)|z]\right)\right)^2|z\right]\to 0.\label{sEsconv}
\end{align}
Let $\varepsilon_i(\theta) = a'\hat{m}_{n,i}(\theta)$ and $\sigma_i^2(\theta) = a'\textup{Var}_{F_{|z_i,n}}(D_n^{-1/2}m(W_i,\theta|z_i))a= \E_{F_{|z_i,n}}[\varepsilon_i^2(\theta)|z_i]$. 
Consider that
\begin{align}
n^{-1}\sum_{i=1}^n \left(s_{n,i}^2(\theta_n) - \E_{F_{z,n}}[s_{n,i}^2(\theta_n)|z]\right) &=n^{-1}\sum_{i=1}^n (\varepsilon_i^2(\theta_n) -\sigma_i^2(\theta_n))\nonumber\\
&+n^{-1}\sum_{i=1}^n(\varepsilon_{\ell_Z(i)}^2(\theta_n)-\sigma_{\ell_Z(i)}^2(\theta_n))\nonumber\\
&+2n^{-1}\sum_{i=1}^n(a'\Delta_{ni})\varepsilon_i(\theta_n)\nonumber\\
&-2n^{-1}\sum_{i=1}^n(a'\Delta_{ni})\varepsilon_{\ell_Z(i)}(\theta_n)\nonumber\\
&+2n^{-1}\sum_{i=1}^n\varepsilon_i(\theta_n)\varepsilon_{\ell_Z(i)}(\theta_n).\label{sEs}
\end{align}
Clearly, all the summands on the right-hand side have conditional expectation zero. 
Now we show that the conditional variance (which then is the conditional second moment) of each of them converges to zero. 

For the first summand on the right-hand side of (\ref{sEs}), consider that 
\begin{align}
\E_{F_{z,n}}\left[\left(n^{-1}\sum_{i=1}^n (\varepsilon_i^2(\theta_n) -\sigma_i^2(\theta_n))\right)^2|z\right]& = \frac{1}{n^2}\sum_{i=1}^n\textup{Var}_{F_{|z_i,n}}(\varepsilon_i^2(\theta_n)|z_i)\nonumber\\
&\leq \frac{1}{n^2}\sum_{i=1}^n\E_{F_{|z_i,n}}[\varepsilon_i^4(\theta_n)|z_i]\nonumber\\
&\leq \frac{16}{n^2}\sum_{i=1}^n\E_{F_{|z_i,n}}[(a'D_n^{-1/2}m(W_i,\theta_n))^4|z_i]\nonumber\\
&\leq  \frac{16}{n^2}\sum_{i=1}^n\E_{F_{|z_i,n}}[\|D_n^{-1/2}m(W_i,\theta_n)\|^4|z_i]\nonumber\\
&\to 0,\label{sEs1}
\end{align}
where the convergence holds by Assumption \ref{assu:PWZ}(b). 
For the second summand on the righ-hand side of (\ref{sEs}), consider that
\begin{align}
&\E_{F_{z,n}}\left[\left(n^{-1}\sum_{i=1}^n (\varepsilon_{\ell_Z(i)}^2(\theta_n) -\sigma_{\ell_Z(i)}^2(\theta_n))\right)^2|z\right]\nonumber\\
&=\frac{1}{n^2}\sum_{i=1}^n \E_{F_{z,n}}[\left(\varepsilon_{\ell_Z(i)}^2(\theta_n) -\sigma_{\ell_Z(i)}^2(\theta_n))\right)^2|z]\nonumber\\
&+\frac{2}{n^2}\sum_{i=1}^n\sum_{j=i+1}^n\E_{F_{z,n}}[\left(\varepsilon_{\ell_Z(i)}^2(\theta_n) -\sigma_{\ell_Z(i)}^2(\theta_n)\right)\left(\varepsilon_{\ell_Z(j)}^2(\theta_n) -\sigma_{\ell_Z(j)}^2(\theta_n)\right)|z]\nonumber\\
&\leq \frac{\overline{L}+2\overline{L}^2}{n^2}\sum_{i=1}^n\E_{F_{|z_i,n}}[\left(\varepsilon_i^2(\theta_n) -\sigma_i^2(\theta_n))\right)^2|z_i] \to 0,\label{sEs2}
\end{align}
where $\overline{L}$ is the maximum number of times a $j$ is $\ell_Z(i)$ for some $i$. 
This number is bounded by $3^{d_z}-1$ which does not depend on $n$ (see e.g. \cite{ZegerGersho1994}). 
The convergence holds by (\ref{sEs1}).

For the third summand in (\ref{sEs}), consider that
\begin{align}
\E_{F_{z,n}}\left[\left(n^{-1}\sum_{i=1}^na'\Delta_{ni}\varepsilon_i(\theta_n)\right)^2|z\right] &= \frac{1}{n^2}\sum_{i=1}^n(a'\Delta_{ni})^2\E_{F_{|z_i,n}}[\varepsilon_i^2(\theta_n)|z_i]\nonumber\\
&\leq \frac{C_g\overline{B}}{n^2}\sum_{i=1}^n\E_{F_{|z_i,n}}[\varepsilon_i^2(\theta_n)|z_i]\nonumber\\
&\leq \frac{C_g \overline{B}}{n^2}\sum_{i=1}^n(1+\E_{F_{|z_i,n}}[\varepsilon_i^4(\theta_n)|z_i])\nonumber\\
&\to 0,\label{sEs3}
\end{align}
where $\overline{B}$ is the maximum distance of two points in the sequence $\{z_i\}_{i=1}^n$ which is bounded by Assumption \ref{assu:PZ}(a), the first inequality holds by Assumption \ref{assu:PZ}(c), the second inequality holds by $x^2\leq (\max(1,|x|))^2\leq \max\{1,x^4\}\leq 1+x^4$ and the convergence holds by (\ref{sEs1}). 

For the fourth summand in (\ref{sEs}), consider that
\begin{align}
\E_{F_{z,n}}\left[\left(n^{-1}\sum_{i=1}^na'\Delta_{ni}\varepsilon_{\ell_Z(i)}(\theta_n)\right)^2|z\right] &= \frac{1}{n^2}\sum_{i=1}^n(a'\Delta_i(\theta_n))^2\E_{F_{|z_{\ell_Z(i)},n}}[\varepsilon_{\ell_Z(i)}^2(\theta_n)|z_{\ell_Z(i)}]\nonumber\\
&\leq \frac{C_g\overline{B}}{n^2}\sum_{i=1}^n\E_{F_{z_{\ell_Z(i)},n}}[\varepsilon_{\ell_Z(i)}^2(\theta_n)|z_{\ell_Z(i)}]\nonumber\\
&\leq \frac{C_g \overline{B}}{n^2}\sum_{i=1}^n(1+\E_{F_{|z_{\ell_Z(i)},n}}[\varepsilon_{\ell_Z(i)}^4(\theta_n)|z_{\ell_Z(i)}])\nonumber\\
&\leq \frac{C_g \overline{L}\overline{B}}{n^2}\sum_{i=1}^n(1+\E_{F_{|z_i,n}}[\varepsilon_i^4(\theta_n)|z_i])\nonumber\\
& \to 0,\label{sEs4}
\end{align}
where $\overline{L}$ is number discussed below (\ref{sEs2}).

For the fifth summand on the righ-hand side of (\ref{sEs}), consider that
\begin{align}
&\E_{F_{z,n}}\left[\left(n^{-1}\sum_{i=1}^n\varepsilon_i(\theta_n)\varepsilon_{\ell_Z(i)}(\theta_n)\right)^2|z\right]\nonumber\\
&=\frac{1}{n^2}\sum_{i=1}^n\E_{F_{z,n}}[\varepsilon_i^2(\theta_n)\varepsilon_{\ell_Z(i)}^2(\theta_n)|z]\nonumber\\
&+\frac{2}{n^2}\sum_{i=1}^n\sum_{j=i+1}^n\E_{F_{z,n}}[\varepsilon_i(\theta_n)\varepsilon_{\ell_Z(i)}(\theta_n)\varepsilon_j(\theta_n)\varepsilon_{\ell_Z(j)}(\theta_n)|z]\nonumber\\
&\leq \frac{1}{n^2}\sum_{i=1}^n\E_{F_{z,n}}[\varepsilon_i^2(\theta_n)\varepsilon_{\ell_Z(i)}^2(\theta_n)|z]+\frac{\overline{L}}{n^2}\sum_{i=1}^n\E_{F_{z,n}}[\varepsilon_i^2(\theta_n)\varepsilon_{\ell_Z(i)}^2(\theta_n)|z]\nonumber\\
&\leq\frac{1+\overline{L}}{2n^2} \sum_{i=1}^n\E_{F_{z,n}}[\varepsilon_i^4(\theta_n)+\varepsilon_{\ell_Z(i)}^4(\theta_n)|z]\nonumber\\
&\to 0,\label{sEs5}
\end{align}
where the first inequality holds because $\E_{F_{z,n}}[\varepsilon_i(\theta_n)\varepsilon_{\ell_Z(i)}(\theta_n)\varepsilon_j(\theta_n)\varepsilon_{\ell_Z(j)}(\theta_n)|z]$ is nonzero only when $j= \ell_Z(i)$ and $\ell_Z(j)=i$ and this occurs at most $\overline{L}$ times for each $i$, the second inequality holds by $2xy\leq x^2+y^2$, and the convergence holds by (\ref{sEs1}) and the last two lines of (\ref{sEs4}). 

Combining (\ref{sEs})-(\ref{sEs5}), we have that (\ref{sEsconv}) holds, which then proves part (c).
\end{proof}

\section{Numerical Details for Section \ref{sub:mc-subv}}
\subsection{Calculation of the Identified Set}\label{sub:IDset}
Let $Y_U$ denote $\log(s_{N,i}+2/N) - \log(1-s_{N,i}+\underline{s})$ and let $Y_L$ denote $\log(s_{N,i}+\underline{s})-\log(1-s_{N,i}+2/N)$. 
The value $\theta_0$ should satisfy: There exists $\delta = (\delta_1,\delta_2)'\in \R^2$ such that for $z=(z_c,z_e)'\in \{0,1\}^2$,
\begin{align}
\E[Y_U|z]- \E[X|z]\theta_0\geq \delta_1+\delta_2 z_c\geq \E[Y_L|z]-\E[X|z]\theta_0.\label{bounding}
\end{align}
The identified set for $\theta_0$ can be solved via two linear programming problems once $\E[Y_U|z]$, $\E[Y_L|z]$, and $\E[X|z]$ are calculated. 
Note that 
\begin{equation}
\E[X|z] = \E[1\{2z_e+\varepsilon\geq 0\}] =\Phi(2z_e).
\end{equation}
We also need to calculate $\E[Y_U|z]$ and $\E[Y_L|z]$. 
Let
\begin{align}
\ell(s,N,c)=\sum_{i=0}^N\left(\begin{smallmatrix}N\\ i\end{smallmatrix}\right)s^i(1-s)^{N-i}\log(i+c).
\end{align}
Then
\begin{align}
\E[Y_U|z] &= \E[\ell(s^\ast,N,2) - \ell(1-s^\ast,N,N\underline{s})|z]\nonumber\\
\E[Y_L|z] &= \E[\ell(s^\ast,N,N\underline{s}) - \ell(1-s^\ast,N,2)|z],
\end{align}
where $s^\ast = \frac{\exp(-1\{2z_e+\varepsilon>0\}-z_c+\varepsilon)}{1+\exp(-1\{2z_e+\varepsilon>0\}-z_c+\varepsilon)}.$ 
The conditional expectations can then be calculated by simulating a large number of $\varepsilon$ draws. 
After obtaining, $\E[Y_U|z]$ and $\E[Y_L|z]$, we use linear programming based on (\ref{bounding}) to calculate the upper and the lower bound for $\theta_0$, and we find them to be $[-1.203,-0.757]$. 

\subsection{An Algorithm for Calculating Projected Confidence Sets}

We provide the details for the algorithm used to calculate the Proj-U and Proj-C tests. 
First regress $(\psi_i^U(\theta)+\psi_i^L(\theta))/2$ on $Z_{ci}$ via ordinary least squares to obtain an initial value $\delta_{init1}$, and use this as the starting value for a derivative-free local minimization algorithm (e.g. \textup{fminsearch} in Matlab) to search for a local minimizer $\delta_{\min 1}$. 
If $T_n(\theta,\delta_{\min 1})-\textup{cv}_n(\theta,\delta_{\min 1},1-\alpha)\leq 0$, stop and let $\phi_n^{\textup{Proj}}(\theta,\alpha) = 0$. 
Otherwise, draw a random starting value from $N(\delta_{init1},I)$ and use the local minimization algorithm to find another local minimizer. 
Continue a maximum of 4 times with initial value always drawn from $N(\delta_{init1},I)$, and set $\phi_n^{\textup{Proj}}(\theta,\alpha) = 1$ only if { none} of the local minimizers (say $\delta_\min$) makes $T_n(\theta,\delta_{\min})-\textup{cv}_n(\theta,\delta_{\min},1-\alpha)\leq 0$. 

\bibliography{references}

\end{document}